\documentclass[letterpaper,11pt,longbibliography]{amsart}
\pdfoutput=1

\usepackage{color}
\definecolor{darkblue}{rgb}{0.,0.,0.4}
\definecolor{darkred}{rgb}{0.5,0.,0.}
\usepackage[pdftex,colorlinks=true,linkcolor=blue,citecolor=darkred,urlcolor=darkblue]{hyperref}

\usepackage{amsmath,amsthm,amsfonts,amssymb,braket,mathtools}
\usepackage[all]{xy}
\usepackage{fullpage}

\usepackage[numbers]{natbib}

\newtheorem{lemma}{Lemma}[section]
\newtheorem{theorem}[lemma]{Theorem}

\newtheorem{proposition}[lemma]{Proposition}
\newtheorem{proposition-definition}[lemma]{Proposition-Definition}
\theoremstyle{definition}
\newtheorem{definition}[lemma]{Definition}
\newtheorem{remark}[lemma]{Remark}

\newcommand{\BB}{{\mathbb{B}}}
\newcommand{\CC}{{\mathbb{C}}}
\newcommand{\RR}{{\mathbb{R}}}
\newcommand{\ZZ}{{\mathbb{Z}}}

\newcommand{\FF}{{\mathbb{F}}}

\newcommand{\lat}[2]{{{#1}\mathbb{Z}^{#2}}}
\newcommand{\half}{{\frac{1}{2}}}

\DeclareMathOperator{\Tr}{{Tr}}
\DeclareMathOperator*{\im}{{im}}
\DeclareMathOperator*{\diag}{{diag}}

\DeclareMathOperator*{\Mat}{{\mathsf{Mat}}}

\DeclareMathOperator{\Hom}{{Hom}}

\newcommand{\bd}{{\partial}}

\newcommand{\hp}{{\mathsf{U}}}
\newcommand{\ehp}{{\mathsf{EU}}}
\newcommand{\qhp}{{\widetilde{\mathsf{U}}}}
\newcommand{\qehp}{{\widetilde{\mathsf{EU}}}}
\newcommand{\gl}{{\mathsf{GL}}}
\newcommand{\sgl}{{\mathsf{SL}}}
\newcommand{\bass}{{\mathsf{B}}}
\newcommand{\vbass}{{\mathsf{V}}}
\newcommand{\sym}{{\mathsf{S}}}

\newcommand{\qwitt}{{\widetilde{\mathfrak{W}}}}
\newcommand{\switt}{{\mathfrak{W}}}
\newcommand{\umod}{{\mathfrak{U}}}
\newcommand{\qumod}{{\widetilde{\mathfrak{U}}}}
\newcommand{\vtheory}{{\mathfrak{L}}}
\newcommand{\clifqca}{{\mathfrak{C}}}

\newcommand{\cx}{{\mathcal{X}}}
\newcommand{\cz}{{\mathcal{Z}}}
\newcommand{\qcz}{{\widetilde{\mathcal{Z}}}}
\newcommand{\hada}{{\mathcal{H}}}

\newcommand{\dd}{{\mathsf{d}}} 
\newcommand{\Rz}{{R[z,\tfrac 1 z]}}
\newcommand{\Ry}{{R[y,y^{-1}]}}
\newcommand{\Ryz}{{R[y,\tfrac 1 y,z,\tfrac 1 z]}}
\newcommand{\projz}{{\mathbf{z}}}
\newcommand{\projy}{{\mathbf{y}}}
\newcommand{\ev}{{\mathfrak{u}}}
\newcommand{\coe}{{\mathfrak{c}}}
\newcommand{\cg}[1]{{\phi_\#^{({#1})}}}

\newcommand{\longcell}[2][c]{\begin{tabular}[#1]{@{}l@{}}#2\end{tabular}}

\begin{document}

\title[Topological phases of unitary dynamics: Classification in Clifford category]
{Topological phases of unitary dynamics:\\ Classification in Clifford category}
\author{Jeongwan Haah}
\address{Microsoft Quantum, Redmond, Washington, USA}
\address{Station Q, Santa Barbara, California, USA}

\begin{abstract}
	A quantum cellular automaton (QCA) or a causal unitary is by definition 
	an automorphism of a local operator algebra,
	by which local operators are mapped to nearby local operators.
	Quantum circuits of small depth, local Hamiltonian evolutions for short time, 
	and translations (shifts) are examples.
	A Clifford QCA is one that maps any Pauli operator 
	to a finite tensor product of Pauli operators.
	Here, we obtain a complete table of groups~$\mathfrak C(\dd,p)$
	of translation invariant Clifford QCA in any spatial dimension~$\dd \ge 0$
	modulo Clifford quantum circuits and shifts over prime~$p$-dimensional qudits,
	where the circuits and shifts are allowed to obey only coarser translation invariance.
	The group $\mathfrak C(\dd,p)$ is nonzero only for $\dd = 2k+3$ if $p=2$ and $\dd = 4k+3$ if $p$ is odd
	where~$k \ge 0$ is any integer,
	in which case $\mathfrak C(\dd,p) \cong \qwitt(\FF_p)$,
	the classical Witt group of nonsingular quadratic forms over the finite field~$\FF_p$.
	It is well known that $\qwitt(\FF_2) \cong \ZZ/2\ZZ$, $\qwitt(\FF_p) \cong \ZZ/4\ZZ$ if $p = 3 \bmod 4$, and
	$\qwitt(\FF_p)\cong \ZZ/2\ZZ \oplus \ZZ/2\ZZ$ if $p = 1 \bmod 4$.
	The classification is achieved by a dimensional descent,
	which is a reduction of Laurent extension theorems 
	for algebraic $L$-groups of surgery theory in topology.
\end{abstract}

\maketitle

\section{Introduction}

A topological phase in condensed matter physics contexts
is a concept for objects that are insensitive to local changes
and hence lack any local order parameters.
A topological phase of gapped systems in particular,
is an equivalence class of ground states of gapped local Hamiltonians
under an equivalence relation given by continuous families of gapped Hamiltonians
or local quantum circuits of small depth.
The enduring problem in this context is of course the classification of such phases,
which is to identify classes of objects to classify,
find complete invariants,
and construct representatives.

Simplifying the notion of equivalence 
in terms of local quantum circuits or locally generated unitaries,
we ask a similar question to unitary dynamics.
Physically relevant unitary dynamics must preserve locality ---
it should be \emph{causal} that 
a local operator evolves into a local operator with possibly enlarged support.
We consider the set of all unitary automorphisms of a quasi-local algebra,
and regard those of form $\mathcal T \exp( -i \int_0^T \mathrm{d} t H_t )$ ({\it i.e.}, \emph{locally generated}) trivial 
where $H_t$ is a time-dependent local Hamiltonian and $T$ is a constant independent of system size.
Thus, any local quantum circuit of small depth is regarded trivial.
We may call such an equivalence class a topological phase of unitary dynamics.

There appears to be at least two branches of study 
related to our topological phases of unitary dynamics.
One branch comes under the name of \emph{quantum cellular automata} (QCA).
This term is motivated from the viewpoint that 
a cellular automaton describes dynamics of some discrete system
that may model computation~\cite{Watrous1995}.
Quantumly, a natural assumption has been imposed that a quantum cellular automaton
be causal and reversible~\cite{SchumacherWerner2004,GNVW},
leading to a definition that matches ours.
The other branch is concerned with periodically driven systems~\cite{Rudner2013,fermionGNVW2,Harper2019}.
(See also \cite{Keyserlingk2016,Keyserlingk2016a}.)
where edge dynamics realizes a unitary that is not necessarily locally generated.
In fact, every causal unitary dynamics in $\dd$ dimensions, regardless whether it is locally generated,
can be realized at the boundary of a periodic (Floquet) unitary dynamics in $\dd+1$ dimensions%
\footnote{
Since ``time'' is always short, every dimension in this paper is spatial.
} 
such that the unitary in the bulk after one period is identity.%
\footnote{
	This uses Eilenberg's swindle argument, which is often attributed to A.~Y.~Kitaev.
}
Recently, a nontrivial causal unitary (a QCA) was identified
as an induced boundary action of an onsite unitary symmetry
of a symmetry-protected topological (SPT) phase 
that lies beyond group cohomology-based examples~\cite{Fidkowski2020},
obstructing a dimensional descent procedure of symmetry actions~\cite{ElseNayak}.

The classification problem on topological phases of unitary dynamics
is widely open especially in three and higher dimensions.
In one dimension, it is well established
that translation is a nontrivial instance of QCA that cannot be locally 
generated~\cite{GNVW,fermionGNVW2,Ranard2022}.
Translation in one dimension can be viewed as a flow,
which generalizes to higher dimensions~\cite{FreedmanHastings2019QCA}.
Actually, in one and two dimensions, flows combined with locally generated unitaries
exhaust all%
\footnote{
	This assumes strict locality and nonfermionic locally finite dimensional degrees of freedom.
	There is still a nontrivial open problem in two dimensions due to ``tails.''
}
possibilities~\cite{GNVW,Ranard2022,FreedmanHastings2019QCA,clifqca1}.

In three dimensions and higher, new phenomena appear.
In~\cite{nta3} a causal unitary (QCA) was constructed 
which is strongly argued to be not locally generated
even combined with translations.
This nontriviality is backed by the conjecture that
\emph{no} commuting strictly local Hamiltonian on a two-dimensional lattice 
with locally finite dimensional degrees of freedom
can realize a chiral topological order~\cite{Kitaev_2005}.
This QCA disentangles a Walker--Wang state~\cite{Walker_2011}
with an input modular tensor category being the three-fermion theory.
The state is believed to exhibit a nontrivial SPT
under time-reversal symmetry~\cite{BCFV},
where we quoted the statement conjectural 
because it is not proved that the state is topologically trivial, {\it i.e.},
disentangled by a locally generated unitary in the absence of the time-reversal symmetry;
the QCA of~\cite{nta3} disentangles every eigenstate of a Hamiltonian 
whose ground state is a particular Walker--Wang state
and is believed to be not locally generated,
but we do not know any construction or an existence proof of 
a locally generated unitary that disentangles the one particular eigenstate that is the Walker--Wang groundstate.
This subtlety on the disentanglability is observed 
in all examples to date~\cite{nta3,clifqca1,Shirley2022} of 
putative nontrivial QCA (modulo locally generated unitaries and translations)
in three dimensions.
In four dimensions, there is a construction of a putative nontrivial QCA~\cite{Chen2021}
that disentangles the ground state of an invertible state 
that does not require any symmetry~\cite{Chen2021,Fidkowski2021}.

Let us clarify a technical point on locality.
The unitary generated by a generic local Hamiltonian on a lattice for a short time,
maps a single-site operator on a site~$x$ 
to a quasi-local operator~$\sum_{X \ni x} O_X$
where $X$ ranges over all subsets of sites containing~$x$,
the operator~$O_X$ is supported on~$X$,
and the norm of operator~$O_X$ is a fast-decaying function of the diameter of~$X$.
On the other hand, a discrete analog of Hamiltonian time evolution operator, a quantum circuit of small depth,
maps a local operator to a strictly local operator.
We may require either the quasi-locality or the strict locality on causal unitaries
and study their classification against locally generated unitaries or shallow quantum circuits.
These different assumptions can be summarized as a map
\begin{align}
	\{\text{Strictly locality-preserving }&\text{causal unitaries}\} \Big/ \{\text{shallow quantum circuits}\}\nonumber\\
	&\xrightarrow{\qquad\overline{inclusion}\qquad}\label{eq:curiousmap}\\
	\{\text{Automorphisms of quasi-local }&\text{algebra}\} \Big/ \{\text{locally generated short-time evolutions}\}.\nonumber
\end{align}
The answer to the classification problems might depend on the scope of locality sensitively,
though no result, not even a plausible example, is known to this end.
Put verbosely, we do not know if the map~\eqref{eq:curiousmap} is injective 
and we do not know if \eqref{eq:curiousmap} is surjective.
Note that the reductions of the nontriviality (with translations deemed trivial) of the cited 
QCA~\cite{nta3,clifqca1,Chen2021,Shirley2022} in $\dd$ dimensions 
to certain conjectures on Hamiltonians in $\dd -1$ dimensions,
assume the strict locality.

In this paper, we restrict ourselves to the situation with the strict locality.
Hence, the word ``QCA'' rather than ``causal unitary'' seems more appropriate
and we will use QCA.
In addition, we impose two extra conditions.
The first is translation invariance, {\it i.e.}, our QCA must commute with all translations.
So, our lattice is always $\ZZ^\dd$ for some $\dd \ge 0$.
Each site in this lattice is occupied by a finite dimensional degree of freedom~$\CC^{p^q}$
for some prime power~$p^q$.
That $p$ is a prime number is a technically important assumption that makes our results complete.
The matrix algebra~$\Mat(p;\CC)$ on~$\CC^p$
is generated over $\CC$ by two matrices $\sum_{j \in \ZZ/p\ZZ} \ket{j+1}\bra{j}$ 
and $\sum_{j \in \ZZ/p\ZZ} e^{2\pi \mathrm i j/p} \ket j \bra j$.
By {\bf Pauli operators} we mean these two generators, their finite products, 
or their finite tensor products of products,
as they are indeed Pauli matrices if~$p=2$.
The second extra condition is that our QCA must map a Pauli operator to a Pauli operator.
This is a rather special condition that merits a new name, {\bf Clifford QCA}.
Thus, we study translation invariant Clifford QCA.
A trivial QCA to us is similarly restricted.
Namely, we consider local Clifford circuits of small depth and translations (shifts),
both of which are regarded {\bf trivial}.
The classification problem we solve is hence formally different
from the problem we have defined in the beginning.
However, we have a map
\begin{align}
	\{\text{prime Clifford QCA}\}&\Big/\{\text{prime Clifford circuits and shifts}\}\nonumber\\
	&\xrightarrow{\qquad\overline{inclusion}\qquad}\label{eq:clifqcaToQCA}\\
	\{\text{QCA = Strictly locality-preserving}&\text{ causal unitaries}\}\Big/\{\text{shallow quantum circuits and shifts}\}\nonumber
\end{align}
which is likely not surjective.
Indeed, if we believe in Walker's conjecture~\cite{Walker_pc}
that any strictly local commuting Hamiltonian in two dimensions 
whose ground state subspace is locally indistinguishable
can only realize a Levin--Wen model~\cite{Levin_2005},
then the QCA in~\cite{Shirley2022} does not belong to the image of~\eqref{eq:clifqcaToQCA}.
Also, the QCA in~\cite{Chen2021} does not belong to the image of~\eqref{eq:clifqcaToQCA}
assuming a different conjecture that the $3+1$d topological order 
of emergent fermionic charges and fermionic loops cannot be realized 
in any $3$d lattice Hamiltonian~\cite{Chen2021,Fidkowski2021}.
On the other hand, Walker's conjecture 
implies that the map~\eqref{eq:clifqcaToQCA} is injective for $\dd=3$.

We present a complete classification of translation invariant Clifford QCA 
in all dimensions for any prime~$p$,
modulo a stable equivalence relation by 
translation invariant Clifford circuits of finite depth and shifts
where the circuits and shifts are allowed to obey a coarser translation invariance.
Given a prime~$p$,
let $\lat q \dd$ denote the $\dd$-dimensional 
lattice of qudits~$(\CC^{p})^{\otimes q}$.
We say two TI Clifford QCA are equivalent $\alpha \simeq \beta$
if $(\alpha \otimes I_\lat{q'}{\dd})(\beta \otimes I_\lat{q''}{\dd})^{-1}$ 
is a Clifford circuit combined with a shift,
that is overall translation invariant under a subgroup of $\ZZ^\dd$ of finite index.
The set of all equivalence classes is denoted by~$\mathfrak C(\dd,p)$.
For any translation invariant Clifford QCA~$\alpha$ on $\lat q \dd$ 
and a lattice~$\lat{q'} \dd$ we have a stabilization:
if $q\le q'$, 
then $\alpha \mapsto \alpha \otimes I$ 
on $\lat q \dd \sqcup \lat {(q'-q)} \dd = \lat {q'} \dd$.
Modulo Clifford circuits and shifts with respect to the unaltered translation group~$\ZZ^\dd$,
we have the direct limit~$\umod^-(R)$ under the stabilization
where $R = \FF_p[x_1,\tfrac 1 {x_1},\ldots,x_\dd, \tfrac 1 {x_\dd}]$.%
\footnote{
	We have not explained what $\umod^-(R)$ is,
	but it will become clear in \S\ref{sec:unitary}.
	Literally, it is the quotient group of all TI Clifford QCA on the infinite space
	modulo TI Clifford circuits of finite depth and shifts with stabilization.
}
On top of that, if $q' = b^\dd q$ for some integer $b \ge 1$,
then $\lat {q'} \dd$ is a lattice with a supersite $\CC^{p^q} \times \{1,2,\ldots,b\}^\dd$
sitting at every point of $\ZZ^d$.
Then we have a coarse-graining map 
$\cg{b} : \alpha \mapsto \alpha'$ where $\alpha'$ is the same as $\alpha$ 
but with the supersite interpreted as a new site.
This coarse-graning makes a directed system of stable groups~$\umod^-(R)$ via $\cg \cdot$
with respect to the partial order of integers~$b$ given by division.
Let~$\mathfrak C(\dd,p)$ be the direct limit.
Our main result is%
\footnote{
	One may find it not general enough that 
	we take translation subgroups of a special form;
	however, any subgroup of $\ZZ^\dd$ of finite index 
	is given by an integer $\dd$-by-$\dd$-matrix 
	of nonzero determinant, and further coarse-graining 
	using the adjugate matrix shows that our choice of $\cg b$ is sufficient.
}
\begin{theorem}\label{thm:main}
	$\mathfrak C(\dd,p)$ is an abelian group 
	isomorphic to the group listed in Table~\ref{tb:result}.
\end{theorem}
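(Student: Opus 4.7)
My plan is to recast TI Clifford QCAs algebraically as an $L$-theoretic object attached to the Laurent polynomial ring $R = \FF_p[x_1^{\pm 1},\ldots,x_\dd^{\pm 1}]$, and then carry out a \emph{dimensional descent}: a Laurent-extension identity reducing $\mathfrak{C}(\dd,p)$ to data attached to $\FF_p[x_1^{\pm 1},\ldots,x_{\dd-1}^{\pm 1}]$, iterating down to the classical Witt group of $\FF_p$.

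The first step is the algebraic dictionary. Modulo phases, TI Pauli operators on $\lat{q}{\dd}$ form a free $R$-module of rank $2q$ carrying a commutator symplectic form (together with a quadratic refinement when $p=2$ tracking Hermiticity). A TI Clifford QCA is precisely an $R$-linear automorphism preserving this form; stabilization $\alpha \mapsto \alpha \otimes I$ adds a hyperbolic summand; Clifford circuits correspond to products of transvection-type ``elementary'' automorphisms; and shifts act by multiplication by monomials. The quotient $\umod^-(R)$ is therefore a Witt-type $L$-group of the pair $(R,\text{form})$, and the further direct limit along the coarse-graining maps $\cg{b}$, which relaxes translation invariance to any finite-index subgroup of $\ZZ^\dd$, yields $\mathfrak{C}(\dd,p)$.

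The heart of the proof is the dimensional descent. Writing $R = R'[x,x^{-1}]$ with $R' = \FF_p[x_1^{\pm 1},\ldots,x_{\dd-1}^{\pm 1}]$, I would establish a Clifford-QCA analogue of the Bass--Heller--Swan/Ranicki splitting
\begin{equation*}
L_n(A[t,t^{-1}]) \cong L_n(A) \oplus L_{n-1}(A),
\end{equation*}
obtained by restricting a form to $R'$ and extracting a ``residue'' at $x$ that flips the symmetry sign. In our setting the constant summand should be killed by the circuit/shift equivalence---any QCA independent of $x$ is a stack of $(\dd-1)$-dimensional Clifford QCAs on each slice, absorbed into circuits after coarse-graining---leaving essentially the shifted summand, a degree-shifted Witt group in one lower dimension. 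Verifying that the Nil-tails of Bass--Heller--Swan genuinely vanish under the $\cg{b}$-direct limit (a large enough coarse-graining should turn a nilpotent endomorphism into an honest Clifford circuit) and tracking the symmetric/quadratic sign data through each step is, I expect, the main technical obstacle; it is precisely what must produce the period-$2$ pattern for $p=2$ and the period-$4$ pattern for odd~$p$.

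Iterating the descent $\dd$ times identifies $\mathfrak{C}(\dd,p)$ with an $L$-group $L_{n+\dd}(\FF_p)$ for a fixed offset $n$ determined by the initial symplectic degree of the Pauli form. Classical calculations give $L_0(\FF_p) \cong \qwitt(\FF_p)$ and the vanishing of $L_i(\FF_p)$ in the other residue classes modulo the period, so precisely the dimensions $\dd = 2k+3$ (for $p=2$) or $\dd = 4k+3$ (for odd $p$) support a nonzero group, and in those dimensions it equals $\qwitt(\FF_p)$. The abelian group structure on $\mathfrak{C}(\dd,p)$ is transported from orthogonal sum of forms under this identification, concluding Theorem~\ref{thm:main}.
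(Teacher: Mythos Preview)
Your plan is precisely the paper's approach: the algebraic dictionary, Ranicki's Laurent splitting (reproved in \S5 as $\vtheory_n(\dd,p) \cong \vtheory_n(\dd-1,p) \oplus \vtheory_{n-1}(\dd-1,p)$), and coarse-graining (\S6) to kill the constant summand, iterating down to $\vtheory_{3-\dd}(0,p)$. Two points where your anticipation diverges from the actual execution: (i) there are no Nil-tails to contend with, because the Quillen--Suslin--Swan theorem makes every finitely generated projective module over $\FF_p[x_1^{\pm 1},\ldots,x_\dd^{\pm 1}]$ free (and Suslin stability collapses simple and free $L$-theories as well), so the Laurent splitting is clean; (ii) the coarse-graining mechanism that kills the $\bar\epsilon$-embedded summand is not about taming nilpotent endomorphisms but is a pure torsion argument---coarse-graining by a factor $b$ along one axis replaces an $\bar\epsilon$-embedded object by its $b$-fold direct sum, and since $\qwitt^+(\FF_p)$ has exponent dividing $4$, so do all the groups in play (Propositions~\ref{thm:WittGroupExponent} and~\ref{thm:UnitaryGroupExponent}), whence $\cg{4}\bar\epsilon = 0$. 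The quadratic refinement for $p=2$ that you flag is indeed essential and is handled by a separate time-reversal argument (\S4, Proposition~\ref{thm:timereversal}) showing every $\lambda^-$-unitary class has an $\eta$-unitary representative; this feeds into Proposition~\ref{thm:oddBassExactNonquadratic} and is what produces the period-$2$ rather than period-$4$ behavior at $p=2$.
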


\begin{table}[b]
	\begin{tabular}{c||c|c|c|c|c|c|c}
	\hline\hline
	$\dd$ & $0$ & $1$ & $2$ & $3$ & $4$ & $5$ & $2k+3$ \\
	\hline
	$p=2$  & 0 & 0 & 0 & $\ZZ_2$ & 0 & $\ZZ_2$ & $\ZZ_2$ \\
	\hline\hline
	\end{tabular}

	\vspace{3ex}

	\begin{tabular}{c||c|c|c|c|c|c|c|c|c}
	\hline\hline
	$\dd$ & $0$ & $1$ & $2$ & $3$ & $4$ & $5$ & $6$ & $7$ & $4k+3$ \\
	\hline
	$p=3 \bmod 4$ & 0 & 0 & 0 & $\ZZ_4$ & 0 & 0 & 0 & $\ZZ_4$ & $\ZZ_4$\\
	$p=5 \bmod 4$ & 0 & 0 & 0 & $\ZZ_2 \oplus \ZZ_2$ & 0 & 0 & 0 & $\ZZ_2 \oplus \ZZ_2$ & $\ZZ_2 \oplus \ZZ_2$\\
	\hline
	\hline
	\end{tabular}

	\vspace{2ex}
	\caption{
		Classification of translation invariant (TI) Clifford QCA in $\dd \ge 0$ dimensions
		over prime $p$-dimensional qudits.
		$k \ge 0$ is any integer.
		Modulo Clifford circuits and shifts,
		the set of all TI Clifford QCA is an abelian group~$\mathfrak C(\dd,p)$
		isomorphic to the listed group,
		which is zero unless specified otherwise.
	}
	\label{tb:result}
\end{table}

Each nonzero group in the table 
is the classical Witt group of nonsingular quadratic forms over the finite field~$\FF_p$.
A quadratic form~$\phi$ over~$\FF_p$ is a square matrix 
such that $\phi + \phi^T$ has nonzero determinant,
up to an equivalence relation $\phi \sim \phi + \xi - \xi^T$ for any matrix~$\xi$.
If we extend the relation to the stable congruence, 
{\it i.e.}, $\phi \simeq E^T(\phi \oplus \eta)E$
where $\eta =\begin{pmatrix} 0 & 1 \\ 0 & 0\end{pmatrix}$ is a trivial form
and $E$ is any invertible matrix,
then the set of all stable equivalence classes is an abelian group
with the zero class represented by~$\eta$,
call the (classical) Witt group of~$\FF_p$~\cite{Kniga}.

The appearance of the classical Witt groups 
has a direct interpretation in terms of static Hamiltonian systems,
especially for $\dd = 3$~\cite{clifqca1}.
In~\cite{nta3} we showed that a QCA 
can be thought of as a collection of mutually commuting local operators, called a {\bf separator},
whose common eigenspaces are all one-dimensional given eigenvalues,
with a property that some local operators allow transitions among the eigenstates (locally flippable separator).
For the translation invariant Clifford case, 
every QCA~$\alpha_p$ modulo Clifford circuits and shifts
is fully determined by a local commuting Pauli Hamiltonian~$\alpha_p(-\sum_x Z_x)$
whose ground state is unique~\cite[IV.4 \& IV.10]{nta3}.
Note that $-\sum_x Z_x$ is a gapped Hamiltonian with a trivial, product state ground state.
It is then natural to think of what boundary theory can be introduced given a bulk $\dd$-dimensional Hamiltonian.
In~\cite{nta3,clifqca1}, representative QCA~$\alpha_p$ 
for all the cases in Table~\ref{tb:result} with $\dd = 3$ were reported
with a specific boundary theory, such that the overall system with the boundary
has a commuting parent Hamiltonian.
In every case, the boundary theory is some anyon theory.
With the commutativity imposed on parent Hamiltonians, 
such an anyon theory is only meaningful up to a standalone two-dimensional theory,
which is believed~\cite{Walker_pc} to be always Levin--Wen type~\cite{Levin_2005}.
Then, the boundary anyon theory defines 
an element in the Witt group of the modular tensor category~\cite{Davydov2010,Davydov2011},
which reduces to the classical Witt group 
if the fusion group is the additive group of an $\FF_p$-vector space.
In short, the Witt group captures the boundary topological order.

In fact, if one knew that the Hamiltonian defined by a locally flippable separator in~$\dd=3$
always admits a local commuting gapped Hamiltonian at the boundary,
one could use the classification result for local commuting Pauli Hamiltonians in two dimensions~\cite{Haah2018},
to solve the classification problem of translation invariant Clifford QCA in~$\dd=3$.
The result in this paper
can be thought of as a proof 
that one can always gap out boundaries of three-dimensional bulk governed by a commuting Pauli Hamiltonian.

To prove Table~\ref{tb:result},
we use certain isomorphisms of algebraic $L$-groups over Laurent polynomial rings.
The theory of algebraic $L$-groups sometimes goes under the name of $K$-theory of 
quadratic and hermitian forms~\cite{Wall1970Hermitian,Solovev1989Survey}.
This is developed to study obstruction to surgery in geometric topology,
but has a purely algebraic formulation~\cite{Ranicki1,Ranicki2},
which we follow mostly.
Laurent polynomials in our context have been used 
before~\cite{Haah2013,nta3,Haah2018,RubaYang2022}.
The largest portion (\S\ref{sec:updown}) of the main text below is devoted to review
the material of~\cite{Novikov1,Novikov2,Ranicki1,Ranicki2}.
The purpose is to check all the results of~\cite{Ranicki2},
which is written in a slightly different setting,
and to correct some details;
see~\ref{thm:BassUpFromUnitary} and~\ref{thm:up-down-and-down-up}.

We would like to note a recurring theme, the notion of blending~\cite{GNVW,FHH2019,clifqca1}.
Since a QCA is a homomorphism of a locally generated $*$-algebra,
we think of a QCA as a map from single-site operators.
A general QCA~$\alpha$ is said to blend into another QCA~$\beta$
if there exists a QCA~$\gamma$ such that $\alpha$ agrees with~$\gamma$ in a half space
and with~$\beta$ in the other half, 
except for some thin crossover region in between the two half spaces.
This is an equivalence relation and is applicable beyond Clifford QCA.
Over the course of the proof of our Table~\ref{tb:result},
we will encounter various boundary objects,
which we interpret in terms of obstruction to blending.
Given a QCA in $\dd$ dimensions,
one defines a \emph{boundary algebra} along an artificially introduced spatial 
boundary~\cite{GNVW,FreedmanHastings2019QCA,clifqca1}.
Within the Clifford category, this boundary algebra is succinctly described by an antihermitian form
over a Laurent polynomial ring with $\dd-1$ variables~\cite{clifqca1}.
The antihermitian form can be interpreted as a map from a module to its dual module,
and it makes sense to speak of blending of two such maps.
It turns out that a faithful obstruction for the blending 
is valued in an automorphism group of a trivial quadratic form 
over a Laurent polynomial ring with $\dd-2$ variables.
After this two-step descent,
we proceed purely algebraically and iterate the descent procedure,
leading to Table~\ref{tb:result}.

The rest of the paper is organized as follows.

In~\S\ref{sec:forms} we set up foundation on hermitian and quadratic forms.
The Witt group will be defined over a ring with involution
and the classical Witt group of finite fields will be computed fully.
We will explain fine differences between quadratic and hermitian Witt groups.
This only matters with rings in which $2$ is not invertible,
which is important for us when we consider Clifford QCA over qubits~$\CC^2$.
Over rings where $2$ is invertible, there are two Witt groups depending on a sign.
This will become important in our final classification calculation
and is the origin of the mod~4 periodicity in~$\dd$.

In~\S\ref{sec:unitary} we introduce automorphism groups of trivial quadratic forms~$\eta$ 
and of trivial hermitian forms~$\lambda$.
The group of all translation invariant Clifford QCA on~$\ZZ^\dd$
will be naturally identified with one of these groups~\cite{Haah2013}.
The infinite system~$\ZZ^\dd$ is convenient for us
because, roughly speaking,
any finite quantity will correspond to a constant-sized quantity independent of system size
if we compactify the physical space to have an infinite family of finite systems.
Similarly to the Witt groups, there will be four kinds of automorphism groups.
Following classical literature, we call these \emph{unitary groups}.
Perhaps, this is an unwelcome overloading of terminology 
for the unitary group over complex numbers,
but the complex unitary group is also an automorphism group of a hermitian form
represented by the identity matrix with the involution of complex numbers 
being the complex conjugation.
So, the name \emph{unitary group} is orthodox.
To avoid confusion, we will write \emph{$\eta$-unitary} and \emph{$\lambda$-unitary}
for the automorphism groups over Laurent polynomial rings
and reserve $\CC$-unitary for the complex unitary group.
In our previous papers~\cite{Haah2013,clifqca1} 
we called the $\lambda^-$-unitary group 
a ``symplectic'' group, but we no longer do.
We briefly digress to investigate the relation between Clifford circuits 
and commutators of Clifford QCA.
It turns out that the commutator subgroup is a subgroup of index at most~$2$
of the group of all Clifford circuits.

In~\S\ref{sec:timereversal} we reprove the fact that 
that every QCA over qubits $\CC^2$
may be modified by a Clifford circuit so that it maps 
every real operator to a real operator~\cite{nta3}.
This is needed to explain the peculiarity in Table~\ref{tb:result}
that $\mathfrak C(\dd=1,p=2) = 0$ but $\mathfrak C(\dd,p=2)$ has period~$2$ in~$\dd \ge 3$.

In~\S\ref{sec:updown} we reprove the Laurent extension theorems of 
Novikov and Ranicki~\cite{Novikov1,Novikov2,Ranicki1,Ranicki2},
specialized for finitely generated free modules
with some amendments in~\ref{thm:BassUpFromUnitary} 
and~\ref{thm:up-down-and-down-up}.
We keep our exposition self-contained and every result will be proved,
except for the Quillen--Suslin--Swan theorem~\cite{Suslin1977Stability,Swan1978}.
Experts in surgery theory would recall that there are three $L$-groups,
called ``$U$''-, ``$V$''-, and ``$W$''-theories,%
\footnote{
	The end of~\cite{Ranicki2} 
	contains a correction for the definition of $W$-theories~\cite{Ranicki1}.
}
corresponding to finitely generated
projective modules, stably free modules, 
and based free modules with simple equivalences, respectively.
Over our base ring $R = \FF_p[x_1,\tfrac 1 {x_1},\ldots,x_\dd,\tfrac 1 {x_\dd}]$,
the Quillen--Suslin--Swan~\cite{Suslin1977Stability,Swan1978} theorem states
that every finitely generated projective module is free.
Hence, $U$- and $V$-theories coincide.
Moreover, Suslin's stability theorem~\cite{Suslin1977Stability} says
that any invertible matrix over~$R$ 
is a product of elementary matrices and a unit,
so $V$- and $W$-theories coincide.
This is the reason we take $p$ to be a prime.

In~\S\ref{sec:cg} we complete our classification proof
by breaking the translation invariance down to a coarser translation group.
Without such weaker translation invariance,
all the lower dimensional invariants survive
and the classification table would have been more complicated.
It is not difficult to track all those lower dimensional invariants, 
but we focus on the top dimensional invariant that is arguably the most important.
We interpret the Laurent extension theorems
as blending obstructions.
The constructions of the isomorphisms of~\S\ref{sec:updown},
though not necessarily their proofs,
will be needed to understand the blending obstructions
and to obtain concrete representatives of nontrivial Clifford QCA.

{\bf Acknowledgments.}
I thank Mike Freedman, Matt Hastings, and Zhenghan Wang for useful discussions.

\tableofcontents

{\bf Conventions.}
Throughout, $R$ denotes the Laurent polynomial ring~$\FF_p[x_1,x_1^{-1},\ldots,x_\dd,x_\dd^{-1}] = \FF_p[\ZZ^\dd]$
for some integer~\mbox{$\dd \ge 0$} over a prime field~$\FF_p$.
There is an $\FF_p$-linear involution~$R \ni r \mapsto \bar r \in R$
such that~$x_i \mapsto x_i^{-1}$.
So,~$\overline{x_i} = \tfrac 1 {x_i}$.
For any matrix~$M$, the adjoint~$M^\dag$ is the transpose of~$M$ followed by entrywise involution.
We use $\gl(q;R)$ to denote the set of all $q$-by-$q$ invertible matrices over~$R$,
{\it i.e.}, 
those of determinant~$f x_1^{t_1} \cdots x_\dd^{t_\dd} \in R$ for some nonzero~$f \in \FF_p$ and~$t_j \in \ZZ$.
We use $\gl(R)$ to denote the infinite general linear group,
the direct limit of the inclusions $M \mapsto M \oplus I$.
We define~$\sgl(q;R)$ and~$\sgl(R)$ to be the subgroup of~$\gl(q;R)$ and~$\gl(R)$, respectively,
consisting of all matrices of determinant~$1 \in R$.
For any ring~$T$ and a positive integer~$n$, 
we denote by~$\Mat(n;T)$ the set of all $n$-by-$n$ matrices with entries in a ring~$T$.
We sometimes omit $n$ when the size is inferred from the context.

Given a free $R$-module~$R^m = \bigoplus_{j=1}^m R e_j$ 
we define a dual~$(R^m)^*$, the set of all $R$-linear functionals on~$R^m$,
with a standard basis~$f_1,\ldots,f_m : R^m \to R$
defined by~$f_j(e_j) = 1$ but $f_j(e_i) = 0$ if $j \neq i$.
The ring action on the dual is given by~$(r \cdot f)(\circ) = f( \bar r\, \circ)$.%
\footnote{
	In addition to this ring action,
	there is another ring action which is $(rf)(\circ) = r f(\circ)$.
	If $R$ is noncommutative,
	the two actions are clearly distinguished.
	Suppose we have a right module~$M$;
	since we write a module element as a column matrix, this is appropriate.
	Then, the dual~$M^*$ is naturally a left module.
	With the involution that is an antiautomorphism of~$R$,
	we make the dual into a right module:
	if $f \in M^*$, then for $r \in R$, we define $(f \cdot r)(\circ) = f(\circ\, \bar r)$.
	This is our ring action on the dual.
	Since our $R$ is commutative, there is no distinction between left and right modules,
	but it continues to be true that there are two different actions on a dual.
	We always use the one by the involution.
}
Since the dual~$(R^m)^*$ is a free $R$-module,
we may represent its elements by a column matrix using the standard basis.
When doing so, one needs to be careful since elements of $R^m$ are also represented by column matrices.
If a column matrix $v$ represents an element of~$(R^m)^*$,
then the linear functional by~$v$ is~$R^m \ni w \mapsto v^\dag w \in R$.
We see that the ring action on~$(R^m)^*$ is consistent: 
$(av)^\dag w = \bar a v^\dag w$ for any~$a \in R$.

\begin{remark}\label{rem:matrixIntoDual}
	Below we will frequently consider a $R$-linear map~$\varphi : Q \to Q^*$
	for some finitely generated free $R$-module~$Q$.
	Since~$Q^*$ and~$Q$ are isomorphic (though not canonically) using our standard basis,
	we can represent~$\varphi$ as a matrix~$P$ that acts on column vectors of~$Q$.
	If $v \in Q$, then $Pv$ is a column vector representing a linear functional,
	that is $Q \ni w \mapsto (Pv)^\dag w = v^\dag P^\dag w \in R$.
	On the other hand, we may think of $(\varphi(e_j))(e_k) \in R$
	which is more canonical since it does not require any choice of basis in~$Q^*$.
	We write~$\varphi(e_j,e_k)$ for~$(\varphi(e_j))(e_k)$.
	This gives a matrix $P'$ whose $(j,k)$ entry is~$\varphi(e_j,e_k)$.
	That is, $P'_{jk} = \overline{P_{kj}}$.
	One should not confuse two matrices~$P$ and~$P'$.
	This is especially important when we compose several maps;
	{\it e.g.},~$Q \to Q^* \to Q \to Q^*$.
	It is unprimed~$P$ that follows the usual matrix multiplication rule in the composition.
	When it comes to concrete calculations, it is perfectly fine to 
	``remove all the stars on modules'' and work in the standard basis;
	however, keeping ``stars'' helps preventing nonsense expressions such as~``$\varphi^2$.''
\end{remark}

\begin{table}[t]
	\caption{
	Abbreviated definitions and conventions.
	}
	\begin{tabular}{c|l}
	\hline\hline
	$\overline \cdots $ & $\FF_p$-linear involution of the base ring by which $x_i \mapsto \tfrac 1 {x_i}$.\\
	\hline
	$\lambda^\mp$- \& $\eta^\mp$-unitary  & \longcell{An automorphism of a module~$Q \oplus Q^*$ \\preserving a trivial hermitian or quadratic form.}\\
	\hline
	$\hat\oplus$ & \longcell{Direct sum  (when applied to concrete unitary matrices,\\ upper left blocks are collected in the upper left).}\\
	$\oplus$ & \longcell{Direct sum (when applied to matrices of forms,\\ the summands are placed block diagonally as usual).}\\
	\hline
	$\sim$ & Equivalence of quadratic $\mp$-forms by adding even forms $\mu \pm \mu^\dag$.\\
	$\cong$ & Congruence without stabilization.\\
	$\simeq$ & Witt equivalence; stable congruence including~$\sim$.\\
	\hline
	$\hp^\mp$,$\qhp^\mp$, $\ehp^\mp$, $\qehp^\mp$ & \longcell{
		Unitary groups; tilde for quadratic, $\mathsf E$ for elementary.\\ For $p=2$, ``Pauli $X$ to Pauli $Y$'' is forbidden in the quadratic case.}\\
	\hline
	$\switt^\mp, \qwitt^\mp$ & Witt groups of nonsingular forms under stable congruence.\\
	$\umod^\mp, \qumod^\mp$ & Unitary groups modulo elementary ones; tilde for quadratic.\\
	\hline
	orthogonal~$\perp$ & Always with respect to the (associated) hermitian form.\\
	sublagrangian & \longcell{A self-orthogonal direct summand \\such that a quadratic $\mp$-form restricts to an even hermitian $\pm$-form.}\\
	lagrangian & A sublagrangian whose orthogonal complement is itself.\\
	\hline
	$
	\begin{Bmatrix}A \\ B\end{Bmatrix}
	\begin{pmatrix} \alpha & \beta \\ \gamma & \delta \end{pmatrix}
	\begin{Bmatrix}C \\ D\end{Bmatrix}
	$
	&
	\longcell{
	shorthand for four maps $\alpha : C \to A$, $\beta : D \to A$, $\gamma: C \to B$, \\and $\delta : D \to B$.}
	\\
	\hline\hline
	\end{tabular}
	\vspace{2ex}
	\label{tb:glossary}
\end{table}

Since there are similar but importantly different notions,
we collect their abbreviated definitions in Table~\ref{tb:glossary} for convenient reference;
precise definitions are given in the text.

\section{Quadratic and hermitian forms}\label{sec:forms}

For self-contained exposition,
we collect some standard results on Witt groups of quadratic forms.
See~\cite{Wall1970Hermitian,Ranicki1} for general account.

Let $Q, Q'$ be finitely generated $R$-modules.
A {\bf sesquilinear} function~$\Phi$ whose domain is~$Q' \times Q$ 
is identified with a map~$Q' \to Q^*$.
This is appropriate because if~$v \in Q'$, $w \in Q$, and~$a \in R$
then $\Phi(a v)(w) = (a \cdot \Phi(v))(w) = \Phi(v)(\bar a w) = \bar a \Phi(v)(w)$
in accordance to our ring action on dual modules.
As noted in~\ref{rem:matrixIntoDual}, we denote~$\Phi(v)(w)$ by~$\Phi(v,w)$.
The {\bf adjoint}~$\Phi^\dag$ of a sesquilinear function~$\Phi$ on~$Q' \times Q$ 
is defined by $\Phi^\dag(u,v) = \overline{\Phi(v,u)}$;
the double adjoint is itself.%
\footnote{
	For any finitely generated $R$-module~$L$,
	the isomorphism between~$L$ and~$L^{**}$ is given by
	$L \ni u \mapsto u^{**} = (L^* \ni f \mapsto \overline{f(u)}) \in L^{**}$.
	Note that the involution makes this isomorphism $R$-linear under our ring action convention on duals:
	for $r \in R$, we have 
	$r \cdot u = r u \mapsto \big(
	f \mapsto \overline{f(ru)} = \bar r \overline{f(u)} = \bar r u^{**}(f) 
	= u^{**}(\bar r \cdot f) = (r \cdot u^{**})(f) 
	\big) 
	= r \cdot u^{**}$.
	So, our adjoint is the usual dual map.
}

If $Q$ is finitely generated free with a basis chosen,
$\Phi$ can be represented by a matrix of form values
evaluated at all pairs of basis elements.
The matrix for the adjoint is nothing but 
the transpose followed by the entrywise involution.

A sesquilinear function~$\Delta$ on~$Q \times Q$ 
is a {\bf hermitian $\mp$-form} if
$\overline{\Delta(u,v)} = \mp \Delta(v,u)$ for all~$u,v \in Q$.
A hermitian form is {\bf nonsingular} 
if $Q \ni u \mapsto \Delta(u,\circ) \in Q^*$ is an isomorphism.
If $Q$ is a free module of rank~$n$,
we say that the hermitian form has dimension~$n$.
Conversely, an $n$-by-$n$ matrix~$\Delta$ such that $\Delta^\dag = \mp \Delta$
defines a hermitian $\mp$-form of dimension~$n$.
If $\Delta$ is nonsingular, then $\det \Delta \in R^\times$ in a representation using any basis.
This is true regardless of whether we consider the matrix of a linear map or the matrix of form values;
see~\ref{rem:matrixIntoDual}.
We say $v,w \in Q$ are {\bf orthogonal} and write $v \perp w$ if $\Delta(v,w) = 0$.
A {\bf trivial} hermitian $\mp$-form is a nonsingular form represented by
\begin{align}
	\lambda^\mp = 
	\begin{Bmatrix}
	L^* \\ L
	\end{Bmatrix}
	\begin{pmatrix}
	0 & 1 \\
	\mp 1 & 0
	\end{pmatrix}
	\begin{Bmatrix}
	L \\ L^*
	\end{Bmatrix}
\end{align}
for a finitely generated projective 
(and hence free due to Quillen--Suslin--Swan theorem~\cite{Suslin1977Stability,Swan1978}) 
$R$-module~$L$, where we identified the double dual~$L^{**}$ with~$L$ in the second component of the codomain.
Here, the column matrix with braces denote the direct summands of the domain (the right braced column matrix) 
and codomain (the left),
and the middle matrix collects four $R$-linear maps.
In view of~\ref{rem:matrixIntoDual} we note that this is not a matrix of values.

A {\bf quadratic $\mp$-form} is
a sesquilinear function~$\phi$ on $Q \times Q$
together with a choice of sign 
in the {\bf associated} hermitian $\mp$-form~$\phi \mp \phi^\dag$.
A quadratic form is {\bf nonsingular} if its associated hermitian form is nonsingular.
A quadratic $\mp$-form~$\phi$ is {\bf equivalent} to another~$\xi$
if $\phi = \xi + \theta \pm \theta^\dag$ for some sesquilinear function~$\theta$,
in which case we write $\phi \sim \xi$.
Equivalent quadratic forms have the same associated hermitian form.
A hermitian $\mp$-form is {\bf even} if it is associated with some quadratic form.
We say $v,w \in Q$ are {\bf orthogonal} if $(\phi \mp \phi^\dag)(v,w) = 0$;
this is a weaker condition than~$\phi(v,w) = 0$.
A {\bf trivial} quadratic $\mp$-form is a nonsingular form represented by
\begin{align}
	\eta = 
	\begin{Bmatrix}
	L^* \\ L
	\end{Bmatrix}
	\begin{pmatrix}
	0 & 1 \\
	0 & 0
	\end{pmatrix}
	\begin{Bmatrix}
	L \\ L^*
	\end{Bmatrix}\label{eq:eta}
\end{align}
for a finitely generated projective (hence free) $R$-module~$L$.

If $\half \in R$, 
then the associated hermitian form determines a quadratic form up to equivalence~$\sim$.
The ambiguity in the decomposition~$\Delta = \half \Delta \mp \half \Delta^\dagger$
is precisely in the form~$\theta \pm \theta^\dagger$.
So, one does not have to distinguish the two notions, quadratic and hermitian forms,
and the evenness condition is always satisfied.
However, if $R$ has characteristic~$2$,
then the two inequivalent quadratic forms~$\begin{pmatrix} 1 & 1 \\ 0 & 1 \end{pmatrix}$
and~$\begin{pmatrix} 0 & 1 \\ 0 & 0 \end{pmatrix}$
give the same associated hermitian form.
In addition, a nonsingular hermitian form may not be even:
$\begin{pmatrix} 1 & 1 \\ 1 & 0 \end{pmatrix}$.

In all cases, a quadratic form can be characterized by its values in the following sense.

\begin{lemma}[\cite{Wall1970Hermitian}]\label{thm:quadraticFormByValues}
	Let $R_\pm = \{ r \pm \bar r ~|~ r \in R \}$ be an additive subgroup of~$R$
	and let $R/R_\pm$ be a quotient additive group.
	Any equivalence ($\sim$) class of quadratic $\mp$-forms~$[\phi]$ 
	on a projective $R$-module~$Q$
	defines a function $\beta: Q \ni u \mapsto \phi(u,u) + R_\pm \in R/R_\pm$
	and a hermitian $\mp$-form~$\Phi$ associated with it,
	satisfying for all $u,u' \in Q$ and $a \in R$
	\begin{itemize}
	\item[(i)] $\beta(u+u') = \beta(u) + \beta(u') + (\Phi(u,u') + R_\pm)$,
	\item[(ii)] $\beta(a u) = \bar a \beta(u) a$,  and
	\item[(iii)] $\Phi(u,u) = r \mp \bar r$ for any~$r \in \beta(u)$.
	\end{itemize}
	Conversely, for any function $\beta: Q \to R/R_\pm$ and a hermitian $\mp$-form~$\Phi$
	satisfying all the three conditions,
	there is a unique equivalence ($\sim$) class of quadratic $\mp$-forms on~$Q$
	which gives~$\beta$ and~$\Phi$.
\end{lemma}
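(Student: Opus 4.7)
The plan is to verify the forward implication by short unpacking of definitions, then construct an inverse map by choosing a basis of $Q$ (available since $Q$ is finitely generated projective over $R$, hence free by Quillen--Suslin--Swan), and finally to pin down uniqueness up to $\sim$ by a second explicit construction of a correcting sesquilinear function. Throughout, care is needed in tracking whether a sign is $\mp$ (matching the form's type) or $\pm$ (its opposite), since in characteristic~$2$ these coincide as signs yet (iii) remains a genuinely additional datum beyond $\Phi$.

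For the forward direction, I would fix a representative $\phi$ of the class, set $\beta(u) = \phi(u,u) + R_\pm$, and take $\Phi = \phi \mp \phi^\dag$. Well-definedness of $\beta$ on equivalence classes reduces to the fact that $(\theta \pm \theta^\dag)(u,u) = \theta(u,u) \pm \overline{\theta(u,u)} \in R_\pm$, which is the very definition of $R_\pm$. Property (i) is the sesquilinear expansion $\phi(u+u', u+u') = \phi(u,u) + \phi(u',u') + \phi(u,u') + \phi(u',u)$, in which the two cross terms combine, via $\phi(u',u) = \overline{\phi^\dag(u,u')}$, into $\Phi(u,u')$ plus an element of $R_\pm$: a one-line check shows that $\phi(u,u') + \overline{\phi^\dag(u,u')} - \Phi(u,u') = \overline{\phi^\dag(u,u')} \pm \phi^\dag(u,u')$. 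Property (ii) is immediate from the sesquilinearity convention $\phi(au,au) = \bar a \phi(u,u) a$. Property (iii) falls out of taking $r = \phi(u,u)$, so that $\Phi(u,u) = r \mp \bar r$; independence of the chosen coset representative uses $(s \pm \bar s) \mp \overline{(s \pm \bar s)} = 0$.

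For the backward direction, I would fix a basis $e_1, \ldots, e_n$ of $Q$ and define $\phi(e_i, e_j)$ to be $\Phi(e_i, e_j)$ for $i < j$, zero for $i > j$, and an arbitrary lift $r_i \in R$ of $\beta(e_i)$ for $i = j$; then extend sesquilinearly. Verification that $\phi \mp \phi^\dag = \Phi$ is direct for $i < j$, follows from (iii) for $i = j$, and from the identity $\overline{\Phi(e_j, e_i)} = \mp \Phi(e_i, e_j)$ (hermitian $\mp$-form) for $i > j$. To see $\phi(u,u) \equiv \beta(u) \pmod{R_\pm}$ for general $u = \sum_i a_i e_i$, I would expand both sides: the sesquilinear expansion on the left and iterated applications of (i) and (ii) on the right produce the same diagonal contributions $\bar a_i a_i r_i$ (by construction) and the same off-diagonal contributions $\bar a_i a_j \Phi(e_i, e_j)$. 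For uniqueness up to $\sim$, suppose $\phi'$ is a second quadratic $\mp$-form with the same $\beta$ and $\Phi$. Then $\psi := \phi - \phi'$ satisfies $\psi = \pm \psi^\dag$ and $\psi(u,u) \in R_\pm$ pointwise. Choose $s_i \in R$ with $s_i \pm \bar s_i = \psi(e_i,e_i)$, and set $\theta(e_i, e_j) = \psi(e_i, e_j)$ for $i < j$, zero for $i > j$, and $s_i$ for $i = j$; a case check over basis pairs confirms $\psi = \theta \pm \theta^\dag$, so that $\phi \sim \phi'$.

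The main difficulty is not conceptual but bookkeeping, and it concentrates in characteristic~$2$: there (iii) is genuinely stronger than requiring $\Phi(u,u)$ to be $\pm$-symmetric, because an even hermitian form may refine to two inequivalent quadratic forms, as illustrated by the example $\begin{pmatrix} 1 & 1 \\ 0 & 1 \end{pmatrix}$ versus $\begin{pmatrix} 0 & 1 \\ 0 & 0 \end{pmatrix}$ recalled above. Isolating the diagonal lift $r_i$ as a free choice in the backward construction, rather than attempting to recover it from $\Phi$, is precisely what records this refinement and makes the argument uniform across all characteristics.
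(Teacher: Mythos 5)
Your core construction is the same as the paper's: fix a totally ordered basis, declare the quadratic form to be $\Phi(e_i,e_j)$ above the diagonal, zero below, and an arbitrary lift of $\beta(e_i)$ on the diagonal, then verify it reproduces $\beta$ and $\Phi$; uniqueness by exhibiting an explicit $\theta$ with $\phi-\phi'=\theta\pm\theta^\dag$ (your upper-triangular choice of $\theta$ is correct, and the sign bookkeeping in the forward direction checks out).

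The one genuine deviation is how you dispose of the hypothesis that $Q$ is projective. The lemma is stated for an arbitrary projective $R$-module, with no finite-generation assumption, and the paper's proof deliberately allows an infinite (totally ordered) basis; you instead write ``finitely generated projective, hence free by Quillen--Suslin--Swan.'' That step both smuggles in a hypothesis not present in the statement and leans on a theorem that only applies to finitely generated modules, so as written your argument proves a weaker statement than the one claimed. The fix is cheap and is what the paper does: embed $Q$ as a direct summand of a free module $F=Q\oplus Q'$, extend $\beta$ and $\Phi$ to $F$ by zero on $Q'$ (conditions (i)--(iii) survive this extension), run your basis construction on a totally ordered, possibly infinite, basis of $F$, and restrict back to $Q$; the uniqueness argument is extended the same way. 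With that replacement your proof matches the paper's in both structure and substance.
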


Note that if $R = \FF$ is a field with the identity involution,
then $R_- = \{0\}$ and a quadratic ($+$)-form is a classical quadratic form 
over~$\FF$.

\begin{proof}
	The forward direction is obvious, 
	where the choice of a representative~$\phi$ is clearly immaterial.

	If the lemma is proved with $Q$ free,
	the full lemma follows because a projective module is a direct summand of a free module,
	and a form may be defined to assume zero on the complementary module.
	So, we assume $Q$ is free with a basis~$\{e_j\}$
	which we assume is totally ordered ($<$) but is not necessarily finite.

	To show the existence in the converse,
	we define a representative quadratic form~$\xi$ by
	letting $\xi(e_j,e_j) \in \beta(e_j,e_j)$ be arbitrary
	and~$\xi(e_j, e_k) = \Phi(e_j,e_k)$ if~$j < k$ but~$\xi(e_j, e_k) = 0$ if~$j > k$.
	The definition of~$\xi$ is extended to all~$Q$ by sesquilinearity.
	We have to check that a function $\tilde \beta(u) = \xi(u,u) + R_\pm$ and $\Phi$
	satisfy the three conditions.
	By definition, $\tilde \beta(e_j) = \beta(e_j)$ for all~$j$
	and $\Phi(e_j,e_j) = \xi(e_j,e_j) \mp \overline{\xi(e_j,e_j)}$.
	Let~$a = \sum_j a_j e_j$ and~$b = \sum_k b_k e_k$ 
	be arbitrary but finite $R$-linear combinations.
	Condition~(i) reads $\xi(a+b,a+b) - \xi(a,a) - \xi(b,b) = \Phi(a,b) + r_\pm$ for some $r_\pm \in R_\pm$,
	which is true because
	\begin{align}
		&\sum_{j,k} \bar a_j \xi(e_j, e_k) b_k + \bar b_j\xi(e_j,e_k)a_k \nonumber\\
		=&
		\sum_{j} (\bar a_j \xi(e_j,e_j) b_j + \bar b_j \xi(e_j,e_j) a_j ) 
		+ \sum_{j < k} (\bar a_j \Phi(e_j,e_k) b_k + \bar b_j \Phi(e_j,e_k) a_k )\nonumber\\
		=&
		\sum_j ( \bar a_j \Phi(e_j,e_j) b_j \pm \bar a_j \overline{\xi(e_j,e_j)} b_j + \bar b_j \xi(e_j,e_j) a_j )\\
		&\qquad
		+ \sum_{j < k} (\bar a_j \Phi(e_j,e_k) b_k + \bar a_k \Phi(e_k,e_j) b_j) - (\bar a_k \Phi(e_k,e_j) b_j \pm \bar b_j \overline{\Phi(e_k,e_j)} a_k )\nonumber\\
		=&
		\sum_{j,k} \bar a_j \Phi(e_j,e_k) b_k + r_\pm .\nonumber
	\end{align}
	Condition~(ii) is immediate from the definition of~$\tilde \beta$.
	Condition~(iii) is checked as
	\begin{align}
		\xi(a,a) &= \sum_{j,k} \bar a_j \xi(e_j,e_k) a_k = \sum_j \bar a_j \xi(e_j,e_j) a_j + \sum_{j<k} \bar a_j \Phi(e_j,e_k) a_k,\nonumber\\
		\overline{\xi(a,a)} &= \mp \sum_j \bar a_j \overline{\xi(e_j,e_j)} a_j \mp \sum_{j<k} \bar a_k \Phi(e_k,e_j) a_j, \\
		\xi(a,a) \mp \overline{\xi(a,a)} &= \sum_j \bar a_j \Phi(e_j,e_j) a_j + \sum_{j \neq k} \bar a_j \Phi(e_j,e_k) a_k.\nonumber
	\end{align}
	It follows that $\beta(u) = \tilde \beta(u)$ for all $u \in Q$ because,
	due to~(ii), this holds for all the basis elements,
	and
	due to~(i), this holds for any finite $R$-linear combination of basis elements.
	We conclude that $\xi$ is a desired quadratic form.
	Finally, for uniqueness,
	we define $\theta(e_j,e_k) = (\phi - \xi)(e_j,e_k)$ for~$j \neq k$ and
	choose $\theta(e_j,e_j)$ such that $(\theta \pm \theta^\dag)(e_j,e_j) = (\phi - \xi)(e_j,e_j) \in R_\pm$ for each~$j$,
	and extend the definition of $\theta$ by sesquilinearity to all~$Q$.
	It follows that $\phi - \xi = \theta \pm \theta^\dag$ and $[\phi] = [\xi]$.
\end{proof}

We say two forms $\phi,\phi'$, quadratic or hermitian, of possibly different dimensions 
are {\bf stably equivalent} or {\bf Witt equivalent} and write $\phi \simeq \phi'$
if there is an invertible $R$-module morphism~$E$
such that
\begin{align}
	E^\dagger(\phi \oplus \xi)E = \phi' \oplus \xi'
\end{align}
where $\xi,\xi'$ are trivial forms of possibly different dimensions.

\begin{proposition}\label{thm:WittGroupOfForms}
	The set~$\qwitt^\mp(R)$ of all stable equivalence classes
	of nonsingular quadratic $\mp$-forms on finitely generated free $R$-modules
	is an abelian group 
	under the direct sum operation 
	with the zero class represented by a trivial form
	and $[-\xi] = -[\xi]$ for all~$[\xi] \in \qwitt^\mp(R)$.
\end{proposition}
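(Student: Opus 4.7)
The plan is to check the group axioms in the order well-definedness of $\oplus$, associativity and commutativity, identity, inverse, with essentially all the substance concentrated in the construction of inverses. First I would verify that $\oplus$ descends to $\qwitt^\mp(R)$ by noting that the direct sum of two trivial forms is again trivial (the underlying modules are finitely generated free, and the block-diagonal shape has the required form), that $\sim$ respects $\oplus$ (replace $\theta$ by $\theta \oplus 0$), and that a stable congruence $E^\dag (\phi \oplus \xi) E = \phi' \oplus \xi'$ extends to $\phi \oplus \psi \simeq \phi' \oplus \psi$ by $E \oplus I_\psi$ followed by a block-permutation gathering stabilizers to one side. Associativity and commutativity are block-permutation congruences. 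Hence $\qwitt^\mp(R)$ inherits an abelian monoid structure, and any trivial form $\eta$ represents the identity class because $\simeq$ is defined so that stabilization by trivial forms is free, giving $\phi \oplus \eta \simeq \phi$.

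The interesting axiom is the existence of inverses, which I would handle by exhibiting an explicit lagrangian for $\phi \oplus (-\phi)$ and reducing the resulting form to $\eta$. The diagonal $\Delta = \{(v,v) : v \in Q\}$ inside $Q \oplus Q$ is self-orthogonal, since $\phi \oplus (-\phi)$ evaluates to $\phi(v,v) - \phi(v,v) = 0$ on $\Delta$; by Lemma~\ref{thm:quadraticFormByValues} the associated function $\beta$ and the associated hermitian form both vanish on $\Delta$, so $\Delta$ is a sublagrangian. Nonsingularity of $\phi$ makes $\Phi = \phi \mp \phi^\dag$ an isomorphism $Q \to Q^*$, so $(x,y) \in \Delta^\perp$ forces $\Phi(x-y) = 0$, hence $x = y$ and $\Delta^\perp = \Delta$, making $\Delta$ a lagrangian. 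Taking the free complement $\{(0,w) : w \in Q\}$ and applying the change of basis $\begin{pmatrix} I & 0 \\ I & I \end{pmatrix}$ produces the matrix $\begin{pmatrix} 0 & -\phi \\ -\phi & -\phi \end{pmatrix}$; a $\sim$-move by $\theta = \begin{pmatrix} 0 & 0 \\ \phi & \phi \end{pmatrix}$ brings this to $\begin{pmatrix} 0 & \mp\Phi \\ 0 & \pm\phi^\dag \end{pmatrix}$, and an upper-triangular shear with $B = \mp \Phi^{-1}\phi$ eliminates the $(2,2)$ block. A final rescaling of the second summand by $(\mp\Phi)^{-1}$ normalizes the top-right block to the identity, giving $\eta$. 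This establishes $[\phi] + [-\phi] = 0$.

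The main obstacle I anticipate is bookkeeping the quadratic/hermitian distinction in characteristic $2$, where one cannot solve $X + X^\dag = \phi$ by $X = \phi/2$. The point of using the explicit $\theta$ above is that every step is a linear manipulation of $2 \times 2$ block matrices, with no division by $2$ needed; the conceptual justification is that Lemma~\ref{thm:quadraticFormByValues} packages a quadratic form as an associated hermitian form together with a $\beta$-function, and both vanish on a lagrangian by construction, so the lagrangian reduction proceeds identically whether or not $2$ is invertible. Once the inverse is in hand, the formula $[-\xi] = -[\xi]$ is just a restatement, and the proposition is complete.
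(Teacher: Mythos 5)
Your proof is correct, but the inverse construction runs along a different track from the paper's. The paper proves $[\phi]+[-\phi]=0$ in a single explicit congruence: it chooses a quadratic form $\psi\sim\Delta^{-1}\phi\Delta^{-1}$ (with $\Delta=\phi\mp\phi^\dag$), verifies the auxiliary identity $\psi^\dag\phi\psi\sim\psi\phi\psi^\dag$, and conjugates $\phi\oplus(-\phi)$ by $\begin{pmatrix} I & \psi\\ I & \pm\psi^\dag\end{pmatrix}$ — in effect sending the first summand onto the diagonal and producing a complementary lagrangian out of $\psi$, with the commutation identity needed to see that the resulting $(2,2)$ entry is even. You instead observe that the diagonal itself is a lagrangian of $\phi\oplus(-\phi)$ (on which the quadratic form vanishes identically, so no characteristic-$2$ care is needed), take the naive free complement $0\oplus Q$, and clean up with a $\sim$-move and a shear through $\Phi^{-1}$; this is exactly the hyperbolic-reduction argument of Lemma~\ref{thm:lagrangian-trivialForm}, so your second paragraph could be compressed to ``the diagonal is a free lagrangian direct summand, hence $\phi\oplus(-\phi)$ is trivial by Lemma~\ref{thm:lagrangian-trivialForm}.'' What your route buys is that it avoids constructing $\psi$ and proving $\psi^\dag\phi\psi\sim\psi\phi\psi^\dag$ altogether; what the paper's route buys is the closed-form matrix $T=\begin{pmatrix} I & \psi\\ I & \pm\psi^\dag\end{pmatrix}$, which is reused later (it is the $T$ invoked in~\ref{thm:form-to-up-unitary} and written out explicitly in the proof of~\ref{thm:up-down-and-down-up}). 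Two bookkeeping slips, neither of which affects the argument since all that is used is invertibility of $\Phi$: after your $\theta$-move the top-right block is $-\Phi$ in both sign cases (not $\mp\Phi$), and the shear should satisfy $\Phi\circ B=\pm\phi^\dag$, so $B=\pm\Phi^{-1}\phi^\dag$ rather than $\mp\Phi^{-1}\phi$ (keep the value-matrix versus map-matrix convention of Remark~\ref{rem:matrixIntoDual} in mind here). The monoid axioms, identity, and well-definedness of $\oplus$ are routine as you say, and the paper likewise leaves them implicit.
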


\begin{proof}
	We have to find an inverse for every element~$[\phi] \in \qwitt^\mp(R)$.
	Let $\Delta = \phi \mp \phi^\dag$.
	Let $\psi$ be a quadratic form such that $\psi \sim \Delta^{-1} \phi \Delta^{-1}$.
	It follows that $\psi \mp \psi^\dag = \Delta^{-1} (\phi \mp \phi^\dag) \Delta^{-1} = \Delta^{-1}$.
	Note that $\psi^\dag \phi \psi \sim \psi^\dag (\Delta \psi \Delta) \psi$
	and $\psi \phi \psi^\dag \sim \psi(\Delta \psi \Delta)\psi^\dag$,
	but $(\psi \mp \psi^\dag)\Delta \psi = \psi = \psi\Delta(\psi \mp \psi^\dag)$,
	which implies that $\psi^\dag \Delta \psi = \psi \Delta \psi^\dag$.
	So, $\psi \phi \psi^\dag \sim \psi \Delta \psi \Delta\psi^\dag = \psi \Delta \psi^\dag \Delta \psi 
	= \psi^\dag \Delta \psi \Delta \psi \sim \psi^\dag \phi \psi$.
	Using this equivalence, we see that
	\begin{align}
		\begin{pmatrix}
			I & I \\
			\psi^\dag & \pm \psi
		\end{pmatrix}
		\begin{pmatrix}
			\phi & 0 \\ 0 &-\phi
		\end{pmatrix}
		\begin{pmatrix}
			I & \psi \\
			I & \pm \psi^\dag
		\end{pmatrix}
		=
		\begin{pmatrix}
		0 &&& \phi \Delta^{-1} \\
		\Delta^{-\dag} \phi &&& \psi^\dag \phi \psi - \psi \phi \psi^\dag
		\end{pmatrix}
		\sim
		\begin{pmatrix}
		0 & I \\
		0 & 0
		\end{pmatrix}	.
	\end{align}
	Therefore, $[\phi] + [-\phi] = 0 \in \qwitt^\mp(R)$.
\end{proof}

We call $\qwitt^\mp(R)$ the {\bf quadratic $\mp$-Witt group} of $R$.
(If $R=\FF$ was a field and if the involution was the identity,
then $\qwitt^+(\FF)$ is the classical Witt group of the field~$\FF$.)

For a quadratic form~$\phi$ on a projective $R$-module~$Q$,
a direct summand~$A$ of~$Q$
such that there exists some sesquilinear function~$\theta$ 
with which $(\phi + \theta \pm \theta^\dagger)(a, a) = 0$ for all~$a \in A$,
is called a {\bf sublagrangian}.
If $\phi$ is nonsingular, a {\bf lagrangian} is a sublagrangian such that the orthogonal complement is itself.%
\footnote{
	It is important that a lagrangian is a direct summand.
	Since every exact ``stabilizer module''~\cite{Haah2013}
	has its orthogonal complement equal to itself,
	if we do not demand that a lagrangian be a direct summand,
	every exact stabilizer module would be a lagrangian.
	The phenomenology of exact stabilizer modules 
	is much richer than that of direct summand lagrangians.
	In a sense, the classification of Clifford QCA in this paper 
	is the classification of direct summand lagrangians over~$R$.
}
We similarly define sublagrangians and lagrangians with respect to nonsingular hermitian forms
by the same condition where the even orthogonality is replaced by just orthogonality.
Note that, unlike in the classical theory of quadratic forms over fields
where a quadratic form is defined to be a certain function on a vector space,
our quadratic form restricted to a sublagrangian is not identically zero;
it is merely required to become zero on some equivalent ($\sim$) quadratic form.

\begin{lemma}[Thm.\,1.1 of \cite{Ranicki1}]\label{thm:lagrangian-trivialForm}
	For any finitely generated projective module~$Q$ over~$R$ with a nonsingular quadratic $\mp$-form~$\phi$
	(resp. a nonsingular hermitian even $\mp$-form~$\Phi$),
	if there is a lagrangian~$L \subseteq Q$, then there exists a direct complement~$M \cong L$, 
	as $R$-modules,
	which is a lagrangian itself, such that
	\begin{align}
		\phi \sim 
		\begin{Bmatrix} M^* \\ L^* \end{Bmatrix}
		\begin{pmatrix}
		0 & \varphi \\ 0 & 0
		\end{pmatrix}
		\begin{Bmatrix} M \\ L \end{Bmatrix}
	\end{align}
	for some isomorphism~$\varphi$.
	In particular, $\phi$ (resp. $\Phi$) is trivial if and only if there is a lagrangian.
\end{lemma}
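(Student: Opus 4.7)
The plan is to reduce everything to finding a lagrangian complement $M$ of $L$: once $M$ is in hand, the form $\phi$ in the basis $L \oplus M$ has its $(1,1)$-block vanishing by construction and its $(2,2)$-block in the $\sim$-class of zero; a single $\sim$-step then absorbs the $(2,1)$-block into the $(1,2)$-block, producing $\bigl(\begin{smallmatrix} 0 & \varphi \\ 0 & 0 \end{smallmatrix}\bigr)$ where $\varphi$ is the off-diagonal of the nondegenerate associated hermitian form and hence an isomorphism. The hermitian-even case is handled identically except that, in the absence of $\sim$, evenness of $\Phi$ is what guarantees the transported $(2,2)$-block can be cleared on the nose.

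To find $M$, I first invoke the sublagrangian hypothesis to replace $\phi$ by an equivalent form with $\phi|_{L \times L} = 0$, then pick any module complement $M_0$ of $L$ and write
\[
\phi = \begin{pmatrix} 0 & B \\ C & D \end{pmatrix}, \qquad \Lambda := B \mp C^\dagger.
\]
The first nontrivial point is that $\Lambda: M_0 \to L^*$ is an isomorphism: $L = L^\perp$ implies the composite $Q \to L^*$, $q \mapsto (\phi \mp \phi^\dagger)(q,\cdot)|_L$, has kernel exactly $L$; projectivity of the direct summand $L$ makes $Q^* \twoheadrightarrow L^*$ split, and nondegeneracy of $\phi \mp \phi^\dagger$ then forces this composite to be surjective, so its restriction to $M_0$ (which is precisely $\Lambda$) is an iso. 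I then replace $M_0$ by $M := \{ h(m) + m : m \in M_0 \}$ with the ansatz $h := -\Lambda^{-\dagger} D^\dagger : M_0 \to L$. A direct block calculation with $E = \bigl(\begin{smallmatrix} I & h \\ 0 & I \end{smallmatrix}\bigr)$ shows that in the new basis the $(2,2)$-block of the associated hermitian form is zero (so $M \subseteq M^\perp$), and, using $B = \Lambda \pm C^\dagger$, that the $(2,2)$-block of $\phi$ equals $Ch \pm (Ch)^\dagger$, which is $\pm$-even (so $M$ is a quadratic sublagrangian). The short argument that any $l \in L \cap M^\perp$ must be orthogonal to $L + M = Q$ and hence zero upgrades $M \subseteq M^\perp$ to $M = M^\perp$, making $M$ a lagrangian.

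Once $M$ is a lagrangian complement, the remaining step is routine $\sim$-manipulation in the basis $L \oplus M$: kill the $\pm$-even $(2,2)$-block $Ch \pm (Ch)^\dagger$ by $\theta = \bigl(\begin{smallmatrix} 0 & 0 \\ 0 & -Ch \end{smallmatrix}\bigr)$, then kill the $(2,1)$-block by $\theta = \bigl(\begin{smallmatrix} 0 & 0 \\ -C' & 0 \end{smallmatrix}\bigr)$; the latter operation simultaneously turns the $(1,2)$-block into $\Lambda$, giving $\phi \sim \bigl(\begin{smallmatrix} 0 & \Lambda \\ 0 & 0 \end{smallmatrix}\bigr)$ with $\varphi := \Lambda$. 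For the hermitian-even case, evenness of $D$ ($D = D_1 \mp D_1^\dagger$) lets me run the same construction with $h = -B^{-\dagger} D_1^\dagger$, killing the $(2,2)$-block of $\Phi$ outright and yielding congruence to $\lambda^\mp$. The converse is immediate by inspection of $\eta$ and $\lambda^\mp$, and $M \cong L$ follows from $\varphi: M \xrightarrow{\sim} L^*$ together with $L \cong L^*$ for the finitely generated free $R$-module $L$ (free by Quillen--Suslin--Swan over our base ring). The main obstacle is bookkeeping the interplay between the quadratic sign $\mp$ in $\phi \mp \phi^\dagger$ and the equivalence sign $\pm$ in $\theta \pm \theta^\dagger$; once that is kept consistent, the ansatz $h = -\Lambda^{-\dagger} D^\dagger$ is essentially forced by the requirement that it kill the isotropy obstruction.
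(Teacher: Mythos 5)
Your proposal is correct and follows essentially the same route as the paper's proof: pick an arbitrary complement, observe that the off-diagonal block $\Lambda=B\mp C^\dagger$ of the associated hermitian form is an isomorphism (surjective by nonsingularity, injective because $L^\perp=L$), and replace the complement by the graph of a correction map built from $\Lambda^{-\dagger}$ and the adjoint of the diagonal block, then finish with $\sim$-moves; the paper does the identical graph construction with $f=c\mp b^\dagger$ and the ansatz $-f^{-\dagger}a^\dagger$, merely keeping the $L$-block as an even form $d\pm d^\dagger$ instead of normalizing it to zero and placing $L$ in the other summand. The sign bookkeeping and the block computations you outline check out, so this is the paper's argument up to relabeling.
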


\begin{proof}
	Since $L$ is a direct summand there is a direct complement~$N$ such that $Q = N \oplus L$.
	Then,
	\begin{align}
		\phi =	
		\begin{Bmatrix}
		N^* \\ L^*
		\end{Bmatrix}
		\begin{pmatrix}
		a & b \\ c & d \pm d^\dag
		\end{pmatrix}
		\begin{Bmatrix}
		N \\ L
		\end{Bmatrix},
		\quad
		\phi \mp \phi^\dag =
		\begin{Bmatrix}
		N^* \\ L^*
		\end{Bmatrix}
		\begin{pmatrix}
		a \mp a^\dag & b \mp c^\dag \\ c \mp b^\dag & 0
		\end{pmatrix}
		\begin{Bmatrix}
		N \\ L
		\end{Bmatrix}
	\end{align}
	Since $\phi \mp \phi^\dag$ is an isomorphism, $c \mp b^\dag$ is surjective.
	Since $L^\perp = L \oplus \ker(c\mp b^\dag) = L$, $c \mp b^\dag$ is injective.
	So, $L \cong L^*\xleftarrow{\quad f = c \mp b^\dag \quad} N$ is an $R$-module isomorphism.
	Now,
	\begin{align}
		\begin{Bmatrix}
		N^* \\ L^*
		\end{Bmatrix}
		\begin{pmatrix}
		1 & -a f^{-1} \\ 0 & 1
		\end{pmatrix}
		\begin{pmatrix}
		a & b \\ c & d \pm d^\dag
		\end{pmatrix}
		\begin{pmatrix}
		1 & 0 \\ -f^{-\dag} a^\dag & 1
		\end{pmatrix}
		\begin{Bmatrix}
		N \\ L
		\end{Bmatrix}
	\end{align}
	has $(1,1)$-entry $a - b f^{-\dag} a^\dag - a f^{-1} c + a f^{-1} (d \pm d^\dag) f^{-\dag} a^\dag$,
	which is (using $c = f \pm b^\dag$) $\pm$-even.
	Hence, $M = \{ (v, -f^{-\dag} a^\dag v) \in N \oplus L = Q ~|~ v \in N \}$ is the desired lagrangian.

	The even hermitian case is completely analogous.
\end{proof}

We record some general facts:

\begin{lemma}\label{thm:ComplementaryLagrangiansArePaired}
	Let $\lambda : Z \to Z^*$ be a nonsingular sesquilinear form on a module~$Z$.
	Suppose $Z = X + Y$ for some submodules~$X$ and~$Y$.
	If $\lambda(X)(X) = 0$ and $\lambda(Y)(Y) = 0$,
	then $X \cap Y = 0$ so $Z = X \oplus Y$ 
	and both $X \xrightarrow{~\lambda~} Z^* \to Y^*$ and $Y \xrightarrow{~\lambda~} Z^* \to X^*$ 
	are isomorphisms,
	where $Z^* \to Y^*$ and $Z^* \to X^*$ 
	are the canonical projections with respect to the direct sum $Z^* = X^* \oplus Y^*$.
\end{lemma}

This means that two self-orthogonal submodules with respect to a hermitian form,
jointly generating the full module,
are lagrangians with respect to the hermitian form.
Under the conditions of this lemma,
we will say that $Y$ and $Z$ are a {\bf nonsingular pair} with respect to~$\lambda$.

\begin{proof}
	The assumption is that $X^\perp \supseteq X$, $Y^\perp \supseteq Y$, and $(X+Y)^\perp = 0$.
	A general identity $(X + Y)^\perp = X^\perp \cap Y^\perp$ 
	implies that~$X \cap Y \subseteq X^\perp \cap Y^\perp = 0$.
	Now,
	\begin{align}
		\lambda = \begin{Bmatrix}
		X^* \\ Y^*
		\end{Bmatrix}
		\begin{pmatrix}
		0 & \beta \\ \gamma & 0
		\end{pmatrix}
		\begin{Bmatrix}
		X \\ Y
		\end{Bmatrix}
	\end{align}
	must be an isomorphism. This forces both~$\beta$ and~$\gamma$ be isomorphisms.
\end{proof}

\begin{lemma}\label{thm:FormOnQuotientModule}
	Let $S$ be a sublagrangian of a projective module~$Q$
	with a quadratic $\mp$-form~$\phi$.
	If $S \perp Q$, then $u+S \mapsto \phi(u,u)$
	is a well-defined quadratic $\mp$-form on~$Q/S$.

	If, instead, $Q$ is equipped with a hermitian $\mp$-form~$\Delta$,
	and~$S \perp Q$,
	then~$(u+S,v+S) \mapsto u^\dag \Delta v$ is a well-defined hermitian $\mp$-form on~$Q/S$.
\end{lemma}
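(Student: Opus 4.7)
The plan is to handle the two cases in parallel after noting that the sublagrangian $S$, being by definition a direct summand of $Q$, has $Q/S$ as a finitely generated projective $R$-module on which the theory of forms applies.

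For the quadratic case, my intended route is to invoke Lemma~\ref{thm:quadraticFormByValues}, specifying the desired form on $Q/S$ by producing its value function $\beta : Q/S \to R/R_\pm$ together with its associated hermitian $\mp$-form and then verifying the three compatibility conditions. I would set $\beta(u+S) := \phi(u,u) + R_\pm$ and check well-definedness via the expansion
\begin{equation*}
\phi(u+s,u+s) - \phi(u,u) = \phi(u,s) + \phi(s,u) + \phi(s,s).
\end{equation*}
The term $\phi(s,s)$ lies in $R_\pm$ because $S$ is a sublagrangian: the witness $\theta$ gives $\phi(s,s) = -(r \pm \bar r)$ with $r = s^\dag \theta s$. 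The two cross terms combine to $\phi(u,s) \pm \overline{\phi(u,s)} \in R_\pm$ by the hypothesis $S \perp Q$, which provides $(\phi \mp \phi^\dag)(s,u) = 0$ and hence $\phi(s,u) = \pm\overline{\phi(u,s)}$. Simultaneously, the associated hermitian form $\phi \mp \phi^\dag$ descends to a genuine sesquilinear function on $Q/S$ for exactly the same reason: the cross terms and the $S \times S$ term vanish under $S \perp Q$. Conditions (i)--(iii) of Lemma~\ref{thm:quadraticFormByValues} transfer from $Q$ to $Q/S$ by choosing representative lifts, and that lemma then produces the required equivalence class of quadratic $\mp$-forms on $Q/S$.

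For the hermitian case, the approach is more direct. I would verify that the prescription $(u+S, v+S) \mapsto \Delta(u,v)$ is insensitive to the choice of representatives via the computation
\begin{equation*}
\Delta(u+s,v+t) - \Delta(u,v) = \Delta(u,t) + \Delta(s,v) + \Delta(s,t),
\end{equation*}
in which each summand on the right vanishes by $S \perp Q$ (the last because $t \in S \subseteq Q$). The hermitian $\mp$-identity then carries over verbatim from $\Delta$.

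There is no genuine obstacle here beyond careful bookkeeping. The two hypotheses play complementary roles: $S \perp Q$ kills all cross terms in both cases, while the sublagrangian property is required only in the quadratic setting to handle the diagonal term $\phi(s,s)$, which has no counterpart in the hermitian setting where $\Delta$ descends as an honest sesquilinear function rather than as an equivalence class of such.
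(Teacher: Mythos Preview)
Your proposal is correct and follows essentially the same approach as the paper: both treat the hermitian case as immediate and handle the quadratic case via Lemma~\ref{thm:quadraticFormByValues}, expanding $\phi(u+s,u+s)-\phi(u,u)$ and using the sublagrangian hypothesis for the diagonal term $\phi(s,s)$ and $S\perp Q$ for the cross terms, after which conditions (i)--(iii) are inherited from~$Q$.
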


\begin{proof}
	The hermitian case is obvious.
	For the quadratic case, we check the three conditions of~\ref{thm:quadraticFormByValues}.
	If $u \in Q$ and $s \in S$,
	then $(u+s)+S \mapsto \phi(u,u) + \phi(s,u) + \phi(u, s) + \phi(s, s)$.
	Since $S$ is a sublagrangian, the value~$\phi(s, s)$ is $\pm$-even.
	Also, since $\phi(s, u) = \pm \phi^\dag(s, u)$, 
	we see that $\phi(s, u) + \phi(u, s) = \pm \phi^\dag(s, u) + \phi(u, s)$ 
	is $\pm$-even.
	Hence, a function $q : Q/S \to R/R_\pm$ is well defined with $R_\pm$ defined in~\ref{thm:quadraticFormByValues}.
	It is routine to check the three conditions for~$q$ and~$\Delta = \phi \mp \phi^\dag$.
	Hence, the quadratic form is well defined.
\end{proof}

\begin{lemma}\label{thm:LagrangianDirectSummandFactorsOutFromQuadraticSpace}
	Let $\phi$ be a nonsingular quadratic $\mp$-form on~$R^m$.
	If there is a nonzero sublagrangian,
	then $\phi$ is stably equivalent to a nonsingular quadratic form of a smaller dimension.
\end{lemma}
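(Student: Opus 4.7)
The plan is to descend $\phi$ to the strictly smaller module $S^\perp/S$, where $\perp$ is taken with respect to the associated hermitian form $\Delta = \phi \mp \phi^\dag$, and then to prove Witt equivalence by producing a lagrangian inside the orthogonal sum $\phi \oplus (-\bar\phi)$, at which point Lemma~\ref{thm:lagrangian-trivialForm} applies.

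First I would establish the structural setup. Since $S$ is a direct summand of $R^m = S \oplus N$ and $\Delta: R^m \to (R^m)^*$ is an isomorphism, $S^\perp = \Delta^{-1}(\ann S)$ is a free direct summand of $R^m$ containing $S$ (the latter by the sublagrangian condition). From $R^m = S \oplus N$ one deduces $S^\perp = S \oplus (S^\perp \cap N)$, so $S^\perp/S$ is a free module of rank $m - 2k$ with $k = \rank S \ge 1$. The duality $(S^\perp)^\perp = S$, valid because $S$ is a direct summand under a nonsingular form, together with Lemma~\ref{thm:FormOnQuotientModule}, gives a well-defined descended quadratic $\mp$-form $\bar\phi$ on $S^\perp/S$; the same duality and the surjectivity of $\Delta$ show $\bar\phi$ is nonsingular.

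Next I would fix a splitting $S^\perp = S \oplus M$, so $M \cong S^\perp/S$ carries $\bar\phi$, and form $\psi = \phi \oplus (-\bar\phi)$ on $R^m \oplus M$. The candidate lagrangian is $L = S \oplus \{(m,m) : m \in M\}$, of rank $m - k$, exactly half the rank of $R^m \oplus M$. I must verify: (a) $L$ is a direct summand, with complement $N' \oplus M''$ where $N'$ complements $S^\perp$ in $R^m$ and $M''$ is the second copy of $M$; (b) both the $\beta$-function of Lemma~\ref{thm:quadraticFormByValues} and the associated hermitian form of $\psi$ vanish on $L$---here $\phi|_S \sim 0$, $S \perp M$ in $R^m$, and $\phi(m, m')$ on $M$ coincides with $\bar\phi(m, m')$ by construction, so the two contributions cancel in $\psi$---whence $\psi|_L \sim 0$; (c) $L^\perp = L$, proved by decomposing an arbitrary element of $L^\perp$ across $S \oplus M \oplus N' \oplus M''$ and applying the nonsingularity of $\Delta$ on $M$.

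Once $L$ is a lagrangian, Lemma~\ref{thm:lagrangian-trivialForm} gives $\psi \sim \eta$, i.e., $\phi \oplus (-\bar\phi) \simeq 0$, and Proposition~\ref{thm:WittGroupOfForms} then yields $[\phi] = [\bar\phi]$ in $\qwitt^\mp(R)$, which is the claim. The main obstacle is verifying step (c): while $L \subseteq L^\perp$ follows immediately from (b), the reverse inclusion depends crucially on $\bar\phi$ being nonsingular on $M$, which itself depends on the duality identity $(S^\perp)^\perp = S$ and hence on $S$ being a direct summand. The remainder is routine block-matrix bookkeeping.
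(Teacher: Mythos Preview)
Your argument is correct and follows a genuinely different route from the paper's. The paper reduces to a rank-one sublagrangian $R\ell$, extends $\ell$ to a basis of $R^m$ via Quillen--Suslin--Swan, and then performs explicit congruence transformations on the matrix of $\phi$ to split off a single copy of $\eta_1$; this is concrete and constructive but peels off one hyperbolic plane at a time. Your approach is the classical Witt-style sublagrangian reduction: you form the descended nonsingular form $\bar\phi$ on $S^\perp/S$ via Lemma~\ref{thm:FormOnQuotientModule}, then show $[\phi]=[\bar\phi]$ by exhibiting a lagrangian inside $\phi\oplus(-\bar\phi)$ and invoking Lemma~\ref{thm:lagrangian-trivialForm} and Proposition~\ref{thm:WittGroupOfForms}. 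Your route handles the whole sublagrangian at once and directly identifies the smaller form as the quotient form---which is precisely what the remark following the lemma in the paper asserts. Both proofs rely on Quillen--Suslin--Swan (you need it so that $S^\perp/S$ is free and $\bar\phi$ defines a class in $\qwitt^\mp(R)$), but yours is more conceptual while the paper's is more hands-on.
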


In view of~\ref{thm:FormOnQuotientModule}, the new form of a smaller dimension
is the induced one on the quotient module.
Note that the orthogonal complement~$S^\perp$ for any sublagrangian~$S$
is a direct summand of~$R^m = Q$
because if $S \oplus S' = Q$, we have $Q^* = S^* \oplus (S')^*$ 
so $S^\perp = \Delta^{-1}(S')^*$ and $Q = S^\perp \oplus \Delta^{-1} S^*$.

\begin{proof}
	It suffices to assume that we have a sublagrangian~$L = R\ell$ 
	generated by one element~$\ell \in R^m$ and that $\phi(\ell, \ell) = 0$.
	By the Quillen--Suslin--Swan theorem~\cite{Suslin1977Stability,Swan1978}, 
	there is a basis of $R^m$ extending~$\{\ell\}$.
	In this basis, $\phi$ has~$0$ on the top left entry,
	and using the freedom to add $\theta\pm \theta^\dagger$ to $\phi$,
	we may assume that $\phi$ is upper triangular.
	Since $\phi$ is nonsingular, the top row must 
	(be unimodular and hence by the QSS theorem)
	be a row of some invertible matrix of dimension~$m-1$,
	which means $\phi$ can be congruently transformed
	to have entry~$1$ at the second entry in the top row
	and the first column of $\phi$ still zero.
	The second diagonal element is then eliminated by further congruence transformation.
	We have singled out a block of $\eta_1$.
\end{proof}

The use of the Quillen--Suslin--Swan theorem is not too important; 
without them, we could contend ourselves 
with nonsingular forms on projective modules. 
See~\cite[Cor.1.2]{Ranicki1}.
However, the QSS theorem is anyway
an important ingredient for our main classification result of this paper.

\begin{proposition}\label{thm:ZeroDcalculation}
	$\qwitt^+(\FF_2) = \qwitt^-(\FF_2) \cong \ZZ/2\ZZ$.
	Let $p$ be an odd prime power.
	$\qwitt^-(\FF_p) = 0$.
	$\qwitt^+(\FF_p) \cong \ZZ/2\ZZ \oplus \ZZ/2\ZZ$ if $p = 1 \bmod 4$.
	$\qwitt^+(\FF_p) \cong \ZZ/4\ZZ$ if $p = 3 \bmod 4$.
\end{proposition}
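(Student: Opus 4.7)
The plan splits by parity of $p$, since invertibility of $2$ changes the flavor of the theory. For $p$ odd, $\half \in \FF_p$, so a $\sim$-equivalence class of quadratic forms is determined by its associated hermitian form. A nonsingular antisymmetric form over any field of characteristic $\neq 2$ admits a symplectic basis, hence decomposes as a sum of hyperbolic planes, each trivial by Lemma~\ref{thm:lagrangian-trivialForm}; therefore $\qwitt^-(\FF_p) = 0$.

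For $\qwitt^+(\FF_p)$ with $p$ odd, every nonsingular symmetric form diagonalizes as $\langle a_1, \ldots, a_n\rangle$, and $\langle a\rangle \cong \langle a b^2\rangle$ via rescaling, so each $a_i$ is recorded modulo squares. Since $\FF_p^\times/(\FF_p^\times)^2 \cong \ZZ/2\ZZ$ with representatives $\{1, d\}$ for a non-square $d$, the group is generated by $[\langle 1\rangle]$ and $[\langle d\rangle]$. A direct change of basis shows $\langle a, -a\rangle \simeq \eta$, giving the relation $[\langle -a\rangle] = -[\langle a\rangle]$. If $p \equiv 1 \bmod 4$, then $-1$ is a square so $2[\langle 1\rangle] = 0 = 2[\langle d\rangle]$; to prove these classes are nonzero and distinct I exhibit two Witt invariants, dimension modulo 2 (well-defined since $\eta$ has even dimension) and the signed discriminant $(-1)^{n(n-1)/2}\det(\phi+\phi^\dag) \in \FF_p^\times/(\FF_p^\times)^2$ (invariant under stabilization by $\eta$, whose contributions to the sign factor and to the determinant cancel). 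This yields $\qwitt^+(\FF_p) \cong \ZZ/2\ZZ \oplus \ZZ/2\ZZ$. If $p \equiv 3 \bmod 4$, then $d = -1$ modulo squares and $[\langle d\rangle] = -[\langle 1\rangle]$, so the group is cyclic with $4[\langle 1\rangle] = [\langle 1,1,-1,-1\rangle] = 0$. The order is exactly four because $\langle 1, 1\rangle$ is anisotropic --- an isotropic vector would force $-1$ to be a square --- so by Lemma~\ref{thm:LagrangianDirectSummandFactorsOutFromQuadraticSpace} it represents a nonzero class. Hence $\qwitt^+(\FF_p) \cong \ZZ/4\ZZ$.

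For $p = 2$, the signs coincide and $\qwitt^+(\FF_2) = \qwitt^-(\FF_2)$. Given a nonsingular quadratic form $\phi$ on $\FF_2^n$, the associated form $\phi + \phi^\dag$ is alternating with zero diagonal (because $r + \bar r = 2r = 0$), so $n = 2m$ and a symplectic basis $(e_1, f_1, \ldots, e_m, f_m)$ exists. Exploiting the freedom to add $\theta + \theta^\dag$ (a symmetric form whose diagonal vanishes in characteristic 2), I bring $\phi$ into a block-diagonal form with $2\times 2$ blocks $\begin{pmatrix} a_i & 1 \\ 0 & b_i\end{pmatrix}$. The Arf invariant $\mathrm{Arf}(\phi) = \sum_i a_i b_i \in \FF_2$ is preserved under $\sim$ (diagonal values are unchanged) and under stabilization by $\eta$ (which contributes $0\cdot 0 = 0$). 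If $\mathrm{Arf}(\phi) = 0$, pairing any two blocks with $a_ib_i = 1$ produces an isotropic vector $e_i + e_j$, and iterated application of Lemma~\ref{thm:LagrangianDirectSummandFactorsOutFromQuadraticSpace} reduces $\phi$ to $0$; if $\mathrm{Arf}(\phi) = 1$, a single anisotropic block is the unique nontrivial representative. Thus $\qwitt^+(\FF_2) \cong \ZZ/2\ZZ$.

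The main technical obstacle is to verify that the proposed invariants --- dimension modulo 2, signed discriminant, and Arf invariant --- descend to Witt classes, remaining unchanged under $\sim$-equivalence, congruence, and stabilization by $\eta$. The dichotomy between $\ZZ/4\ZZ$ and $\ZZ/2\ZZ \oplus \ZZ/2\ZZ$ for odd $p$ is governed entirely by the quadratic character of $-1$: the binary form $\langle 1, 1\rangle$ is anisotropic precisely when $p \equiv 3 \bmod 4$, and this arithmetic fact is the crux of the computation.
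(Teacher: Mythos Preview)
Your proof is correct and follows essentially the same route as the paper: diagonalize (or block-diagonalize in characteristic~$2$), reduce entries modulo squares, and use the quadratic character of $-1$ together with anisotropy of key small forms to pin down the group structure. The paper organizes this via the slogan ``reduce to totally anisotropic representatives and enumerate them,'' whereas you instead name explicit Witt invariants (dimension parity, signed discriminant, Arf) to separate classes; for $p\equiv 1\bmod 4$ your signed-discriminant argument is arguably cleaner since it avoids the implicit appeal to Witt cancellation that both you and the paper make when asserting that an anisotropic two-dimensional form is Witt-nontrivial. One small expository gap: you verify that Arf is unchanged under $\sim$ and under stabilization by $\eta$, but you do not check invariance under congruence (change of symplectic basis), which is the nontrivial content of Arf's theorem. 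Your argument can be read so as not to need this---the pairing-and-reducing procedure shows every class is represented by either $0$ or the single anisotropic block $\phi_1$, and then anisotropy of $\phi_1$ gives $[\phi_1]\neq 0$---but the phrasing ``the Arf invariant is preserved'' invites the reader to expect a direct verification.
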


This is a widely known fact~\cite{Kniga},
but we choose to include full calculation.

\begin{proof}
	By~\ref{thm:LagrangianDirectSummandFactorsOutFromQuadraticSpace}
	it suffices to consider only forms that do not have any nonzero sublagrangians,
	{\it i.e.}, {\bf totally anisotropic} quadratic forms.
	Recall that every symplectic form (nonsingular alternating bilinear) 
	over any field is congruent to $\lambda^-$~\cite{Lang};
	one can prove it by elementary row (and column) operations
	and induction in dimension.
	Since the associated bilinear form determines a quadratic form if $\half \in \FF_p$,
	we conclude that $\qwitt^-(\FF_p) = 0$ if $p$ is odd.
	It remains to calculate~$\qwitt^+$.

	($\FF_2$)
	The associated bilinear form has zero diagonal,
	and is symplectic. Therefore, the quadratic form~$\phi$ 
	may differ from $\eta$ only on the diagonal.
	Any zero diagonal element means a sublagrangian,
	so we may assume that $\phi = I_{2q} + \eta_{q}$ 
	where $I_{2q}$ is the identity matrix.
	But this is $q$-fold direct sum of $\phi_1 = I_2 + \eta_1$,
	and by~\ref{thm:WittGroupOfForms} this is stably equivalent
	to $\phi_1$ if $q$ is odd and to $\eta$ if $q$ is even.
	On the other hand, $\phi_1$ is totally anisotropic 
	as seen by enumerating the values of all four elements in $\FF_2^2$,
	which is basically the Arf invariant.
	Therefore, $\qwitt(\FF_2) \cong \ZZ/2\ZZ$ generated by $\phi_1$.

	($\FF_p$ with $p$ odd)
	Since $\half \in \FF_p$, we consider nonsingular symmetric bilinear forms~$\Delta$.
	Congruent transformations by elementary row operations
	bring $\Delta$ to a diagonal matrix, that does not contain any zero entry on the diagonal.
	Congruences by diagonal matrices show that we only have to consider
	the diagonal elements up to squares~$\{ x^2 \in \FF_p : x \in \FF_p^\times\}$.
	The multiplicative group $\FF_p^\times$ is abelian,
	and if $n \le p-1$ is the smallest positive integer 
	such that $x^n = 1$ for all $x \in \FF_p$,
	then the equation $x^n - 1 = 0$ has at most~$n$ roots
	and is satisfied by all of $p-1$ elements of~$\FF_p^\times$,
	and hence $n = p -1$, and $\FF_p^\times$ is cyclic with a generator, say,~$g$.
	Therefore, the quotient group of~$\FF_p^\times$ modulo squares 
	has only two elements, represented by $1$ and $g$.
	So, we may assume that $\Delta$ is diagonal consisting of $1$ and $g$,
	each repeated some number of times.

	For $p=1 \bmod 4$, $g^{(p-1)/4}$ squares to an multiplicative order~$2$ element,
	which has to be~$-1$.
	If any diagonal element~$a$ of $\Delta$ is repeated,
	then $\Delta$ is congruent to one where both $a$ and $-a$ appear.
	They give a sublagrangian.
	Hence, there are only three possibilities for totally anisotropic~$\Delta$:
	$\diag(1), \diag(g), \diag(1,g)$.
	The dimension one cases are certainly nontrivial in the Witt group
	because its dimension is odd.
	The dimension two case is totally anisotropic 
	since if $x^2 + g y^2 = 0$ had a nonzero
	solution $(x,y)$ then $g \cong -g = (x/y)^2$ would be a square.
	Therefore, $\qwitt^+(\FF_p) \cong \ZZ/2\ZZ \oplus \ZZ/2\ZZ$.

	For $p=3\bmod 4$, we know $-1 = g^k \in \FF_p$ for some $k$, 
	but $k$ cannot be even,
	and $-1$ is not a square.
	Therefore, $g = -h^2$ for some $h \in \FF_p^\times$.
	On the other hand, a two-variable equation $x^2 + y^2 + 1 = 0$ 
	has a solution over~$\FF_p$, {\it i.e.},
	the sum of two squares may assume~$-1$.
	To see this, let $S$ be the set of all nonzero squares.
	If $S + S = \{x^2 + y^2:x,y \in \FF_p^\times\} \subseteq S$, 
	then $n x^2 \in S$ for any~$n$,
	but $p x^2 = 0 \notin S$. So, $S+S \not\subseteq S$
	and $S+S$ must contain at least one and hence all nonsquare nonzero elements,
	in particular~$-1$.
	Let $x,y$ be such a pair that $x^2 + y^2 = -1$.
	Then, a congruent transformation 
	of~$\diag(a,a)$ by~$\begin{pmatrix}x & y \\ -y & x\end{pmatrix}$
	gives~$\diag(-a,-a)$ for any $a \in \FF_p^\times$,
	and $\diag(1,1,1) \cong \diag(1,-1,-1) \cong \diag(1,-1,-h^2) 
	\cong \lambda^+_1 \oplus \diag(g)$
	by~\ref{thm:WittGroupOfForms}.
	Hence, the $+$-Witt group of $\FF_p$ is cyclic, generated by~$\diag(1)$,
	of order that divides~$4$.
	Clearly, this generator is totally anisotropic.
	The two-fold direct sum~$\diag(1,1)$ is also totally anisotropic,
	since, otherwise, the equation~$x^2 + y^2 = 0$ would have a nonzero solution,
	which means $-1 = (x/y)^2$, which we know is false.
	Therefore, $\qwitt^+(\FF_p) \cong \ZZ/4\ZZ$.
\end{proof}

\begin{lemma}\label{thm:hermitianMatrices}
	The set $\switt^\mp(R)$ of all stable equivalence classes of nonsingular even hermitian $\mp$-forms 
	on finitely generated free $R$-modules
	is an abelian group.
	There is a surjective group homomorphism
	\begin{align}
		\sym : \qwitt^\mp(R) \ni [\xi] \mapsto [\xi \mp \xi^\dag] \in \switt^\mp(R),
	\end{align}
	whose kernel is contained in the image of the embedding $\qwitt^\mp(\FF) \hookrightarrow \qwitt^\mp(R)$
	defined by~$[\phi] \mapsto [\phi]$.
\end{lemma}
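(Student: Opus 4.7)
My approach is to handle the three parts in sequence: the group structure of $\switt^\mp(R)$, well-definedness and surjectivity of $\sym$, and the kernel containment, which is where the real work lies. The group structure mirrors Proposition~\ref{thm:WittGroupOfForms}: addition is direct sum, the identity is any $\lambda^\mp$, and the inverse of $[\Delta]$ is $[-\Delta]$, since the diagonal $\{(v,v):v\in Q\}\subset Q\oplus Q$ is a lagrangian for $\Delta\oplus(-\Delta)$ and the even-hermitian case of Lemma~\ref{thm:lagrangian-trivialForm} yields triviality. For $\sym$, the identity $(\theta\pm\theta^\dag)^\dag=\pm(\theta\pm\theta^\dag)$ shows $\xi\mapsto\xi\mp\xi^\dag$ is unchanged by adding $\theta\pm\theta^\dag$ to $\xi$, so $\sym$ factors through $\sim$; congruence and $\eta$-stabilization of $\xi$ pass to congruence and $\lambda^\mp=\eta\mp\eta^\dag$ stabilization of its associated hermitian, so $\sym$ factors through stable equivalence; and direct sums commute with taking the associated hermitian. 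Surjectivity is immediate from the definition of ``even''.

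For the kernel containment, suppose $[\xi]\in\ker(\sym)$, so $\xi\mp\xi^\dag$ is Witt-trivial. Stabilizing $\xi$ by trivial $\eta$'s (keeping its class in $\qwitt^\mp(R)$) and applying an invertible congruence, Lemma~\ref{thm:lagrangian-trivialForm} lets me assume $\xi\mp\xi^\dag=\lambda^\mp$ on $Q=L\oplus L^*$. In block form $\xi=\begin{pmatrix}A&B\\C&D\end{pmatrix}$, this forces $A^\dag=\pm A$, $D^\dag=\pm D$, and $B\mp C^\dag=1$. By Lemma~\ref{thm:quadraticFormByValues}, the $\sim$-class of $\xi$ is pinned down by $\beta_\xi:Q\to R/R_\pm$, whose values on the standard bases of $L$ and $L^*$ are $A_{ii}$ and $D_{jj}$, each lying in $R^\pm:=\{r:r=\pm\bar r\}$ by condition~(iii) of that lemma (since $\lambda^\mp$ vanishes on these vectors). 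As the trivial form $\eta$ has $\beta_\eta\equiv 0$ on these bases, the class $[\xi]$ is controlled by the residues $A_{ii},D_{jj}\in R^\pm/R_\pm$.

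The final step is to compute this residue group for $R=\FF_p[\ZZ^\dd]$ with $\overline{x_i}=x_i^{-1}$ and to exhibit an $\FF_p$-entry representative. For odd $p$, invertibility of $2$ gives $R_\pm=R^\pm$, so the residue group is zero; hence $\beta_\xi=\beta_\eta$ on bases and, by conditions~(i) and~(ii) of Lemma~\ref{thm:quadraticFormByValues}, on all of $Q$, whence $\xi\sim\eta$ and $[\xi]=0$ sits vacuously in the image. For $p=2$, the involution is trivial on constants but nontrivial on nonzero-degree monomials, giving $R^+/R_+\cong\FF_2$ generated by the constant~$1$, so each diagonal entry $A_{ii},D_{jj}$ can be reduced into $\FF_2$. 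Further $\sim$-equivalences using $\theta\pm\theta^\dag$ supported in the off-diagonal blocks normalize $C=0$ and $B=1$, and further $\theta$ supported in the diagonal blocks kill the off-diagonals of $A$ and $D$, producing an $\FF_p$-entry representative, which exhibits $[\xi]$ in the image of $\qwitt^\mp(\FF_p)\hookrightarrow\qwitt^\mp(R)$. The main tactical concern is sequencing these $\sim$-adjustments in the $p=2$ case so that later diagonal reductions do not reintroduce off-diagonals eliminated earlier; since every $\theta\pm\theta^\dag$ preserves $\xi\mp\xi^\dag=\lambda^\mp$ by construction, this is careful bookkeeping rather than a substantive obstruction.
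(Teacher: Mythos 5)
Your proof is correct and follows essentially the paper's route: after stabilizing and applying the congruence that makes the associated hermitian form exactly $\lambda^\mp$, the deviation of $\xi$ from $\eta$ is $\pm$-symmetric, and stripping off even forms $\theta \pm \theta^\dag$ (your reduction of the diagonal entries modulo $R_\pm$ via the uniqueness part of the form-by-values lemma is exactly what the paper does by splitting off the constant diagonal part) leaves a representative $\eta + \delta$ with $\delta$ a constant diagonal matrix over $\FF_p$. The only point you omit is the short check that $\qwitt^\mp(\FF) \to \qwitt^\mp(R)$ is genuinely an embedding, which the statement asserts and the paper verifies in one line via the evaluation $x_i \mapsto 1$; your kernel argument itself never needs it.
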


\begin{proof}
	Every element~$[\xi \mp \xi^\dag]$ has an inverse~$[\xi' \mp (\xi')^\dag]$
	where $[\xi'] + [\xi] = 0$ in~$\qwitt(R)$.
	The morphism~$\sym$ is clearly well defined and surjective.

	Fix some basis for each free module,
	and identify a form with its matrix of values evaluated on all pairs of the basis elements.
	The map~$\qwitt(\FF) \to \qwitt(R)$ 
	is injective because if~$E^\dagger(\xi \oplus \eta) E = \eta_{q'}$ with~$\xi \in \qwitt^\mp(\FF)$ and~$E \in \gl(q';R)$,
	then the morphism~$\ev: R \ni x_i \mapsto 1 \in \FF$ gives a stable equivalence of~$\xi$ to~$\eta$ over~$\FF$.
	Also,~$\ev$ gives a right inverse $\qwitt(R) \to \qwitt(\FF)$ of the embedding.

	Regarding the kernel,
	consider $E^\dag (\xi \mp \xi^\dag) E = \eta \mp \eta^\dag$ and put $\xi' = E^\dag \xi E$.
	Then, $\xi' - \eta = \pm(\xi' - \eta)^\dag = \sigma$.
	Let $\delta$ collect the diagonal ``constant'' terms that are invariant under the involution,
	so~$\delta \in \Mat(\FF)$.
	Let $S$ collect the upper diagonal part of $\sigma - \delta$ 
	together with an exactly half of the diagonal terms, each of which is not invariant under the involution,
	so that $S \pm S^\dag = \sigma - \delta$.
	Then, $\sigma = S \pm S^\dag + \delta$ where $\delta \mp \delta^\dag = 0$.
	So, $\xi'\sim \eta + \delta$.
	The latter $\eta + \delta$ is a nonsingular quadratic $\mp$-form over~$\FF$.
\end{proof}

Note that the statement of~\ref{thm:hermitianMatrices} is valid regardless whether $\FF$ is a field;
we only have used the condition that $\FF$ is elementwise fixed under the involution in the proof of~\ref{thm:hermitianMatrices}.

\begin{proposition}\label{thm:sesquilinearWittGroup}
	If $\half \in R$, then $\ker \sym = 0$ and $\switt^\mp(R) \cong \qwitt^\mp(R)$ as abelian groups.
	If~$2=0 \in R$, then the following sequence is split exact.
	\begin{align}
		0 \to \qwitt(\FF) \hookrightarrow \qwitt(R) \xrightarrow{\sym} \switt(R) \to 0
	\end{align}
\end{proposition}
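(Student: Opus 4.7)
The plan is to split the proposition into its two mutually exclusive cases. In both, Lemma~\ref{thm:hermitianMatrices} already provides surjectivity of $\sym$, injectivity of the embedding $\qwitt^\mp(\FF)\hookrightarrow\qwitt^\mp(R)$, and the containment $\ker\sym\subseteq\im(\qwitt^\mp(\FF)\to\qwitt^\mp(R))$. What remains is to pin down $\ker\sym$ exactly and, in characteristic two, to construct a splitting of the resulting short exact sequence.

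For the case $\half\in R$ I would produce an explicit two-sided inverse to $\sym$ given by $[\Delta]\mapsto[\half\Delta]$. Since $\Delta^\dag=\mp\Delta$, one has $\half\Delta\mp\half\Delta^\dag=\Delta$, so the associated hermitian form of $\half\Delta$ is the original $\Delta$ and nonsingularity is preserved. The key calculation, and the only point that is not mechanical bookkeeping, is the identity $\half\lambda^\mp\sim\eta$, which is obtained by choosing $\theta=\begin{pmatrix}0 & -\half \\ 0 & 0\end{pmatrix}$ and reading off that $\eta+\theta\pm\theta^\dag=\half\lambda^\mp$. This ensures that $[\Delta]\mapsto[\half\Delta]$ descends to Witt classes: halving a stable congruence $E^\dag(\Delta_1\oplus\lambda^\mp_k)E=\Delta_2\oplus\lambda^\mp_{k'}$ yields a congruence of the halved forms, and each trivial summand $\half\lambda^\mp$ may be replaced by $\eta$ under $\sim$. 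The composition $\sym\circ(\half\cdot)$ is then the identity by the displayed equality, while $(\half\cdot)\circ\sym$ is the identity via $\phi\sim\half(\phi\mp\phi^\dag)$, witnessed by $\theta=\half\phi$.

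For the case $2=0\in R$ I would first promote Lemma~\ref{thm:hermitianMatrices}'s containment to equality, then split the sequence. Given $[\phi]\in\qwitt(\FF_2)$ the associated hermitian form $\phi+\phi^T$ has vanishing diagonal (since $\phi_{ii}+\phi_{ii}=0$), is symmetric, and is nonsingular by assumption, hence a nonsingular alternating bilinear form over~$\FF_2$. The classical normal form of alternating bilinear forms then gives $\phi+\phi^T\cong\lambda$ over~$\FF_2$, and the same matrix identity persists over~$R$, so $\sym([\phi])=0$ in $\switt(R)$. This upgrades Lemma~\ref{thm:hermitianMatrices}'s containment to $\im(\qwitt(\FF_2)\to\qwitt(R))=\ker\sym$. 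For the splitting I would reuse the evaluation homomorphism $\ev\colon R\to\FF_2$, $x_i\mapsto 1$, from the proof of Lemma~\ref{thm:hermitianMatrices}: applied entrywise to forms, to congruence matrices, and to $\sim$-relations, it induces a well-defined retraction $\qwitt(R)\to\qwitt(\FF_2)$ of the embedding.

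The most delicate step is the identity $\half\lambda^\mp\sim\eta$: it is the algebraic content of the earlier remark that quadratic and hermitian forms coincide when~$2$ is invertible, and it is precisely what makes the proposed inverse well-defined at the zero class. Everything else rests on standard material — the normal form of alternating bilinear forms over a field, and the naturality of $\ev$ with respect to the relations $\sim$ and stable congruence.
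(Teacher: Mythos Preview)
Your proof is correct and essentially follows the paper's approach. In the $2=0$ case your argument is identical to the paper's: the image of $\qwitt(\FF_2)$ lands in $\ker\sym$ because nonsingular even hermitian forms over~$\FF_2$ are alternating and hence congruent to~$\lambda$, and the splitting is given by~$\ev$. In the $\half\in R$ case the paper is slightly more economical: rather than building an explicit two-sided inverse~$[\Delta]\mapsto[\half\Delta]$ and checking it respects Witt equivalence, the paper simply observes that if two quadratic $\mp$-forms $\xi,\phi$ satisfy $\xi\mp\xi^\dag=\phi\mp\phi^\dag$ then $\xi-\phi=\half(\xi-\phi)\pm\half(\xi-\phi)^\dag$, so $\xi\sim\phi$; injectivity of~$\sym$ follows immediately, and the isomorphism then comes from the surjectivity in Lemma~\ref{thm:hermitianMatrices}. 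Your verification that $\phi\sim\half(\phi\mp\phi^\dag)$ is exactly this same calculation, so the difference is only in packaging.
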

\begin{proof}
	Suppose $\half \in R$.
	If $\xi$ and $\phi$ are two quadratic $\mp$-forms such that $\xi \mp \xi^\dag = \phi\mp \phi^\dag$,
	then $\xi - \phi = \pm(\xi - \phi)^\dag = \half(\xi - \phi) \pm \half(\xi - \phi)^\dag$,
	and $\xi \sim \phi$ as quadratic forms.

	When $2=0$, every nonsingular even hermitian form over~$\FF$ is symplectic and hence trivial,
	implying that $\qwitt(\FF) \to \qwitt(R) \to \switt(R)$ is a zero map.
	We have shown in~\ref{thm:hermitianMatrices} the exactness.
\end{proof}

\begin{proposition}\label{thm:WittGroupExponent}
	If $n$ is an exponent of $\qwitt^+(\FF)$ (i.e., $n\cdot \qwitt^+(\FF) = 0$),
	then $n \cdot \qwitt^\mp(R) = 0$.
\end{proposition}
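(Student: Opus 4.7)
The plan is to show that $n\phi$ is Witt-trivial for every nonsingular quadratic $\mp$-form $\phi$ on a finitely generated free $R$-module $Q$, by exhibiting an explicit lagrangian in the $n$-fold direct sum $n\phi = \phi \oplus \cdots \oplus \phi$ and invoking~\ref{thm:lagrangian-trivialForm}. The lagrangian will be pulled back along $\FF \hookrightarrow R$ from an isotropic subspace of $\FF^n$, so the existence question localizes to a statement about bilinear forms over $\FF$. Note that in every relevant case the minimal $n$ given by~\ref{thm:ZeroDcalculation} is even.

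First I would produce a subspace $V \subseteq \FF^n$ of dimension $n/2$ that is totally isotropic with respect to the standard symmetric bilinear pairing $\langle u,w\rangle = \sum_{i=1}^n u_i w_i$. When $p$ is odd, the hypothesis $n\qwitt^+(\FF)=0$ places the diagonal form $\sum_{i=1}^n x_i^2$ in the zero class of $\qwitt^+(\FF)$; over a field this forces it to be hyperbolic (by Witt cancellation), producing the required $V$. When $p=2$ (so $n=2$) one writes $V$ down by hand, for instance the span of $(1,1,0,\ldots,0),(0,0,1,1,0,\ldots,0),\ldots$, whose vectors are self-orthogonal and mutually orthogonal in $\FF_2^n$.

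Given a basis $a^{(1)},\ldots,a^{(n/2)}$ of $V$, I define
\begin{align*}
S = \Bigl\{ \Bigl(\textstyle\sum_{j=1}^{n/2} a^{(j)}_i v_j\Bigr)_{i=1,\ldots,n} : v_1,\ldots,v_{n/2} \in Q \Bigr\} \subseteq Q^n,
\end{align*}
a free $R$-submodule of rank $(n/2)\dim Q$; completing $\{a^{(j)}\}$ to an $\FF$-basis of $\FF^n$ realizes $S$ as a direct summand of $Q^n$. Sesquilinearity gives, for any $s,s' \in S$ of the above form,
\begin{align*}
(n\phi)(s,s') = \sum_{i,j,l} \overline{a^{(j)}_i}\,a^{(l)}_i\,\phi(v_j,v'_l) = \sum_{j,l}\langle a^{(j)},a^{(l)}\rangle\,\phi(v_j,v'_l) = 0,
\end{align*}
using $\overline a = a$ for $a \in \FF$ together with total isotropy of $V$. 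Hence the quadratic form vanishes on $S$, so $S$ is a sublagrangian with $\theta=0$, and the identical computation applied to the associated hermitian form $\phi \mp \phi^\dag$ shows $S \subseteq S^\perp$. A dimension count, $\dim S = (n/2)\dim Q = \half \dim Q^n$, combined with nondegeneracy promotes this to $S=S^\perp$, making $S$ a lagrangian of $n\phi$.

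Invoking~\ref{thm:lagrangian-trivialForm} then makes $n\phi$ Witt-trivial, whence $n[\phi]=0$ in $\qwitt^\mp(R)$; since $\phi$ was arbitrary this gives $n\qwitt^\mp(R)=0$. The only nontrivial step is the first one: producing the totally isotropic subspace $V$ over $\FF$. For $p$ odd it leans on the classical equivalence ``Witt-trivial equals hyperbolic'' for nonsingular quadratic forms over a field, while for $p=2$ it is handled by an explicit construction; the remainder of the proof is uniform in both the sign $\mp$ and the prime $p$.
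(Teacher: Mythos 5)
Your argument is correct, but it is packaged differently from the paper's. The paper splits by characteristic: for $p=2$ it simply notes that any exponent $n$ is even and $2[\phi]=[\phi]+[-\phi]=0$ because $-\phi=\phi$ in characteristic~$2$; for odd $p$ it passes to even hermitian forms via~\ref{thm:sesquilinearWittGroup}, uses Witt cancellation over~$\FF$ to rewrite the hypothesis as $I_n\cong\lambda^+_n$, and concludes by the tensor computation $I_n\otimes\Delta\cong\lambda^+_n\otimes\Delta\cong\lambda^\mp_{nk}$. You stay at the quadratic level throughout and, uniformly in $p$ and in the sign, exhibit an explicit lagrangian $V\otimes_\FF Q\subseteq Q^n$ pulled back from a totally isotropic subspace $V\subseteq\FF^n$ of dimension $n/2$; this is the same underlying mechanism (the hypothesis splits the $n$-fold unit form over~$\FF$, and the splitting is transported to $\phi^{\oplus n}$ by tensoring with $Q$), but your route avoids the detour through $\switt^\mp(R)$ and the separate characteristic-$2$ shortcut, at the cost of verifying the (sub)lagrangian conditions by hand --- which you do correctly, since your computation makes the quadratic form vanish identically on $S$, so the evenness issue at $p=2$ never arises. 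Two points to tighten: the parenthetical ``(so $n=2$)'' is inaccurate, since $n$ may be any even exponent, though your explicit vectors $(1,1,0,\ldots),(0,0,1,1,\ldots)$ already handle general even $n$; and over the ring $R$ the phrase ``dimension count combined with nondegeneracy'' for $S=S^\perp$ deserves one more sentence, e.g.\ nonsingularity of the associated form gives $Q^n/S^\perp\cong S^*$, free of rank $\tfrac n2\rank Q$, so $S^\perp$ is a direct summand of the same rank as $S$, and since $R$ is a domain the containment $S\subseteq S^\perp$ forces equality; alternatively, check directly that $(V\otimes Q)^\perp=V^\perp\otimes Q=V\otimes Q$ using $V^\perp=V$ in $\FF^n$.
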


This is an analogue of~\ref{thm:UnitaryGroupExponent} below,
and has appeared in~\cite[III.10]{clifqca1}.

\begin{proof}
	If $2=0 \in \FF$, the claim is obvious since~$n$ is even.
	Otherwise, by~\ref{thm:sesquilinearWittGroup} we consider $\switt^\mp(R)$.
	The assumption is that $[I_n] = 0 \in \switt^+(\FF)$,
	{\it i.e.}, $I_n \cong \lambda^+_n$.
	Let $[\Delta] \in \switt^\mp(R)$ where $\Delta$ has dimension~$k$.
	Then,
	$I_n \otimes \Delta \cong \lambda^+_n \otimes \Delta \cong \begin{pmatrix} 0 & \Delta \\ \mp \Delta^\dagger & 0 \end{pmatrix} \cong \lambda^\mp_{nk}.$
\end{proof}

\section{Unitary groups and Clifford circuits}\label{sec:unitary}

\begin{definition}
	Define~$\hp^\mp(Q;R) = \hp^\mp(q;R)$ to be the automorphism group of~$Q \oplus Q^* = R^q \oplus R^{q*}$ such that
	\begin{align}
		U^\dagger \lambda_q^\mp U = \lambda_q^\mp. \label{eq:DefLambdaUnitary}
	\end{align}
	The group~$\hp^\mp(q;R)$ embeds into $\hp^\mp(q+1;R)$ 
	in an obvious manner:
	\begin{align}
		\begin{matrix}
		U = \begin{pmatrix} a & b \\ c & d\end{pmatrix} \in \hp(q;R),\\
		V = \begin{pmatrix}a' & b' \\ c' & d' \end{pmatrix} \in \hp(q';R),
		\end{matrix}\qquad
		U \hat\oplus V = 
		\begin{pmatrix}
		a & 0 & b & 0 \\
		0 & a'& 0 & b'\\
		c & 0 & d & 0 \\
		0 & c'& 0 & d'
		\end{pmatrix} \in \hp(q+q';R).
	\end{align}
	The embedding is $U \mapsto U\hat\oplus I$.
	We take the direct limit~$\hp^\mp(R) = \bigcup_{q} \hp^\mp(q;R)$ to define
	the (infinite) {\bf $\lambda^\mp$-unitary group}.
	Let $\qhp^\mp(q;R)$ be the automorphism group of~$R^q \oplus R^{q*}$
	consisting of $U$ such that
	\begin{align}
		U^\dagger \eta_q U = \eta_q + \theta \pm \theta^\dagger
	\end{align}
	for some $\theta : R^q \oplus R^{q*} \to R^{q*} \oplus R^{q}$ that may depend on~$U$.
	That is, $U^\dag \eta U \sim \eta$.
	Similarly, we define the (infinite) {\bf $\eta^\mp$-unitary group}
	$\qhp^\mp(R) = \bigcup_{q} \qhp^\mp(q;R)$.
\end{definition}

\begin{proposition}\label{thm:qhphp}
	For any $q \ge 1$, $\qhp^\mp(q;R) \le \hp^\mp(q;R)$. 
	The equality holds if~$\tfrac 1 2 \in R$.
\end{proposition}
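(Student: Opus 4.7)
The proof splits into two directions and is essentially a matter of unpacking definitions together with the identity $\lambda^\mp = \eta \mp \eta^\dagger$.

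For the forward inclusion $\qhp^\mp(q;R) \le \hp^\mp(q;R)$, the plan is to observe that the equivalence relation $\sim$ on quadratic $\mp$-forms is by construction designed to preserve the associated hermitian $\mp$-form. Concretely, if $U \in \qhp^\mp(q;R)$ so that $U^\dagger \eta U = \eta + \theta \pm \theta^\dagger$ for some sesquilinear $\theta$, then I would apply the $\mp$-antisymmetrization $\phi \mapsto \phi \mp \phi^\dagger$ to both sides. On the right, the $\theta \pm \theta^\dagger$ piece cancels against its $\mp$-adjoint because $(\theta \pm \theta^\dagger)^\dagger = \theta^\dagger \pm \theta$, which leaves exactly $\eta \mp \eta^\dagger = \lambda^\mp$. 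On the left one gets $U^\dagger(\eta \mp \eta^\dagger) U = U^\dagger \lambda^\mp U$, so $U^\dagger \lambda^\mp U = \lambda^\mp$, i.e., $U \in \hp^\mp(q;R)$.

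For the reverse inclusion under the hypothesis $\tfrac12 \in R$, I would start with $U \in \hp^\mp(q;R)$ and set $\sigma = U^\dagger \eta U - \eta$; the goal is to produce $\theta$ with $\sigma = \theta \pm \theta^\dagger$. The key observation is that the condition $U^\dagger \lambda^\mp U = \lambda^\mp$ forces $\sigma$ to be (anti)self-adjoint in exactly the right way: expanding $\lambda^\mp = \eta \mp \eta^\dagger$ and subtracting gives $\sigma \mp \sigma^\dagger = 0$, i.e., $\sigma = \pm\sigma^\dagger$. Then taking $\theta = \tfrac12 \sigma$ yields $\theta \pm \theta^\dagger = \tfrac12(\sigma \pm \sigma^\dagger) = \sigma$ in both sign conventions, so $U^\dagger \eta U = \eta + \theta \pm \theta^\dagger$ and $U \in \qhp^\mp(q;R)$.

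There is no real obstacle here; the only subtle point to be careful about is the sign bookkeeping: a quadratic $\mp$-form is equivalent under addition of $\theta \pm \theta^\dagger$ (note the opposite sign), and the associated hermitian form is $\phi \mp \phi^\dagger$, so one must verify that these two signs conspire correctly in both directions. Since the whole argument is purely formal and makes no use of the ambient structure of $R$ beyond the need to divide by $2$ in the last step, both parts are short once the sign conventions are laid out.
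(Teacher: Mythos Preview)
Your proof is correct and is essentially identical to the paper's own argument: both directions rest on the identity $\lambda^\mp = \eta \mp \eta^\dagger$, with the forward inclusion obtained by antisymmetrizing $U^\dagger \eta U = \eta + \theta \pm \theta^\dagger$, and the reverse inclusion obtained by noting $\sigma = U^\dagger \eta U - \eta$ satisfies $\sigma = \pm \sigma^\dagger$ and then halving.
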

\begin{proof}
	The inclusion is obvious since $(\eta_q + \theta\pm \theta^\dagger) \mp (\eta_q + \theta\pm \theta^\dagger)^\dagger = \lambda_q^\mp$.

	If $U \in \hp^\mp(q)$, then $U^\dagger \eta U - \eta = \mp \eta^\dagger \pm U^\dagger \eta^\dagger U = \pm (U^\dagger \eta U - \eta)^\dagger$.
	If $\half \in R$, 
	then any $\alpha$ such that $\alpha^\dagger = \pm \alpha$ can be written as
	$\alpha = \half \alpha \pm \half \alpha^\dagger$.
\end{proof}

\begin{definition}
	Define~$\ehp^\mp(q;R)$ to be the subgroup of $\hp^\mp(q;R)$ generated by
	\begin{align}
		\cz(\theta) = \begin{pmatrix}
			1 & 0 \\ \theta & 1
		\end{pmatrix} &\in \hp^\mp(q)  \text{ where } \theta^\dagger = \pm \theta : R^q \to R^{q*}, \nonumber\\
		\cx(\alpha) = \begin{pmatrix}
			\alpha & 0 \\ 0 & \alpha^{-\dagger}
		\end{pmatrix} &\in \hp^\mp(q)  \text{ where } \alpha \in \gl(q;R), \\
		\hada_\mp = \begin{pmatrix}
			0 & \hat 1^{-\dag} \\ \mp \hat 1 & 0
		\end{pmatrix} &\in \hp^\mp(1) \subseteq \hp^\mp(q) \nonumber
	\end{align}
	where $\hat 1 : R^{q} \ni u \mapsto (v \mapsto u^\dag v) \in R^{q*}$ 
	has the identity matrix representation under the standard basis of~$R^{q*}$.
	We also define the (infinite) {\bf elementary $\lambda^\mp$-unitary group} $\ehp^\mp$
	by the direct limit of the inclusions $\ehp^\mp(q;R) \hookrightarrow \ehp^\mp(q+1;R)$.
	Define~$\qehp^\mp(q;R)$ to be the subgroup of $\qhp^\mp(q;R)$ generated by $\cx(\alpha)$ 
	with~$\alpha \in \gl(q;R)$, $\hada_\mp$,
	and 
	\begin{align}
		\qcz(\theta) = \cz(\theta \pm \theta^\dagger)
	\end{align}
	with~$\theta: R^q \to R^{q*}$.
	Similarly, $\qehp^\mp(R) = \bigcup_q \qehp^\mp(q;R)$ is the (infinite)
	{\bf elementary $\eta^\mp$-unitary group}.
\end{definition}
Though we defined the unitary group as a subgroup of the automorphism group of $R^q \oplus R^{q*}$,
it is perfectly fine to think of a unitary as a matrix that satisfies the defining equations.

Let us very briefly review the transcription of Pauli groups to $R$-modules~\cite{Haah2013}.
The group of all finitely supported Pauli operators on~$\ZZ^\dd$ with $q$ qudits $\CC^p$ per site is nonabelian, 
but its commutator subgroup consists of $p$-th root of unity.
Since the single-site Pauli group abelianizes to~$\FF_p^{2q}$,
the abelianization of the Pauli group on the infinite lattice 
is the additive group of~$R^{2q}$.
The translation group action gives an $R$-module structure to~$R^{2q}$
that matches the usual ring action on the free $R$-module~$R^{2q}$.
It is our convention that upper half block of a Laurent polynomial column matrix, 
({\it i.e.}, $R^q \oplus 0 \subseteq R^q \oplus R^q$), 
corresponds to the ``$X$''-part, operator factors $X = \sum_{j \in \FF_p} \ket{j+1}\bra{j}$ and its powers,
and the lower half block $0 \oplus R^q$ to the ``$Z$''-part, 
operator factors $Z = \sum_{j \in \FF_p} e^{2\pi\mathrm{i} j / p} \ket j \bra j$.
For example, 
\begin{align}
	\FF_p[x,\tfrac 1 x, y, \tfrac 1 y, z, \tfrac 1 z]^2 \ni
	\begin{pmatrix}
		1+ 2 xy \\ 1 + z^{-1}
	\end{pmatrix} 
	\quad\Longleftrightarrow\quad
	e^{2\pi\mathrm{i} \cdots / p}~ X_{(1,1,0)}^2 X_{(0,0,0)} Z_{(0,0,0)} Z_{(0,0,-1)}.
\end{align}
Any two Pauli operators~$P,Q$ commute up to a $p$-th root of unity, $PQ = \omega^{\lambda}QP$,
where $\lambda$ is valued in~$\ZZ/p\ZZ = \FF_p$.
It is not difficult to see that $\lambda$ as a function of $P,Q$
is a bilinear function of $[P], [Q] \in R^{2q}$.
Moreover, $QP = \omega^{-\lambda}PQ$ implies that $\lambda$ is alternating.
Since the Pauli group has a trivial center, $\lambda$ must be nonsingular
as a map from $R^{2q}$ to the space of all linear functionals of finite support.
Considering all translations of~$P$ and~$Q$,
we promote $\lambda$ to a nonsingular hermitian ($-$)-form, which is~$\lambda_q^-$.
(In our previous paper~\cite{clifqca1}, we called it antihermitian.)

If a $\CC$-algebra automorphism~$\beta$ maps a Pauli operator to a Pauli operator,
we obtain a map~$U$ from a Laurent polynomial column matrix to another.
If $\beta$ is translation invariant,
then the induced map~$U$ must be $R$-linear so we have an $R$-module automorphism on~$R^{2q}$.
The size of a Pauli operator under~$\beta$ can only be finite,
and since the translation invariant map~$\beta$ is determined by finitely many images,
the locality blowup is uniformly bounded across the whole lattice~$\ZZ^\dd$.
Therefore, $\beta$ is a Clifford QCA.
Since any $\CC$-algebra automorphism must preserve commutation relation,
the induced $R$-automorphism~$U$ must preserve~$\lambda$.
We hence obtain a map from Clifford QCA to $\lambda$-unitaries.
Introducing auxiliary qudits in the lattice on which a QCA acts by identity
amounts to our stabilization.

With $p=2$,
it is well known that the controlled-Not, the Hadamard, and the phase gate
generate the full Clifford group on any finite set of qubits.
Here, the phase gate conjugates Pauli~$X$ to~$Y$ but commutes with~$Z$.
In the infinite system, we consider layers of commuting gates
whose arrangement is translation invariant.
Then, it is routine to verify that a layer of 
nonoverlapping controlled-Not gates, one gate per unit cell in the lattice,
induces a elementary row operation $E$ on the upper half block,
and $E^{-\dag}$ on the lower half block, which is in the form of $\cx(E)$
where $E$ has exactly one off-diagonal monomial.
Similarly, a layer of the Hadamard gates, one gate per unit cell, induces $\hada_-$,
and a layer of possibly overlapping controlled-$Z$ gates, one per unit cell, induces $\cz(\theta)$ 
where $\theta = \theta^\dag$ contains exactly two monomials,
so $\cz(\theta) = \qcz(\phi)$ for a matrix~$\phi$ with exactly one nonzero monomial.
A layer of the phase gate, one gate per unit cell,
induces $\cz(1)$.
For $p > 2$, there are analogous gates that induce elementary $\lambda^-$-unitaries of the described form.
Note that, for example,
$\cx(x_1^n x_3^{-1}) \in \ehp^\mp(1;R)$ 
represents a shift (translation) along positive $x_1$-direction by distance~$n$ 
and along negative $x_3$-direction by distance~$1$.

It is a priori not clear whether these elementary $\lambda^-$-unitaries generate all elementary $\lambda^-$-unitaries 
we have define above.
Indeed, it is a nontrivial application of Suslin's stability theorem~\cite{Suslin1977Stability}
that layers of controlled-Not gates induce all possible $\cx(E)$ with~$E \in \gl(R)$.%
\footnote{
	Without Suslin's stability, we would be stuck with $L$-groups of ``simple'' equivalences.
}
On the other hand, it can be proved easily that controlled-$Z$ induces all possible $\qcz(\phi)$~\cite[IV.10]{nta3}.
By definition, the group of all elementary $\lambda^-$-unitaries
is a composition of finitely many layers of gates.
Since our lattice~$\ZZ^\dd$ is infinite,
this composition is always shallow or 
constant depth if we consider any infinite collection of finite systems with periodic boundary conditions.
All together, our (traditional) definition of elementary $\lambda^-$-unitaries captures precisely
the group of shallow translation invariant Clifford circuits.

Next, we examine relative sizes of unitary group, elementary unitary group, and the commutator subgroup.

\begin{proposition}
	$\qehp^\mp(q) = \ehp^\mp(q)$ if $\half \in R$.
\end{proposition}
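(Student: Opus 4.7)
The proof is essentially a one-line observation exploiting that $\cz$- and $\qcz$-gates are the only generators that differ between $\ehp^\mp(q)$ and $\qehp^\mp(q)$, combined with the availability of $\half$.

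First I would establish the easy inclusion $\qehp^\mp(q) \subseteq \ehp^\mp(q)$: the generators $\cx(\alpha)$ and $\hada_\mp$ of $\qehp^\mp(q)$ are by definition generators of $\ehp^\mp(q)$ as well, and each $\qcz(\theta) = \cz(\theta \pm \theta^\dag)$ is a $\cz$ of an element satisfying $(\theta \pm \theta^\dag)^\dag = \pm(\theta \pm \theta^\dag)$, hence lies in $\ehp^\mp(q)$.

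For the reverse inclusion, it suffices to show that every generator $\cz(\sigma)$ of $\ehp^\mp(q)$, where $\sigma^\dag = \pm \sigma$, is a $\qcz$. Setting $\mu = \half \sigma$ (which makes sense because $\half \in R$), one checks directly that $\mu \pm \mu^\dag = \half \sigma \pm \half \sigma^\dag = \half \sigma + \half \sigma = \sigma$, so $\cz(\sigma) = \cz(\mu \pm \mu^\dag) = \qcz(\mu) \in \qehp^\mp(q)$.

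There is no serious obstacle here; the statement is really just a translation of the argument already used in Proposition~\ref{thm:qhphp} from the level of the full unitary groups down to the elementary ones. The only thing to verify is that when $\half \in R$ every $\pm$-symmetric element of the $R$-module of homomorphisms $Q \to Q^*$ is of the form $\mu \pm \mu^\dag$, which is immediate by the halving trick above.
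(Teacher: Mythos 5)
Your proof is correct and is essentially the paper's own argument: the paper's proof simply says ``obvious'' and points to the halving trick in Proposition~\ref{thm:qhphp}, namely that any $\sigma$ with $\sigma^\dag = \pm\sigma$ equals $\half\sigma \pm \half\sigma^\dag$ when $\half \in R$, which is exactly your identification $\cz(\sigma) = \qcz(\half\sigma)$ together with the trivial inclusion of generators.
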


\begin{proof}
	Obvious; see the proof of~\ref{thm:qhphp}.
\end{proof}

\begin{proposition}\label{thm:n0elem}
	If $R = \FF$ is a field on which the involution acts by identity,
	then $\ehp^\mp(q) = \hp^\mp(q)$ 
	and $\qehp^\mp(q) = \qhp^\mp(q)$ for all~$q$.
\end{proposition}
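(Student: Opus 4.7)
The plan is to prove both equalities by induction on $q$, exhibiting for any $U \in \hp^\mp(q;\FF)$ (resp.\ $U \in \qhp^\mp(q;\FF)$) a product of generators from $\ehp^\mp(q;\FF)$ (resp.\ $\qehp^\mp(q;\FF)$) whose left multiplication reduces $U$ to block-diagonal form $I_1 \hat\oplus U'$ with $U' \in \hp^\mp(q-1;\FF)$; the induction then closes.

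The base case $q=1$ is classical: $\hp^-(1;\FF) = \sgl(2;\FF)$ is generated by upper and lower unipotents together with diagonal scalings, which are exactly $\cz(\theta)$, $\hada_-\cz(\theta)\hada_-^{-1}$, and $\cx(\alpha)$; an analogous statement gives $\hp^+(1;\FF)$. For the quadratic variants in characteristic $2$, the group $\qhp^\mp(1;\FF_2)$ is small enough that direct enumeration (solving $U^\dag\eta U\sim \eta$ with entries in $\FF_2$) matches $\qehp^\mp(1;\FF_2)$.

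For the inductive step, let $U=\begin{pmatrix}a&b\\c&d\end{pmatrix}\in \hp^\mp(q;\FF)$ with $q\ge 2$. I first bring the first column of $U$ to $e_1 \in \FF^q\oplus 0$. Since $U$ is invertible, the first column $(a_1,c_1)$ is nonzero; via permutations $\cx(P)$, single-coordinate Hadamards $\cx(P_{1i})\hada_\mp\cx(P_{1i}^{-1})$ (where $P_{1i}$ swaps coordinates $1$ and $i$), and a final $\cx(\alpha)$ chosen so that $\alpha a_1=e_1$, I make the top half of the first column equal to $e_1$. A subsequent $\cz(\theta)$ whose first column is $-c_1'$ and which satisfies $\theta^\dag=\pm\theta$ clears the bottom half; compatibility of the $(1,1)$-entry with the required symmetry of $\theta$ is secured by the defining relation $a^\dag c=\pm c^\dag a$ for $U\in\hp^\mp$. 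Once the first column equals $e_1$, the identity $U^\dag\lambda^\mp U=\lambda^\mp$ automatically forces the $(q+1)$-th row of $U$ to equal $e_{q+1}^T$. A mirror reduction, performed after conjugating by the ``total Hadamard'' $H=\prod_{i=1}^q \cx(P_{1i})\hada_\mp\cx(P_{1i}^{-1})$ which swaps upper and lower blocks, clears the first row and the $(q+1)$-th column, leaving $U$ in the desired block-diagonal form.

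The quadratic case $\qehp^\mp=\qhp^\mp$ follows by the same reduction, provided every $\cz(\theta)$ used can be written as $\qcz(\theta')=\cz(\theta'+(\theta')^\dag)$. When $\half\in\FF$ this is immediate from the preceding proposition, so the only nontrivial case is $\operatorname{char}\FF=2$. There, membership $U\in\qhp^\mp$ imposes the vanishing of the diagonals of $a^\dag c$ and $b^\dag d$ (rather than merely of $a^\dag c\mp c^\dag a$), and this extra condition is precisely what ensures that the clearing $\theta$ has zero diagonal and hence is expressible as $\theta'+(\theta')^\dag$. The main obstacle I anticipate is carefully tracking this diagonal-vanishing condition through successive left multiplications and Hadamard conjugations in characteristic $2$, confirming that $\qhp^\mp$-membership is preserved at each step of the reduction so the chosen clearing generators lie in $\qehp^\mp$.
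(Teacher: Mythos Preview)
Your approach is essentially the paper's: induction on~$q$, with the inductive step reducing the first column to~$e_1$ by left-multiplication with elementary unitaries, then clearing the first row, after which unitarity pins down the~$(q{+}1)$-th row and column. The base case and the analysis of the characteristic-$2$ quadratic constraint (vanishing diagonals of~$a^\dag c$ and~$b^\dag d$) match the paper's reasoning.

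One point deserves tightening. Your ``mirror reduction after conjugating by the total Hadamard~$H$'' is unclear as written. Conjugation~$U\mapsto HUH^{-1}$ exchanges the role of column~$1$/row~$q{+}1$ with row~$1$/column~$q{+}1$, so after conjugating you would re-run the \emph{column}-clearing algorithm on~$HUH^{-1}$; but unless you argue that every elementary move used there fixes the already-unit row~$1$ of~$HUH^{-1}$, you risk destroying what you gained. This can be salvaged (the needed~$\cx(\alpha)$ have~$\alpha$ with first row~$e_1^T$, the~$\cz$'s and off-index~$\hada$'s fix row~$1$), but it is more work than necessary. The paper avoids this entirely by acting on the \emph{right} to clear the first row, exactly symmetrically to the left action on the first column; once both are unit, the defining relation~$U^\dag\lambda^\mp U=\lambda^\mp$ forces the~$(q{+}1)$-th row and column to be unit as well. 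I recommend replacing the Hadamard-conjugation step with right-multiplication.

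Your worry about ``tracking the diagonal-vanishing condition'' in characteristic~$2$ is unfounded: $\qhp^\mp$ is a group and~$\qehp^\mp\subseteq\qhp^\mp$, so left- or right-multiplying by elements of~$\qehp^\mp$ keeps~$U$ inside~$\qhp^\mp$; hence the diagonal condition on~$a^\dag c$ holds automatically at the moment you need it (right after the top half of the first column becomes~$e_1$), giving~$c'_1[1]=0$ and allowing the clearing~$\theta$ to have zero diagonal.
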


\begin{proof}
	We use induction in~$q$.

	(Base case $q=1$.)
	If $\FF$ has characteristic different from~$2$,
	then \eqref{eq:DefLambdaUnitary} implies that
	any element of $\hp^-$ is a matrix with determinant~$1$.
	The elementary matrices are precisely row operations,
	by which any element of $\hp^-$ can be brought to the identity.
	Similarly, \eqref{eq:DefLambdaUnitary} implies that
	any element of $\hp^+(1)$ is either $\cx(\alpha)$ or $\cx(\alpha)\hada_+$.
	If $\FF$ is characteristic~$2$,
	$\qhp(1)$ consists of matrices of determinant~$1$
	where each row has exactly one nonzero entry, 
	and any such matrix has form~$\cx(\alpha)$ or $\cx(\alpha)\hada$.
	$\hp(1)$ consists of matrices of determinant~$1$,
	and the elementary matrices supply all possible row operations.

	(Induction step.)
	If $U$ is a unitary matrix,
	the first column of~$U$ is nonzero.
	We apply various elementary unitary on the left of~$U$
	to make it into the unit column vector.
	Indeed, $\hada_\mp$ on some pair of rows 
	can bring a nonzero element to the upper half block.
	$\cx(\alpha)$ can turn the upper half block to the unit column vector.
	Any remaining nonzero entry in the first column can be eliminated by~$\hada_\mp$
	and $\cx$, and possibly~$\cz$ for~$\hp^-$.
	Once the first column is put into the unit vector,
	the first row can also be transformed in a similar manner 
	by acting on the right of~$U$, turning it into the unit row vector.
	Then, \eqref{eq:DefLambdaUnitary} dictates that $q+1$-th row and $q+1$-th column
	are the unit vectors, completing the induction step.
\end{proof}

\begin{proposition}[\cite{Haah2013}]\label{thm:MinusUnitaryIn1D}
	Over $R = \FF[z,\tfrac 1 z]$ where $\FF$ is any field,
	the $\lambda^-$-unitary group $\hp^-(q;R)$ is elementary, i.e., $\hp^-(q;R) = \ehp^-(q;R)$.
\end{proposition}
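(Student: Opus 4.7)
The plan is to induct on~$q$, with the substantive work going into (a) a Gauss-type factorization when the upper-left block of $U$ is invertible, and (b) a Euclidean-style reduction that forces such invertibility (or equivalently reduces the first column of $U$ to the unit vector~$e_1$). The base case $q=0$ is trivial.

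First I would record the following factorization. Write $U=\begin{pmatrix} A & B\\ C & D\end{pmatrix}\in \hp^-(q;R)$ in $q$-by-$q$ blocks. The identities $U^\dag \lambda^- U=\lambda^-=U\lambda^- U^\dag$ yield $A^\dag C=C^\dag A$, $AB^\dag=BA^\dag$, and $A^\dag D-C^\dag B=I$. If $A\in \gl(q;R)$, the first two relations say $CA^{-1}$ and $A^{-1}B$ are hermitian, while the third gives $D-CA^{-1}B=A^{-\dag}$. Hence
\[
U \;=\; \cz(CA^{-1})\cdot \cx(A)\cdot \hada_-^{-1}\,\cz(-A^{-1}B)\,\hada_-,
\]
where $\hada_-$ denotes the product of single-qudit Hadamards (itself in $\ehp^-$). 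This exhibits $U\in\ehp^-(q;R)$ whenever $A$ is invertible.

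Next I would reduce the first column of $U$ to $e_1$ by left-multiplications inside $\ehp^-(q;R)$. The first column $v=(Ae_1;\,Ce_1)$ is unimodular by $A^\dag D-C^\dag B=I$. Since $R=\FF[z,z^{-1}]$ is Euclidean (Laurent-width as norm), $\gl(q;R)$ is generated by elementary row operations, so $\cx(\alpha)$ can reduce the upper half to $(g,0,\ldots,0)^T$ with $g=\gcd(A_{i1})$. Successively applying the single-qudit Hadamards $\hada_{-,j}$ for $j=2,\ldots,q$ brings the bottom entry $c_j$ into position~$j$ of the top, after which a further $\cx$ produces $(\gcd(g,c_j),0,\ldots,0)^T$; iterating and finally using $\hada_{-,1}$ to incorporate $c_1$ turns the top into $(u,0,\ldots,0)^T$ with $u\in R^\times$ by unimodularity, and $\cx(u^{-1})$ brings it to $e_1$. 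The $(1,1)$-entry of the new bottom half is forced to be self-adjoint by $A^\dag C=C^\dag A$, so a single $\cz(\theta)$ with $\theta$ hermitian (constant in the $(1,1)$-slot and arbitrary in the off-diagonal column entries) clears the bottom of the first column.

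Once the first column is $e_1$, the relation $U\lambda^- U^\dag=\lambda^-$ forces the $(q{+}1)$-st row to be $e_{q+1}^T$, so $U$ splits off an identity pair in the first and $(q{+}1)$-st coordinates and the remaining block lies in $\hp^-(q-1;R)$. The inductive hypothesis then finishes the argument; alternatively, once the first column is reduced, the upper-left block becomes invertible and the Gauss factorization above closes the proof directly. The main obstacle I anticipate is executing the Bezout step while staying inside $\ehp^-$: the $\cz$-multipliers are constrained to be hermitian, which rules out a direct one-variable Bezout move. This is circumvented by doing the unconstrained $\gl$-reductions through $\cx(\alpha)$ and using $\hada_{-,j}$ to shuttle entries between the upper and lower halves one qudit at a time, leveraging both the Euclidean property of $R$ and the fact that the full $\gl(q;R)$ is available through $\cx$.
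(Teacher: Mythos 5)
Your Gauss factorization (step (a)) is correct, and you have correctly identified the crux --- the hermitian constraint on the $\cz$-multipliers --- but the workaround you propose for it is exactly where the argument breaks, so there is a genuine gap. The Bezout step that is actually needed is between the top and bottom entries sitting at the \emph{same} qudit position, and shuttling through other qudits cannot substitute for it. For $q=1$ there are no other qudits at all: the column is $(a;c)$ with $\bar a c=\bar c a$, and with only $\cx(\alpha)$ ($\alpha$ a unit) and $\hada_-$ you can merely rescale or swap $a$ and $c$, so a column with both entries nonunits (e.g.\ the first column $(z+\tfrac1z;\,z+1+\tfrac1z)$ of $U_{p=2,\dd=1}$ in Remark~\ref{rem:oneDimNontrivialQuadraticUnitary}, which lies in $\hp^-(1;\FF_2[z,\tfrac1z])$) can never be brought to $e_1$ without $\cz$ moves. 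For $q\ge 2$ the same example, stabilized by an identity on extra qudits, defeats your procedure: the first column is $(a,0,\dots;c,0,\dots)$, every $\cx(\alpha)$ acts on the lower half by $\alpha^{-\dag}$, so the lower entries remain multiples of $c$ and the quantity your gcd-gathering accumulates at the top is never combined Bezout-wise with the lower-$1$ entry; the appeal to ``$u\in R^\times$ by unimodularity'' is therefore unsupported, since unimodularity controls the gcd of all $2q$ entries, not of the subset your moves actually reach. The content of the paper's proof is precisely the step you try to avoid: after reducing to at most three nonzero entries $(a;b,c)$, the relation $a^\dag b=b^\dag a$ forces the leading and trailing coefficients of $a$ and $b$ to be proportional, which lets one choose a \emph{self-adjoint} quotient $d$ (two involution-symmetric monomials, or a constant when $\deg a=\deg b$) so that $\cz(-d)$ strictly decreases the Laurent width of $b$; iterating this hermitian-constrained Euclid, with $\hada$ to exchange the roles of $a$ and $b$, is what makes the first column collapse to a unit.

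Two smaller points at the end also need repair. From ``first column $=e_1$'' you may conclude (via $U^{-1}=\lambda^{-1}U^\dag\lambda$) that the $(q{+}1)$-st \emph{row} is $e_{q+1}^T$, but the first row and the $(q{+}1)$-st column can still be nonzero elsewhere, so $U$ does not yet split off an identity pair, nor does the upper-left block become invertible; as in the paper, you must also act by elementary unitaries on the right to clear the first row before $\lambda^-$-unitarity singles out an identity block and the induction in $q$ closes.
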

\begin{proof}
	The base ring is a Euclidean domain with a degree function~$\deg : R \to \ZZ_{\ge 0}$
	being the maximum $z$-exponent minus the minimum $z$-exponent;
	the elements of degree zero are precisely monomials.

	Let $U \in \hp^-(q;R)$.
	The first column of~$U$ by using~$\cx$ on the left, implementing Euclid's algorithm
	which is nothing but some elementary row operation,
	can be brought to one where the upper half has a unique nonzero entry.
	Then, by letting $\alpha^{-\dag} \in \gl(q-1;R)$ to act on all but the first entry of
	the lower half block,
	the first column of~$U$ is further simplified to have at most three nonzero entries,
	the first component, say $a = a_0 z^i + \cdots + a_1 z^j$, in the upper half, 
	and the first two, say $b = b_0 z^k +\cdots + b_1 z^\ell$ and $c$, in the lower half.
	The equation $U^\dag \lambda^- U = \lambda^-$ implies that $a^\dag b = b^\dag a$
	and in particular
	$a_0 b_1 = a_1 b_0$ and $k - i = j - \ell$.
	Without loss of generality, assume that $\deg a \le \deg b$ so that $j - i \le \ell - k$.
	If $\deg a < \deg b$, we define 
	$d = \frac{b_0}{a_0} z^{k-i} + \frac{b_1}{a_1} z^{\ell-j} = \frac{b_0}{a_0} z^{k-i} + \frac{b_0}{a_0} z^{i-k} = d^\dag$.
	If $\deg a = \deg b$, then $\ell = j$ and $i = k$, and we define
	$d = \frac{b_0}{a_0} = \frac{b_1}{a_1}$.
	Now, $\cz(-d) \in \ehp^-(1;R)$ turns%
	\footnote{
		If $\deg a = \deg b$, then $\cz(-d) \in \hp(q;R) \setminus \qhp(1;R)$.
		Hence, the proof does not work for~$\qhp^-$ against~$\qehp^-$.
	}
	$\begin{pmatrix} a \\ b \end{pmatrix}$
	into $\begin{pmatrix} a \\ b - da \end{pmatrix}$ where $\deg (b-da) < \deg b$.
	Since a degree cannot decrease forever, 
	by~$\hada$ and~$\cz$ we can bring the first column into one with~$b = 0$ or~$a=0$.
	In either case, we will be left with a first column where only two components are nonzero,
	one of which will be eliminated by~$\cx$.
	But the first column of~$U$ is unimodular ({\em i.e.}, the entries generate the unit ideal),
	so the eventual sole nonzero component must be a unit.

	Now, acting by~$\ehp$ on the right of~$U$,
	the first row can be brought to a row vector with a sole nonzero component.
	The equation~$U^\dag \lambda^- U = \lambda^-$ implies that
	an identity unitary is singled out.
	By induction in~$q$,
	we express $U$ by elementary unitaries.
\end{proof}

\begin{remark}\label{rem:oneDimNontrivialQuadraticUnitary}
	The proof of~\ref{thm:MinusUnitaryIn1D} does not go through for the $\eta^-$-unitary group over a ring without~$\half$,
	and it shouldn't.
	Indeed, the following $\eta^-$-unitary~$U_{p=2,\dd=1}$ is a generator of $\qhp(R)/\qehp(R)$ with $R = \FF_2[z,\tfrac 1 z]$,
	which will be shown to be isomorphic to~$\ZZ/2\ZZ$.
	See~\ref{thm:oddBassExact} together with~\ref{thm:ZeroDcalculation} and~\ref{thm:n0elem}.
	\begin{align}
		U_{p=2,\dd=1}
		=
		\begin{pmatrix}
			z + \tfrac 1 z &&& z + 1 + \tfrac 1 z \\
			z + 1+ \tfrac 1 z &&& z + \tfrac 1 z
		\end{pmatrix}
	\end{align}
	Observe that this $\eta$-unitary belongs to~$\ehp(R)$.
	In view of~\S\ref{sec:timereversal} below,
	this means that the Hamiltonian over qubits obtained 
	by applying this Clifford QCA to the trivial Hamitonian~$-\sum_j Z_j$
	where $Z_j$ is the single-qubit Pauli~$Z$ at site~$j$,
	is complex-conjugation invariant (in the basis where Pauli~$Z$ is diagonal).
	Indeed, this Hamiltonian is $-\sum_j Y_{j-1}X_j Y_{j+1}$ whose ground state is the so-called cluster state,
	a representative of the time-reversal symmetry protected topological phase in~$\dd=1$~\cite{Pollmann2009,MPS}.
\end{remark}

\begin{lemma}\label{lem:HHbyZX}
	$\hada_\mp \hat\oplus \hada_\mp$ belongs to the subgroup of $\qehp^\mp(2)$ generated by 
	all~$\qcz$ and $\qcz^\dagger$ and all~$\cx$.
\end{lemma}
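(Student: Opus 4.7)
The plan is to exhibit an explicit four-factor decomposition of $V := \hada_\mp \hat\oplus \hada_\mp$, which viewed as an $R$-module automorphism of $R^2 \oplus R^{2*}$ in $2\times 2$-block form reads $V = \begin{pmatrix} 0 & I_2 \\ \mp I_2 & 0\end{pmatrix}$. The model is the classical Bruhat-type factorization of $\begin{pmatrix} 0 & 1 \\ \mp 1 & 0\end{pmatrix}$, lifted to $2\times 2$ blocks with a suitable pivot matrix $M$ in place of the scalar $1$. The naive lift with scalar $M = I_2$ produces generators of the form $\cz(\pm I_2)$, which lie in $\qehp^\mp$ only when $I_2$ is of the form $\theta \pm \theta^\dag$; this requires $\half \in R$ and $\mp = -$, so the naive lift fails both over $\FF_2$ and in every $\mp = +$ case in any characteristic.

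The fix is to use a rank-$2$ pivot. I would take $\theta = E_{12} = \begin{pmatrix} 0 & 1 \\ 0 & 0\end{pmatrix}$ and set $M := \theta \pm \theta^\dag$, giving $M = \begin{pmatrix} 0 & 1 \\ 1 & 0\end{pmatrix}$ for $\mp = -$ and $M = \begin{pmatrix} 0 & 1 \\ -1 & 0\end{pmatrix}$ for $\mp = +$. In both cases $M$ is invertible, $M^\dag = \pm M$ (so that $\cz(M) = \qcz(\theta) \in \qehp^\mp$), and $M M^\dag = I_2$, which gives $M^{-1} = M^\dag = \pm M$. I then claim
\begin{equation*}
V \;=\; \begin{pmatrix} I & M \\ 0 & I \end{pmatrix} \begin{pmatrix} I & 0 \\ -M^{-1} & I \end{pmatrix} \begin{pmatrix} I & M \\ 0 & I \end{pmatrix} \cdot \cx(M^\dag),
\end{equation*}
and verify this by direct block multiplication: the $\cz$-triple collapses via $I + M(-M^{-1}) = 0$ and $-M^{-1}M + I = 0$ to $\begin{pmatrix} 0 & M \\ -M^{-1} & 0\end{pmatrix}$, and then right-multiplication by $\cx(M^\dag) = \begin{pmatrix} M^\dag & 0 \\ 0 & M^{-1}\end{pmatrix}$ yields $\begin{pmatrix} 0 & I_2 \\ -M^{-1}M^\dag & 0\end{pmatrix}$; using $M^{-1} = M^\dag = \pm M$ gives $-M^{-1}M^\dag = -M^2 = \mp I_2$, as required.

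Each of the three upper- or lower-triangular factors is by construction of the form $\qcz(\pm\theta)^{\pm\dag}$ --- for instance the first factor equals $\qcz(\theta)^\dag$ when $\mp = -$ and $\qcz(-\theta)^\dag$ when $\mp = +$ --- and $\cx(M^\dag)$ is manifestly a $\cx$-generator, so the product lies in the subgroup generated by $\qcz$, $\qcz^\dag$, and $\cx$. The substantive step is the existence of the pivot $M$: one needs a single invertible $2\times 2$ matrix simultaneously lying in $\{\theta \pm \theta^\dag\}$, having the correct $\dag$-symmetry, and satisfying $MM^\dag = I_2$. No scalar satisfies these uniformly, but $E_{12} \pm E_{21}$ works at rank $2$. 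This is the algebraic shadow of the lemma: the $\ZZ/2$ Hadamard obstruction that blocks a single $\hada_\mp$ from the $\cx,\qcz$-subgroup cancels in pairs.
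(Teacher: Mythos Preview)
Your proof is correct and is essentially the same as the paper's: both give a Bruhat-type factorization $\qcz^\dag \cdot \qcz \cdot \qcz^\dag \cdot \cx$ using a rank-$2$ pivot of the form $E_{12} \pm E_{21}$. The paper's pivot is $N = \begin{pmatrix} 0 & \pm 1 \\ 1 & 0 \end{pmatrix} = M^\dag = \pm M$ rather than your $M = \begin{pmatrix} 0 & 1 \\ \pm 1 & 0 \end{pmatrix}$, with the corresponding final factor $\cx(M)$ instead of your $\cx(M^\dag)$; this is a cosmetic difference and the verification is identical.
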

\begin{proof}
	Let us use the standard basis $\hat 1(1) \in R^*$ with respect to which $\hat 1$ is represented by the identity matrix.
	\begin{align}
	\hada_\mp \hat\oplus \hada_\mp
	=
	\left(
	\begin{array}{cc|cc}
	1 & 0 & 0 & \pm 1 \\
	0 & 1 & 1 & 0 \\
	\hline
	0 & 0 & 1 & 0 \\
	0 & 0 & 0 & 1 \\
	\end{array}
	\right)\left(
	\begin{array}{cc|cc}
	1 & 0 & 0 & 0 \\
	0 & 1 & 0 & 0 \\
	\hline
	0 & -1 & 1 & 0 \\
	\mp 1 & 0 & 0 & 1 \\
	\end{array}
	\right)\left(
	\begin{array}{cc|cc}
	1 & 0 & 0 & \pm 1 \\
	0 & 1 & 1 & 0 \\
	\hline
	0 & 0 & 1 & 0 \\
	0 & 0 & 0 & 1 \\
	\end{array}
	\right)\left(
	\begin{array}{cc|cc}
	0 & 1 & 0 & 0 \\
	\pm 1 & 0 & 0 & 0 \\
	\hline
	0 & 0 & 0 & 1 \\
	0 & 0 & \pm 1 & 0 \\
	\end{array}
	\right)
	\end{align}
\end{proof}

\begin{lemma}\label{thm:CZ}
	For $U,U' \in \hp^\mp(q)$, 
	if the left half blocks of $U$ and $U'$ are the same,
	then there exists $E \in \ehp^\mp(q)$ such that $U E = U'$.
	If furthermore $U,U' \in \qhp^\mp(q)$, then $E$ may be chosen from~$\qehp^\mp(q)$.
\end{lemma}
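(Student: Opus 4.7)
The plan is to take $E := U^{-1}U'$. Since $\hp^\mp(q;R)$ is closed under products and inverses (as one verifies directly from the defining relation $V^\dagger \lambda^\mp V = \lambda^\mp$), we have $E \in \hp^\mp(q;R)$. Moreover, since $U$ and $U'$ have identical left half blocks, $E = U^{-1}U'$ fixes the first $q$ standard basis vectors, so $E = \begin{pmatrix} I & A \\ 0 & B \end{pmatrix}$ for some $A$ and $B$. Applying the unitarity condition $E^\dagger \lambda^\mp E = \lambda^\mp$ to this block form forces $B = I$ and $A^\dagger = \pm A$, where the sign $\pm$ is opposite to $\mp$.

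To exhibit such an $E$ as a product of the listed generators of $\ehp^\mp(q;R)$, I would verify by direct matrix multiplication the factorizations $E = \hada_-\,\cz(-A)\,\hada_-^{-1}$ in the case $\mp = -$ (where $A^\dagger = A$), and $E = \hada_+\,\cz(A)\,\hada_+$ in the case $\mp = +$ (where $A^\dagger = -A$, and $\hada_+^{-1} = \hada_+$). In both cases the interior $\cz$-entry satisfies exactly the $\pm$-hermitian condition required for membership in $\ehp^\mp$, so $E \in \ehp^\mp(q;R)$.

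For the quadratic statement, one first notes that $\qhp^\mp(q;R)$ is also closed under products and inverses: the argument for inverses is parallel to the proof of~\ref{thm:qhphp} and amounts to conjugating the relation $V^\dagger \eta V = \eta + \theta \pm \theta^\dagger$ by $V^{-1}$. Hence $E \in \qhp^\mp(q;R)$. Applying the defining relation $E^\dagger \eta E - \eta = \theta \pm \theta^\dagger$ to the block form of $E$ yields $\begin{pmatrix} 0 & 0 \\ 0 & A^\dagger \end{pmatrix} = \theta \pm \theta^\dagger$; reading off the bottom-right $q$-by-$q$ block gives $A^\dagger = \theta_{22} \pm \theta_{22}^\dagger$, which upon dualizing becomes $A = \alpha \pm \alpha^\dagger$ for some $\alpha$. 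This is precisely the ``evenness'' that lets one rewrite the $\cz$-factor in the factorization above as a $\qcz$-generator, yielding $E \in \qehp^\mp(q;R)$. The main subtle point is exactly this last step: it is quadratic unitarity, rather than mere $\lambda$-unitarity, that upgrades the $\pm$-hermitian $A$ to an ``even'' form $\alpha \pm \alpha^\dagger$, matching the additional restriction built into the definition of $\qcz$; once this is in hand, the factorization transplants directly from $\ehp^\mp$ to $\qehp^\mp$.
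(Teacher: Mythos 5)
Your argument is correct and takes the same approach as the paper: set $E = U^{-1}U'$, deduce from unitarity that $E = \begin{pmatrix} I & A \\ 0 & I \end{pmatrix}$ with $A^\dagger = \pm A$ (respectively $A$ $\pm$-even in the quadratic case), and recognize this as an element of $\ehp^\mp$ (respectively $\qehp^\mp$). You simply spell out the $\hada_\mp$-conjugation of a $\cz$-generator exhibiting $E$ explicitly, a verification the paper leaves implicit.
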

\begin{proof}
	It suffices to consider $U' = I$ because we may multiply everything by~$(U')^{-1}$.
	By definition of $\hp^\mp(q)$, we must have $U = \begin{pmatrix}
	I & Y \\ 0 & I
	\end{pmatrix}$ where $Y^\dagger \mp Y = 0$.
	If $U \in \qhp^\mp(q)$, then $Y = \theta \pm \theta^\dagger$ for some~$\theta$.
\end{proof}

\begin{proposition}\label{thm:formation-unitary}
	Suppose $L \subset Q \oplus Q^*$ is a lagrangian 
	of a trivial quadratic $\mp$-form~$\eta$ on~$Q \oplus Q^*$ 
	where $Q$ is a finitely generated projective $R$-module.
	Then, there exists an $\eta^\mp$-unitary~$U$ on~$Q \oplus Q^*$
	such that $U(Q \oplus 0) = L$ and such $U$ is unique up to~$\qehp^\mp(Q)$.
	Similarly, if $L$ is a lagrangian 
	with respect to the trivial hermitian form~$\lambda^\mp$ on $Q \oplus Q^*$,
	then there exists a $\lambda^\mp$-unitary~$U$ such that $U(Q \oplus 0) = L$,
	unique up to~$\ehp^\mp(Q)$.
\end{proposition}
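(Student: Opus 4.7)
The plan is to establish existence by reducing to a model decomposition via Lemma~\ref{thm:lagrangian-trivialForm}, and then to handle uniqueness by analyzing $\eta^\mp$- and $\lambda^\mp$-unitaries that preserve the standard lagrangian~$Q \oplus 0$.

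For existence in the quadratic case, I would apply Lemma~\ref{thm:lagrangian-trivialForm} to the trivial form~$\eta$ on $Q \oplus Q^*$ with the given lagrangian~$L$. This produces a complementary lagrangian~$M$ such that, in the decomposition $Q \oplus Q^* = M \oplus L$, the form~$\eta$ is equivalent under~$\sim$ to $\begin{pmatrix} 0 & \varphi \\ 0 & 0 \end{pmatrix}$ for some isomorphism $\varphi : L \to M^*$. By the Quillen--Suslin--Swan theorem, both $L$ and $M$ are free of the same rank as~$Q$. I would pick any $R$-module isomorphism $a : Q \to L$, set $b = (\varphi a)^{-\dagger} : Q^* \to M$, and define $U : Q \oplus Q^* \to Q \oplus Q^*$ by sending $(q,f) \mapsto (bf, aq) \in M \oplus L$. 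By construction $U(Q \oplus 0) = L$, and the normalization of~$b$ makes a direct check give $U^\dagger \eta U \sim \eta$, so $U \in \qhp^\mp(Q)$.

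For uniqueness, I would set $V = U^{-1}U'$ and note that $V$ preserves $Q \oplus 0$ setwise, so $V = \begin{pmatrix} \alpha & \beta \\ 0 & \alpha^{-\dagger} \end{pmatrix}$ for some $\alpha \in \gl(Q;R)$ and $\beta : Q^* \to Q$. Computing $V^\dagger \eta V - \eta$ and demanding that it equal $\theta \pm \theta^\dagger$ forces $\alpha^{-1}\beta$ to be of the form $\mu \pm \mu^\dagger$ for some $\mu$. I would then factor $V = \cx(\alpha) \cdot W$ with $W$ the upper unipotent carrying that $\mu \pm \mu^\dagger$, and exhibit $W$ as a Hadamard conjugate of a single $\qcz$ generator, concluding that $V \in \qehp^\mp(Q)$.

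The $\lambda^\mp$-case runs in parallel. The existence of a $\lambda^\mp$-unitary $U$ with $U(Q \oplus 0) = L$ forces $L$ to be a full lagrangian (free of rank~$q$ with $L^\perp = L$), so Lemma~\ref{thm:lagrangian-trivialForm} still applies to supply the complement~$M$ and the trivial shape. The uniqueness step now uses the strict identity $V^\dagger \lambda^\mp V = \lambda^\mp$, giving $(\alpha^{-1}\beta)^\dagger = \pm \alpha^{-1}\beta$, which is precisely the hermiticity required of $\cz$-generators; hence $W$ is a Hadamard conjugate of a single $\cz$ and $V \in \ehp^\mp(Q)$. The main obstacle I anticipate is the careful sign-and-adjoint bookkeeping in the Hadamard-conjugation step, especially in characteristic~$2$, where $\qehp^\mp$ is a proper subgroup of $\ehp^\mp$ and one must verify that the quadratic factorization really uses only $\qcz$ and not arbitrary $\cz$.
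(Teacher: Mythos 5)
Your proposal is correct and follows essentially the same route as the paper: existence comes from Lemma~\ref{thm:lagrangian-trivialForm} by taking a basis-change map onto $L$ and its complementary lagrangian, and uniqueness comes from observing that the discrepancy preserves $Q \oplus 0$, factoring off a $\cx(\alpha)$, and recognizing the remaining upper-unipotent (with hermitian, resp.\ even, block) as elementary, which is exactly the content of~\ref{thm:CZ} that the paper invokes. The only cosmetic difference is that you re-derive that last step by Hadamard-conjugating a $\cz$/$\qcz$ generator rather than citing~\ref{thm:CZ}, and in the $\lambda^\mp$-case you should simply read the hypothesis as ``$L$ is a lagrangian'' rather than deducing it from the existence of $U$, which would be circular.
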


This underlies the notion of ``formations''~\cite{WallSurgeryBook,Ranicki1}.
The case of $\lambda^-$-unitary has appeared in~\cite{nta3}.

\begin{proof}
	By~\ref{thm:lagrangian-trivialForm}, 
	there exists a complementary lagrangian~$L^*$
	such that the form~$\eta$ takes the form of~\eqref{eq:eta}.
	A basis change~$U : Q \oplus Q^* \to L \oplus L^*$ is then a desired $\eta^\mp$-unitary.
	If $V$ is any other such unitary,
	the unitary~$V^{-1} U$ maps $Q\oplus 0$ onto itself, giving an automorphism~$\alpha : Q \to Q$.
	Then, $W = \cx(\alpha^{-1})V^{-1} U$ restricts to the identity on~$Q \oplus 0$,
	and the unitarity forces $W$ restricted on~$0 \oplus Q^*$ to be the identity as well.
	By~\ref{thm:CZ}, we conclude that $W$ is elementary.
	The $\lambda^\mp$-unitary case is completely analogous.
\end{proof}

\begin{lemma}\label{thm:TRCisInvU}
	Let $U = \begin{pmatrix} A & B \\ C & D \end{pmatrix} \in \hp^\mp(q)$ and
	define $U' = \begin{pmatrix} A & -B \\ -C & D \end{pmatrix} \in \hp^\mp(q)$.
	Then, $U \hat\oplus U' \in \ehp^\mp(2q)$ and $U \hat\oplus U^{-1} \hat\oplus I \in \ehp^\mp(3q)$.
	If $U \in \qhp^\mp(q)$, then $U' \in \qhp^\mp(2q)$ and $U \hat\oplus U' \in \qehp^\mp(2q)$ 
	and $U \hat\oplus U^{-1} \hat\oplus I \in \qehp^\mp(3q)$.
\end{lemma}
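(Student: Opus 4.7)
The plan is to prove both assertions by exhibiting explicit factorizations into the generators $\cz(\theta)$, $\cx(\alpha)$, and $\hada_\mp$, following a Gaussian-elimination-style reduction on the block matrix representation. The quadratic refinements then follow automatically because every $\cz$-parameter that appears in the reduction has the form $\xi \pm \xi^\dag$, and so equals a $\qcz$. The claim that $U' \in \qhp^\mp(q)$ whenever $U \in \qhp^\mp(q)$ is a short direct check using $U' = \Sigma U \Sigma$ with $\Sigma = \diag(I,-I)$: although $\Sigma \eta \Sigma = -\eta$, the defining condition $U^\dag \eta U \sim \eta$ is symmetric under this sign flip.

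For Part~1, I first observe that left-multiplication of $U$ by any $E \in \ehp^\mp(q)$ corresponds to left-multiplication of $U \hat\oplus U'$ by $E \hat\oplus E' \in \ehp^\mp(2q)$; this is legal because each generator has its ``prime'' $(\cdot)'$ still in $\ehp^\mp(q)$ (for example $\cz(\theta)' = \cz(-\theta)$ and $\hada_\mp' = \cx(-1)\hada_\mp$). Using this together with Hadamards and $\cx$'s justified by Quillen--Suslin--Swan applied to the rank-$q$ column module of $U$, I reduce to the case that the upper-left block $A$ of $U$ is invertible. Then I left-multiply $U \hat\oplus U'$ first by $\cx(\diag(A^{-1}, A^{-1}))$ to normalize its upper-left $2q \times 2q$ block to the identity, and then by $\cz(\diag(-A^\dag C, A^\dag C))$ to kill the lower-left block. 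This $\cz$-parameter is $\pm$-hermitian because $A^\dag C$ is so via the relation $A^\dag C = \pm C^\dag A$ from $U^\dag \lambda^\mp U = \lambda^\mp$; moreover the Schur-complement identity $D - CA^{-1}B = A^{-\dag}$ turns the lower-right block into the identity simultaneously. The residual upper-right block $\diag(A^{-1}B, -A^{-1}B)$ is $\pm$-hermitian by the companion relation $AB^\dag = \pm BA^\dag$, so a Hadamard-conjugated $\cz'$ cleans it up.

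For Part~2, I use the identity $(EUF) \hat\oplus (F^{-1}U^{-1}E^{-1}) \hat\oplus I = (E \hat\oplus F^{-1} \hat\oplus I)(U \hat\oplus U^{-1} \hat\oplus I)(F \hat\oplus E^{-1} \hat\oplus I)$ for $E,F \in \ehp^\mp(q)$: left- and right-multiplication of $U$ by elementary factors changes $M = U \hat\oplus U^{-1} \hat\oplus I$ only by $\ehp^\mp(3q)$-factors, and no primed version is needed. This gives me enough room to reduce to the case that both $A$ and $D$ are invertible. Then the same three-step reduction as in Part~1 applies: $\cx(\diag(A^{-1}, D^{-\dag}, I))$ normalizes the upper-left block, a block-diagonal $\cz$ with entries $-A^\dag C$ and $\pm DC^\dag$ (both $\pm$-hermitian by unitarity) kills the lower-left and normalizes the lower-right via the Schur identities $A^\dag D = I \pm C^\dag B$ and $DA^\dag = I \pm CB^\dag$, and a final Hadamard-conjugated $\cz'$ uses $D^\dag B = \pm B^\dag D$ for $\pm$-hermiticity to eliminate the residual upper-right block.

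The main obstacle throughout is that $B, C$ of a generic $U$ are not themselves $\pm$-hermitian, so naive diagonal $\cz$'s cannot eliminate them directly. The resolution in both parts is to use block-diagonal $\cz$-parameters of the form $\diag(X, \pm X)$ (or $\diag(X, \pm X, 0)$ in Part~2), where the composite $X = A^\dag C$ or $DC^\dag$ inherits $\pm$-hermiticity from the unitarity relations of $U$ rather than from $B$ or $C$ individually. A secondary technical point is the reduction to invertible $A$ (and in Part~2, both $A$ and $D$), which rests on Quillen--Suslin--Swan; in Part~2 it is the third, trivial copy that gives the flexibility to arrange simultaneous invertibility, which is why the statement there includes the $\hat\oplus I$ stabilization.
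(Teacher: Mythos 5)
Your preliminary observations are fine ($U' = \Sigma U \Sigma$ with $\Sigma = \diag(I,-I)$ indeed gives $U' \in \qhp^\mp$, and primed generators stay elementary, so left factors $E \hat\oplus E'$ are legitimate), but the central step of both parts --- ``reduce to the case that the upper-left block $A$ of $U$ is invertible'' by elementary unitaries, invoking Quillen--Suslin--Swan --- is a genuine gap, and in fact the reduction is impossible in general. Suppose some $E,F \in \ehp^\mp(q;R)$ made the upper-left block $A$ of $EUF$ invertible. Then $\cx(A^{-1})EUF$ has left half block $\begin{pmatrix} I \\ A^\dag C\end{pmatrix}$, and the unitarity relation $A^\dag C \mp C^\dag A = 0$ makes $A^\dag C$ exactly the kind of $\pm$-hermitian parameter allowed in $\cz$; applying $\cz(-A^\dag C)$ brings the left half block to $\begin{pmatrix} I \\ 0 \end{pmatrix}$, and then Lemma~\ref{thm:CZ} forces the whole matrix, hence $U$ itself, to lie in $\ehp^\mp(q;R)$. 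In other words, ``$A$ can be made invertible by elementary moves'' is \emph{equivalent} to ``$U$ is elementary,'' and by Theorem~\ref{thm:CompleteInvariantOfQCA} (already before coarse graining) there are $\lambda^-$-unitaries over $R$ in $\dd \ge 3$ variables that are not elementary. Quillen--Suslin--Swan only says finitely generated projectives over $R$ are free; it gives no way to move the lagrangian $U(R^q \oplus 0)$ into transverse position with the standard lagrangian $0 \oplus R^{q*}$ by elementary unitaries --- that transversality obstruction is essentially the nontrivial invariant the paper is computing. The same objection kills Part~2, since your identity $(EUF)\hat\oplus(F^{-1}U^{-1}E^{-1})\hat\oplus I = (E\hat\oplus F^{-1}\hat\oplus I)(U\hat\oplus U^{-1}\hat\oplus I)(F\hat\oplus E^{-1}\hat\oplus I)$ still only lets you act on $U$ alone, so arranging $A$ (and $D$) invertible is again equivalent to $U$ being elementary; the extra trivial summand $\hat\oplus I$ does not help here.

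The way around this, which is what the paper's proof does, is never to invert any block of $U$: work directly with the left half block of the doubled matrix $U \hat\oplus U'$, whose columns come in pairs $(A,A)$ and $(C,-C)$, mix the two copies by block column operations $\cx(\alpha)\bullet\cx(\beta)$ with $\alpha,\beta \in \gl(2q;R)$, apply a Hadamard to one copy, and then use the Whitehead-style move $\cx(U^{-1})$ --- legitimate because $U^{-1}$ is an invertible $2q\times 2q$ matrix acting on the upper half of the doubled space, and its lower-half action $U^\dag$ trivializes the block via $U^\dag\lambda^\mp U = \lambda^\mp$ --- before finishing with Lemma~\ref{thm:CZ}. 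If you want to salvage your write-up, you must replace the ``make $A$ invertible'' reduction by an argument of this kind that exploits the doubling itself (with the quadratic case additionally checking that the residual $\cz$-parameter is even, as the paper does via $K = \theta \pm \theta^\dag$).
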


The following calculation is virtually the same as the proof of~\cite[II.2]{clifqca1};
however, we repeat it here to handle $\eta$-unitary groups and $\mp$-signs.
A similar calculation was performed by Wall~\cite[Lemma~6.2]{WallSurgeryBook}.

\begin{proof}
	It is straightforward to check that $U' \in \qhp^\mp(q)$ if $U \in \qhp^\mp(q)$.

	To show the rest of the claim,
	we will use transformations of the form~$\hp^\mp(q) \ni \bullet \mapsto \cx(\alpha) \bullet \cx(\beta) \in \hp^\mp(q)$
	where $\alpha,\beta \in \sgl(2q;R)$
	and $\bullet \mapsto \cz(\theta)\bullet$  
	with $\theta = \pm \theta \in \Mat(2q;R)$ or $\bullet \mapsto \qcz(\theta')\bullet$,
	until the left half block is put in a trivial form.
	We work in a basis where $\hat 1$ is the identity matrix.
	\begin{align}
	&\begin{pmatrix}
	A & 0 \\
	0 & A \\
	-C & 0 \\
	0 & C
	\end{pmatrix}
	\xrightarrow{\alpha = \begin{pmatrix}I & 0 \\ I & I \end{pmatrix} = \beta^{-1}}
	\begin{pmatrix}
	A & 0 \\
	0 & A \\
	0 & -C \\
	-C & C
	\end{pmatrix}
	\xrightarrow{I \hat\oplus (-\hada^{\hat\oplus q})}
	\begin{pmatrix}
	A & 0 \\
	C & -C \\
	0 & -C \\
	0 & \pm A
	\end{pmatrix}
	\xrightarrow{\alpha = U^{-1}, \beta = I}
	\begin{pmatrix}
	I & K'\\
	0 & K\\
	0 & 0 \\
	0 & \pm I
	\end{pmatrix} 
	\nonumber\\&
	\xrightarrow{\alpha = I,\beta = \begin{pmatrix}I & -K' \\ 0  & I \end{pmatrix}}
	\begin{pmatrix}
	I & 0\\
	0 & K\\
	0 & 0 \\
	0 & \pm I
	\end{pmatrix} 
	\xrightarrow{I_{2q} \hat\oplus(\pm \hada^{\hat\oplus q})}
	\begin{pmatrix}
	I & 0\\
	0 & I\\
	0 & 0 \\
	0 & -K
	\end{pmatrix} 
	\xrightarrow{\cz(0 \oplus K)\text{ or } \qcz(0 \oplus \theta)}
	\begin{pmatrix}
	I & 0\\
	0 & I\\
	0 & 0 \\
	0 & 0
	\end{pmatrix} 
	\end{align}
	Here, in the third arrow, the lower block is multiplied on the left 
	by $(U^{-1})^{-\dagger} = U^\dagger$ and the result is due to~\eqref{eq:DefLambdaUnitary}.
	Because all matrices here are the left half block of some matrix of~$\hp^\mp(q)$,
	the matrix~$K$ must satisfy $K = \pm K^\dagger$.
	If $U \in \qhp^\mp(q)$, then $K = \theta \pm \theta^\dagger$ 
	for some $\theta\in \Mat(q;R)$.
	Since the left half block is trivialized,
	we conclude the first part of the proof by~\ref{thm:CZ}.

	Using the first part of the claim,
	we have $\ehp^\mp(3q) \ni I \hat\oplus U' \hat\oplus U = ((U^{-1} \hat\oplus I) E) \hat\oplus U$.
	Hence, $U^{-1} \hat\oplus I \hat\oplus U \in \ehp(3q)$.
	Similarly, if $U \in \qhp^\mp(q)$, then $U^{-1} \hat\oplus I \hat\oplus U \in \qehp(3q)$
\end{proof}

\begin{proposition}[\cite{WallSurgeryBook}]\label{thm:circuit-contains-commutators}
	The elementary unitary groups contain
	the commutator subgroups of the unitary groups:
	$[\hp^\mp,\hp^\mp] \le \ehp^\mp$ and~$[\qhp^\mp,\qhp^\mp] \le \qehp^\mp$.
	In particular, both $\umod^\mp = \hp^\mp/\ehp^\mp$ and $\qumod^\mp = \qhp^\mp / \qehp^\mp$ are abelian groups.
\end{proposition}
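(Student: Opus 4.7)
The plan is to apply the classical Whitehead commutator trick in our $\lambda^\mp$- and $\eta^\mp$-unitary setting. The single identity I would verify is
\begin{align*}
[U,V] \hat\oplus I \hat\oplus I = (U \hat\oplus U^{-1} \hat\oplus I)(V \hat\oplus V^{-1} \hat\oplus I)((VU)^{-1} \hat\oplus VU \hat\oplus I)
\end{align*}
for any $U, V \in \hp^\mp(q)$. Since $\hat\oplus$ is genuine direct sum of linear maps on three copies of $Q \oplus Q^*$ (only the matrix entries have been rearranged so upper-left blocks collect together), this is purely a componentwise check on each summand: the first summand gives $U \cdot V \cdot (VU)^{-1}V^{-1}\cdot V = [U,V]$ (after expanding the product of the first two factors as $UV \hat\oplus U^{-1}V^{-1}\hat\oplus I$), the second gives $U^{-1}V^{-1}\cdot VU = I$, and the third is trivially $I$.

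Given that identity, the rest is immediate from \ref{thm:TRCisInvU}. First I would recall that the second part of \ref{thm:TRCisInvU} provides, for every $W \in \hp^\mp(q)$, that $W \hat\oplus W^{-1} \hat\oplus I \in \ehp^\mp(3q)$. Applying this to $W = U$, $W = V$, and $W = VU$ shows that each of the three factors on the right-hand side above is elementary, so their product is elementary. Therefore $[U,V] \hat\oplus I \hat\oplus I \in \ehp^\mp(3q) \subseteq \ehp^\mp$, and in the direct limit defining $\ehp^\mp$ this element is identified with $[U,V]$, giving $[U,V] \in \ehp^\mp$. Since commutators generate $[\hp^\mp,\hp^\mp]$, the inclusion $[\hp^\mp,\hp^\mp] \le \ehp^\mp$ follows, and consequently $\umod^\mp = \hp^\mp/\ehp^\mp$ is abelian.

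For the quadratic case, if $U, V \in \qhp^\mp(q)$ then each of the three factors lies in $\qehp^\mp(3q)$ by the $\qhp^\mp$-statement of \ref{thm:TRCisInvU}, and the same identity yields $[U,V] \in \qehp^\mp$, hence $\qumod^\mp$ is abelian as well. The main (and really only) obstacle is the bookkeeping to confirm that the Whitehead identity carries over verbatim when $\hat\oplus$ is the rearranged direct sum used throughout the paper; but because $\hat\oplus$ is induced from the ordinary direct sum of $R$-module automorphisms, composition is componentwise and the verification is routine.
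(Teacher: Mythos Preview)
Your proof is correct and takes essentially the same approach as the paper: both apply the Whitehead commutator trick using \ref{thm:TRCisInvU}, decomposing a stabilized commutator into three factors each of the form $W\hat\oplus W^{-1}$ (stabilized). The paper's identity is $(ABA^{-1}B^{-1})\hat\oplus I = (AB\hat\oplus(AB)^{-1})(A^{-1}\hat\oplus A)(B^{-1}\hat\oplus B)$, which differs from yours only in the specific choice of factors and the implicit versus explicit stabilization to size~$3q$.
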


\begin{proof}
	Suppose $A,B \in \hp^\mp(q)$.
	Then, from~\ref{thm:TRCisInvU},
	we see 
	\begin{align}
	(ABA^{-1} B^{-1}) \hat\oplus I = (AB \hat\oplus (AB)^{-1})(A^{-1} \hat\oplus A)(B^{-1} \hat\oplus B) \in \ehp^\mp(3q).
	\end{align}
	Therefore, $[\hp^\mp,\hp^\mp] \le \ehp^\mp$.
	The proof for $\eta$-unitary groups is completely analogous.
\end{proof}

\begin{proposition}\label{thm:UnitaryGroupExponent}
	If $n$ is an exponent of~$\qwitt^+(\FF)$ (i.e., $n \cdot \qwitt^+(\FF) = 0$),
	then $n \cdot \umod^\mp = 0$ and $n \cdot \qumod^\mp = 0$.
\end{proposition}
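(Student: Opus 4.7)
The plan is to mimic the proof of~\ref{thm:WittGroupExponent} by passing from forms over~$\FF$ to unitaries over~$R$, using~\ref{thm:formation-unitary} as the bridge: if $U^{\hat\oplus n}$ preserves some lagrangian, then it is elementary up to conjugation, which is enough in the abelian quotient.

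First, I would extract from the hypothesis a lagrangian of the identity hermitian form~$I_n$ over~$\FF$. If $\mathrm{char}(\FF)\neq 2$, then $\qwitt^+(\FF)=\switt^+(\FF)$ by~\ref{thm:sesquilinearWittGroup}, so $n\cdot[I_1]=[I_n]=0$ in this group, and Witt cancellation over the field~$\FF$ gives $I_n\cong\lambda^+_{n/2}$ (so $n$ is even), yielding a lagrangian~$L\subset\FF^n$ of rank~$n/2$. If $\mathrm{char}(\FF)=2$, then $n$ must still be even (as $\qwitt^+(\FF_{2^r})$ contains $2$-torsion), and the explicit subspace $L=\mathrm{span}_\FF(e_1+e_2,\,e_3+e_4,\,\ldots,\,e_{n-1}+e_n)\subset\FF^n$ is a lagrangian of~$I_n$ directly from $1+1=0$.

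Next, I would identify $R^{nq}\oplus R^{nq*}$ with $\FF^n\otimes_\FF M$ (where $M=R^q\oplus R^{q*}$) via the natural basis rearrangement that collects the $R^q$-components together and the $R^{q*}$-components together. Under this identification, $\lambda^\mp_{nq}$ corresponds to $I_n\otimes\lambda^\mp$, $\eta_{nq}$ to $I_n\otimes\eta$, and the block-diagonal element $U^{\hat\oplus n}$ becomes $\mathrm{Id}_{\FF^n}\otimes U$, which acts trivially on the $\FF^n$-factor and thus preserves the $R$-submodule $L\otimes_\FF M$. Then $L\otimes M$ is a lagrangian of $I_n\otimes\eta$ (and of $I_n\otimes\lambda^\mp$): the sesquilinear function vanishes identically on $L\otimes M$ because $I_n|_L\equiv 0$, so $L\otimes M$ is a sublagrangian even in the quadratic sense (one may take $\theta=0$ in the definition), and $(L\otimes M)^\perp=L^\perp\otimes M=L\otimes M$ follows from $L^\perp=L$ inside~$\FF^n$ together with nondegeneracy of the form on~$M$.

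Finally, by~\ref{thm:formation-unitary} there exists $V_0\in\qhp^\mp(nq;R)$ (respectively $\hp^\mp(nq;R)$) with $V_0(R^{nq}\oplus 0)=L\otimes M$, and the uniqueness part of that proposition implies any element preserving the standard lagrangian $R^{nq}\oplus 0$ lies in $\qehp^\mp$ (respectively $\ehp^\mp$). Since $V_0^{-1}U^{\hat\oplus n}V_0$ preserves $R^{nq}\oplus 0$, it is elementary, and by the abelianness of $\qumod^\mp$ (respectively $\umod^\mp$) established in~\ref{thm:circuit-contains-commutators}, $n\cdot[U]=[U^{\hat\oplus n}]=[V_0^{-1}U^{\hat\oplus n}V_0]=0$. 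The main bookkeeping obstacle is keeping track of the basis rearrangement between $\FF^n\otimes M$ and $R^{nq}\oplus R^{nq*}$ so that $U^{\hat\oplus n}$, the standard lagrangian, and $L\otimes M$ all live coherently in the same $\hp^\mp(nq;R)$; the quadratic case over characteristic~$2$ poses no extra difficulty because the restricted sesquilinear function vanishes literally rather than merely up to equivalence~$\sim$.
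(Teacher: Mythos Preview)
Your proof is correct, but it takes a different route from the paper's.

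The paper argues via~\ref{thm:TRCisInvU} rather than~\ref{thm:formation-unitary}. In characteristic~$2$ it simply observes that $U'=U$ (negating the off-diagonal blocks does nothing), so $U\hat\oplus U=U\hat\oplus U'\in\qehp^\mp$ directly by~\ref{thm:TRCisInvU}, giving $2\cdot[U]=0$. In odd characteristic the paper writes $n=2m$ and uses Witt cancellation to produce $E\in\gl(m;\FF)$ with $E^T E=-I_m$; then a short computation shows $\cx(I\otimes E^{-1})(U\otimes I_m)\cx(I\otimes E)=(U\otimes I_m)'$, so $(U\otimes I_m)$ and $(U\otimes I_m)'$ represent the same class, and~\ref{thm:TRCisInvU} gives $2m\cdot[U]=0$.

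Your approach instead extracts a lagrangian $L\subset\FF^n$ of $I_n$, tensors it with $M=R^q\oplus R^{q*}$ to obtain a lagrangian of $\eta_{nq}$ preserved by $U^{\hat\oplus n}=\Id_{\FF^n}\otimes U$, and invokes~\ref{thm:formation-unitary} to conclude that $U^{\hat\oplus n}$ is conjugate to something preserving the standard lagrangian, hence elementary. This is a legitimate and perhaps more transparently ``geometric'' argument, paralleling~\ref{thm:WittGroupExponent} more closely; the paper's route is shorter because it reuses the explicit matrix calculation already packaged in~\ref{thm:TRCisInvU}. Both ultimately rest on the same Witt-theoretic input ($[I_n]=0$ over~$\FF$) and the abelianness from~\ref{thm:circuit-contains-commutators}.
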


This is analogous to~\ref{thm:WittGroupExponent}.

\begin{proof}
	If $2=0 \in \FF$, then the smallest exponent of $\qwitt^+(\FF)$ is~$2$ 
	and the claim is obvious by~\ref{thm:TRCisInvU}.
	So, assume $\half \in \FF$.
	Since any trivial form has dimension even,
	an exponent must also be even, say $n = 2m$.
	By the Witt cancellation theorem, 
	the assumption is that $E^T I_m E = -I_m$ for some~$E \in \gl(m;\FF)$.
	The $m$-fold direct sum of a unitary~$U$ can be written as $U \otimes I_m$.
	Then, $\cx(I \otimes E^{-1}) (U \otimes I_m) \cx(I \otimes E)$ is equal to~$(U\otimes I_m)'$ 
	in the notation of~\ref{thm:TRCisInvU}.
	Therefore, $U \otimes I_{2m}$ is elementary by~\ref{thm:TRCisInvU}.
\end{proof}

We wonder how large the group of elementary unitaries is above the commutator subgroup.
Until the end of this section, we just think of any unitary as a matrix acting on~$R^{2q}$ using the standard basis.
Recall that $\sgl(q;R)$ is the special linear group;
$\mathsf{S}$ does not stand for ``simple'' as in Wall~\cite{WallSurgeryBook} or Ranicki~\cite{Ranicki1}.

\begin{lemma}\label{thm:suslin}
	For any $\alpha \in \sgl(q;R)$ with $q\ge 3$,
	we have $\cx(\alpha) \in [\qhp^\mp(q;R),\qhp^\mp(q;R)]$.
	For any $\theta \in \Mat(q;R)$, 
	we have $\qcz(\theta) \hat\oplus I \in [\qhp^\mp(2q),\qhp^\mp(2q)]$.
\end{lemma}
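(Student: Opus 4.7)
Both claims reduce to Steinberg-type commutator identities in the hyperbolic unitary group.

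For part (a), the Whitehead identity $E_{ij}(a) = [E_{ik}(a), E_{kj}(1)]$ holds in $\sgl(q;R)$ for pairwise distinct indices $i,j,k \in \{1,\ldots,q\}$, which requires $q \ge 3$. Since $\cx : \gl(q;R) \to \qhp^\mp(q;R)$ is a group homomorphism, this lifts to $\cx(E_{ij}(a)) = [\cx(E_{ik}(a)), \cx(E_{kj}(1))] \in [\qhp^\mp(q;R), \qhp^\mp(q;R)]$. Suslin's stability theorem then writes any $\alpha \in \sgl(q;R)$ with $q \ge 3$ as a finite product of such elementary matrices over the Laurent polynomial ring $R$, so $\cx(\alpha)$ is a product of commutators.

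For part (b), the block multiplication identity
\begin{align*}
\cx(\alpha)\,\qcz(\theta)\,\cx(\alpha)^{-1} = \qcz(\alpha^{-\dag}\theta\alpha^{-1}),
\end{align*}
combined with $\qcz(\mu)\qcz(\mu') = \qcz(\mu+\mu')$, gives $[\cx(\alpha), \qcz(\theta)] = \qcz(\alpha^{-\dag}\theta\alpha^{-1} - \theta)$. Expanding $\theta = \sum_{s,t=1}^q \theta_{st}\,e_{st}$ and writing $\qcz(\theta)\hat\oplus I = \qcz(\theta \oplus 0) = \prod_{s,t}\qcz(\theta_{st}\,e_{st} \oplus 0)$ inside $\qhp^\mp(2q;R)$, it suffices to realize each single-entry $\qcz(c\,e_{st})$ with $1 \le s, t \le q$ as a commutator in $\qhp^\mp(2q;R)$. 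I choose an auxiliary index from the stabilized range $\{q+1,\ldots,2q\}$, which is automatically disjoint from $\{1,\ldots,q\}$. For a diagonal target $s=t=i$ and any $k \in \{q+1,\ldots,2q\}$, the matrix-unit computations $e_{ik}e_{ki} = e_{ii}$ and $e_{ki}e_{ki} = 0$ yield $[\cx(E_{ki}(-c)), \qcz(e_{ik})] = \qcz(c\,e_{ii})$. For an off-diagonal target $s=i$, $t=j$ with $i \ne j$, and any $r \in \{q+1,\ldots,2q\}$, the identities $e_{jr}e_{ir} = 0$ and $e_{ir}e_{rj} = e_{ij}$ yield $[\cx(E_{rj}(-c)), \qcz(e_{ir})] = \qcz(c\,e_{ij})$. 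Both commutators live in $\qhp^\mp(2q;R)$.

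The main technical point is the index bookkeeping, ensuring that each single commutator produces exactly one matrix unit without spurious cross-terms; the formula is insensitive to the $\mp$ sign since the commutator identity for $\qcz$ does not involve $\theta^\dag$ directly. The stabilization $\hat\oplus I$ is essential in the small-$q$ regime: for $q=1$ the diagonal case needs an auxiliary $k \ne 1$, and for $q=2$ an off-diagonal target $e_{12}$ needs an auxiliary $r \notin \{1,2\}$; both are supplied by the doubled index range.
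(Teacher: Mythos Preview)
Your proof is correct. Part~(a) is essentially the paper's argument: Whitehead's identity $(I+aE_{ij})(I+E_{jk})(I-aE_{ij})(I-E_{jk}) = I+aE_{ik}$ in $\sgl(q;R)$ for $q\ge 3$, pushed through the homomorphism $\cx$, combined with Suslin's stability theorem.

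For part~(b) your approach differs from the paper's, though both rest on the same conjugation identity $\cx(\alpha)\qcz(\mu)\cx(\alpha)^{-1} = \qcz(\alpha^{-\dag}\mu\alpha^{-1})$. You decompose $\theta\oplus 0$ into its $q^2$ matrix units and realize each $\qcz(c\,e_{st})$ separately as a commutator $[\cx(E_{ki}(-c)),\qcz(e_{ik})]$ or $[\cx(E_{rj}(-c)),\qcz(e_{ir})]$, borrowing an auxiliary index from the stabilized block. The paper instead does it in a single commutator: with $\beta = \begin{pmatrix} 0 & \theta \\ 0 & 0 \end{pmatrix}$ and $\alpha = \begin{pmatrix} I & 0 \\ I & I \end{pmatrix}$ in $\mat(2q;R)$, one computes $\beta - \alpha^{-\dag}\beta\alpha^{-1} = \theta \oplus 0$, so $\qcz(\theta)\hat\oplus I = [\qcz(\beta),\cx(\alpha)]$. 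The paper's route is shorter and avoids any index bookkeeping; your route is more elementary in that it recycles the Steinberg-relation mindset from part~(a) and makes explicit why the stabilization by~$I$ is needed even for $q=1,2$.
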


The first statement is a combination of Whitehead's lemma~\cite{MilnorWhiteheadTorsion} and Suslin's stability 
theorem~\cite{Suslin1977Stability}.
The second statement appears in the proof of~\cite[Thm.\,6.3]{WallSurgeryBook}.

\begin{proof}
	Let $E_{ij} \in \Mat(q;R)$ 
	be the matrix with entry $1$ at the $(i,j)$ position
	and zeroes elsewhere.
	Whitehead's lemma says that any matrix of form~$I+a E_{ij}$ for $a \in R$
	belongs to $[\sgl(q;R),\sgl(q;R)]$ if $q \ge 3$;
	proof: $(I+a E_{ij})(I+E_{jk})(I-a E_{ij})(I-E_{jk}) = I + a E_{ik}$ 
	where $i\neq j\neq k\neq i$.
	Since $p$ is a prime, 
	Suslin's stability theorem~\cite{Suslin1977Stability}
	says that $\sgl(q)$
	is generated by matrices of the form $I+ a E_{ij}$ if $q \ge 3$.
	Therefore, $\alpha$ belongs to the commutator subgroup,
	and $\cx(\alpha) \in [\qhp^\mp(q;R),\qhp^\mp(q;R)]$.

	Let $\beta = \begin{pmatrix}0 & \theta \\ 0 & 0 \end{pmatrix} \in \Mat(2q)$.
	Let $\alpha = \begin{pmatrix} I & 0 \\ I & I \end{pmatrix} \in \gl(2q;R)$.
	Then, a direct calculation shows that 
	$\qcz(\beta)\cx(\alpha)\qcz(-\beta) \cx(\alpha^{-1}) 
	= \qcz(\theta) \hat\oplus I$.
\end{proof}

\begin{lemma}\label{thm:elemModCommToF}
	Inclusions induce abelian group isomorphisms
	\begin{align}
		\sgl(\FF_p) \cap \ehp^\mp(\FF_p)/[\hp^\mp(\FF_p),\hp^\mp(\FF_p)] &\cong \sgl(R) \cap \ehp^\mp(R)/[\hp^\mp(R),\hp^\mp(R)],\\
		\sgl(\FF_p) \cap \qehp^\mp(\FF_p)/[\qhp^\mp(\FF_p),\qhp^\mp(\FF_p)] &\cong \sgl(R) \cap \qehp^\mp(R)/[\qhp^\mp(R),\qhp^\mp(R)].\nonumber
	\end{align}
	The group is generated by at most three elements~$\hada_\mp,\cx(a),\cz(1) \in \ehp(1;\FF_p)$ 
	where $a$ is any fixed nonsquare element of~$\FF_p$.
\end{lemma}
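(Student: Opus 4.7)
The plan is to construct a splitting of the inclusion $\iota: \FF_p \hookrightarrow R$ via an evaluation map, and then to reduce, modulo the commutator subgroup, every defining generator of $\ehp^\mp(R)$ (respectively $\qehp^\mp(R)$) to a generator living in $\ehp(1;\FF_p)$.

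For the splitting, let $\ev: R \to \FF_p$ be the ring homomorphism $x_i \mapsto 1$. Since $\overline{x_i} = x_i^{-1}$ also evaluates to $1$, the map $\ev$ commutes with the involution, so entrywise application yields group homomorphisms $\ev_*$ on each of $\hp^\mp$, $\qhp^\mp$, $\ehp^\mp$, $\qehp^\mp$ that left-invert $\iota_*$ and send commutators to commutators. Consequently, the induced maps on the quotients in question are split injective.

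For surjectivity, I take $V \in \sgl(R) \cap \ehp^\mp(R)$, express it as a product of defining generators $\cx(\alpha_i)$, $\cz(\theta_i)$, and $\hada_\mp^{\pm 1}$ (stabilized to rank $q \ge 3$), and work in the abelian quotient $\ehp^\mp(R)/[\hp^\mp(R),\hp^\mp(R)]$. Lemma~\ref{thm:suslin} supplies two reductions: (i) $\cx(\alpha) \in [\qhp^\mp,\qhp^\mp]$ whenever $\alpha \in \sgl(q;R)$, so modulo commutators the entire $\cx$-contribution depends only on $d = \prod_i \det\alpha_i \in R^\times$; and (ii) $\qcz(\mu) = \cz(\mu \pm \mu^\dagger) \equiv 1$ modulo commutators, so each $\cz(\theta)$ depends only on $\theta$ in the cokernel of $\mu \mapsto \mu \pm \mu^\dagger$ on $\pm$-symmetric matrices. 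Off-diagonal entries of $\theta$ lie in this image, and conjugation of $\cz$ by suitable $\cx$'s (which contributes commutators) collects the diagonal to a single scalar; a direct calculation over $R = \FF_p[x_i^{\pm 1}]$ with $x_i \mapsto x_i^{-1}$ shows that the scalar cokernel is zero when $p$ is odd and equal to $\FF_2$ when $p = 2$. Finally, the condition $V \in \sgl(R)$ combined with $\det\cx(\alpha) = \det\alpha/\overline{\det\alpha}$, $\det\cz = 1$, and $\det\hada_\mp \in \{\pm 1\}$ forces $d$ to be self-conjugate up to a sign in $\FF_p^\times$; since $R^\times = \FF_p^\times \times \ZZ^\dd$ has $\FF_p^\times$ as its entire self-conjugate subgroup, $d \in \FF_p^\times$. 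Thus $V$ is congruent modulo commutators to $\cx(d)\cz(c)\hada_\mp^k \in \ehp(1;\FF_p)$, proving surjectivity. The $\qehp^\mp$ case is identical, except that Lemma~\ref{thm:suslin} kills $\qcz$ outright and no $\cz$-factor appears.

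For the generator count, I compute $\hada_\mp \cx(a) \hada_\mp^{-1} = \cx(a^{-1})$ for $a \in \FF_p^\times$ (a direct $2\times 2$ verification), whence $\cx(a^2) \equiv 1$ modulo commutators; therefore the $\cx$-image is a quotient of $\FF_p^\times/(\FF_p^\times)^2$, generated by $\cx(a)$ for any fixed nonsquare $a$ when $p$ is odd and trivial when $p = 2$. Together with $\hada_\mp$ and, in the hermitian case only, $\cz(1)$, these form the at-most-three named generators. The main obstacles are the scalar cokernel computation for $\cz$, which must be checked in each of the four sign/parity combinations, and threading the stabilization tails $\hat\oplus I$ from Lemma~\ref{thm:suslin} through the direct limit so that the reductions all take place in a common rank $q \ge 3$.
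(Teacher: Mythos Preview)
Your proof is correct and follows essentially the same approach as the paper's: the evaluation homomorphism $\ev$ supplies injectivity, and surjectivity comes from reducing the defining generators of $\ehp^\mp(R)$ (resp.\ $\qehp^\mp(R)$) modulo commutators via Lemma~\ref{thm:suslin}, together with the commutator identity $\cx(a^2) = [\cx(a),\hada_\mp]^{-1}$. Your treatment is somewhat more explicit than the paper's in two respects --- you spell out how the hypothesis $V \in \sgl(R)$ forces the residual determinant $d \in R^\times$ to lie in $\FF_p^\times$ (via $d/\bar d = (\mp 1)^k$ and the monomial structure of $R^\times$), and you frame the $\cz$-reduction as a cokernel computation --- but these are elaborations of the same argument rather than a different route.
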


\begin{proof}
	Writing $\cx(\alpha)$ as the product of~$\cx(\det \alpha) \in \qehp(1)$
	and $\cx((\det \alpha)^{-1}) \cx(\alpha)$,
	we see that any element of $\qehp(R)$ is a product of $\cx(\alpha)$s 
	with $\det \alpha = 1$,
	$\qcz(\theta)$s,
	and $\hada$ up to commutators.
	By~\ref{thm:suslin} we see that 
	$\qehp = \langle \cx(a),\hada, [\qhp,\qhp]~|~0 \neq a \in \FF_p \rangle$.
	In the hermitian case, we have generators~$\cz(\theta)$ with $\theta$ diagonal.
	It is easy to see that $\theta$ can further be restricted to diagonal matrices over~$\FF_p$;
	we have done this in the proof of~\ref{thm:hermitianMatrices}.
	We still have an equation $\bar \theta = \pm \theta$,
	which can only be met with a nonzero~$\theta$ for~$\hp^-$.
	Moreover, a commutator $\cx(a)\hada_\mp\cx(1/a)\hada_\mp^{-1}$ is equal to~$\cx(a^2)$.

	Hence, the inclusion $\varphi : \ehp(\FF_p) \hookrightarrow \ehp(R)$ 
	induces a surjective map~$\bar\varphi$ on the quotient groups.
	The quadratic case is only a restriction of~$\varphi$.
	If $U$ over~$\FF_p$ belongs to the commutator subgroup over~$R$,
	then we obtain a commutator that gives $U$ by setting every variable~$x_i$ to~$1 \in R$.
	Hence, the map~$\bar\varphi$ is injective.
\end{proof}

\begin{proposition}\label{thm:circuit/commutator}
	We have the following.
	\begin{itemize}
	\item If $p=2$, then $(\sgl(R) \cap \qehp(R)) / [\qhp(R),\qhp(R)]$ has order~$2$ exactly, generated by~$\hada$.
	\item If $p=2$, then $(\sgl(R) \cap \ehp(R)) / [\hp(R),\hp(R)]$ has order~$\le 2$ generated by either~$\hada$ or~$\cz(1)$.
	\item If $p > 2$, then $(\sgl(R) \cap \qehp^-(R)) / [\qhp^-(R),\qhp^-(R)] = 0$.
	\item If $p > 2$, then $(\sgl(R) \cap \qehp^+(R)) / [\qhp^+(R),\qhp^+(R)]$ has order~$\le 2$ generated by~$\cx(a)$ with a nonsquare $a \in \FF_p$.
	\end{itemize}
\end{proposition}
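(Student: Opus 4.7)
The strategy has three steps: reduce to the ground field via Lemma~\ref{thm:elemModCommToF}, prune the resulting generating set using explicit commutator identities, and -- in the unique case where an exact order is asserted -- exhibit a surjection onto~$\ZZ/2\ZZ$.

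By Lemma~\ref{thm:elemModCommToF} it suffices to work over~$\FF_p$, where each quotient is generated by a subset of $\{\hada_\mp,\cx(a),\cz(1)\}$ with $a$ a fixed nonsquare in $\FF_p$ when one exists. If $p=2$, every nonzero element of $\FF_2$ is a square, so $\cx(a)$ drops out; moreover the involution on $\FF_2$ is trivial, so $\theta \pm \theta^\dagger = 2\theta = 0$ for every scalar $\theta$, whence $\qcz$ is the identity on scalars and $\cz(1) \notin \qehp(\FF_2)$, leaving only $\hada$ in the quadratic case. In the $p=2$ hermitian case a finite calculation in $\hp(1;\FF_2) \cong S_3$ shows that $\hada \cdot \cz(1)$ has order~$3$ and so lies in $[\hp(1;\FF_2),\hp(1;\FF_2)] = A_3$, so $\hada$ and $\cz(1)$ coincide modulo commutators and either one singly generates. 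If $p>2$, the identity $\cx(a^2) = [\cx(a),\hada_\mp]$ recorded in the proof of Lemma~\ref{thm:elemModCommToF} places every square $\cx(b)$ into $[\qhp,\qhp]$, so only a nonsquare $\cx(a)$ can survive.

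For the remaining $p>2$ reductions, in the $\qhp^-$ case (symplectic over $\FF_p$) I invoke the classical fact that the stable symplectic group $\mathsf{Sp}(\FF_p)$ is perfect for odd $p$, which puts $\hada_-,\ \cx(a),\ \cz(1)$ all into $[\qhp^-,\qhp^-]$ and yields the trivial quotient. In the $\qhp^+$ case (split orthogonal over $\FF_p$) the stable orthogonal group admits only the spinor norm valued in $\FF_p^\times/(\FF_p^\times)^2$ as a nontrivial abelian character; a direct check shows that the spinor norms of $\hada_+$ and of $\cz(1)$ are trivial, leaving $\cx(a)$ with nonsquare $a$ as the only potentially non-commutator class and giving the upper bound~$\le 2$.

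To upgrade ``$\le 2$'' to equality in the $p=2$ quadratic case, I construct a surjective homomorphism $\qhp(\FF_2) \to \ZZ/2\ZZ$ sending $\hada$ to the nonzero class, namely the Dickson (Arf) character on the split stable orthogonal group in characteristic~$2$, which is well known to be the unique nontrivial abelian character. A direct computation on the $2\times 2$ block $\hada=\begin{pmatrix}0&1\\1&0\end{pmatrix}$ gives Dickson invariant~$1$, completing the lower bound. The main obstacle is the second step: the cleanest argument that $\cz(1)$, and in the $\qhp^-$ case also $\hada_\mp$ and $\cx(a)$, are commutators for $p>2$ uses the classical perfectness and spinor-norm results, so a fully self-contained proof would require a short chain of additional Whitehead-type commutator identities beyond those already recorded in Lemma~\ref{thm:elemModCommToF}.
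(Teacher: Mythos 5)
Your first three bullets are correct, and for two of them you take a genuinely different route from the paper. For the $p=2$ quadratic case the paper does not cite the Dickson invariant as a known character: it \emph{constructs} the relevant $\ZZ/2\ZZ$-valued homomorphism explicitly, $\phi(U)=\Tr(B^\dag C)$ for $U=\begin{pmatrix}A&B\\C&D\end{pmatrix}$, and verifies multiplicativity generator by generator; this keeps the argument self-contained, whereas your appeal to the Dickson/Arf character of the split orthogonal group in characteristic two is correct but imports a classical fact (only the existence of the character and $D(\hada)=1$ are actually needed, not uniqueness). Similarly, in the $\qhp^-$ case with $p$ odd the paper does not invoke perfectness of the stable symplectic group; it notes that $\cz(\theta)=\qcz(\tfrac12\theta)$ is a commutator by Lemma~\ref{thm:suslin} and then kills the remaining generators by the explicit $2\times2$ identities $\hada_-=\cz(1)^\dag\cz(-1)\cz(1)^\dag$ and $\cx(a)=\cz(-1/a)\cz(a)^\dag\cz(-1/a)\hada_-^{-1}$. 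Your $p=2$ hermitian argument (order-$3$ element $\cz(1)\hada$ inside $\hp(1;\FF_2)\cong S_3$, hence in the commutator subgroup) is essentially the paper's $(\cz(1)\hada)^3=I$ trick.

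The fourth bullet, however, has a genuine flaw. First, $\cz(1)$ is not even a generator in the $(+)$-case for odd $p$: there $\cz(\theta)$ requires $\theta^\dag=-\theta$, so scalar $\theta$ must vanish (this is noted in the proof of Lemma~\ref{thm:elemModCommToF}). Second, the claim that the spinor norm is the only nontrivial abelian character of the stable split orthogonal group is false (the determinant is another), and, more importantly, the ``direct check'' that $\hada_+$ has trivial spinor norm fails: $\hada_+$ is the reflection in $e-f$ of the hyperbolic plane, whose spinor norm is the class of $B(e-f,e-f)=-2$ modulo squares, which is nontrivial for infinitely many $p$ (e.g.\ $p=5$). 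So the step ``$\hada_+$ and $\cz(1)$ are commutator classes'' does not hold as argued — indeed $\hada_+$ cannot be a commutator since $\det\hada_+=-1$. The correct and much simpler observation, which is exactly the paper's one-line argument, is that $\det\hada_+=-1$ excludes $\hada_+$ (and any odd power of it) from $\sgl(R)\cap\qehp^+(R)$ altogether, so after discarding it only $\cx(a)$ with $a$ a nonsquare survives, and $\cx(a)^2\equiv\cx(a^2)$ is a commutator, giving the bound~$\le 2$. Your conclusion is right, but the spinor-norm reasoning as written would not survive scrutiny and should be replaced by the determinant argument.
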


\begin{proof}
	By~\ref{thm:elemModCommToF}, we examine $\hada$, $\cx(a)$, and $\cz(1)$.

	(The quadratic case with $p=2$)
	$\cz(1)$ is absent from~$\qhp(\FF_2)$.
	We only have to prove that $\hada \notin [\qehp(q;\FF),\qehp(q;\FF)]$ for any $q \ge 1$.
	Given $U \in \qehp(q;\FF)$, consider a composed map
	\begin{align}
		\phi : \qehp(q;\FF) \ni U = \begin{pmatrix}
		A & B \\ C & D
		\end{pmatrix}
		\mapsto
		U^\dag \eta U = \begin{pmatrix}
		A^\dag C & A^\dag D \\ B^\dag C & B^\dag D
		\end{pmatrix}
		\mapsto
		\Tr (B^\dagger C) \in \FF.
	\end{align}
	We claim that this map is a group homomorphism.
	First, a direct calculation shows that 
	$\phi(U\widetilde\cz(\theta)) - \phi(U) =\Tr (B^\dagger D (\theta + \theta^\dagger))$,
	but $B^\dagger D = D^\dagger B$ because of~\eqref{eq:DefLambdaUnitary}.
	Since $\Tr(J) = \Tr(J^\dagger)$ for any $J \in \Mat(q;\FF)$,
	we see that $\phi(U \qcz(\theta)) = \phi(U)$.
	Similarly, $\phi(U \qcz(\theta)^\dagger) = \phi(U)$.
	On the other hand, $\phi(\qcz(\theta)) = 0$
	and $\phi(\qcz(\theta)^\dagger) = 0$.
	Second, $\phi(U \cx(\alpha)) = \Tr(\alpha^{-1} B^\dagger C \alpha) = \Tr(B^\dagger C) = \phi(U)$ 
	and $\phi(\cx(\alpha)) = \Tr(0) = 0$.
	Finally, $\hada^\dagger U^\dagger \eta U \hada$ has lower left block~$C'$
	whose diagonal is different from that of $U^\dagger \eta U$ by at most one entry,
	which is swapped with the corresponding entry in the upper right block~$B'$.
	But, $(B')^\dagger + C' = I$ because of~\eqref{eq:DefLambdaUnitary}.
	Therefore, $\phi(U \hada) = \phi(U) + 1 = \phi(U) + \phi(\hada)$. 
	Since $\qehp$ is generated by these elements,
	we have proved that $\phi$ is a group homomorphism.

	Any commutator involves an even number of~$\hada$,
	so $\phi([\qehp(q;\FF),\qehp(q;\FF)]) = 0$ for any~\mbox{$q \ge 1$}.
	Since $\phi(\hada) = 1$, the proof is complete.

	(The hermitian case with $p=2$)
	$\cx(a)$ drops because there is no nonsquare.
	An identity $(\cz(1) \hada)^3 = I$ shows that $\cz(1) \hada$ is a commutator.

	(The ($-$)-case with $p > 2$)
	Since $\cz(\theta) = \qcz(\half\theta)$ is a commutator by~\ref{thm:suslin},
	both $\cz(\theta)$ and $\cz(\theta)^\dag$ belong to the commutator subgroup for any $\theta \in \FF_p$.
	The defining equation~$U^\dag \lambda_1^- U = \lambda_1^-$ is satisfied for all $U \in \sgl(2;\FF_p)$
	and $\sgl(2;\FF_p)$ is generated by the elementary row operations which are~$\cz(\theta)$ and~$\cz(\theta)^\dag$.
	Indeed,
	$
	\cx(a) = \cz(-1/a) \cz(a)^\dag \cz(-1/a) \hada_-^{-1}
	$ 
	and
	$ \hada_- = \cz(1)^\dag \cz(-1) \cz(1)^\dag $.

	(The ($+$)-case with $p > 2$)
	Since $\det \hada_+ = -1$ and $\bar 1 \neq - 1$, 
	we are left with $\cx(a)$ where $a$ is a nonsquare.
\end{proof}

\section{Time-reversal symmetry over qubits}\label{sec:timereversal}

In this section, we will set $R = \FF_2[x_1,\tfrac 1 {x_1},\ldots,x_\dd,\tfrac 1 {x_\dd}]$ in which~$2=0$.

Recall that the multiplicative group of all finitely supported Pauli operators on $\ZZ^\dd$, 
each of which is a tensor product of ``clock shift'' operator~$\sum_{j \in \FF_p} \ket{j+1}\bra{j}\in \Mat(p;\CC)$ and its Fourier transform~$\sum_j e^{2\pi i j / p} \ket{j}\bra{j} \in \Mat(p;\CC)$,
abelianizes to the additive group of Laurent polynomial ``vectors'' (column matrices)
in~$\dd$~variables with coefficients in~$\FF_p$.
Our results do not depend on particular generating set of the latter;
the only exception so far was 
when we associated specific local Clifford $\CC$-unitaries
to elementary $\lambda$-unitaries,
where the ``clock shift'' operators corresponded upper half blocks
and their Fourier transforms lower half blocks.

Here, this standard basis we used before is particularly convenient.
With $p=2$, the ``shift'' operator is Pauli $X = \begin{pmatrix} 0 & 1 \\ 1 & 0 \end{pmatrix}$
and its Fourier transform is Pauli $Z = \begin{pmatrix} 1 & 0 \\ 0 & -1 \end{pmatrix}$,
both of which are real matrices, invariant under complex conjugation.
Hence, the reality of a hermitian tensor product of Pauli matrices 
is determined by the mod~2 number of Pauli $Y = iXZ$ tensor components.
A Pauli $Y$ tensor component exists 
whenever the Laurent polynomial ``vector''~$v$ of the Pauli operator
has the same terms in the upper half $v_\text{up}$ ($X$ part) 
and the lower half $v_\text{down}$ ($Z$ part).
Observe that the ``constant term'' $\coe(v_\text{up}^\dag v_\text{down})$,
that does not involve any variable, of $v_\text{up}^\dag v_\text{down}$
is precisely the mod~2 number of Pauli $Y$ tensor components.

Therefore, a $\CC$-hermitian Pauli operator on~$\ZZ^\dd$ that gives a Laurent polynomial ``vector''~$v$
is real if and only if $\coe(v_\text{up}^\dag v_\text{down}) = 0$ or
\begin{align}
	\coe (v^\dag \eta v ) = 0 \label{eq:PauliReality}
\end{align}
where $\coe : R \to \FF_2$ is an $\FF_2$-linear map that reads off the term
in which the exponent of every variable is zero.
The map~$\coe$ is not a ring homomorphism.
Introduce a ring homomorphism $\ev: R \to \FF_2$ 
by which every variable~$x_j$ is mapped to~$1$.
The homomorphism~$\ev$ counts the mod~2 number of nonzero terms in a Laurent polynomial.
Hence, for any $r \in R$, it holds that 
\begin{align}
	\ev (r) = \coe(\overline{r} r).\label{eq:evcoe}
\end{align}

\begin{proposition}\label{thm:timereversal}
	Let $U \in \hp(R)$ where $0 = 2 \in R$.
	Then, $U$ belongs to the $\eta$-unitary group~$\qhp(R)$
	if and only if the corresponding Clifford QCA maps every real operator to a real operator.
	Every class of~$\umod(R)$ has a representative that belongs to the $\eta$-unitary group~$\qhp(R)$,
	{\em i.e.},
	the map~$\qumod(R) \to \umod(R)$ induced by the inclusion~$\qhp(R) \hookrightarrow \hp(R)$ is surjective.
\end{proposition}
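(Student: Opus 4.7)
My plan splits into the iff characterization and the surjectivity. For the iff, set $\Theta := U^\dag \eta U - \eta$. Because $U$ preserves $\lambda = \eta + \eta^\dag$ and $2 = 0$ in $R$, I get $\Theta = \Theta^\dag$. Then $U \in \qhp(R)$ iff $\Theta = \theta + \theta^\dag$ for some $\theta$, in which case for every $v$ the scalar $v^\dag \Theta v = r + \bar r$ has vanishing constant term, so by \eqref{eq:PauliReality} the QCA sends real Paulis to real Paulis. Conversely, applying reality preservation to the standard basis vectors $e_j$ (which correspond to the real Paulis $X_i, Z_i$ since $\eta$ has zero diagonal, making $\coe(e_j^\dag \eta e_j) = 0$) forces $\coe(\Theta_{jj}) = 0$ for every $j$, and any self-adjoint matrix whose diagonal entries have vanishing constant term decomposes as $\theta + \theta^\dag$: write each $\Theta_{jj} = \overline{\Theta_{jj}}$ with $\coe \Theta_{jj} = 0$ as $r_j + \bar r_j$ by picking half the non-constant Laurent terms, and take $\theta$ upper-triangular with these $r_j$'s on the diagonal and $\theta_{jk} = \Theta_{jk}$ for $j < k$.

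For the surjectivity I exhibit, given $U \in \hp(q;R)$, an element $E \in \ehp$ with $UE \in \qhp$. Writing $\Theta = \begin{pmatrix} A & B \\ B^\dag & D \end{pmatrix}$, the per-sublattice obstruction is $(a_i, d_i) := (\coe A_{ii}, \coe D_{ii}) \in \FF_2^2$, which in Pauli language is the pair of $Y$-parities of $UX_iU^\dag$ and $UZ_iU^\dag$. Direct computation shows that right-multiplying $U$ by the phase gate at sublattice $i$ (an element of $\ehp$ of the form $\cz(\sigma)$ with $\sigma$ equal to the rank-one projector onto sublattice $i$) transforms $(a_i,d_i) \mapsto (a_i + d_i + 1, d_i)$; by its Hadamard-conjugate (the dual phase), $(a_i, d_i) \mapsto (a_i, a_i + d_i + 1)$; and by the controlled-$Z$ between sublattices $i \neq j$ (that is, $\cz(E_{ij} + E_{ji})$), $(a_i, d_i, a_j, d_j) \mapsto (a_i + d_j, d_i, a_j + d_i, d_j)$, with all other entries untouched.

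With these rules the algorithm is: first kill every sublattice with obstruction $(1, 0)$ by a phase gate and every $(0, 1)$ by a dual phase, leaving each obstruction in $\{(0, 0), (1, 1)\}$; then pair up the $(1, 1)$'s and apply a controlled-$Z$ to each pair, turning each pair into two $(0, 1)$'s that are eliminated by another round of dual phases. The main obstacle is that this pairing requires the number of $(1, 1)$ sublattices to be even, so I need the parity identity $\sum_i a_i d_i \equiv 0 \pmod 2$ for every $U \in \hp$. This is an Arf-invariance statement: the function $q(v) := \coe(v^\dag \eta v)$ is a quadratic refinement of the $\FF_2$-valued symplectic form $\omega(v, w) := \coe(v^\dag \lambda w)$, its Arf invariant is $0$ because $q$ vanishes on each vector of the standard symplectic basis, and the symplectic isometry $U$ carries the standard basis to another symplectic basis over which Arf is computed by exactly $N^\dd \sum_i a_i d_i$ in a finite quotient $R/(x_k^N - 1)_k$. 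Taking $N$ odd and larger than the support of $U$ makes Arf well defined on the resulting finite-dimensional symplectic $\FF_2$-space, and the invariance forces $\sum_i a_i d_i \equiv 0 \pmod 2$ in $R$.
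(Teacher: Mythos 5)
Your proof is correct, and while the ``if and only if'' part coincides with the paper's argument (reality of the image Paulis is exactly the vanishing of the constant terms of the diagonal of $U^\dag \eta U$, and over a ring with $2=0$ a self-adjoint matrix is of the form $\theta+\theta^\dag$ precisely when its diagonal entries have zero constant term --- just remember to add, as the paper does, that the real Paulis generate the real algebra, so reality of the images of $X_i,Z_i$ suffices), your surjectivity argument takes a genuinely different route. The paper first normalizes $\ev V = I$ using $\hp(\FF_2)=\ehp(\FF_2)$ and then applies two explicit phase-type corrections $\cz(P)$ and $\cz(F)^\dag$ with diagonal $P,F\in\mat(q;\FF_2)$ read off from the obstructions; the identity $\ev(r)=\coe(\bar r r)$ combined with $\ev V=I$ makes the back-reaction of these layers cancel the original obstruction exactly, so no parity constraint ever enters. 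You instead leave $U$ un-normalized, track the per-sublattice obstruction $(a_i,d_i)\in\FF_2^2$, and use the (correct) action of phase, dual phase and controlled-$Z$ layers on these pairs; this forces you to prove the auxiliary parity lemma $\sum_i a_i d_i\equiv 0 \pmod 2$, which you do by noting that $v\mapsto\coe(v^\dag\eta v)$ is a quadratic refinement of the commutation form and invoking Arf invariance after reducing modulo $(x_k^N-1)$ with $N$ odd and large (strictly, $N$ should exceed the spread of $z$-degrees occurring in the entries of $U^\dag\eta U$, so that no nonzero-degree term folds onto degree zero --- a harmless quantitative adjustment). Your route is longer and imports the classical basis-independence of the Arf invariant together with the compactification step, but it isolates the genuine invariant at play: the parity $\sum_i a_i d_i$ is exactly the obstruction to pairing the $(1,1)$ defects, and one can check that every generator of $\ehp$ preserves it, so your lemma is the a priori input that the paper's normalization trick quietly renders unnecessary. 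The paper's proof is shorter and stays entirely within the Laurent-polynomial bookkeeping; yours gives a more structural, gate-by-gate picture of why the time-reversal-symmetric representative exists.
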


The idea below is excerpted from the proof of~\cite[Thm.\,IV.4]{nta3}.

\begin{proof}
	Since Pauli $X$ and $Z$ generate the real algebra of operators,
	a Clifford QCA maps every real operator to a real operator
	if and only if
	the images of all Pauli $X$ and $Z$ are real operators.
	These images are nothing but the columns of $U \in \hp(R)$ up to a sign $\pm \in \RR$.
	Therefore, by~\eqref{eq:PauliReality},
	the images of Pauli $X$ and $Z$ under a Clifford QCA represented by~$U \in \hp(q;R)$,
	are all real if and only if
	\begin{align}
		\coe\diag \left[U^\dag \begin{pmatrix} 0 & I \\ 0 & 0 \end{pmatrix} U \right] = 0. \label{eq:reality}
	\end{align}
	This equation precisely means that the diagonal entries of $U^\dag \eta U$ are even hermitian forms,
	which is to say that $U$ is $\eta$-unitary.
	This proves the first statement.

	To prove the second statement,
	let $[V] \in \umod(R)$.
	Since $\hp(\FF_2) = \ehp(\FF_2)$ and $(\ev V)^\dagger \lambda (\ev V) = \lambda$,
	we see that $[V] = [(\ev V)^{-1} V]$ and $\ev( (\ev V)^{-1} V ) = I$.
	Thus, we may assume that 
	\begin{align}
		\ev V = I \in \hp(q;\FF_2).\label{eq:evV=1}
	\end{align}
	Given $V = \begin{pmatrix} \sigma & \tau \end{pmatrix}$ satisfying~\eqref{eq:evV=1},
	we define a diagonal matrix~$P = \coe\diag(\sigma^\dag \eta \sigma) = P^\dag$.
	Then,
	\begin{align}
		\coe\diag[\sigma^\dag \cz(P)^\dag \eta \cz(P) \sigma]
		&=
		\coe\diag\left[\sigma^\dag \begin{pmatrix} P & I \\ 0 & 0 \end{pmatrix} \sigma\right]
		=
		\coe[\sum_j \overline{\sigma_{ji}} P_{jj} \sigma_{ji}]_i + [P_{ii}]_i \nonumber\\
		&=
		\left(\sum_j P_{jj} \coe[\overline{\sigma_{ji}}\sigma_{ji}]_i \right) + [P_{ii}]_i & \text{($\coe$ is $\FF_2$-linear)} \\
		&=	
		[P_{ii}]_i + \sum_j P_{jj} [\ev(\sigma_{ji})]_i & \text{\eqref{eq:evcoe}}\nonumber\\
		&= 
		[P_{ii}]_i + \sum_j P_{jj} [\delta_{ji}]_i  = 0. & \text{\eqref{eq:evV=1}}\nonumber
	\end{align}
	Thus, we may assume that a representative~$V$ of~$[V]$ has the left half block~$\sigma$
	with~\mbox{$\coe\diag(\sigma^\dag \eta \sigma) = 0$}.

	Now, we examine the right half block~$\tau$ of~$V$.
	Let $F = \coe\diag(\tau^\dag \eta \tau) = F^\dag$ be a diagonal matrix.
	Consider~$V \cz(F)^\dag = \begin{pmatrix}
	\sigma &&& \tau + \sigma F
	\end{pmatrix}$.
	Since $F\sigma^\dag \eta \sigma F$ is even and $\tau^\dag \lambda \sigma = I$,
	we see
	\begin{align}
		\coe\diag[(\tau + \sigma F)^\dag \eta (\tau + \sigma F)]
		&=
		F + \coe\diag[\tau^\dag \eta \sigma F + F \sigma^\dag \eta \tau + F \sigma^\dag \eta \sigma F]\\
		&=
		F + \coe\diag[\tau^\dag \eta \sigma F + \tau^\dag \eta^\dag \sigma F]\nonumber\\
		&=F + \coe\diag[\tau^\dag \lambda \sigma F] = 0.\nonumber
	\end{align}
	This completes the proof.
\end{proof}

Hence, every $\dd$-dimensional Clifford QCA over qubits, $\CC^2$,
is equivalent to one that is time reversal symmetric.
(See also~\ref{rem:oneDimNontrivialQuadraticUnitary}.)
This conclusion will actually be implied by our main classification result
since every class of $\lambda$-unitaries corresponds to a hermitian Witt class in one dimension lower,
which in turn corresponds down to a $\eta$-unitary class.
However, our proof of the main result will use~\ref{thm:timereversal} 
in the proof of~\ref{thm:oddBassExactNonquadratic} below.

\section{Descent and Ascent}\label{sec:updown}

We reserve a variable~$z$ to be
independent from any other variables~$x_1,\ldots,x_\dd$.
We denote by~$\Rz$ 
a commutative Laurent extension~$R[z,\tfrac 1 z] = \FF_p[x_1,\tfrac 1 {x_1},\ldots,x_\dd,\tfrac 1 {x_\dd},z,\tfrac 1 z]$,
which is a Laurent polynomial ring in $\dd+1$ variables.
The $\FF_p$-linear involution on~$R$ extends to~$\Rz$ by~$z \mapsto \tfrac 1 z$.
The goal of this section is to prove a theorem of Novikov~\cite{Novikov1} and Ranicki~\cite{Ranicki2}
which establishes abelian group isomorphisms
\begin{align}
	\qumod^\mp(\Rz) &\cong \qumod^\mp(R) \oplus \qwitt^\mp(R),\label{eq:descentIsomorphisms}\\
	\qwitt^\mp(\Rz) &\cong \qwitt^\mp(R) \oplus \qumod^\pm(R).	\nonumber
\end{align}
These are conveniently combined by
\begin{definition}[\cite{Ranicki1}]\label{defn:vtheory}
	For $n \in \ZZ/4\ZZ$ and $R = \FF_p[x_1,\tfrac 1 {x_1},\ldots,x_\dd,\tfrac 1 {x_\dd}]$ we define
	\begin{align}
		\vtheory_n (\dd,p) = \left\{\begin{matrix}
			\qwitt^+(R) & (n=0), &&&	\qumod^+(R) & (n=1),\\
			\qwitt^-(R) & (n=2), &&&	\qumod^-(R) & (n=3).
		\end{matrix}\right. 
	\end{align}
\end{definition}

Then,~\eqref{eq:descentIsomorphisms} is cast into

\begin{align}
	\vtheory_n(\dd,p) \cong \vtheory_n(\dd-1,p) \oplus \vtheory_{n-1}(\dd-1,p).
\end{align}
It obviously follows that $\vtheory_n(\dd,p)$ is a direct sum of some number of copies of~$\vtheory_k(0,p)$ for various~$k$.
More precisely, $\vtheory_n(\dd,p) 
	\cong 
	\vtheory_0(0,p)^{\oplus m_0} 
	\oplus \vtheory_1(0,p)^{\oplus m_1} 
	\oplus \vtheory_2(0,p)^{\oplus m_2} 
	\oplus \vtheory_3(0,p)^{\oplus m_3}$
where $ m_0 + m_1 x + m_2 x^2 + m_3 x^3 = x^n(x^{-1} + 1)^\dd \in \ZZ[x]/(x^4-1)$ if $p$ is odd,
and
$\vtheory_n(\dd,p) \cong \vtheory_0(0,p)^{\oplus m_0} \oplus \vtheory_1(0,p)^{\oplus m_1}$ where $m_0 + m_1 x = x^n(x^{-1}+1) \in \ZZ[x]/(x^2 -1)$ if $p=2$.
We know from~\ref{thm:n0elem} and~\ref{thm:ZeroDcalculation} that $\vtheory_n(0,p)$ vanishes except for $n=0$,
where $\vtheory_0(0,p)$ is isomorphic to, depending on $p$, $\ZZ/2\ZZ$, $\ZZ/4\ZZ$, or $\ZZ/2\ZZ \oplus \ZZ/2\ZZ$.
So, all the groups~$\vtheory_n(\dd,p)$ are determined.

However, our ultimate object~$\clifqca(\dd,p)$ is not $\vtheory_n(\dd,p)$.
The most relevant group is~$\vtheory_3(\dd,p)$, 
but our equivalence relation for unitaries is stronger (with fewer equivalence classes)
and a Clifford QCA is not necessarily an $\eta$-unitary but is only a $\lambda^-$-unitary.
The latter distinction is immaterial for $p$ odd by~\ref{thm:qhphp} but important for $p=2$.
We will address these in~\S\ref{sec:cg}.

We will mostly follow Ranicki~\cite{Ranicki2}
and all results in this section are special cases of those there,
except one minor difference in the going-up morphism~$\bass^\uparrow_{2\iota+1}$
from a unitary group to a Witt group,
where our morphism lands in an even hermitian Witt group~$\switt^\pm$,
while Ranicki's lands in a quadratic Witt group~$\qwitt^\pm$.
See~\ref{thm:WittGroupOfForms} and~\ref{thm:hermitianMatrices} to recall the definitions.
We will show in~\ref{thm:tildeBass} that Ranicki's map (after a minor correction) is well defined,
but our logical flow is slightly different from Ranicki's.
Also, it appears that the proof of~\cite[Lem.\,3.4]{Ranicki2} is not complete,
so we present our amended calculation in~\ref{thm:up-down-and-down-up}.
Our exposition will be self-contained,
except the Quillen--Suslin--Swan theorem~\cite{Suslin1977Stability,Swan1978}.

\subsection{Modular bases}\label{sec:modularBase}

For a finitely generated $\Rz$-module~$Q$, 
an $R$-module~$A$ that is set-theoretically contained in~$Q$ 
is said to be a {\bf modular $R$-base}
of~$Q$ if \mbox{$Q = \bigoplus_{k = -\infty}^\infty (z^k A)$}, {\it i.e.},
every $v \in Q$ is uniquely expressed as $v=\sum_k z^k a_k$
where all but finitely many~$a_k \in A$ are zero~\cite{Ranicki2}.
We introduce notations associated with a modular $R$-base~$A$ of~$Q$:
\begin{align}
	A^+ &= \projz^{\ge 0}_A Q = \bigoplus_{k=0}^\infty z^k A  \subset Q, &
	A^- &= \projz^{<0}_A Q = \bigoplus_{k=-\infty}^{-1} z^k A  \subset Q. 
\end{align}
Here, $\projz^{\cdots}_A$ is an $R$-linear, not $\Rz$-linear, 
projection on the the designated range of $z$-exponents.
It depends on the chosen modular $R$-base~$A$ in the subscript.
By abuse of notation, we sometimes write~$u \in \projz^{\cdots}_A$ to mean~$u \in \projz^{\cdots}_A Q$,
{\it i.e.}, a projector also stands for its image.

A modular $R$-base $A$ of an $\Rz$-module~$Q$ 
is isomorphic to the $R$-module~$Q/(z-1)Q$
as seen by the projection $Q \ni \sum_k z^k a_k \mapsto \sum_k a_k \in A \subset Q$.
It follows that any two modular $R$-bases are isomorphic as $R$-modules.
Since a finitely generated free $\Rz$-module~$\Rz^m$ has a modular $R$-base~$R^m$,
which is finitely generated free over~$R$,
every modular $R$-base of~$\Rz^m$ is finitely generated free over~$R$.

Suppose we have two modular $R$-bases~$A,B$ of~$\Rz^m$.
Each $R$-generator of~$A$ belongs to 
some finite direct sum of copies~$z^k B$ of~$B$,
and therefore there exists an integer~$n \ge 0$ such that
\begin{align}
	z^n A^+ \subseteq B^+,
\end{align}
with which we define
\begin{align}
	\bd_n(A,B) = (z^n A^-) \cap B^+. \label{eq:bdAB}
\end{align}
In other words, $B^+ = \bd_n(A,B) \oplus z^n A^+$;
the unique expansion of an element of~$B^+$ with respect to the $R$-base~$A$ 
gives projections onto the direct summands.
It follows that
\begin{align}
	\bd_{n+1}(A,B) = B \oplus z \bd_n(A,B),\label{eq:bdABoneMore1}
\end{align}
so a different choice of~$n$ in $\bd_n$ only adds copies of~$B$ to $\bd_n$.
Similarly, if $n$ is sufficiently large that we can define~$\bd_n(A,B)$ and also that $z^n A^- \supseteq B^-$, 
then $z^n A^- = \bd_n(A,B) \oplus B^-$,
so 
\begin{align}
	\bd_{n+1}(A,B) = z^n A \oplus \bd_n(A,B). \label{eq:bdABoneMore2}
\end{align}
If $C$ is another modular $R$-base of $\Rz^m$, 
then there exists an integer $n'\ge 0$ so large that $z^{n'} B^+ \subseteq C^+$.
We also have $z^{n'+n} A^+ \subseteq C^+$, and $\bd_{n+n'}(A,C)$ is defined.
It follows that
\begin{align}
	C^+ 
	&= \bd_{n'}(B,C) \oplus z^{n'} B^+ \nonumber\\
	&= \bd_{n'}(B,C) \oplus z^{n'} (\bd_n(A,B) \oplus z^{n} A^+)\nonumber\\
	&= \bd_{n'+n}(A,C) \oplus z^{n'+n} A^+ ,\nonumber\\
	\bd_{n'+n}(A,C)
	&=z^{n'} \bd_n(A,B) \oplus \bd_{n'}(B,C). \label{eq:bass-Sum}
\end{align}
If we set $A=C$, then 
\begin{align}
	\bd_{n'+n}(A,A) = \bigoplus_{k=0}^{n'+n-1} z^k A = z^{n'} \bd_n(A,B) \oplus \bd_{n'}(B,A)
	\label{eq:bdn-projective}
\end{align}
which implies that $\bd_n(A,B)$ is a direct summand of a finitely generated free $R$-module,
and hence is also finitely generated free due to the Quillen--Suslin--Swan theorem.

\subsection{Duals}

The Laurent extension~$\Rz$ has a canonical modular $R$-base~$R$;
every Laurent polynomial~$r \in \Rz$ 
has a unique expansion $r = \sum_k z^k r_k$ with~$r_k \in R$.
For any $r \in \Rz$, we write~$[r]_0$ or $[r]_{z^0}$
to mean the $z$-degree zero term~$r_0$ in this expansion.

Given a modular $R$-base~$A$ of a finitely generated $\Rz$-module~$Q$,
let 
\begin{align} 
	Q^{*c} = \{ f \in \Hom_R(Q,R)~|~f(z^j A) = 0 \text{ for all but finitely many }j\}
\end{align}
be the module of all $R$-linear functionals of {\bf compact support}.
This notion of compact support is defined only if there is a modular $R$-base,
but since any modular $R$-base is contained in a finite direct sum of any other modular $R$-base,
the set~$Q^{*c}$ is independent of a modular $R$-base.
Note that $Q^{*c}$ is an $\Rz$-module by $(z \cdot f)(\circ) = f( \bar z\, \circ)$.

\begin{lemma}\label{thm:CompactDualIsDual}
	For any finitely generated $\Rz$-module~$Q$ that has a modular $R$-base~$A$,
	the dual $\Rz$-module~$Q^*$ is canonically isomorphic to~$Q^{*c}$ 
	by~$Q^* \ni \hat f \mapsto f = [\hat f(\circ)]_0 \in Q^{*c}$.
\end{lemma}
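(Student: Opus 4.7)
The plan is to exhibit an explicit inverse $\Phi\colon Q^{*c} \to Q^*$ given by $\Phi(f)(v) = \sum_{j \in \ZZ} z^j f(z^{-j} v)$ and then verify that $\Psi$ and $\Phi$ are mutually inverse $\Rz$-module maps.

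First I would verify that $\Psi$ is well-defined, i.e., that $f := [\hat f(\cdot)]_{z^0}$ is $R$-linear and compactly supported. The $R$-linearity is immediate because $R$ sits inside $z^0 \Rz$ and commutes with the $R$-linear projection $[\cdot]_{z^0}\colon \Rz \to R$. For compact support, I would first observe that $A$ is finitely generated over $R$ by the same argument used in \S\ref{sec:modularBase} for $\Rz^m$: if $v_1, \ldots, v_n$ generate $Q$ over $\Rz$ and $v_i = \sum_k z^k a_{ik}$ in the modular base, then the finitely many nonzero $a_{ik}$ generate $A$ as an $R$-module. Evaluating $\hat f$ on $R$-generators $a_1, \ldots, a_m$ of $A$ gives finitely many Laurent polynomials whose combined $z$-support lies in some interval $[-N, N]$; the $\Rz$-linearity of $\hat f$ then forces $[\hat f(z^j a)]_{z^0} = [z^j \hat f(a)]_{z^0} = 0$ for all $a \in A$ whenever $|j| > N$, so $f$ vanishes on $z^j A$ for $|j| > N$. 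The $\Rz$-linearity of $\Psi$ itself is a one-line check using the convention $(r \cdot \hat f)(v) = \hat f(\bar r v)$ on both sides.

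Next I would verify that $\Phi$ actually takes values in $Q^*$. The crucial point is that the sum defining $\Phi(f)(v)$ is a finite sum: writing $v = \sum_k z^k a_k$ with only finitely many nonzero $a_k \in A$, the compact-support bound $N$ for $f$ forces $f(z^{k - j} a_k) = 0$ whenever $|k - j| > N$, and since only finitely many $k$ contribute to $v$, only finitely many $j$ contribute to $\Phi(f)(v)$. The $\Rz$-linearity of $\Phi(f)$ then follows by a shift of the summation index. The composition $\Psi \circ \Phi = \mathrm{Id}$ is immediate because only the $j = 0$ summand of $\sum_j z^j f(z^{-j} v)$ has a nonzero $z^0$-coefficient, and $\Phi \circ \Psi = \mathrm{Id}$ follows from the identity $[z^{-j} \hat f(v)]_{z^0} = [\hat f(v)]_{z^j}$, so that the series reassembles the Laurent expansion of $\hat f(v)$.

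The only obstacle worth flagging --- more a bookkeeping point than a conceptual difficulty --- is checking that the series defining $\Phi$ truly terminates. This uses the \emph{uniform} nature of compact support (vanishing on all of $z^j A$ for $|j|$ large) rather than mere pointwise vanishing on individual elements; that uniformity in turn comes from $A$ being finitely generated over $R$, which is ultimately a consequence of $Q$ being finitely generated over $\Rz$.
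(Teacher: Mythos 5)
Your proposal is correct and follows essentially the same route as the paper's proof: the same explicit inverse $f \mapsto \sum_j z^j f(z^{-j}\,\circ)$, the same finiteness argument for the sum via compact support, and the same reassembly of the Laurent expansion for the two composites. The only difference is that you spell out why the modular base $A$ is finitely generated over $R$ (needed for $\Psi$ to land in $Q^{*c}$), a point the paper asserts with ``there are only finitely many generators of $A$.''
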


\begin{proof}
	The map lands in~$Q^{*c}$ because for any $u \in Q$,
	$[\hat f(z^k u)]_0$ is eventually zero as $|k| \to \infty$
	and there are only finitely many generators of~$A$.
	The map is $\Rz$-linear since $z \cdot \hat f = \hat f( \bar z \,\circ) 
	\mapsto [\hat f(\bar z \,\circ)]_0 = z \cdot [\hat f(\circ)]_0$.
	Given $f \in Q^{*c}$, let $F(u) = \sum_j z^j f(z^{-j} u)$ for all~$u \in Q$,
	which is always a finite sum because~$f$ is compactly supported.
	Since $F(z u) = \sum_j z^j f(z^{-j+1} u) = \sum_j z^{j+1} f(z^{-j} u) = z F(u)$,
	we see that~$F$ is a $\Rz$-linear functional on~$Q$.
	Now, $[F(u)]_0 = f(u)$ and $\sum_j z^j [\hat f(z^{-j} u)]_0 = \hat f(u)$,
	implying that the map in the lemma is invertible.
\end{proof}

An obvious $R$-linear functional of compact support is defined by $f \in A^*$ and an integer~$k$
as~$Q \supset z^j A \cong A \xrightarrow{f} R$ if~$j = k$ and $z^j A \to 0$ if~$j \neq k$.
We may write the set of all such ones as~$z^k A^*$.
This notation makes a perfect sense if we identify~$f \in A^*$ with~$f([\circ]_0) : Q \to R$
since, then, $z^k \cdot f = f([ z^{-k}\, \circ]_0)$.
Any other $R$-linear functional of compact support must be a finite sum of these $R$-linear functionals,
and we see that $Q^{*c} = \bigoplus_j z^j A^*$.
By~\ref{thm:CompactDualIsDual}, 
we have $Q^* \cong \bigoplus_j z^j A^*$ as $\Rz$-modules.
This means that
\begin{align}
	\text{``}A^*\text{''} = \left\{ \sum_{j\in \ZZ} z^j f([z^{-j}\circ]_0) \in \Hom_\Rz(Q,\Rz) ~\middle|~ f \in \Hom_R(A,R) \right\}
\end{align}
is a modular $R$-base of~$Q^*$.
\emph{This choice of a modular $R$-base of~$Q^*$ is our convention and is applied always below.}
The associated projection maps~$\projz^{\cdots}_{\text{``}A^*\text{''}}$
will be denoted by~$\projz^{\cdots}_{A^*}$.

Note that this choice of an $R$-base makes the adjoint 
of a projection~$\projz_A^{\cdots}$ convenient.
Since $\projz_A^{\cdots}$ is a just $R$-linear map on~$Q$,
we should consider $R$-duals of compact support.
A general element of $Q^{*c}$ is written 
as~$\sum_{j\in \ZZ} z^j \cdot f_j \in \bigoplus_{j \in \ZZ} z^j A^*$ 
with~$f_j([\circ]_0) : Q \to R$.
Then, for~$J \subseteq \ZZ$ and~$\circ = \sum_{i\in\ZZ} z^i a_i \in Q$ with~$a_i \in A$,
we have
\begin{align}
	(\projz_A^J)^\dag \sum_{j\in \ZZ} z^j \cdot f_j
	&=
	\sum_{j\in\ZZ} (\projz_A^{J})^\dag f_j([z^{-j} \,\circ]_0)
	=
	\sum_{j\in\ZZ} f_j([z^{-j} \projz_A^J \,\circ]_0)
	=
	\sum_{j\in\ZZ} f_j([ \sum_{i \in J} z^{i-j} a_i]_0)
	\\&	
	=
	\sum_{j \in J} f_j(a_j)
	=
	\sum_{j \in J} f_j([z^{-j} \,\circ]_0)
	=
	\sum_{j \in J} z^j \cdot f_j
	=
	\projz_{A^*}^J \sum_{j \in \ZZ} z^j \cdot f_j.
	\nonumber
\end{align}
Hence, for any $J \subseteq \ZZ$,
\begin{align}
	(\projz_A^J : Q \to Q)^\dag = \projz_{A^*}^J : Q^* \to Q^*. \label{eq:projzDagger}
\end{align}

\begin{remark}\label{rem:DeltaOverR}
	A nonsingular hermitian form~$\Delta$ on a finitely generated $\Rz$-module~$Q$ 
	is by definition an $\Rz$-isomorphism between $Q$ and its $\Rz$-dual~$Q^*$.
	If $Q$ has a modular $R$-base,
	we have shown in~\ref{thm:CompactDualIsDual} that $Q^* \cong Q^{*c}$ as~$\Rz$-modules,
	and therefore we may think of~$\Delta$ as an isomorphism from~$Q$ to~$Q^{*c}$.
	Since the functionals of~$Q^{*c}$ are valued in~$R$,
	the module~$Q^{*c}$ lives in the category of $R$-modules.
\end{remark}

\subsection{Boundary forms}

\begin{lemma}[Lemma~2.1 of~\cite{Ranicki2}]\label{thm:boundaryXi-nonsingular}
	Let $\Phi$ (\emph{resp.} $\phi$) be a nonsingular hermitian (\emph{resp.} quadratic) 
	$\mp$-form on a finitely generated $\Rz$-module~$Q$ that has a modular $R$-base~$A$.
	Suppose that a finitely generated $R$-module~$C$
	is a direct summand of~$Q = C \oplus D$
	such that $[\Phi(c,d)]_0 = 0$ (\emph{resp.} $[(\phi \mp \phi^\dag)(c,d)]_0 =0$)
	for all $c\in C$ and $d \in D$.
	Then, a hermitian $\mp$-form on the $R$-module~$C$ defined by~$[\Phi(a,b)]_0$ 
	(\emph{resp.} a quadratic $\mp$-form $[\phi(a,b)]_0$)
	for $a,b \in C$ is nonsingular.
	If $\Phi$ is even, so is the induced form on~$C$.
\end{lemma}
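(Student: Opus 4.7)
The plan is to verify that $\bar\Phi(a,b):=[\Phi(a,b)]_0$ (resp.\ $\bar\phi(a,b):=[\phi(a,b)]_0$) is a hermitian (resp.\ quadratic) $\mp$-form on $C$ over $R$, and then to show that its adjoint $\bar\Phi : C \to C^{*R}:=\Hom_R(C,R)$ is an $R$-module isomorphism. Sesquilinearity and $\mp$-hermitian symmetry of $\bar\Phi$ are immediate because scalars in $R$ carry no $z$-dependence and the involution on $\Rz$ commutes with $[\cdot]_0$. The same reasoning also renders the hypothesis two-sided: $[\Phi(d,c)]_0 = \mp\,\overline{[\Phi(c,d)]_0} = 0$ for $c \in C$ and $d \in D$.

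Injectivity of $\bar\Phi$ is the easy half. If $\bar\Phi(c,c')=0$ for every $c' \in C$, combining with the hypothesis yields $[\Phi(c, q)]_0 = 0$ for every $q \in Q$; by~\ref{thm:CompactDualIsDual}, the compact-support $R$-functional $[\Phi(c,\cdot)]_0$ uniquely determines the $\Rz$-functional $\Phi(c,\cdot) \in Q^*$, so $\Phi(c,\cdot) = 0$ and $c = 0$ by nonsingularity of $\Phi$. By $\mp$-hermitian symmetry of $\bar\Phi$, the symmetric statement also holds: $\bar\Phi(c,c')=0$ for every $c\in C$ forces $c'=0$.

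Surjectivity of $\bar\Phi$ is the main obstacle. Given $g\in C^{*R}$, I extend $g$ by zero on $D$ to $\tilde g := g \circ \pi_C : Q \to R$, where $\pi_C$ is the $R$-projection from $Q = C \oplus D$. The aim is to upgrade $\tilde g$ to an $\Rz$-linear functional $\hat G \in Q^*$ via~\ref{thm:CompactDualIsDual}, then use nonsingularity of $\Phi$ to obtain $c_* \in Q$ with $\Phi(c_*,\cdot) = \hat G$; writing $c_* = c_C + c_D$ and applying the hypothesis to kill the $c_D$-contribution shows that $c_C$ realizes $g$. The nontrivial step is to check that $\tilde g$ has compact support with respect to a modular $R$-base $A$ of $Q$. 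For this, fix $R$-generators $c_1,\ldots,c_n$ of $C$; the $\Rz$-sesquilinearity of $\Phi$ combined with the hypothesis applied to $\pi_D(z^jq)\in D$ gives
\begin{equation*}
\bar\Phi\bigl(c_i,\,\pi_C(z^jq)\bigr) = [\Phi(c_i, z^jq)]_0 = [\Phi(c_i,q)]_{-j}.
\end{equation*}
Since $\Phi(c_i,q) \in \Rz$ is a Laurent polynomial, the right-hand side vanishes for all $|j|$ larger than a common bound $N$, so $\bar\Phi(c,\pi_C(z^jq))=0$ for every $c\in C$ when $|j|>N$; the right-kernel vanishing established above then forces $\pi_C(z^jq)=0$ for such $j$. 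Because $A$ is finitely generated over $R$, applying this to each of its finitely many $R$-generators yields $\tilde g(z^jA)=0$ for all sufficiently large $|j|$, and hence $\tilde g \in Q^{*c}$ as required.

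The quadratic and even cases reduce formally to the hermitian case. $\bar\phi$ is $R$-sesquilinear, and its associated hermitian form is the restriction $\bar\phi \mp \bar\phi^\dag = [(\phi\mp\phi^\dag)(\cdot,\cdot)]_0$, which is nonsingular by applying the hermitian part of the lemma to the hermitian form $\phi \mp \phi^\dag$; so $\bar\phi$ is nonsingular as a quadratic form. If $\Phi = \phi\mp\phi^\dag$ is even, then $\bar\Phi = \bar\phi \mp \bar\phi^\dag$ exhibits $\bar\Phi$ as the hermitian form associated to the quadratic form $\bar\phi$, hence even.
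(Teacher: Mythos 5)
Your proof is correct, and its skeleton is the same as the paper's: injectivity comes from the observation that $[\Phi(c,z^kx)]_0=[\Phi(c,x)]_{-k}$ together with nonsingularity of $\Phi$, and surjectivity extends $g\in\Hom_R(C,R)$ by zero on $D$, checks compact support, pulls back through~\ref{thm:CompactDualIsDual} and the nonsingularity of $\Phi$, and uses the (two-sided) hypothesis to discard the $D$-component of the resulting element of $Q$; the quadratic and even cases are reduced to the hermitian one exactly as in the paper. The genuine divergence is in how compact support of the extended functional is justified. The paper argues only that $C$, being finitely generated, sits inside a bounded window $\bigoplus_{j=-n}^{n}z^jR^m$ and asserts that consequently $\pi_C(z^jx)$ vanishes for all but finitely many $j$; as written this inference uses nothing about $D$, and for a general complement it can fail (projection along a skew complement need not annihilate elements of large $z$-degree), so the orthogonality hypothesis is being used tacitly. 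You instead make that use explicit: from $[\Phi(c_i,z^jq)]_0=[\Phi(c_i,q)]_{-j}$ and the hypothesis you get that the induced form $[\Phi|_C]_0$ pairs $\pi_C(z^jq)$ trivially with all of $C$ once $|j|$ exceeds a common bound, and the right-nondegeneracy of $[\Phi|_C]_0$ (deduced from the injectivity step by $\mp$-hermitian symmetry) then forces $\pi_C(z^jq)=0$. This buys a watertight version of the step the paper treats loosely, at the mild cost of invoking the nondegeneracy of the induced form inside the surjectivity argument, which is harmless since injectivity is established beforehand; the rest of your argument matches the paper's.
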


The proof was omitted in~\cite{Ranicki2}.
We denote by $[ \Phi |_C ]_0$ or $[\phi|_C]_0$ the induced nonsingular form on~$C$.

\begin{proof}
	The quadratic case follows from a proof for the hermitian case
	since nonsingularity and orthogonality are defined through the associated hermitian form.
	We have to show that 
	$C \ni c \mapsto (x \mapsto [\Phi(c,x)]_0) \in \Hom_R(C,R)$
	is an isomorphism.
	Let $\pi_C : C \oplus D \to C$ be the projection.

	For any nonzero~$c \in C$, there exists $x \in Q$ 
	such that $\Phi(c,x) = \sum_k [\Phi(c,x)]_k z^k \in \Rz$ is nonzero,
	and therefore $= [\Phi(c, \pi_C(z^k x))]_0 = [\Phi(c, z^k x)]_0 = [z^{k} \Phi(c,x)]_0$ 
	is nonzero for some~$k$.
	Hence, the map $C \ni c \mapsto [\Phi(c, \circ)]_0 \in \Hom_R(C,R)$ is injective.

	To show its surjectivity,
	let $g \in \Hom_R(C,R)$.
	We extend $g$ to $f \in \Hom_R(C \oplus D,R)$ by letting $f = 0$ on~$D$.
	We claim that $f$ is compactly supported with respect to~$\bigoplus_j z^j A$.
	To see this, we observe that any element of~$C$ 
	is a finite sum~$\sum_j a_j z^j$ for some~$a_j \in A$.
	Since $C$ is finitely generated, 
	we must have $C \subseteq \bigoplus_{j = -n}^n z^j A$ for some sufficiently large~$n$.
	Hence, for any $x \in C \oplus D$, 
	there are only finitely many~$j$ such that $\pi_C(z^j x)$ is nonzero.
	Hence, $f(z^j x) = f(\pi_C(z^j x))$ is nonzero only for finitely many~$j$.
	By~\ref{thm:CompactDualIsDual}, we find an $\Rz$-linear functional~$F$ such that~$[F(\circ)]_0 = f$,
	which restricts to~$g$ on~$C$.
	Since $\Phi$ is nonsingular, $F = \Phi(y, \circ)$ for some~$y \in Q$.
	Therefore, $\Hom_R(C,R) \ni [\Phi(\pi_C y, \circ)]_0 = g(\circ)$.
\end{proof}

\subsection{Going down from unitaries to forms}

We will apply the constructions of~$\bd_n$ and of the induced form in~\ref{thm:boundaryXi-nonsingular} 
to~$\hp^\mp$ and~$\qhp^\mp$.
The domain of a $\lambda$- or $\eta$-unitary over~$\Rz$ is $\Rz^q \oplus (\Rz^q)^*$.
Let $A = R^q \oplus R^{q*}$ be a modular $R$-base, so $\Rz^q \oplus \Rz^{q*} = \bigoplus_j z^j A$.

\begin{proposition}[\S2 of~\cite{Ranicki2}]\label{thm:boundaryXi-from-QCA}
	For any unitary~$U \in \hp^\mp(q;\Rz)$, 
	choose $n$ so large
	that $z^n(U A)^+ \subseteq A^+$.
	The induced form~$[\lambda|_{\bd_n(UA,A)}]_0$ 
	is a nonsingular even hermitian $\mp$-form over~$R$
	whose Witt class
	depends only on the class of~$U$ in~$\umod^\mp(\Rz)$.
	If $U \in \qhp^\mp(q;\Rz)$, then $[\eta|_{\bd_n(UA,A)}]_0$ is nonsingular quadratic $\mp$-form
	whose Witt class depends only on the class of~$U$ in~$\qumod^\mp(\Rz)$.
\end{proposition}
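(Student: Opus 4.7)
The proof splits into two pieces: first verify that $[\lambda|_{\bd_n(UA,A)}]_0$ is a nonsingular even hermitian $\mp$-form (resp.\ $[\eta|_{\bd_n(UA,A)}]_0$ is a nonsingular quadratic $\mp$-form); then verify that its Witt class depends only on $[U] \in \umod^\mp(\Rz)$ (resp.\ $\qumod^\mp(\Rz)$).

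For the first piece, I would choose $n$ large enough that $z^n(UA)^+ \subseteq A^+$ \emph{and} $z^n(UA)^- \supseteq A^-$, so that
\begin{equation*}
\Rz^{2q} \;=\; \bd_n(UA,A) \;\oplus\; z^n(UA)^+ \;\oplus\; A^-
\end{equation*}
as $R$-modules. Setting $C = \bd_n(UA,A)$ and $D = z^n(UA)^+ \oplus A^-$, I would verify the hypothesis $[\lambda(c,d)]_0 = 0$ of~\ref{thm:boundaryXi-nonsingular} by $z$-degree bookkeeping: if $d \in z^n(UA)^+$, write $c = z^n U c'$ with $c' \in A^-$ and $d = z^n U d'$ with $d' \in A^+$, so $\lambda(c,d) = \lambda(c',d')$ by unitarity of $U$ and $\bar z^n z^n = 1$, and every monomial of $\lambda(c',d')$ expanded in the modular base $A$ carries a strictly positive power of $z$; the case $d \in A^-$ is similar using $c \in A^+$. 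Then~\ref{thm:boundaryXi-nonsingular} yields the nonsingular hermitian form on $C$, which is finitely generated free over $R$ by~\eqref{eq:bdn-projective}. Evenness follows from $\lambda = \eta \mp \eta^\dag$: setting $\tilde\eta(c_1,c_2) = [\eta(c_1,c_2)]_0$, which is $R$-sesquilinear because $[\bar r\,(\cdot)\,s]_0 = \bar r\,[\cdot]_0\,s$ for $r,s \in R$, the restricted form is $\tilde\eta \mp \tilde\eta^\dag$. The quadratic case is immediate from the quadratic half of~\ref{thm:boundaryXi-nonsingular} with $\phi = \eta$, since the orthogonality hypothesis $[(\eta \mp \eta^\dag)(c,d)]_0 = [\lambda(c,d)]_0 = 0$ has already been verified.

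For the second piece, write $\beta(U) \in \switt^\mp(R)$ for the Witt class of $[\lambda|_{\bd_n(UA,A)}]_0$, and $\beta_q(U) \in \qwitt^\mp(R)$ analogously. Enlarging $n$ to $n+1$ is harmless by~\eqref{eq:bdABoneMore1}: $\bd_{n+1}(UA,A) = A \oplus z\,\bd_n(UA,A)$, the two summands are orthogonal under $[\lambda|_\cdot]_0$ by the same degree accounting, and the form on $A$ is the trivial $\lambda^\mp$ (resp.\ $\eta$) since elements of $A$ are of pure $z^0$-degree. Stabilization $U \mapsto U \hat\oplus I$ likewise adds only trivial Witt summands. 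For invariance under multiplication by elementary unitaries, apply~\eqref{eq:bass-Sum} to the bases $UVA,\,UA,\,A$:
\begin{equation*}
\bd_{n+n'}(UVA,A) \;=\; z^{n'}\,\bd_n(UVA,UA) \;\oplus\; \bd_{n'}(UA,A),
\end{equation*}
and observe that $U$, being $\Rz$-linear and $\lambda$-preserving, induces an $R$-linear form-isometry $\bd_n(VA,A) \to \bd_n(UVA,UA) = U\,\bd_n(VA,A)$, while multiplication by $z^{n'}$ is a self-isometry of $[\lambda|_\cdot]_0$. This gives additivity $\beta(UV) = \beta(U) + \beta(V)$ in $\switt^\mp(R)$; the same argument yields $\beta_q(UV) = \beta_q(U) + \beta_q(V)$ in $\qwitt^\mp(R)$, using that $U^\dag \eta U = \eta + \theta \pm \theta^\dag$ only changes the restricted quadratic form by an even term $\tilde\theta \pm \tilde\theta^\dag$, which is exactly the equivalence~$\sim$. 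It then suffices to show $\beta(E) = 0$ on each generator of $\ehp^\mp$ (resp.\ $\qehp^\mp$), which I would do by exhibiting an explicit lagrangian in $\bd_n(EA,A)$ and invoking~\ref{thm:lagrangian-trivialForm}.

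The main obstacle is this last step: producing lagrangians in $\bd_n(EA,A)$ for elementary generators $E$ with arbitrary $z$-dependence, for instance $\cx(\alpha)$ with $\alpha \in \gl(q;\Rz)$ or $\cz(\theta)$ with $\theta$ of positive $z$-degree. A useful simplification is that $\switt^\mp(R)$ and $\qwitt^\mp(R)$ are abelian, so $\beta$ and $\beta_q$ automatically kill $[\hp^\mp,\hp^\mp]$ (resp.\ $[\qhp^\mp,\qhp^\mp]$); by~\ref{thm:circuit/commutator} this reduces the verification to a short list of coset representatives of $\ehp^\mp/[\hp^\mp,\hp^\mp]$.
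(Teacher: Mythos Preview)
Your plan matches the paper's proof almost step for step: the nonsingularity via~\ref{thm:boundaryXi-nonsingular} with the same $z$-degree bookkeeping, evenness from $\lambda = \eta \mp \eta^\dag$, independence of $n$ via~\eqref{eq:bdABoneMore1}, and the decomposition~\eqref{eq:bass-Sum} to reduce to showing $\beta(E)=0$ on generators. The paper does not prove full additivity $\beta(UV)=\beta(U)+\beta(V)$ as you do, only the special case with $E$ elementary on the left, but the argument is identical; you should note explicitly (as the paper does) that the two summands in~\eqref{eq:bass-Sum} are orthogonal for $[\lambda]_0$, which is what makes the Witt class additive.

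Where you diverge is in calling the generator check ``the main obstacle'' and reaching for~\ref{thm:circuit/commutator}. In fact the direct lagrangian construction is short: for $\cx(\alpha)$ one has $\bd_n(\cx(\alpha)A,A)=\bd_n(\alpha R^q,R^q)\oplus\bd_n(\alpha^{-\dag}R^{q*},R^{q*})$, and each summand sits inside one of the standard lagrangians $R^q\oplus 0$ or $0\oplus R^{q*}$, hence is itself a lagrangian; for $\cz(\theta)$ one writes $\bd_n(\cz(\theta)A,A)$ as pairs $(a,\,b-\projz^{\ge 0}\theta a)$ with $(a,b)$ ranging over $\bigoplus_{k=0}^{n-1} z^k A$, and the $a$- and $b$-parts are visibly complementary lagrangians. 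Your commutator route also works, but note that~\ref{thm:circuit/commutator} only controls $\sgl\cap\ehp^\mp$ modulo commutators, so you would still need to dispose of $\cx(u)$ for units $u$ (monomials in $z$ and the $x_i$) separately; these are easy, but the direct computation is no harder and avoids the detour.
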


We call the result a {\bf boundary $\mp$-form} of~$U$.
For $[U] \in \qumod^\mp$, we denote the boundary form by~$\bass^\downarrow_{2\iota+1}([U])$
where $\iota = 0$ is for $\bass^\downarrow$ from $\qumod^+$ and $\iota=1$ is for $\bass^\downarrow$ from $\qumod^-$.
Note that the boundary $\mp$-form of~$U \in \hp^\mp \setminus \qhp^\mp$
is \emph{not} defined as a quadratic $\mp$-form;
it is merely an even form.
Of course, if $\half \in R$, 
the quadratic Witt group is the same as the even hermitian Witt group.

The even ($-$)-hermitian case has appeared in~\cite{clifqca1} as ``boundary antihermitian form.''
As explained in~\cite{clifqca1} 
this is a version of the boundary algebra~\cite{GNVW,FreedmanHastings2019QCA} tailored
to translation invariant Clifford QCA.
The statement of~\ref{thm:boundaryXi-from-QCA} is slightly different from that in~\cite{Ranicki2}
because our starting point of~$\umod^\mp$ is different;
\cite{Ranicki2} uses ``formations,'' pairs of lagrangians of a trivial form,
and says that it is the same as the unitary group in a sense.
The notion of formations was introduced by Wall~\cite{WallSurgeryBook} 
and developed by Novikov~\cite{Novikov1} and Ranicki~\cite{Ranicki1}.

\begin{proof}
	The quadratic case will parallel or be subsumed by the hermitian case.

	We first show that $C = \bd_n(U A,A)$ satisfies the assumption of~\ref{thm:boundaryXi-nonsingular}.
	By definition, $\Rz^{2q} = A^- \oplus \bd_n(U A,A) \oplus z^n(UA)^+$,
	so we put $D = A^- \oplus z^n (UA)^+$.
	By \eqref{eq:bdn-projective}, we know $C$ is a finitely generated projective $R$-module, 
	which is free by the Quillen--Suslin--Swan theorem~\cite{Suslin1977Stability,Swan1978}.
	Let~$c \in C$. 
	If~$d \in A^-$, then~$\lambda(c,d) \in \Rz$ has strictly negative $z$-exponents since~$c \in A^+$.
	So,~$[\lambda(c,d)]_0 = 0$.
	If~$d \in z^n(U A)^+ = z^n U A^+$, 
	then there are $c' \in A^-$ and $d' \in A^+$ such that~$c = z^n U c'$ and~$d = z^n U d'$
	because~$c \in z^n(UA)^- = z^n UA^-$.
	This implies that~$\lambda(c,d) = \lambda(c', d')$ 
	has strictly positive $z$-exponents, 
	and hence~$[\lambda(c,d)]_0=0$.
	Therefore,~\ref{thm:boundaryXi-nonsingular} implies that the induced form~$[\lambda|_C]_0$ on~$C$ is nonsingular.
	The form~$[\lambda|_C]_0$ is even because~$\lambda^\mp = \eta \mp \eta^\dagger$ is even.

	Next, we show that the Witt class of the boundary form is unchanged by
	arbitrary choices of~$n$, stabilization, or~$\ehp^\mp$.
	A different choice of~$n$, say~$n \to n+ m > n$, results in change~$C \to z^m C \oplus A^{\oplus m}$,
	where on the extra submodule~$A^{\oplus m}$ the induced form is an orthogonal summand~$\lambda_m$ 
	(or~$\eta_m$ in the case of~$\eta$-unitaries),
	which gives the zero Witt class.
	Since a direct sum~$U \hat\oplus V$ gives a direct sum of forms
	and the identity unitary gives a trivial form, any stabilization is fine.

	It remains to show that~$\bass^\downarrow(U) = [\lambda|_{\bd_n(U A,A)}]_0$
	is Witt equivalent to~$\bass^\downarrow(EU)$ for any elementary unitary~$E$.
	From~\eqref{eq:bass-Sum}, we know 
	$\bd_{n+m}(EUA,A) = z^m \bd_{n} (EUA,EA) \oplus \bd_m (EA,A)$ for sufficiently large~$n,m$,
	where the two summands are orthogonal with respect to~$[\lambda]_0$ since~$z^m E A^+ \perp z^m E A^-$.
	The first summand~$\bd_{n} (EUA,EA)$ is equal to~$E \bd_n(UA,A) = E C$,
	giving a form $EC \times EC \ni (Ea, Eb) \mapsto [\lambda(Ea,Eb)]_0 = [\lambda(a,b)]_0$,
	which is obviously equivalent to~$[\lambda|_C]_0$.

	Therefore, it suffices to show that~$\bass^\downarrow(E)$ is Witt equivalent to a trivial form 
	for every generator~$E$ of~$\ehp$ (or~$\qehp$).
	The generator~$\hada_\mp$ does not involve any variable, in particular~$z$,
	so~$\bd_n(\hada A,A) = \bd_n(A, A) = A^{\oplus n}$,
	on which the induced form is trivial.
	For $E = \cx(\alpha)$, we have $\bd_n(EA,A) = \bd_n(\alpha R^q \oplus \alpha^{-\dag} R^{q*} , R^q \oplus R^{q*} )
	= \bd_n(\alpha R^q, R^q) \oplus \bd_n(\alpha^{-\dag} R^{q*}, R^{q*} )$
	where each direct summand is a lagrangian.
	So,~\ref{thm:lagrangian-trivialForm} implies that the induced form is trivial.
	For $E = \cz(\theta)$ (where $\theta$ is $\pm$-even for the quadratic case),
	let us find what $\bd_n(\cz(\theta) A, A) = (z^n \cz(\theta) A^-) \cap A^+$ is.
	Let $a' \oplus b' \in z^n A^- \subset \Rz^q \oplus \Rz^{q*}$ be arbitrary.
	To have $\cz(\theta)(a' \oplus b') \in A^+$,
	we must have $\projz^{<0} (a' \oplus (b' + \theta a')) = 0$.
	This means that $a' \in \projz^{\ge 0}$ and $\projz^{<0}b' = - \projz^{<0}\theta a'$.
	The latter condition is solved by setting~$b' = b - \projz^{<0}\theta a'$ for $b \in \projz^{\ge 0} \cap \projz^{<n}$.
	Then, $\cz(\theta)(a' \oplus b') = a' \oplus (b - \projz^{<0}\theta a' + \theta a') = a' \oplus (b + \projz^{\ge 0} \theta a')$.
	So,
	\begin{align}
	\bd_n( \cz(\theta) A, A ) 
	&\subseteq
		\left\{ 
			a \oplus (b + \projz^{\ge 0} \theta a) \in \projz^{\ge 0}_A ~\middle|~ a \oplus b \in \projz^{\ge 0}_A \cap \projz^{<n}_A 
		\right\}.\label{eq:bdnZtheta}
	\end{align}
	Conversely, any element~$a \oplus (b + \projz^{\ge 0} \theta a) \in \projz^{\ge 0}_A$ 
	on the right-hand side belongs to the left-hand side
	because it is the image of~$a \oplus (b - \projz^{<0} \theta a) \in \projz^{<n}_A$ under~$\cz(\theta)$.
	Hence, the inclusion in~\eqref{eq:bdnZtheta} is in fact an equality.
	An $R$-submodule~$\{0 \oplus b \in \projz^{\ge 0}_A \cap \projz^{<n}_A \}$ is a direct summand of
	the boundary module~$\bd_n( \cz(\theta) A, A )$ and is a lagrangian.
	Again,~\ref{thm:lagrangian-trivialForm} implies that the induced form is trivial.
	We have confirmed that $\bass^\downarrow(E)$ is trivial for all generators~$E$ of the elementary unitary group.
\end{proof}

An alternate characterization of the boundary module and form is as follows.

\begin{lemma}[\S2 of \cite{Ranicki2}]\label{thm:separatorOnly}
	Let $U \in \hp^\mp(q;\Rz)$.
	Let $S_0$ be any modular $R$-base of a lagrangian $U(\Rz^{q} \oplus 0)$.
	For a sufficiently large~$n$ we have a direct summand~$S = z^n (S_0)^+$ of the $R$-module~$A^+$.
	Then, the induced form on~$S^\perp / S$ 
	where $\perp$ is taken within $A^+$ with respect to~$[\lambda_q]_0$,
	is Witt equivalent to a boundary $\mp$-form that is even hermitian (or quadratic if $U \in \qhp^\mp(q;\Rz)$).
\end{lemma}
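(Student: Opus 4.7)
The plan is to pick the canonical modular $R$-base $S_0 = U(R^q \oplus 0)$, identify $(S^\perp/S, [\lambda_q]_0)$ literally with $(\bd_n(UA,A), [\lambda|_{\bd_n(UA,A)}]_0)$ of \ref{thm:boundaryXi-from-QCA}, and then argue that the resulting Witt class is independent of the choice of modular $R$-base of the lagrangian.

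For the canonical choice, $UA = UR^q \oplus UR^{q*}$ is itself a modular $R$-base of $\Rz^q \oplus \Rz^{q*}$, and the definition of $\bd_n$ yields the $R$-module decomposition
\begin{align}
A^+ \;=\; \bd_n(UA, A) \,\oplus\, z^n UA^+ \;=\; \bd_n(UA, A) \,\oplus\, S \,\oplus\, z^n(UR^{q*})^+ .
\end{align}
I would verify that this is orthogonal for $[\lambda_q]_0$: for $b = z^n Ua \in \bd_n(UA,A)$ with $a \in A^-$ and $t = z^n Ua' \in z^n UA^+$ with $a' \in A^+$, the equation $U^\dagger \lambda U = \lambda$ gives $\lambda(b,t) = a^\dagger \lambda a'$, and since $a^\dagger$ has strictly positive $z$-exponents while $a'$ has non-negative ones, $[\lambda(b,t)]_0 = 0$. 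The $R$-bijection $v_0 \mapsto z^n U v_0$ from $A^+$ onto $z^n UA^+$ also preserves $[\lambda_q]_0$ by the same unitarity argument, and sends the standard upper and lower lagrangians of the block-diagonal form on $A^+$ onto $S$ and $z^n(UR^{q*})^+$ respectively. Hence $(z^n UA^+, [\lambda_q]_0)$ is hyperbolic with $S$ as a lagrangian, so $S^\perp$ inside $z^n UA^+$ equals $S$, and combined with the orthogonality one gets $S^\perp_{A^+} = \bd_n(UA,A) \oplus S$. The induced form on $S^\perp/S \cong \bd_n(UA,A)$ is then exactly the boundary form. When $U \in \qhp^\mp(q; \Rz)$, \ref{thm:FormOnQuotientModule} together with the fact that $S$ sits inside the $\eta$-lagrangian $L$ guarantees that the quadratic form $[\eta|_{A^+}]_0$ descends to $S^\perp/S$, where it coincides with $[\eta|_{\bd_n(UA,A)}]_0$.

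For an arbitrary modular $R$-base $S_0$ of $L$, both $S = z^n(S_0)^+$ and the canonical $S_c = z^n(UR^q)^+$ lie in the $\lambda$-lagrangian $L$, so their sum $S + S_c \subseteq L \cap A^+$ is isotropic in $(A^+, [\lambda_q]_0)$. Using the bass-module relation \eqref{eq:bdn-projective} together with the Quillen--Suslin--Swan theorem to see that the comparison $\bd_m(S_0, UR^q)$ between the two modular $R$-bases of $L$ is finitely generated free over $R$, one arranges $n$ large enough that $S + S_c$ is a direct summand of $A^+$ and hence a sublagrangian. Applying \ref{thm:LagrangianDirectSummandFactorsOutFromQuadraticSpace} to the sublagrangians $(S+S_c)/S \subseteq S^\perp/S$ and $(S+S_c)/S_c \subseteq S_c^\perp/S_c$ then exhibits both Witt classes as equal to that of $(S+S_c)^\perp/(S+S_c)$, so the Witt class of the induced form on $S^\perp/S$ agrees with the boundary form independently of $S_0$. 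The main obstacle will be precisely this summand-ness of $S + S_c$ in the infinite-rank $A^+$, which rests on a finiteness bound controlling how far two modular $R$-bases of the same lagrangian can drift apart; once that is in hand, everything else reduces to bookkeeping on $z$-exponents.
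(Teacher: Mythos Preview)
Your canonical-case argument is correct and matches the paper's computation specialized to the trivial choice of modular $R$-base; the paper does exactly this, obtaining $S^\perp = S \oplus \bd_n(UA,A)$ and invoking~\ref{thm:FormOnQuotientModule}.

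Where you diverge is the treatment of an arbitrary modular $R$-base~$S_0$. The paper avoids your common-refinement argument entirely by a single observation: any modular $R$-base of $U(\Rz^q \oplus 0)$ has the form $S_0 = U\cx(\alpha)(R^q \oplus 0)$ for some $\alpha \in \gl(q;\Rz)$, because an $R$-basis of $S_0$ is automatically an $\Rz$-basis of the lagrangian, and the change-of-basis matrix from the standard one is the desired~$\alpha$. One then sets $S_0^* = U\cx(\alpha)(0 \oplus R^{q*})$, so that $S_0 \oplus S_0^* = U\cx(\alpha)A$ is a modular $R$-base of the full module~$\Rz^{2q}$. Your canonical-case calculation now applies verbatim with $U$ replaced by $U\cx(\alpha)$, giving $S^\perp/S \cong \bd_n(U\cx(\alpha)A,A)$ with the induced form. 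But $\cx(\alpha)$ is elementary, so~\ref{thm:boundaryXi-from-QCA} already tells you this is Witt equivalent to the boundary form of~$U$.

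This route is shorter and bypasses your acknowledged ``main obstacle'' of verifying that $S + S_c$ is a direct summand of the infinite-rank module~$A^+$. Your approach should also succeed --- the finiteness bound you mention does control the discrepancy and the sublagrangian reduction is sound once summand-ness is established --- but it trades a one-line reduction to an existing proposition for a nontrivial module-theoretic check. The paper's trick, extending the chosen base of the lagrangian to a compatible base of the whole space, is what makes the independence on~$S_0$ essentially free.
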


This is a tailored version of the statement that
the boundary algebra of a general QCA is the commutant of the set~$S$ 
of all separator elements~\cite{nta3} in an infinite half space, modulo~$S$.

\begin{proof}
	$U^{-1} S_0$ is a modular $R$-base of a lagrangian~$\Rz^{q} \oplus 0$.
	Let $\alpha$ be an $\Rz$-automorphism of~$\Rz^{q}$
	that brings~$R^q \oplus 0$ onto~$U^{-1} S_0$.
	So, $S_0 = U(\alpha R^q \oplus 0)$.
	Let $S_0^* = U(0 \oplus \alpha^{-\dag} R^{q*})$ be a modular $R$-base of the lagrangian $U (0 \oplus \Rz^{q*})$
	that is paired with $U(\Rz^q \oplus 0)$.
	Assume that $n$ is so large that $z^n (S_0^*)^+ \subseteq A^+$.
	That is, $S_0 \oplus S_0^* = U \cx(\alpha) A$ is a modular $R$-base of~$\Rz^{q}\oplus \Rz^{q*}$,
	with which the induced form~$[\lambda_q|_{\bd_n(U\cx(\alpha)A,A)}]_0$ is
	Witt equivalent to the boundary form by~\ref{thm:boundaryXi-from-QCA}
	since $\cx(\alpha)$ is elementary.

	Since $\lambda_q$ restricted to~$S_0 \oplus S_0^*$ is trivial,
	the decomposition~$z^n(S_0 \oplus S_0^*)^+ \oplus z^n(S_0 \oplus S_0^*)^-$ is orthogonal with respect to~$[\lambda_q]_0$.
	It follows that $S^\perp = S \oplus \bd_n(U\cx(\alpha)A,A)$.
	Moreover, $S^\perp \oplus z^n (S_0^*)^+ \oplus A^- = \Rz^{q} \oplus \Rz^{q*}$ as $R$-modules,
	implying that $S^\perp$ is projective.
	Hence, \ref{thm:FormOnQuotientModule} is applicable and the induced form on 
	$S^\perp /S \cong \bd_n(U\cx(\alpha)A,A)$
	is Witt equivalent to the boundary form.
\end{proof}

\subsection{Going up from forms to unitaries}

\begin{proposition}\label{thm:form-to-up-unitary}
	Let $[\xi] \in \qwitt^\mp(R)$ (resp. $[\Delta] \in \switt^\mp(R)$).
	Let $T : Q \oplus Q^* \to Q \oplus Q$ be an isomorphism of $R$-modules such that 
	$T^\dagger \diag(\xi,-\xi) T \sim \eta$ (resp. $T^\dagger \diag(\Delta,-\Delta) T = \lambda$).
	Let $U = T^{-1} \diag(zI,I) T$ be an automorphism of~$\bigoplus_j z^j (Q \oplus Q^*)$ over~$\Rz$.
	Then, the map $\bass^\uparrow_{2\iota} : \qwitt^\mp(R) \ni [\xi] \mapsto [U] \in \qumod^\mp(\Rz)$ 
	(resp. $\bass^\uparrow_{2\iota} : \switt^\mp(R) \ni [\Delta] \mapsto [U] \in \umod^\mp(\Rz)$)
	is a well-defined group homomorphism.
\end{proposition}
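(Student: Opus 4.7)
The plan is to verify, in order: (i) $U \in \qhp^\mp(\Rz)$ (resp.\ $U \in \hp^\mp(\Rz)$); (ii) the class $[U]$ is independent of the chosen $T$; (iii) $[U]$ depends only on the Witt class; and (iv) the map respects direct sums. For (i) the pivotal identity is $D^\dagger \diag(\xi, -\xi) D = \diag(\xi, -\xi)$ where $D := \diag(zI, I)$, which holds because $\bar z z = 1$. Using $U = T^{-1} D T$ one obtains $(TU)^\dagger \diag(\xi,-\xi)(TU) = T^\dagger \diag(\xi,-\xi) T$. Substituting the hypothesis $T^\dagger \diag(\xi,-\xi) T = \eta + \theta \pm \theta^\dagger$ and rearranging yields $U^\dagger \eta U - \eta = (\theta - U^\dagger \theta U) \pm (\theta - U^\dagger \theta U)^\dagger$, so $U^\dagger \eta U \sim \eta$. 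The hermitian version is identical with $\sim$ replaced by equality throughout.

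For (ii), given two admissible $T_1, T_2$, set $A := T_1^{-1} T_2$. Then $A^\dagger (T_1^\dagger \diag(\xi,-\xi) T_1) A = T_2^\dagger \diag(\xi,-\xi) T_2$, and both sides are $\sim \eta$ by assumption; subtracting forces $A^\dagger \eta A \sim \eta$, so $A \in \qhp^\mp(R) \subseteq \qhp^\mp(\Rz)$. A direct calculation gives $U_2 = A^{-1} U_1 A$, so $U_2 U_1^{-1}$ is a commutator in $\qhp^\mp(\Rz)$, hence lies in $\qehp^\mp(\Rz)$ by Proposition~\ref{thm:circuit-contains-commutators}. Thus $[U_1] = [U_2]$ in $\qumod^\mp(\Rz)$.

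For (iii), Witt equivalence of quadratic forms is generated by three moves: (a) the equivalence $\xi \to \xi + (\theta \pm \theta^\dagger)$; (b) congruence $\xi \to E^\dagger \xi E$ with $E \in \gl(R)$; and (c) stabilization $\xi \to \xi \oplus \eta_k$ by a trivial form. In case (a) the same $T$ remains admissible, so $U$ is literally unchanged. In case (b) the choice $T' := \diag(E^{-1}, E^{-1}) T$ is admissible for $\xi'$, and since $\diag(E, E)$ commutes with $D$ (as $E$ is $z$-free), one computes $U' = U$ on the nose. For case (c) take $T' := T \oplus T_0$, where $T_0$ is the admissible isomorphism for $\eta_k \oplus (-\eta_k)$ provided by the proof of Proposition~\ref{thm:WittGroupOfForms}, namely $T_0 = \begin{pmatrix} I & \psi \\ I & \pm \psi^\dagger \end{pmatrix}$ with $\psi = \begin{pmatrix} 0 & 0 \\ I & 0 \end{pmatrix}$. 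Modulo a permutation of direct summands (a $\cx$-type elementary unitary), the resulting $U'$ equals $U \hat\oplus U_0$ with $U_0 := T_0^{-1} \diag(zI, I) T_0$, so well-definedness reduces to showing $U_0 \in \qehp^\mp(\Rz)$.

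Step (iv) is then almost formal: for $\xi_1 \oplus \xi_2$ choose $T := T_1 \oplus T_2$ (after the standard reordering of the four summands, again an elementary $\cx$), and read off $U = U_1 \hat\oplus U_2$ modulo that permutation, so $[\xi_1] + [\xi_2] \mapsto [U_1] + [U_2]$ in the abelian group $\qumod^\mp(\Rz)$. The central technical obstacle is the explicit factorization in step (iii)(c): a direct block expansion of $T_0^{-1} \diag(zI, I) T_0$ produces a concrete $4 \times 4$ block matrix over $\Rz$ that must be exhibited as a product of $\cx$, $\qcz$ (or $\cz$), and $\hada_\mp$ generators. Everything else reduces to formal manipulations that exploit the commutativity $[\diag(E, E), D] = 0$ and the abelianity of $\qumod^\mp(\Rz)$.
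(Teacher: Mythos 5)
Your steps (i), (ii), (iii)(a)--(b), and (iv) are correct and essentially retrace the paper's own argument: unitarity of $U$ because $\diag(zI,I)$ preserves $\diag(\xi,-\xi)$ (the paper's ``$\xi$ commutes with $z$''), independence of the choice of $T$ by conjugation together with \ref{thm:circuit-contains-commutators}, congruence absorbed because $\diag(E,E)$ commutes with $\diag(zI,I)$, and additivity via $\hat\oplus$ and \ref{thm:TRCisInvU}. The genuine gap is exactly where you flag it: step (iii)(c). Invariance under stabilization is the one move in the whole proof that is not a formal manipulation --- if the unitary $U_0=T_0^{-1}\diag(zI,I)T_0$ attached to the added trivial form were not elementary, the map would simply fail to be well defined --- and your proposal ends by declaring this verification ``the central technical obstacle'' without carrying it out. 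As submitted, the proof is therefore incomplete at its load-bearing step.

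The missing step is true and short for your choice of $T_0$. With $\psi=\begin{pmatrix}0&0\\ I&0\end{pmatrix}$ a direct computation gives, in upper/lower-half blocks relative to $P=L\oplus L^*$ and $P^*$,
\begin{align}
U_0 \;=\; T_0^{-1}\diag(zI,I)\,T_0
\;=\;
\begin{pmatrix} \alpha & 0 \\ C & \alpha^{-\dag} \end{pmatrix},
\qquad
\alpha=\begin{pmatrix} zI & 0\\ 0 & I\end{pmatrix},
\quad
C=\begin{pmatrix} 0 & (z-1)I\\ \pm(1-z)I & 0\end{pmatrix},
\end{align}
so that $U_0=\cx(\alpha)\,\cz(\alpha^\dag C)=\cx(\alpha)\,\qcz(\mu)$ with $\mu=\begin{pmatrix}0&(1-\tfrac 1 z)I\\ 0&0\end{pmatrix}$, because $\alpha^\dag C=\mu\pm\mu^\dag$ is $\pm$-hermitian; hence $U_0\in\qehp^\mp(\Rz)\subseteq\ehp^\mp(\Rz)$, which covers both the quadratic and the hermitian case (for the latter your $T_0$ does satisfy the exact equality $T_0^\dag\diag(\lambda_k,-\lambda_k)T_0=\lambda$, so it is admissible). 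The paper avoids this computation by a different choice of $T_0$ for the trivial summand --- essentially a permutation of the four summands $L,L^*,L^*,L$ --- for which the extra factor comes out directly as a $\cx(\cdot)$, i.e.\ the stabilized unitary is $U\hat\oplus\cx(zI)$ and is manifestly elementary. Either route closes the gap; your write-up needs one of them spelled out.
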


The notation is that $\bass^\uparrow_0$ is from $\qwitt^+$ 
and $\bass^\uparrow_2$ is from $\qwitt^-$.
This is an instance of Eilenberg's swindle construction.
The quadratic case appears in~\cite[\S 2]{Ranicki2}
and the even hermitian case appears in~\cite[III.\,15]{clifqca1}.

\begin{proof}
	Such $T$ exists due to~\ref{thm:WittGroupOfForms} (resp.~\ref{thm:hermitianMatrices}).
	The automorphism~$U$ belongs to~$\qhp^\mp$ (resp.~$\hp^\mp$) 
	because~$\xi$ (resp.~$\Delta$) commutes with~$z$.
	A different choice of~$T$ amounts to $T \to T V$ for some unitary $V$.
	This changes $U$ to~$V^{-1} U V$ which is equal to~$U$ modulo elementary unitaries
	because the quotient group is abelian by~\ref{thm:circuit-contains-commutators}.

	Adding a trivial form to~$\xi$ does not change the Witt class, 
	and the corresponding~$U$ changes as $U \to U \hat\oplus \cx(zI)$.
	A congruent transformation $\xi \to F^\dag \xi F$ (resp. $\Delta \to F^\dag \Delta F$) 
	amounts to $T \to \diag(F,F)T$,
	but $\diag(F,F)$ commutes with~$\diag(zI,I)$, resulting in no change in~$U$.
	Therefore, $\bass^\uparrow$ is well defined.

	$\bass^\uparrow$ is a group homomorphism since 
	$[U \hat\oplus V] = [(U \hat\oplus V)(V \hat\oplus V^{-1})] = [(UV) \hat\oplus I]$ by~\ref{thm:TRCisInvU}.
\end{proof}

\begin{proposition}\label{thm:fromForm-up-and-down-is-identity}
	Both compositions in the following are the identities.
	\begin{align}
		\qwitt^\mp(R) \xrightarrow{\quad\bass^\uparrow_{2\iota}\quad} \qumod^\mp(\Rz) 
		\xrightarrow{\quad\bass^\downarrow_{2\iota+1}\quad} \qwitt^\mp(R) \\
		\switt^\mp(R) \xrightarrow{\quad\bass^\uparrow_{2\iota}\quad} \umod^\mp(\Rz) 
		\xrightarrow{\quad\bass^\downarrow_{2\iota+1}\quad} \switt^\mp(R)\nonumber
	\end{align} 
	In particular, $\bass^\downarrow_{2\iota+1}$ is surjective for both cases.
\end{proposition}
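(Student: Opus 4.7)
The plan is to compute the boundary form of $U = T^{-1} \diag(zI, I) T$ explicitly at $n = 0$ and recognize it as $\xi$ (respectively $\Delta$). The key structural observation is that although $T \in \gl(2q; R)$ acts bijectively on the set $R^{2q}$ underlying the modular $R$-base $A = R^q \oplus R^{q*}$, the eigendecomposition $Q \oplus Q = (Q \oplus 0) \oplus (0 \oplus Q)$ of $\diag(zI, I)$ on the codomain of $T$ yields under $T^{-1}$ an $R$-module splitting $R^{2q} = P \oplus P'$ with $P = T^{-1}(R^q \oplus 0)$ and $P' = T^{-1}(0 \oplus R^q)$, each free of rank $q$.

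First I would compute $UA$ directly: since $U T^{-1} = T^{-1} \diag(zI, I)$, the automorphism $U$ acts as multiplication by $z$ on the $\Rz$-span of $P$ and as the identity on the $\Rz$-span of $P'$, whence $UA = zP \oplus P'$. Using $R^{2q} = P \oplus P'$, this gives
\begin{align*}
A^+ = \bigoplus_{k \ge 0} z^k P \oplus \bigoplus_{k \ge 0} z^k P', \qquad (UA)^+ = \bigoplus_{j \ge 1} z^j P \oplus \bigoplus_{k \ge 0} z^k P',
\end{align*}
so $(UA)^+ \subseteq A^+$ (confirming the defining condition of $\bd_n$ already at $n=0$) and $A^+ = P \oplus (UA)^+$. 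Hence $\bd_0(UA, A) = P$, free of rank $q$ and canonically identified with $R^q$ via $T^{-1}(x, 0) \mapsto x$.

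Next I would identify the induced form on $P$. For $u = T^{-1}(x, 0)$ and $v = T^{-1}(x', 0)$, using $T^\dag \diag(\xi, -\xi) T = \eta + \theta \pm \theta^\dag$,
\[ \eta(u, v) = \diag(\xi, -\xi)(Tu, Tv) - (\theta \pm \theta^\dag)(u, v) = x^\dag \xi x' - (\theta \pm \theta^\dag)(u, v). \]
Since $x^\dag \xi x' \in R$ is already of $z$-degree zero, taking $[\,\cdot\,]_0$ gives $[\eta|_P]_0(u, v) = x^\dag \xi x' - (\theta'(u, v) \pm \theta'^\dag(u, v))$ with $\theta'(u, v) = [\theta(u, v)]_0$ an $R$-sesquilinear form on $P$. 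Thus $[\eta|_P]_0 \sim \xi$ under the iso $P \cong R^q$, and the boundary Witt class is $[\xi]$. The even hermitian case is strictly parallel and in fact cleaner, since $T^\dag \diag(\Delta, -\Delta) T = \lambda$ is an equality, producing $[\lambda|_P]_0 = \Delta$ with no $\sim$-adjustment.

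This establishes that $\bass^\downarrow_{2\iota+1} \circ \bass^\uparrow_{2\iota}$ is the identity on both $\qwitt^\mp(R)$ and $\switt^\mp(R)$, from which the surjectivity of $\bass^\downarrow_{2\iota+1}$ is immediate. The delicate bookkeeping involves the star conventions on duals (using~\ref{rem:matrixIntoDual} and the identity $[\bar r]_0 = \overline{[r]_0}$) and the passage between equality and $\sim$-equivalence in the quadratic case; neither presents a substantial obstacle.
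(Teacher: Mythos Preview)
Your proof is correct and follows essentially the same approach as the paper: you compute the boundary module at $n=0$ as $T^{-1}(R^q \oplus 0)$ and verify the induced form is $\sim \xi$, which is exactly what the paper does (the paper writes $\bd_0(B,A) = T^{-1}\diag(I,0)TA$ and asserts $[\eta|_{\bd_0(B,A)}]_0 \sim \xi$ without further detail). Your explicit verification of the form equivalence via $T^\dag\diag(\xi,-\xi)T = \eta + \theta \pm \theta^\dag$ simply fills in the step the paper leaves implicit.
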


The quadratic case appears in~\cite[\S 2]{Ranicki2}
and the even hermitian case appears in~\cite[III.\,15]{clifqca1}.

\begin{proof}
	Let $[\xi] \in \qwitt^\mp(R)$ where $\xi$ is on~$Q$.
	Let $A = Q\oplus Q^*$ be a modular $R$-base of the domain~$\bigoplus_j z^j (Q \oplus Q^*)$
	of a unitary~$U = T^{-1}\diag(zI,I)T$.
	The modular $R$-base~$B = U A$ is contained in~$A^+$,
	so $B^+ \subseteq A^+$.
	We have a decomposition
	\begin{align}
		z^k B &= z^{k+1} T^{-1} \diag(I,0) T A \oplus z^{k} T^{-1} \diag(0,I) T A & \text{for any $k \in \ZZ$},\\
		B^- &= T^{-1} \diag(I,0) T A \oplus A^-, \nonumber
	\end{align}
	so, in particular, $\bd_0(B,A) = T^{-1} \diag(I,0) T A$.
	It follows that $[\eta|_{\bd_0(B,A)}]_0 \sim \xi$ is the boundary form,
	completing the proof.
	The even hermitian case is completely analogous.
\end{proof}

\subsection{Kernel of the descent maps from unitary groups}

Let $\epsilon : \Rz \to R$ be the ring homomorphism 
such that $z \mapsto 1$ and $\Rz \supset R \ni r \mapsto r \in R$.
Naturally, we have $\epsilon : \qhp^\mp(\Rz) \to \qhp^\mp(R)$ by replacing every $z$ with $1$.
The map $\epsilon$ has an right inverse $\bar\epsilon$ that embeds $R$ into $\Rz$.
The right inverse $\bar\epsilon$ makes any $R$-module an $\Rz$-module
and induces
$\bar\epsilon: \qhp^\mp(R) \ni U \mapsto U \in \qhp^\mp(\Rz)$ and
$\bar\epsilon: \qumod^\mp(R) \ni [U] \mapsto [U] \in \qumod^\mp(\Rz)$.

\begin{lemma}\label{thm:piMinMmeansR}
	Let $B$ be a modular $R$-base of a finitely generated $\Rz$-module~$M = \bigoplus_j z^j B$.
	Suppose $\projz^{\ge 0}_{B} M \subseteq M$.
	Then $M = \bigoplus_{j=-\infty}^\infty z^j M_0 = \bar\epsilon M_0$
	where $M_0 = \projz^{0}_{B} M$.
\end{lemma}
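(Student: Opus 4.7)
The plan is to observe that this lemma essentially restates the definition of a modular $R$-base in the language of extension of scalars along $\bar\epsilon : R \hookrightarrow \Rz$, so the proof reduces to unwinding the definitions. The only real content is the identification of $\bar\epsilon M_0$ with $\bigoplus_j z^j M_0$.

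First I would verify that the hypothesis $\projz^{\ge 0}_B M \subseteq M$ is automatic from the decomposition $M = \bigoplus_j z^j B$: any $v \in M$ has a unique expansion $v = \sum_j z^j b_j$ with $b_j \in B$ and only finitely many nonzero terms, and by definition $\projz^{\ge 0}_B v = \sum_{j \ge 0} z^j b_j \in B^+ \subseteq M$. The same reasoning shows $\projz^{0}_B v = b_0 \in B$, and conversely every $b \in B$ sits in $M$ as $v = b$ with $b_0 = b$. Hence $M_0 := \projz^0_B M = B$ as $R$-modules.

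Given this identification, the first equality $M = \bigoplus_{j \in \ZZ} z^j M_0$ is just a rewriting of the defining property of $B$ as a modular $R$-base. For the second equality, I would unpack $\bar\epsilon$ on modules as the extension-of-scalars functor along $\bar\epsilon : R \hookrightarrow \Rz$, i.e.\ $\bar\epsilon M_0 = M_0 \otimes_R \Rz$ with the $\Rz$-action on the right tensor factor. Since $R$ is the canonical modular $R$-base of $\Rz$, the ring $\Rz$ is free as an $R$-module on the basis $\{z^j\}_{j \in \ZZ}$, and this induces an $\Rz$-linear isomorphism
\begin{align}
M_0 \otimes_R \Rz \;\xrightarrow{\ \cong\ }\; \bigoplus_{j \in \ZZ} z^j M_0, \qquad b \otimes z^j \mapsto z^j b,
\end{align}
well-definedness and bijectivity of which come directly from the uniqueness of the modular-base expansion.

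Composing with the equality $\bigoplus_j z^j M_0 = \bigoplus_j z^j B = M$ from the previous step yields the desired $\Rz$-linear identification $\bar\epsilon M_0 = M$. There is no substantive obstacle; the proof is pure bookkeeping with the definitions of modular base and of $\bar\epsilon$ as extension of scalars.
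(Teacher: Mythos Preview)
You have misread the lemma, and as a result your proof collapses to a triviality that misses the actual content. The equality $M = \bigoplus_j z^j B$ in the statement is unfortunately phrased; the intended setup (as is clear from the paper's own proof, from the explanatory sentence right after the lemma, and from its use in the proof of~\ref{thm:oddBassExact}) is that $B$ is a modular $R$-base of an ambient finitely generated $\Rz$-module $\bigoplus_j z^j B$, and $M$ is an $\Rz$-\emph{submodule} of it. In the application, $M = H(UA \oplus P \oplus 0)$ is a lagrangian inside the full module and $B = B_0 \oplus P_0 \oplus P_0^*$ is the modular base of that full module, not of $M$. Under this reading the hypothesis $\projz^{\ge 0}_B M \subseteq M$ is a genuine constraint, not automatic, and your assertion $M_0 = B$ is simply false in general.

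The real work is to show that closure of $M$ under the single truncation $\projz^{\ge 0}_B$, together with $M$ being an $\Rz$-submodule, forces closure under every $\projz^k_B$. The paper does this via the identity $z^k \projz^{\ge 0}_B = \projz^{\ge k}_B z^k$: since $z^k M = M$, one gets $\projz^{\ge k}_B M = z^k \projz^{\ge 0}_B M \subseteq z^k M = M$ for all $k$, hence also $\projz^{<k}_B M = (1-\projz^{\ge k}_B)M \subseteq M$, and then $\projz^k_B M \subseteq M$. From $1 = \sum_k \projz^k_B$ one concludes $M = \sum_k z^k (\projz^0_B M) = \sum_k z^k M_0$. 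Your extension-of-scalars identification $\bar\epsilon M_0 \cong \bigoplus_j z^j M_0$ is fine as a final bookkeeping step, but it is not the substance of the lemma.
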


That is, if $B = R^m$ (the standard modular $R$-base of $\Rz^m$), 
then $M$ is generated over~$\Rz$ by elements that do not involve~$z$.

\begin{proof}
	We drop subscript~$B$ in this proof.
	Note that $z^k \projz^{\ge 0} = \projz^{\ge k} z^k$ as $R$-linear maps.
	So, $\projz^{\ge k} M = \projz^{\ge k} z^k M = z^k \projz^{\ge 0} M \subseteq z^k M = M$ for any $k \in \ZZ$.
	Also, $\projz^{< k} M = (1 - \projz^{\ge k})M \subseteq M$.
	Hence, $\projz^k M = \projz^{<k+1} \projz^{\ge k} M \subseteq \projz^{<k+1} M \subseteq M$.
	Since $1 = \sum_k \projz^k$, we conclude that $M = \sum_k \projz^k M = \sum_k z^k \projz^0 M$.
\end{proof}

Our goal here is to prove

\begin{proposition}[\S 2 of \cite{Ranicki2}]\label{thm:oddBassExact}
	The following short sequence is split exact.
	\begin{align}
		0 \xrightarrow{\quad} \qumod^\mp(R) \xrightarrow{\quad \bar\epsilon \quad} \qumod^\mp(\Rz) 
		\xrightarrow{\quad \bass^\downarrow_{2\iota+1} \quad} \qwitt^\mp(R) \xrightarrow{\quad} 0
	\end{align}
\end{proposition}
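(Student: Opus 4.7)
The plan is to verify the four standard ingredients of a split short exact sequence: a section for $\bass^\downarrow_{2\iota+1}$ (which gives both surjectivity and splitting), injectivity of $\bar\epsilon$, the inclusion $\im(\bar\epsilon) \subseteq \ker(\bass^\downarrow_{2\iota+1})$, and the reverse inclusion $\ker(\bass^\downarrow_{2\iota+1}) \subseteq \im(\bar\epsilon)$.

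The first three ingredients are short consequences of earlier results. The section and surjectivity are furnished by Proposition~\ref{thm:fromForm-up-and-down-is-identity}, which exhibits $\bass^\uparrow_{2\iota}$ as a right inverse of $\bass^\downarrow_{2\iota+1}$. For injectivity of $\bar\epsilon$, I would observe that the ring homomorphism $\epsilon: \Rz \to R$ sending $z \mapsto 1$ carries each generator of $\qehp^\mp(\Rz)$ (namely $\cx(\alpha)$, $\qcz(\theta)$, $\hada_\mp$) into a generator of $\qehp^\mp(R)$, hence descends to a group homomorphism $\epsilon_* : \qumod^\mp(\Rz) \to \qumod^\mp(R)$ with $\epsilon_* \circ \bar\epsilon = \mathrm{id}$. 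For the inclusion $\im(\bar\epsilon) \subseteq \ker(\bass^\downarrow_{2\iota+1})$, any $V \in \qhp^\mp(R)$ has entries in $R$ and therefore preserves the modular $R$-base $A = R^q \oplus R^{q*}$, so $\bd_n(VA,A) = \bigoplus_{k=0}^{n-1} z^k A$ with induced form equal to a direct sum of $n$ copies of $\eta_q$, representing the zero class in $\qwitt^\mp(R)$.

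The main content, and the obstacle, is the reverse inclusion $\ker(\bass^\downarrow_{2\iota+1}) \subseteq \im(\bar\epsilon)$. Given $U \in \qhp^\mp(\Rz)$ with vanishing boundary class, I would proceed in three steps. First, reduce to the case where the boundary form is actually trivial, not merely Witt trivial, by stabilizing $U$ with identity factors and enlarging~$n$; this augments the boundary form only by copies of $\eta_q$ and thus preserves the Witt class while eventually producing a genuine lagrangian. Second, by Lemma~\ref{thm:lagrangian-trivialForm} choose complementary lagrangians $L, L' \subset \bd_n(UA,A)$ that give an $R$-module decomposition $A^+ = L \oplus L' \oplus z^n(UA)^+$ with the quadratic form on $L \oplus L'$ in the standard off-diagonal shape. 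Third, convert this data into an elementary unitary $E \in \qehp^\mp(\Rz)$ such that $(EU)A = A$ as an $R$-submodule; then $EU$, being an $\Rz$-linear automorphism that preserves $A$, must have matrix entries in $R$ and so lies in $\qhp^\mp(R)$, giving $[U] = [EU] \in \im(\bar\epsilon)$.

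The third step is the crux. The plan is to translate the question about $[U]$ via the formation-unitary correspondence of Proposition~\ref{thm:formation-unitary} into a question about the lagrangian $U(\Rz^q \oplus 0)$ of the trivial $\eta$-form on $\Rz^{2q}$, and then use $L$ together with $z^n(UA)^+$ to exhibit a modular $R$-base of this lagrangian that lies inside $A$, from which $E$ is read off. The bookkeeping is driven by Lemma~\ref{thm:piMinMmeansR} (which detects when a submodule is $z$-invariant), the projection identity~\eqref{eq:projzDagger}, and the increment formulas~\eqref{eq:bdABoneMore1}--\eqref{eq:bdABoneMore2}. The subtle point is that $E$ must respect the quadratic form $\eta$ rather than merely the associated hermitian form $\lambda = \eta \mp \eta^\dag$, so that the construction lands in $\qehp^\mp$ and not just in $\ehp^\mp$; this is where the quadratic refinement of the lagrangian (the evenness on $L$ and $L'$) is essential and is what keeps the argument inside $\qumod^\mp$.
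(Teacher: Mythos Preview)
Your outline for the first three ingredients (section via $\bass^\uparrow_{2\iota}$, injectivity of $\bar\epsilon$ via $\epsilon$, and $\im\bar\epsilon\subseteq\ker\bass^\downarrow$) is correct and matches the paper verbatim. The overall strategy for the fourth ingredient is also right: one wants an elementary $E$ so that, after applying it, the image lagrangian $U(\Rz^q\oplus 0)$ acquires a $z$-independent modular $R$-base, after which Lemma~\ref{thm:piMinMmeansR} and Proposition~\ref{thm:formation-unitary} (or Lemma~\ref{thm:CZ}) finish. This is exactly the paper's route.

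The gap is that you have not constructed $E$, and the hint ``use $L$ together with $z^n(UA)^+$'' is not enough. There is no reason, working only inside $\Rz^{2q}$, that the lagrangian $U(\Rz^q\oplus 0)$ admits a modular $R$-base inside $A$; the rank of the boundary module $\bd_n(UA,A)$ need not match anything. The paper's missing idea is a second stabilization: one enlarges $Q=\Rz^{2q}$ by $P\oplus P^*$ where $P=\bigoplus_j z^j P_0$ and $P_0\cong\bd_n(UA,A)$ carries the (trivial) boundary form. The lagrangian decomposition $P_0=L_0\oplus L_0^*$ is then used to build explicit $\Rz$-linear maps $\alpha:P\to Q$, $\beta:Q\to P$, $\mu,\zeta,\theta$, from which an elementary $\eta$-unitary $H$ on $Q\oplus P\oplus P^*$ is written down as a product of a $\qcz$-type factor and a specific matrix $J$. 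The verification that $H(UA\oplus P\oplus 0)$ satisfies the projection condition of Lemma~\ref{thm:piMinMmeansR} is a five-case calculation occupying most of the proof. Your awareness that $E$ must preserve $\eta$ and not merely $\lambda$ is on point---this is exactly why $\theta$ is built from $[\eta]_+\mp[\eta]_-^\dag+\psi$ rather than from the symmetrized form---but the construction itself is the substance of the argument and cannot be waved at.
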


In~\cite{clifqca1} we have shown that, for $p$ odd,
the kernel of~$\bass^\downarrow_3$ ``blends'' into shift QCA
($p=2$ is treated there with $\lambda$-unitary groups).
The blending will be discussed at the end of this paper.
This proposition is a stronger result.

\begin{proof}
	The map $\bar\epsilon$ is injective since if $\bar\epsilon(T) = S$ for some $S \in \qehp^\mp(\Rz)$,
	then $T = \epsilon \bar\epsilon T = \epsilon S \in \qehp^\mp(R)$.
	The map~$\bass^\downarrow$ is surjective by~\ref{thm:fromForm-up-and-down-is-identity}.
	If a unitary~$U$ does not involve~$z$, 
	then the boundary module in the construction of $\bass^\downarrow$
	can be taken zero, and the resulting form is trivial.
	So, the composition of the two maps is zero.
	It remains to prove that $\ker \bass^\downarrow \subseteq \im \bar \epsilon$.

	Let $U \in \qhp(q;\Rz)$ be such that $\bass^\downarrow_{2\iota+1}([U]) = 0$.
	We will construct an elementary unitary $H \in \qehp^\mp(\Rz)$ 
	such that $V = H (U \hat\oplus I)$ (where the dimension of~$I$ will be set as we proceed)
	does not have any $z$ variable,
	{\em i.e.}, in fact, $V \in \qhp^\mp(R) \subseteq \qhp^\mp(\Rz)$.

	Let $B_0 = R^q \oplus R^{q*}$ be a modular $R$-base of~$Q = \Rz^{q} \oplus \Rz^{q*}$;
	$Q$ is given a trivial quadratic $\mp$-form~$\eta_q$.
	The unitary~$U$ gives another modular $R$-base~$U B_0$ of~$Q$.
	The boundary form~$\bass^\downarrow(U)$ is constructed by 
	choosing a sufficiently large integer $n \ge 0$ such that 
	\begin{align}
		C &= \bd_n(U B_0, B_0) = (\projz^{<n}_{U B_0} Q) \cap (\projz^{\ge 0}_{B_0} Q)\\
		Q &= (\projz^{~\ge n}_{U B_0}) \oplus C \oplus (\projz^{< 0}_{B_0})\nonumber
	\end{align}
	is defined.
	Let $P_0$ be any $R$-module isomorphic to~$C$ with an isomorphism~$\alpha_0: P_0 \to C$.
	A boundary form~$\phi = \bass^\downarrow(U)$ 
	is a quadratic $\mp$-form on $P_0$ inherited from~$\eta_q$ on~$Q$:
	\begin{align}
		\phi(a, b) = [ \eta_q(\alpha_0(a), \alpha_0(b)) ]_0.
	\end{align}
	Since~$\phi$ is trivial by the assumption that $\bass^\downarrow_{2\iota+1}([U])=0$,
	the module~$P_0$ decomposes as $P_0 = L_0 \oplus L_0^*$
	where $L_0, L_0^*$ are lagrangians.

	We introduce an $\Rz$-module 
	\begin{align}
		P = \bigoplus_{j = -\infty}^{\infty} z^j P_0
	\end{align}
	which has a modular $R$-base $P_0$.
	Naturally, $L_0$ and $L_0^*$ generate $L = \bigoplus_j z^j L_0$ and $L^* = \bigoplus_j z^j L_0^*$,
	respectively.
	We stabilize $Q$ with $P \oplus P^* = L \oplus L^* \oplus L^* \oplus L$ 
	where $P \oplus P^*$ has its own, completely independent from~$\phi$,
	trivial quadratic $\mp$-form
	\begin{align}
		\eta_P =
		\begin{Bmatrix}
		P^* \\ P
		\end{Bmatrix}
		\begin{pmatrix}
		0 & I \\ 0 & 0
		\end{pmatrix}
		\begin{Bmatrix}
		P \\ P^*
		\end{Bmatrix}.
	\end{align}

	Now we construct an elementary $\eta$-unitary~$H$,
	which will involve a number of maps as follows.
	Since~$P_0\xrightarrow{\alpha_0} C$ are isomorphic as $R$-modules,
	we have a natural $\Rz$-linear map~$\alpha : P \to Q$ 
	that extends~$\alpha_0$ by~$\Rz$-linearity.
	\begin{align}
		\alpha &=
		\begin{Bmatrix}
		\projz^{~\ge n}_{U B_0}\\
		C\\
		\projz^{<0}_{B_0}
		\end{Bmatrix}
		\begin{pmatrix}
		\star &&& 0 &&& 0 \\
		\star &&& \alpha_0 &&& \star\\
		0 &&& 0 &&& \star
		\end{pmatrix}
		\begin{Bmatrix}
		\projz^{>0}_{P_0}\\
		P_0\\
		\projz^{<0}_{P_0}
		\end{Bmatrix}
		:
		P \to Q \label{eq:alpha}
	\end{align}
	While we do not know in general the blocks denoted by~$\star$,
	we have determined four zero blocks from the $\Rz$-linearity.
	In the opposite direction, we define
	\begin{align}
		\beta &=
		\begin{Bmatrix}
		\projz^{>0}_{P_0}\\
		P_0\\
		\projz^{<0}_{P_0}
		\end{Bmatrix}
		\begin{pmatrix}
		0 & 0 & 0 \\
		0 & \alpha_0^{-1} & 0\\
		0 & 0 & 0
		\end{pmatrix}
		\begin{Bmatrix}
		\projz^{~\ge n}_{U B_0}\\
		C\\
		\projz^{<0}_{B_0}
		\end{Bmatrix}
		:
		Q \to P\, .
	\end{align}
	In addition, we define various ``$z$-shifting'' maps.
	\begin{align}
		\mu &=
		\begin{Bmatrix}
		L\\
		L^*
		\end{Bmatrix}
		\begin{pmatrix}
		0 & \pm z\\
		\tfrac 1 z & 0
		\end{pmatrix}
		\begin{Bmatrix}
		L^*\\
		L
		\end{Bmatrix} : P^* \to P\, ,
		&
		\zeta &=
		\begin{Bmatrix}
		L\\
		L^*
		\end{Bmatrix}
		\begin{pmatrix}
		I & 0\\
		0 & z
		\end{pmatrix}
		\begin{Bmatrix}
		L\\
		L^*
		\end{Bmatrix} : P \to P\, ,
		\nonumber\\
		\zeta^{-1} &=
		\begin{Bmatrix}
		L\\
		L^*
		\end{Bmatrix}
		\begin{pmatrix}
		I & 0\\
		0 & \tfrac 1 z
		\end{pmatrix}
		\begin{Bmatrix}
		L\\
		L^*
		\end{Bmatrix} : P \to P\, ,
		&
		\zeta^\dag &=
		\begin{Bmatrix}
		L^*\\
		L
		\end{Bmatrix}
		\begin{pmatrix}
		I & 0\\
		0 & \tfrac 1 z
		\end{pmatrix}
		\begin{Bmatrix}
		L^*\\
		L
		\end{Bmatrix} : P^* \to P^*\, ,
		\\
		\mu\zeta^\dag &=
		\begin{Bmatrix}
		L\\
		L^*
		\end{Bmatrix}
		\begin{pmatrix}
		0 & \pm I\\
		\tfrac 1 z & 0
		\end{pmatrix}
		\begin{Bmatrix}
		L^*\\
		L
		\end{Bmatrix}: P^* \to P \, .\nonumber
		\label{eq:eta-zeta}
	\end{align}
	Also, we define $[\eta]_+$, $[\eta]_-$, and $[\eta]_0$, all from $P$ to $P^*$, by
	\begin{align}
		&\alpha^\dag \eta \alpha
		=
		[\eta]_+ + [\eta]_- + \underbrace{[\eta]_0}_{=\bar\epsilon \phi},\\
		&[\eta]_+ P_0 \subseteq \projz^{>0}_{P_0^*} P^*,
		\qquad
		[\eta]_- P_0 \subseteq \projz^{<0}_{P_0^*} P^*,
		\qquad
		[\eta]_0 P_0 \subseteq P_0^*.\nonumber
	\end{align}
	Finally,
	\begin{align}
		\theta &= [\eta]_+ \mp [\eta]_-^\dag + \psi,\\
		\psi &=
		\begin{Bmatrix}
		L^*\\
		L
		\end{Bmatrix}
		\begin{pmatrix}
		0 & I \\
		0 & 0
		\end{pmatrix}
		\begin{Bmatrix}
		L\\
		L^*
		\end{Bmatrix} : P \to P^*,\nonumber
	\end{align}
	where $\psi$ is not necessarily equal but equivalent as a quadratic $\mp$-form to $[\eta]_0$,
	{\it i.e.}, $\psi - [\eta]_0$ is an even hermitian $\pm$-form.
	Hence, $\alpha^\dag \eta \alpha - \theta = [\eta]_- \pm [\eta]_-^\dag + [\eta]_0 - \psi$
	is an even hermitian $\pm$-form.

	Consider
	\begin{align}
		H=
		\begin{Bmatrix}
		Q\\
		P\\
		P^*
		\end{Bmatrix}
		\begin{pmatrix}
		I & 0 & 0\\
		0 & I & \mu\\
		0 & 0 & I
		\end{pmatrix}
		\begin{Bmatrix}
		Q\\
		P\\
		P^*
		\end{Bmatrix}
		\underbrace{
		\begin{pmatrix}
		I &&& -\alpha \zeta &&& 0 \\
		0 &&& I &&& 0\\
		\zeta^\dag \alpha^\dag(\eta \mp \eta^\dag) &&& -\zeta^\dag \theta \zeta &&& I
		\end{pmatrix}
		}_{J}
		\begin{Bmatrix}
		Q\\
		P\\
		P^*
		\end{Bmatrix}.
		\end{align}
		Let us show that this (rather complicated) transformation 
		is a stably elementary $\eta$-unitary on $Q \oplus (P \oplus P^*)$.
		Indeed, the first matrix involving~$\mu$ is nothing but~$\cz(\mu^\dag)^\dag$ 
		with $\mu$ even $\pm$-hermitian.
		For the second matrix~$J$, we see
		\begin{align}
		&\underbrace{
		\begin{pmatrix}
		I & 0 & \mp(\eta \mp \eta^\dag)\alpha\zeta \\
		-\zeta^\dag \alpha^\dag & I & -\zeta^\dag \theta^\dag \zeta\\
		0 & 0 & I
		\end{pmatrix}
		}_{J^\dag}
		\begin{pmatrix}
		\eta & 0 & 0\\
		0 & 0 & I \\
		0 & 0 & 0
		\end{pmatrix}
		\underbrace{
		\begin{pmatrix}
		I & -\alpha \zeta & 0 \\
		0 & I & 0\\
		\zeta^\dag \alpha^\dag(\eta \mp \eta^\dag) & -\zeta^\dag \theta \zeta & I
		\end{pmatrix}
		}_{J}\\
		&\qquad =
		\begin{pmatrix}
		\eta & -\eta\alpha\zeta & 0 \\
		\mp \zeta^\dag \alpha^\dag \eta^\dag & \zeta^\dag(\alpha^\dag \eta \alpha - \theta) \zeta & I\\
		0 & 0 & 0
		\end{pmatrix}\nonumber
	\end{align}
	which is equivalent as a quadratic $\mp$-form to~$\eta \oplus \eta_P$ at the middle,
	proving that~$H$ is an~$\eta$-unitary.
	The matrix~$J$ is stably elementary 
	because it preserves a lagrangian $\{(q,0,t,q) \in Q \oplus P \oplus P^* \oplus Q ~|~ q \in Q, t \in P^*\}$
	of $Q \oplus (P \oplus P^*) \oplus Q$ 
	that is equipped with a trivial quadratic form~$\eta \oplus \eta_P \oplus (-\eta)$;
	it is the sum of two lagrangians $\{(q,0,0,q)\}$ and $\{(0,0,t,0)\}$.
	See~\ref{thm:formation-unitary}.

	Now, we claim that 
	\begin{align}
		\projz^{\ge 0}_{B_0 \oplus P_0 \oplus P_0^*} H(U A \oplus P \oplus 0) \subseteq H(U A \oplus P \oplus 0),
		\quad\text{ where } A = \Rz^q \oplus 0 \subset Q
		\label{eq:pospre}
	\end{align}
	is a trivial lagrangian.
	By~\ref{thm:piMinMmeansR}, we see that
	with respect to an ordered basis of $Q \oplus P \oplus P^*$ 
	such that the basis elements of $A \oplus P \oplus 0$ come first,
	the matrix $H(U \hat\oplus I)$ has the left half of the block not involving~$z$.
	By~\ref{thm:CZ} (see also~\ref{thm:formation-unitary}), 
	$H(U\oplus I)$ is equivalent to a unitary that does not involve~$z$ up to elementary ones,
	so $[H(U \hat\oplus I)] \in \bar\epsilon(\qumod^\mp(R))$,
	completing the proof.

	The claim~\eqref{eq:pospre} is proved by direct calculation.
	Below we will use a projection $\projz^{~\ge 0}_{B_0 \oplus (P_0 \oplus P_0^*)}$.
	It will be convenient to spell out components
	\begin{align}
		H(s \in U A \subset Q) 
		&=
		\begin{pmatrix}
		s \\
		\mu \zeta^\dag \alpha^\dag(\eta \mp \eta^\dag)s\\
		\zeta^\dag \alpha^\dag(\eta \mp \eta^\dag)s
		\end{pmatrix},
		&
		H(t \in P)
		&=
		\begin{pmatrix}
		-\alpha \zeta t\\
		t - \mu\zeta^\dag \theta \zeta t\\
		-\zeta^\dag \theta \zeta t
		\end{pmatrix}.
		\end{align}
		We consider these components sector by sector according to
		\begin{align}
		U A \subset Q  &= \projz^{\ge n}_{U B_0} \oplus C \oplus \projz^{<0}_{B_0} ,\\
		P &= \projz^{\ge 0}_{P_0} \oplus \projz^{<0}_{P_0}.\nonumber
	\end{align}

	\noindent
	(i) Suppose $s \in \projz^{~\ge n}_{U B_0} \subset Q$.
	Obviously, $\projz^{\ge 0} s = s$.
	Since $\eta \mp \eta^\dag$ has $z$-degree zero and preserves~$\projz^{~\ge n}_{U B_0}$,
	we have $(\eta \mp \eta^\dag)s \in \projz^{~\ge n}_{U B_0}$,
	Then, $\alpha^\dag$ maps $(\eta \mp \eta^\dag)s$ into $\projz^{>0}_{P_0}$ as seen by~\eqref{eq:alpha}.
	Since both $\mu \zeta^\dag$ and $\zeta^\dag$ may decrease the $z$-degree by at most one,
	we conclude that
	\begin{align}
		\projz^{\ge 0} H(s) = H(s) \quad\text{ if }\quad s \in \projz^{~\ge n}_{U B_0}.
	\end{align}
	(ii) Suppose $s \in C = \projz^{< n}_{U B_0} \cap \projz^{\ge 0}_{B_0} \subset Q$.
	We have $s = \alpha \beta s$.
	We claim that
	\begin{align}
		\projz^{\ge 0} H(s) = H(-\zeta^{-1}\beta s) 
		=\begin{pmatrix}
		\alpha \beta s\\
		-\zeta^{-1}\beta s + \mu\zeta^\dag \theta \beta s\\
		\zeta^\dag \theta \beta s
		\end{pmatrix}.
	\end{align}
	The first components on both sides read $\projz^{\ge 0} s = s = \alpha \beta s$.
	For the second and third components,
	we observe 
	$\projz^{\ge 0} \alpha^\dag (\eta \mp \eta^\dag) s 
	= \projz^{\ge 0} \alpha^\dag (\eta \mp \eta^\dag) \alpha \beta s
	= \projz^{\ge 0} (\theta \mp \theta^\dag) \beta s$.
	Note that $\projz^{\ge 0} ([\eta]_+^\dag \mp [\eta]_-) \beta s = 0$
	because $\beta s$ has $z$-degree zero
	and is mapped into 
	the strictly negative sector by $[\eta]_+^\dag \mp [\eta]_-$.
	Hence, $\projz^{\ge 0} \alpha^\dag (\eta \mp \eta^\dag) s 
	= 
	\projz^{\ge 0} (\theta \mp \theta^\dag) \beta s 
	=
	\projz^{\ge 0} (\theta \mp \psi^\dag) \beta s$.
	Now, putting $\beta s = \ell + \ell^* \in L_0 \oplus L_0^* = P_0 \subset P$,
	we have
	\begin{align}
		\projz^{\ge 0} 
		\begin{pmatrix} 
		\mu \zeta^\dag \alpha^\dag(\eta \mp \eta^\dag)s\\
		\zeta^\dag \alpha^\dag(\eta \mp \eta^\dag)s
		\end{pmatrix}
		&=
		\projz^{\ge 0}
		\begin{pmatrix} 
		\mu \zeta^\dag \\
		\zeta^\dag 
		\end{pmatrix} \projz^{\ge 0}\alpha^\dag(\eta \mp \eta^\dag)s
		\qquad\text{(since $\mu\zeta^\dag$, $\zeta^\dag$ do not increase $z$-degrees)}\nonumber\\
		&=
		\projz^{\ge 0}
		\begin{pmatrix} 
		\mu \zeta^\dag \\
		\zeta^\dag 
		\end{pmatrix}
		([\eta]_+ \beta s \mp [\eta]_-^\dag \beta s + \ell^* \mp \ell )
		\qquad(\ell^* = \psi\beta s, ~\ell = \psi^\dag \beta s)\nonumber\\
		&=
		\projz^{\ge 0}
		\begin{pmatrix} 
		\mu \zeta^\dag ([\eta]_+ \mp [\eta]_-^\dag) \beta s + \tfrac 1 z \ell^* - \ell \\
		\zeta^\dag ([\eta]_+ \mp [\eta]_-^\dag) \beta s + \ell^* \mp \tfrac 1 z \ell
		\end{pmatrix}\\
		&=
		\begin{pmatrix} 
		\mu \zeta^\dag ([\eta]_+ \mp [\eta]_-^\dag) \beta s - \ell \\
		\zeta^\dag ([\eta]_+ \mp [\eta]_-^\dag) \beta s + \ell^* 
		\end{pmatrix}\nonumber\\
		&=
		\begin{pmatrix} 
		\mu \zeta^\dag ([\eta]_+ \mp [\eta]_-^\dag + \psi) \beta s\\
		\zeta^\dag ([\eta]_+ \mp [\eta]_-^\dag + \psi) \beta s
		\end{pmatrix}
		-
		\begin{pmatrix}
		\ell + \tfrac 1 z \ell^*\\
		0
		\end{pmatrix}\nonumber\\
		&=\begin{pmatrix}
		\mu\zeta^\dag\theta\beta s - \zeta^{-1}\beta s\\
		\zeta^\dag \theta \beta s
		\end{pmatrix}.\nonumber
	\end{align}
	(iii) Suppose $s \in \projz^{< 0}_{B_0} \subset Q$.
	The map $\eta \mp \eta^\dag$ preserves $\projz^{<0}_{B_0}$
	that is taken by $\alpha^\dag$ to $\projz^{<0}_{P_0} \subset P$.
	Then, both $\mu\zeta^\dag$ and $\zeta^\dag$ can only decrease the $z$-degree.
	Therefore,
	\begin{align}
		\projz^{\ge 0} H(s) = 0 \text{ if } s \in \projz^{< 0}_{B_0}.
	\end{align}
	(iv) Suppose $t \in \projz^{\ge 0}_{P_0} \subset P$.
	The map $\zeta$ can only increase the $z$-degree
	and $\alpha$ maps the nonnegative sector into the nonnegative sector.
	In addition,
	\begin{align}
		\mu \zeta^\dag \psi \zeta &= 
		\begin{Bmatrix}
		L \\ L^*
		\end{Bmatrix}
		\begin{pmatrix}
		0 & 0 \\ 0 & I
		\end{pmatrix}
		\begin{Bmatrix}
		L \\ L^*
		\end{Bmatrix},
		&
		\zeta^\dag \psi \zeta &= 
		\begin{Bmatrix}
		L^* \\ L
		\end{Bmatrix}
		\begin{pmatrix}
		0 & z \\ 0 & 0
		\end{pmatrix}
		\begin{Bmatrix}
		L \\ L^*
		\end{Bmatrix}
	\end{align}
	are $z$-degree nondecreasing.
	Therefore,
	\begin{align}
		\projz^{\ge 0} H(t) = H(t) \text{ if } t \in \projz^{\ge 0}_{P_0}.
	\end{align}
	(v) Suppose $t \in \projz^{<0}_{P_0} \subset P$.
	We claim that
	\begin{align}
		\projz^{\ge 0} H(t) = 
		- \begin{pmatrix}
		\projz^{\ge 0} \alpha \zeta t\\
		\projz^{\ge 0} \mu\zeta^\dag\theta\zeta t\\
		\projz^{\ge 0} \zeta^\dag\theta\zeta t
		\end{pmatrix}
		=
		-\begin{pmatrix}
		\alpha \beta \alpha \zeta t\\
		\mu \zeta^\dag\theta\beta\alpha\zeta t - \zeta^{-1} \beta \alpha \zeta t\\
		\zeta^\dag \theta \beta \alpha \zeta t
		\end{pmatrix}
		=H( \zeta^{-1}\beta \alpha \zeta t ).\label{eq:tneg}
	\end{align}
	Since $\zeta t \in \projz^{\le 0}$, 
	we must have $\alpha \zeta t = c + b \in C \oplus \projz^{<0} \subset Q$.
	Projecting, we see $\projz^{\ge 0} \alpha \zeta t = c$,
	but $c = \alpha \beta (c + b) = \alpha \beta \alpha \zeta t$.
	This confirms the claim for the first component.
	To examine the second and third components,
	we observe that 
	\begin{align}
		0 \to \ker\left( \beta \alpha |_{\projz^{\le 0}_{P_0}} \right) 
		\xrightarrow{\qquad}
		\projz^{\le 0}_{P_0}
		\xrightarrow{\quad \beta\alpha \quad}
		P_0
		\to
		0
	\end{align}
	is split exact since $\beta \alpha|_{P_0} = I$.
	So, it suffices to assume that either~$\zeta t \in P_0$ or~$\zeta t \in \ker (\beta\alpha)$.
	(v.i)~Suppose $p = \zeta t = \beta \alpha \zeta t \in P_0 \subset P$.
	Since $t$ is strictly negative in $z$-degree,
	for $p=\zeta t$ to have $z$-degree zero we must have $p \in L^* \subset P_0 \subset P$.
	\begin{align}
		\projz^{\ge 0}
		\begin{pmatrix}
		\mu\zeta^\dag\theta p\\
		\zeta^\dag \theta p
		\end{pmatrix}
		&=
		\projz^{\ge 0}
		\begin{pmatrix}
		\mu\zeta^\dag([\eta]_+ \mp [\eta]_-^\dag)p + \underbrace{\mu \zeta^\dag \psi p}_{=\tfrac 1 z p = \zeta^{-1} p}\\
		\zeta^\dag ([\eta]_+ \mp [\eta]_-^\dag)p + \underbrace{\zeta^\dag \psi p}_{= p}
		\end{pmatrix}\\ \nonumber
		&=
		\begin{pmatrix}
		\mu\zeta^\dag([\eta]_+ \mp [\eta]_-^\dag)p\\
		\zeta^\dag ([\eta]_+ \mp [\eta]_-^\dag +\psi )p
		\end{pmatrix}
		=
		\begin{pmatrix}
		\mu\zeta^\dag\theta p - \zeta^{-1} p\\
		\zeta^\dag \theta p
		\end{pmatrix}
	\end{align}
	which is the claim.
	(v.ii)~Suppose $\beta \alpha \zeta t = 0$,
	so $\alpha \zeta t \in \projz^{<0}_{B_0}$.
	Since $\mu \zeta^\dag$ and $\zeta^\dag$ is $z$-degree nonincreasing,
	we have $\projz^{\ge 0} \mu\zeta^\dag = \projz^{\ge 0} \mu\zeta^\dag \projz^{\ge 0}$ and
	$\projz^{\ge 0} \zeta^\dag = \projz^{\ge 0}\zeta^\dag \projz^{\ge 0}$.
	So, we examine $\projz^{\ge 0} \theta \zeta t$.
	Since~$\psi^\dag \zeta$ preserves $z$-degree,
	we see that $\projz^{\ge 0} \psi^\dag \zeta t = 0$
	and $\projz^{\ge 0} \theta^\dagger \zeta t = 0$.
	Hence, $\projz^{\ge 0} \theta \zeta t = \projz^{\ge 0} (\theta \mp \theta^\dag) \zeta t =
	\projz^{\ge 0} \alpha^\dag(\eta \mp \eta^\dag) \alpha \zeta t$.
	Since $\eta \mp \eta^\dag$ preserves $\projz^{<0}_{B_0}$,
	we see $(\eta \mp \eta^\dag) \alpha \zeta t \in \projz^{<0}$,
	which is mapped by $\alpha^\dag$ to $\projz^{<0}_{P_0} \subset P$ as seen by~\eqref{eq:alpha}.
	Therefore, $\projz^{\ge 0} \theta \zeta t = 0$.
	This completes the proof of~\eqref{eq:tneg} and hence~\eqref{eq:pospre}.
\end{proof}

\subsection{Going down from forms to unitaries}

Let $\Delta$ be a nonsingular even hermitian $\mp$-form over~$\Rz$ of dimension~$t$.
Let $B$ be a modular $R$-base of~$\Rz^t$.
In this subsection, all $z$-degree projections are with respect to $B\oplus B^*$ 
unless specified otherwise.

Let $n\ge 0$ be so large an integer that
\begin{align}
	\Delta B &\subseteq \bigoplus_{k=-n}^n z^k B^*, &
	\Delta^{-1} B^* &\subseteq \bigoplus_{k=-n}^n z^k B.
\end{align}
We thus obtain finitely generated free $R$-modules (see~\ref{sec:modularBase}),
\begin{align}
	\bd_n(\Delta^{-1}B^*, B) &= z^n \Delta^{-1} (B^*)^- \cap B^+ \subseteq \bigoplus_{k=0}^{2n-1} z^k B, \label{eq:bdnABBA}\\
	\bd_n(B, \Delta^{-1}B^*) &= z^n B^- \cap \Delta^{-1} (B^*)^+ \subseteq \bigoplus_{k=-n}^{n-1} z^k B. \nonumber
\end{align}
Following~\cite[\S3]{Ranicki2}, we define~$L(\Delta,B,n)$ and~$L^*(\Delta,B,n)$ by
\begin{align}
\left( \underbrace{\bigoplus_{k=0}^{n-1} z^k B}_M \right)\oplus \left( \underbrace{\bigoplus_{k=0}^{n-1} z^k B^*}_{M^*} \right)
\supseteq
\begin{cases}
	L(\Delta,B,n) &= \left\{ \begin{pmatrix}
	 \projz^{<n} u\\
	 \projz^{\ge 0}\Delta u
	\end{pmatrix}
	\middle|~ u \in \bd_n(\Delta^{-1}B^*,B)\right\},
	\\
	 L^*(\Delta,B,n) &= 
	\left\{ \begin{pmatrix}
	 -\projz^{\ge 0} g \\
	 \projz^{\ge 0} \Delta \projz^{<0} g
	\end{pmatrix} \middle|~ 
	g \in \bd_n(B,\Delta^{-1}B^*) \right\}.
\end{cases} 	 \label{eq:Ldef}
\end{align}
Due to the $z$-exponent bound in~\eqref{eq:bdnABBA}, the containment of~\eqref{eq:Ldef} holds.
We are going to show that $L(\Delta,B,n)$ is a lagrangian of some trivial quadratic $\pm$-form,
which defines a unitary for that form.
This is the descent map we are going to define 
from even hermitian $\mp$-forms down to $\eta^\pm$-unitaries.

\begin{lemma}\label{thm:Lcomplement}
	There are $R$-module isomorphisms 
	$L(\Delta,B,n) \cong \bd_n(\Delta^{-1}B^*,B)$ and $L^*(\Delta,B,n) \cong \bd_n(B,\Delta^{-1}B^*)$.
	In addition, $L(\Delta,B,n) \oplus L^*(\Delta,B,n) = M \oplus M^*$.
\end{lemma}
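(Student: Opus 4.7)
The plan is to establish the two $R$-module isomorphisms first, then verify the direct-sum decomposition $L \oplus L^* = M \oplus M^*$.

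For the isomorphisms, I first check that both maps $u \mapsto (\projz^{<n} u, \projz^{\ge 0}\Delta u)$ and $g \mapsto (-\projz^{\ge 0} g, \projz^{\ge 0}\Delta\projz^{<0} g)$ land in $M \oplus M^*$; this is pure $z$-exponent bookkeeping, since the definition of $\bd_n(\Delta^{-1}B^*, B)$ forces $u \in \bigoplus_{k=0}^{2n-1} z^k B$ and $\Delta u \in \bigoplus_{k=-n}^{n-1} z^k B^*$, and analogously $g \in \bigoplus_{k=-n}^{n-1} z^k B$, $\Delta g \in \bigoplus_{k=0}^{2n-1} z^k B^*$. Injectivity follows directly: $\projz^{<n} u$ recovers the low-exponent part of $u$, and since $\projz^{\ge n}\Delta u = 0$ the remaining $\projz^{\ge n} u$ is recovered by applying $\Delta^{-1}$ to $\projz^{\ge 0}\Delta u - \projz^{\ge 0}\Delta\projz^{<n} u$. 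An analogous recipe, now exploiting $\projz^{<0}\Delta g = 0$, inverts the $L^*$ map, so the two maps are isomorphisms onto their images $L$ and $L^*$ in $M \oplus M^*$.

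For the direct-sum claim, the crucial tool is \eqref{eq:bass-Sum} applied with source $B$, intermediate base $\Delta^{-1}B^*$, and target $B$:
\begin{align*}
\bd_{2n}(B, B) \;=\; \bigoplus_{k=0}^{2n-1} z^k B \;=\; \bd_n(\Delta^{-1}B^*, B) \,\oplus\, z^n \bd_n(B, \Delta^{-1}B^*),
\end{align*}
which gives a unique decomposition of any $v \in \bd_{2n}(B,B) = M \oplus z^n M$ as $v = u + z^n g$ with $u$ and $g$ in the respective boundary modules. I would then construct an explicit inverse to the combined map
\begin{align*}
\Psi \colon (u, g) \mapsto (\projz^{<n} u - \projz^{\ge 0} g,\; \projz^{\ge 0}\Delta u + \projz^{\ge 0}\Delta\projz^{<0} g)
\end{align*}
by producing an $R$-linear map $\Theta \colon M \oplus M^* \to M \oplus z^n M$ from a carefully chosen $\Delta^{\pm 1}$-twist of $(x, y)$ and reading $(u, g)$ off the $\bd_{2n}$-splitting of $\Theta(x,y)$.

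I expect the main obstacle to be the surjectivity of $\Psi$. A naive attempt such as $u = x + \projz^{\ge n}\Delta^{-1} y$ fails because $\Delta x$ generically has a nonzero $\projz^{\ge n}$ component that pushes $\Delta u$ outside the range required by $u \in \bd_n(\Delta^{-1}B^*, B)$; overcoming this requires a correction along the complementary lagrangian direction, which is precisely what the $\bd_{2n}$-splitting handles once $\Theta$ is set up correctly. Injectivity is more immediate: if $\Psi(u, g) = 0$, the first equation forces $\projz^{<n} u = \projz^{\ge 0} g$, both lying in $M$, and combined with the exponent bounds on $\Delta u, \Delta g$ and the second equation, manipulating $\Delta$ and $\Delta^{-1}$ yields $u = 0 = g$. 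As a consistency check, both sides of $\Psi$ are finitely generated free $R$-modules of rank $2nt$ by \eqref{eq:bdn-projective}, though over $R$ this rank equality alone does not yield bijectivity.
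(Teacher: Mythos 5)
Your first paragraph is fine: each of the two maps $u \mapsto (\projz^{<n}u,\,\projz^{\ge 0}\Delta u)$ and $g \mapsto (-\projz^{\ge 0}g,\,\projz^{\ge 0}\Delta\projz^{<0}g)$ is injective (your recovery of $\projz^{\ge n}u$ works because $z^n\Delta B^+ \subseteq (B^*)^+$), so the first assertion of the lemma holds, and your injectivity sketch for the combined map $\Psi$ is essentially the right argument (write $u = x+m$, $g = m+y$ and use that $\Delta x$ and $\Delta g$ lie in $(B^*)^+$ while $\projz^{\ge 0}\Delta(x+m+y)=0$ forces $x+m+y=0$). Your splitting $\bigoplus_{k=0}^{2n-1}z^kB = \bd_n(\Delta^{-1}B^*,B)\oplus z^n\bd_n(B,\Delta^{-1}B^*)$ is also a correct instance of \eqref{eq:bass-Sum}.

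The genuine gap is exactly where you locate it: surjectivity of $\Psi$, i.e.\ $L + L^* = M \oplus M^*$, which is the substantive content of the lemma. You never define the map $\Theta$; saying it should be ``a carefully chosen $\Delta^{\pm 1}$-twist of $(x,y)$'' whose $\bd_{2n}$-splitting yields $(u,g)$ is a plan, not a proof, and your own observation that the naive candidate $u = x + \projz^{\ge n}\Delta^{-1}y$ fails shows the step is nontrivial rather than routine. Moreover, it is not evident that reading $(u,g)$ off the splitting of any $v=\Theta(x,y)$ recovers $(x,y)$ under $\Psi$: the splitting of $v$ gives you the combinations $\projz^{<n}u + z^n\projz^{<0}g$ and $\projz^{\ge n}u + z^n\projz^{\ge 0}g$, which are not the combinations $\Psi$ outputs, so an additional identity is needed and is not supplied. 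The paper closes this step by writing down an explicit right inverse
\begin{align}
G=
\begin{pmatrix}
\projz^{\ge 0} \Delta^{-1} \projz^{<0} \Delta & \projz^{\ge 0} \Delta^{-1} \\
 - \Delta^{-1} \projz^{\ge 0} \Delta &  \projz^{<n} \Delta^{-1}
\end{pmatrix}
: M \oplus M^* \longrightarrow \bd_n(\Delta^{-1}B^*,B) \oplus \bd_n(B,\Delta^{-1}B^*),
\end{align}
and then verifying two nontrivial facts: that each entry really lands in the claimed boundary module (e.g.\ $\projz^{\ge 0}\Delta^{-1}\projz^{<0}\Delta\, m \in z^n\Delta^{-1}(B^*)^-\cap B^+$ for $m \in M$), and that $F\circ G$ acts as the identity on $M\oplus M^*$ (a projector computation using, e.g., $\projz^{<n}\Delta^{-1}\projz^{<0}=\Delta^{-1}\projz^{<0}$ on $M^*$). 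Until you either produce such an explicit $\Theta$ (or $G$) and carry out the analogous verifications, the decomposition $L\oplus L^* = M\oplus M^*$ is unproved; as you correctly note, the rank count alone cannot substitute for it.
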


This appears in the proof of~\cite[Lem. 3.1]{Ranicki2}.

\begin{proof}
	Define a map of $R$-modules
	\begin{align}
		F&=
		\begin{Bmatrix}
		M \\ M^*
		\end{Bmatrix}
		\begin{pmatrix}
		\projz^{<n} &&& - \projz^{\ge 0} \\ 
		\projz^{\ge 0} \Delta &&& \projz^{\ge 0} \Delta \projz^{<0}
		\end{pmatrix} 
		\begin{Bmatrix}
		\bd_n(\Delta^{-1}B^*,B) \\ \bd_n(B,\Delta^{-1}B^*) 
		\end{Bmatrix}.
		\label{eq:pixi}
	\end{align}
	Here, the domain of~$F$ is the (external) direct sum of two $R$-modules displayed in the right curly braces.
	The image of~$F$ is precisely $L(\Delta,n) + L^*(\Delta,n)$,
	where $L(\Delta,n)$ is the image of~$\bd_n(\Delta^{-1}B^*,B)$
	and $L^*(\Delta,n)$ is the image of~$\bd_n(B,\Delta^{-1}B^*)$.
	We will show that $F$ is an $R$-module isomorphism,
	which will in turn shows the claimed isomorphisms.
	Then, since the domain of~$F$ is a direct sum of two modules,
	their images cannot intersect along a nonzero submodule,
	which implies that the sum~$L(\Delta,n) + L^*(\Delta,n)$ is direct.
	This will complete the proof.

	To show that $F$ is injective,
	let $u = x + m \in \bd_n(\Delta^{-1}B^*,B)$ 
	with~$x = \projz^{\ge n} u \in z^n B^+$ and~$m = \projz^{< n} u \in M$,
	and $g = m' + y \in \bd_n(B,\Delta^{-1}B^*)$ 
	with~$m' = \projz^{\ge 0} g \in M$ and~$y = \projz^{< 0} g \in B^-$.
	If~\mbox{$F( u, g ) = 0$},
	then $m=m'$ by the first row of~$F$ and $\projz^{\ge 0} \Delta( x + m + y ) = 0$,
	implying $\Delta(x+m+y) \in (B^*)^-$.
	But $\Delta x \in z^n \Delta B^+ \subseteq (B^*)^+$ and $\Delta(m+y) = \Delta g \in (B^*)^+$,
	so $\Delta(x+m+y) \in (B^*)^+ \cap (B^*)^-$, or~$x+m+y = 0$.
	Since $x,m,y$ have different $z$-exponents, we must have $x = m = y = 0$.

	To show that $F$ is surjective, we are going to show that 
	\begin{align}
		G&=
		\begin{Bmatrix}
		\bigoplus_k z^k B \\
		\bigoplus_k z^k B
		\end{Bmatrix}
		\begin{pmatrix}
		\projz^{\ge 0} \Delta^{-1} \projz^{<0} \Delta &&& \projz^{\ge 0} \Delta^{-1} \\
		- \Delta^{-1} \projz^{\ge 0} \Delta &&&  \projz^{<n} \Delta^{-1}
		\end{pmatrix}
		\begin{Bmatrix}
		M \\ M^*
		\end{Bmatrix}\label{eq:matrixG}
		\end{align}
		is a right inverse of~$F$.
		The image of~$G$ lies in $\bd_n(\Delta^{-1}B^*,B) \oplus \bd_n(B,\Delta^{-1}B^*)$:
		for any~$m\in M$ and $m' \in M^*$,
		\begin{align}
		z^n \Delta^{-1} (B^*)^- \ni 
		\Delta^{-1}( \Delta I \Delta^{-1}\projz^{<0} \Delta m - \Delta \projz^{<0} \Delta^{-1} \projz^{<0} \Delta m ) 
		&= 
		\underline{ \projz^{\ge 0}\Delta^{-1} \projz^{<0} \Delta} m 
		\in B^+ \nonumber\\
		z^n \Delta^{-1} (B^*)^- \ni
		\Delta^{-1}(\Delta I \Delta^{-1}m' - \Delta \projz^{<0} \Delta^{-1}m')
		&=
		\underline{\projz^{\ge 0} \Delta^{-1}} m' 
		\in B^+\\
		z^n B^- \ni
		\Delta^{-1} I \Delta m - \Delta^{-1} \projz^{<0} \Delta m  
		&= 
		\underline{\Delta^{-1} \projz^{\ge 0} \Delta} m 
		\in \Delta^{-1} (B^*)^+\nonumber\\
		\Delta^{-1} (B^*)^+
		\ni
		\Delta^{-1}(\Delta I \Delta^{-1} m' - \Delta \projz^{\ge n} \Delta^{-1} m')
		&=
		\underline{\projz^{<n} \Delta^{-1}} m'  
		\in z^n B^- \nonumber
	\end{align}
	where the underlined operators appear verbatim in~\eqref{eq:matrixG}.
	The following calculation shows that $G$ is a right inverse of~$F$.
	Let $\nu = \projz^{\ge 0}$.
	\begin{align}
		&F\circ G \nonumber \\
		&= 
		\begin{pmatrix}
		\projz^{\ge 0}\projz^{<n} \Delta^{-1} \projz^{< 0} \Delta + \projz^{\ge 0} \Delta^{-1} \projz^{\ge 0} \Delta &&&
		\projz^{\ge 0}\projz^{<n} \Delta^{-1} - \projz^{\ge 0}\projz^{<n} \Delta^{-1} \\
		\projz^{\ge 0} \Delta \projz^{\ge 0} \Delta^{-1} \projz^{<0} \Delta - \projz^{\ge 0} \Delta \projz^{<0} \Delta^{-1} \projz^{\ge 0} \Delta &&&
		\projz^{\ge 0} \Delta \projz^{\ge 0} \Delta^{-1} + \projz^{\ge 0} \Delta \projz^{<0}\Delta^{-1}
		\end{pmatrix}\nonumber\\
		&=
		\begin{pmatrix}
		\nu \Delta^{-1} (1-\nu) \Delta + \nu \Delta^{-1} \nu \Delta &&&
		0 \\
		\nu \Delta \nu \Delta^{-1} (1-\nu) \Delta - \nu \Delta (1-\nu) \Delta^{-1} \nu \Delta &&&
		\nu \Delta \nu \Delta^{-1} + \nu \Delta (1-\nu)\Delta^{-1}
		\end{pmatrix} \\
		&=\begin{pmatrix}
		\nu &&& 0 \\ \nu \Delta \nu - \nu \Delta &&& \nu
		\end{pmatrix}\nonumber
	\end{align}
	where in the top left entry, we used~$\projz^{<n}\Delta^{-1}\projz^{<0} = \Delta^{-1}\projz^{<0}$.
	Acting on~$M \oplus M^*$, the projection~$\nu$ does nothing.
	So, the diagonal terms act by identities, and the offdiagonal term is zero.
	Therefore, $F$ is an isomorphism.
\end{proof}

Let $\eta_t$ be a trivial quadratic $\pm$-form 
on~$\bigoplus_j z^j (B \oplus B^*)$.
Note the sign here; $\Delta$ is a $\mp$-form.
By~\ref{thm:boundaryXi-nonsingular},
$M \oplus M^*$ is equipped with a nonsingular quadratic $\pm$-form~$[\eta_t|_{M\oplus M^*}]_0$,
which is trivial.

\begin{lemma}\label{thm:etaOnLLstar}
	With respect to the nonsingular quadratic $\pm$-form~$[\eta_t|_{M\oplus M^*}]_0$ on~$M \oplus M^*$, 
	$R$-modules~$L(\Delta,B,n)$ and $L^*(\Delta,B,n)$ are a nonsingular pair of lagrangians.
	Therefore, there exists $U \in \qhp^\pm(tn;R)$ such that $U(M \oplus 0) = L(\Delta,B,n)$.
\end{lemma}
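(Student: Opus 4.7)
The plan is to verify that both $L = L(\Delta,B,n)$ and $L^* = L^*(\Delta,B,n)$ are sublagrangians of the trivial $\pm$-quadratic form $[\eta_t|_{M\oplus M^*}]_0$, then deduce the lagrangian condition ($L^\perp = L$ and $L^{*\perp} = L^*$) from an abstract nondegeneracy argument, and finally invoke Proposition~\ref{thm:formation-unitary} to produce the desired unitary~$U$. That $L \oplus L^* = M \oplus M^*$ as $R$-modules is already established in Lemma~\ref{thm:Lcomplement}, so both are direct summands.

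The central computation, which I would carry out first, is the identity
\begin{align*}
	[\eta_t(a,a')]_0 = \mp [\Delta(u,u')]_0
\end{align*}
for $u, u' \in \bd_n(\Delta^{-1}B^*,B)$ with associated elements $a = (\projz^{<n}u,\projz^{\ge 0}\Delta u)$ and $a' = (\projz^{<n}u',\projz^{\ge 0}\Delta u')$ in $L$. The starting point is $\eta_t(a,a') = (\projz^{\ge 0}\Delta u)^\dag(\projz^{<n}u') \in \Rz$. Expanding in $z$-degrees, the constraints $u \in B^+$ and $\Delta u \in \bigoplus_{k<n}z^k B^*$ (and analogously for $u'$) force the would-be cross terms to sit in $z$-degrees strictly away from zero, so they drop under $[\cdot]_0$; the surviving terms assemble into $\mp [u^\dag \Delta u']_0$ after using $\Delta_l^\dag = \mp \Delta_{-l}$ (which follows from $\Delta^\dag = \mp \Delta$).

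With this formula the sublagrangian properties for $L$ drop out directly. The associated hermitian form vanishes since $\overline{\Delta(u',u)} = \mp \Delta(u,u')$, giving
\begin{align*}
	[\lambda^\pm_t(a,a')]_0 = [\eta_t(a,a')]_0 \pm \overline{[\eta_t(a',a)]_0} = \mp [\Delta(u,u')]_0 \pm [\Delta(u,u')]_0 = 0.
\end{align*}
The evenness condition, by Lemma~\ref{thm:quadraticFormByValues} applied to the $\pm$-quadratic form $[\eta_t|_{M\oplus M^*}]_0$, is $[\eta_t(a,a)]_0 \in R_\mp$; this holds because $\Delta$ is even, $\Delta = \xi \mp \xi^\dag$, so $\Delta(u,u) = \xi(u,u) \mp \overline{\xi(u,u)} \in \{r \mp \bar r : r \in \Rz\}$, and applying $[\cdot]_0$ preserves that membership in $R_\mp$. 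I would then run the parallel calculation for $L^*$ using elements of $\bd_n(B,\Delta^{-1}B^*)$; the bookkeeping is entirely analogous but invokes the additional identity $\projz^{<0}\Delta g = 0$, coming from $g \in \Delta^{-1}(B^*)^+$, to cancel the extraneous terms.

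Finally, given that $L$ and $L^*$ are complementary sublagrangians of $M \oplus M^*$, the lagrangian condition $L^\perp = L$ follows from a short nondegeneracy argument: any $x \in L^\perp$ decomposes uniquely as $x = \ell + \ell^*$, forcing $\lambda^\pm_t(\ell^*, L) = 0$; but self-orthogonality of $L^*$ also gives $\lambda^\pm_t(\ell^*, L^*) = 0$, so $\ell^*$ pairs trivially with all of $M \oplus M^*$ and vanishes. Since the matrix of $[\eta_t|_{M\oplus M^*}]_0$ in the natural $R$-basis of $M \oplus M^*$ is exactly $\begin{pmatrix}0 & I \\ 0 & 0\end{pmatrix}$, Proposition~\ref{thm:formation-unitary} produces $U \in \qhp^\pm(tn;R)$ with $U(M \oplus 0) = L$, unique up to $\qehp^\pm$. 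The main obstacle is the $z$-degree bookkeeping underlying the key identity $[\eta_t(a,a')]_0 = \mp[\Delta(u,u')]_0$ and the consistent tracking of signs through the involution; the parallel computation for $L^*$ is the most delicate piece, because the cancellation requires using the defining condition of $\bd_n(B,\Delta^{-1}B^*)$ to eliminate contributions that do not manifestly vanish from $z$-degree bounds alone.
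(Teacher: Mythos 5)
Your proposal is correct and takes essentially the same route as the paper: use \ref{thm:Lcomplement} for complementarity, identify the restriction of $[\eta_t|_{M\oplus M^*}]_0$ to $L$ (and, analogously with $\projz^{<0}\Delta g = 0$, to $L^*$) with the degree-zero part of a form built from $\Delta$ via the $z$-degree bookkeeping and the adjointness of the projections, conclude both are sublagrangians, upgrade to lagrangians by nondegeneracy against the complementary self-orthogonal summand, and invoke \ref{thm:formation-unitary}. The only discrepancy, the sign in your key identity (the paper obtains $[\eta_t(L(u),L(u'))]_0 = [\Delta(u,u')]_0$ with no $\mp$), is a matrix-of-map versus form-value convention issue (cf.\ \ref{rem:matrixIntoDual}) and affects neither the evenness nor the self-orthogonality arguments.
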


\begin{proof}
	By~\ref{thm:Lcomplement}, the $R$-submodule $L = L(\Delta,B,n)$ is a direct summand of~$M \oplus M^*$
	with a direct complement $L' = L^*(\Delta,B,n)$.
	We write a general element of~$L$ as $L(u) = (\projz^{< n} u, \projz^{\ge 0}\Delta u)$
	where~$u \in \bd_n(\Delta^{-1}B^*,B)$,
	and likewise a general element of $L'$ 
	as $L'(g) = (-\projz^{\ge 0} g, \projz^{\ge 0} \Delta \projz^{<0} g)$ 
	where~$g \in \bd_n(B,\Delta^{-1}B^*)$.

	Let~\mbox{$u,v \in \bd_n(\Delta^{-1}B^*, B)$}.
	By definition, $\projz^{\ge 0} v = v$ and $\projz^{< n} \Delta u = \Delta u$.
	Recalling~\eqref{eq:projzDagger} and 
	using~$\projz^{\ge 0} \projz^{< n} = \projz^{< n} \projz^{\ge 0}$,
	we have
	\begin{align}
		[\Delta(u,v)]_0 &= [(\projz^{< n}_{B^*} \Delta u)(\projz^{\ge 0}_B v) ]_0
					= 
					[((\projz^{\ge 0}_B)^\dag \Delta u)((\projz^{< n}_{B^*})^\dag v) ]_0\label{eq:LEtaL}\\
					&=
					[(\projz^{\ge 0}_{B^*} \Delta u ) (\projz^{<n}_B v) ]_0
					=
					[\eta_t( L(u), L(v) )]_0. \nonumber
	\end{align}
	In particular, $[\eta_t|_L]_0$ is $\mp$-even.
	Therefore, $L$ is a sublagrangian.

	Let $g,h \in \bd_n(B,\Delta^{-1}B^*)$.
	Then, $\projz^{\ge 0}\Delta g = \Delta g$.
	Using~\eqref{eq:projzDagger}, we have
	\begin{align}
		[\eta_t(L'(g),L'(h))]_0 
		&=
		-[(\projz^{\ge 0}_{B^*} \Delta \projz^{< 0}_B g)(\projz^{\ge 0}_B h)]_0
		=
		-[((\projz^{\ge 0}_B)^\dag \projz^{\ge 0}_{B^*} \Delta (1 - \projz^{\ge 0}_B) g)(h)]_0
		\nonumber\\
		&=
		[-(\projz^{\ge 0}_{B^*} \Delta g)(h) + (\projz^{\ge 0}_{B^*} \Delta \projz^{\ge 0}_B g)(h)]_0
		=
		[-(\Delta g)(h) + (\projz^{\ge 0}_{B^*} \Delta \projz^{\ge 0}_B g)(h)]_0 \nonumber\\ 
		&=
		[(\projz^{\ge 0}\Delta\projz^{\ge 0} - \Delta)(g, h)]_0, \label{eq:LstarEtaLstar}
	\end{align}
	which is $\mp$-even.
	Therefore, $L'$ is a sublagrangian.

	Since $\lambda = [(\eta_t \pm \eta_t^\dag)|_{L \oplus L'}]_0 = [(\eta_t \pm \eta_t^\dag)|_{M \oplus M^*}]_0$ 
	is an isomorphism from a module to its dual,
	we have by~\ref{thm:ComplementaryLagrangiansArePaired} that 
	$\lambda$ restricts to isomorphisms~$L \cong (L')^*$ and~$L' \cong L^*$.
	Hence, the orthogonal complements of~$L$ and~$L'$ are themselves.
	Therefore, both~$L$ and~$L'$ are lagrangians.
	The unitary~$U$ exists by~\ref{thm:formation-unitary}.
\end{proof}

For completeness, we note that for any $u \in \bd_n(\Delta^{-1} B^*, B)$ 
and~$g \in \bd_n(B, \Delta^{-1}B^*)$,
\begin{align} 
	[\eta_t(L(u), L'(g))]_0 &= 
	-[(\projz^{\ge 0} \Delta u)(\projz^{\ge 0} g)]_0 
	= [(-\projz^{\ge 0}\Delta)(u, g) ]_0,
	\label{eq:LstarEtaL}\\
	[\eta_t(L'(g), L(u))]_0 &= [(\projz^{\ge 0} \Delta \projz^{<0} g)( \projz^{\ge 0} u)]_0 
	=[(\Delta \projz^{<0})(g,u)]_0. \nonumber
\end{align}
A similar calculation will appear in the proof in the following.

\begin{proposition}[Lem.\,3.1~\cite{Ranicki2}]\label{thm:FromFormsDownToUnitary}
	The construction from~$\Delta$ to~$L(\Delta,B,n)$ by~\ref{thm:etaOnLLstar} 
	induces a well-defined abelian group homomorphism
	\begin{align}
		\bass^\downarrow_{2\iota}: \switt^\mp(\Rz) \ni [\Delta] 
		\mapsto [U \text{ such that } U (M\oplus 0) = L(\Delta,B,n)] \in \qumod^\pm(R).
	\end{align}
\end{proposition}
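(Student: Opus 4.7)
The plan is to show (i) the unitary $U$ exists and is determined by $\Delta$, $B$, $n$ up to an element of $\qehp^\pm(M)$, (ii) the class $[U] \in \qumod^\pm(R)$ is independent of the auxiliary choices of $n$ and $B$ and of the stable equivalence class of $\Delta$, and (iii) direct sums of forms go to sums of classes. Existence of $U$ and the $\qehp^\pm$-ambiguity are handed to us by~\ref{thm:etaOnLLstar} and~\ref{thm:formation-unitary}, so at the very outset $[U] \in \qumod^\pm(R)$ is a well-defined invariant of the triple $(\Delta,B,n)$.

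For independence of $n$, I would use~\eqref{eq:bdABoneMore1}--\eqref{eq:bdABoneMore2} to identify $\bd_{n+1}(\Delta^{-1}B^*,B) \cong z^n B \oplus \bd_n(\Delta^{-1}B^*,B)$ and $\bd_{n+1}(B,\Delta^{-1}B^*) \cong z^n B^* \oplus \bd_n(B,\Delta^{-1}B^*)$, and then verify from~\eqref{eq:Ldef} that $L(\Delta,B,n+1)$ splits as $L(\Delta,B,n)\oplus (z^n B \oplus 0)$ inside $(M\oplus z^n B)\oplus (M^* \oplus z^n B^*)$. The new summand $z^n B \oplus 0$ is a trivial lagrangian of the extra $\eta_t$-block, so the corresponding unitary is $U \hat\oplus I$, which stabilizes to $U$. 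For independence of $B$, any two modular $R$-bases of $\Rz^t$ differ by an $\Rz$-automorphism $\alpha \in \gl(t;\Rz)$, and passing $\alpha$ through the definitions~\eqref{eq:bdnABBA}--\eqref{eq:Ldef} shows $L(\Delta,\alpha B,n)$ equals $\cx(\alpha)$ applied to $L(\alpha^\dag \Delta \alpha, B, n)$ (with possibly enlarged $n$); since $\cx$ is elementary, we may henceforth work with a fixed base. Independence of the representative $\Delta$ in its Witt class then breaks into two pieces: adding a trivial form $\lambda_s^\mp$ with its own modular $R$-base gives a direct sum decomposition of $L$ in which the $\lambda_s^\mp$-piece is itself a trivial lagrangian (a direct computation using~\eqref{eq:Ldef} with $\Delta = \lambda_s^\mp$), producing the identity unitary on the extra part; and a congruence $\Delta \mapsto E^\dag \Delta E$ with $E \in \gl(\Rz)$ can be absorbed into a change of modular $R$-base, already handled above.

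For the homomorphism property, given $\Delta_i$ on $\Rz^{t_i}$ with modular $R$-bases $B_i$ ($i=1,2$), I would verify that $L(\Delta_1\oplus \Delta_2, B_1\oplus B_2, n) = L(\Delta_1,B_1,n) \oplus L(\Delta_2,B_2,n)$ directly from~\eqref{eq:Ldef} since $\bd_n$ distributes over orthogonal direct sums. The unitary may then be taken to be $U_1 \hat\oplus U_2$, and by~\ref{thm:TRCisInvU} together with~\ref{thm:circuit-contains-commutators} this represents $[U_1]+[U_2]$ in the abelian group $\qumod^\pm(R)$. The main obstacle I anticipate is the change-of-base step: matching the matrices in~\eqref{eq:Ldef} after substituting $B \rightsquigarrow \alpha B$ requires carefully tracking how the projections $\projz^{\ge 0}$, $\projz^{<n}$ transform (they are $R$-linear but not $\Rz$-linear, so they do not commute with $\alpha$ in general), and this is precisely where one must use that $\alpha$ has entries of bounded $z$-degree to absorb the discrepancy into a further stabilization $n \to n + n'$ combined with an elementary $\cx(\alpha)$.
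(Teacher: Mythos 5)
Your overall architecture (existence of $U$ and the $\qehp^\pm$-ambiguity from~\ref{thm:etaOnLLstar} and~\ref{thm:formation-unitary}, then invariance under $n$, stabilization, congruence/base change, and additivity) is the same as the paper's, but two of your verification steps do not go through as written. The claimed splitting $L(\Delta,B,n+1)=L(\Delta,B,n)\oplus(z^nB\oplus 0)$ is false in general. Already for a $z$-independent even form $\Delta$ one has $\bd_n(\Delta^{-1}B^*,B)=\bigoplus_{k=0}^{n-1}z^kB=M$ and $L(\Delta,B,n)=\{(u,\Delta u)\mid u\in M\}$, and the graph over $\bigoplus_{k=0}^{n}z^kB$ contains $(z^nb,z^n\Delta b)$, whose second entry lies in $z^nB^*$ rather than $M^*$, so it is not in $L(\Delta,B,n)\oplus(z^nB\oplus 0)$. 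More generally, since $\bd_n(\Delta^{-1}B^*,B)\subseteq\bigoplus_{k=0}^{2n-1}z^kB$, the entry $\projz^{<n+1}u$ acquires a $z^nB$-component that $\projz^{<n}u$ lacks, so $L(\Delta,B,n+1)$ does not respect the decomposition $(M\oplus z^nB)\oplus(M^*\oplus z^nB^*)$ at all. This is exactly why the paper does not track $L$ under $n\to n+1$: it tracks the complement, for which $L^*(\Delta,B,n+1)=(z^nB\oplus 0)\oplus L^*(\Delta,B,n)$ does hold by~\eqref{eq:bdABoneMore2}, and then uses that the class $[U]$ is determined by the complementary lagrangian alone (any two lagrangians complementary to a fixed lagrangian differ by the graph of an even form, i.e., by a $\qcz$, so \ref{thm:formation-unitary} applies); the extra block then contributes only an elementary $\hada_\pm$.

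The second, and larger, gap is the base-change/congruence step, which you reduce to the identity $L(\Delta,\alpha B,n)=\cx(\alpha)L(\alpha^\dag\Delta\alpha,B,n)$ and then defer with the remark that the projection mismatch can be ``absorbed into a further stabilization combined with $\cx(\alpha)$.'' That identity fails for precisely the mismatch you flag, and resolving it is the bulk of the paper's proof: after enlarging $n\to n+2n'$, the module $\bd_{n+2n'}(\Delta^{-1}E^{-\dag}B^*,EB)$ splits into three graded pieces, $L(E^\dag\Delta E,B,n+2n')$ has a middle summand $\{(E^{-1}b,\,E^\dag\Delta b)\}$ with no truncating projections flanked by two extra summands, and it is related to $\cx(E^{-1})L(\Delta,B,n)$ only after being replaced by an auxiliary lagrangian $\hat L$ sharing a common complement $\hat H$ with it (two further applications of \ref{thm:formation-unitary}); only then does $\cx(E)\hat L$ decompose as the identity block, the block giving $U$, and elementary $\hada_\pm$ blocks. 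Naming the bounded $z$-degree of $\alpha$ does not by itself produce this construction, and the same underestimate of the projection mismatch is what produced the false splitting above. Your remaining steps (additivity by choosing one $n$ for both summands, and triviality of the trivial-form block, e.g., by taking $n=0$ there) are fine and agree with the paper.
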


The notation is that $\bass^\downarrow_2$ ($\iota = 1$) is for $\switt^-$ 
and $\bass^\downarrow_0$ ($\iota=0$) for~$\switt^+$.

\begin{proof}
	We have to show that the map is independent of~$n$ that is just large enough.
	Also, we have to show that Witt equivalent forms are mapped to the same class.

	The change $L(\Delta,B,n)\to L(\Delta,B,n+1)$ might not be too clear as the $z$-degree projector involves~$n$,
	but that of~$L^*(\Delta,B,n)$ is easy to handle.
	By~\eqref{eq:bdABoneMore2}, we have $\bd_{n+1}(B,\Delta^{-1}B^*) = z^n B \oplus \bd_{n}(B,\Delta^{-1}B^*)$.
	By the definition of~$L^*$, we see $L^*(\Delta,B,n+1) = (z^n B \oplus 0) \oplus L^*(\Delta,B,n)
	\subseteq \bigoplus_{k=0}^n z^k (B\oplus B^*)$.
	But \ref{thm:formation-unitary} says that $[U] \in \qumod^\pm$ 
	is determined by one of the lagrangians in the image, given a fixed pair of lagrangians in the domain.
	Therefore, the map~$\bass^\downarrow_{2\iota}$ is independent of~$n$ as long as it is sufficiently large.

	If $\Delta$ is replaced by~$\Delta \oplus \Delta'$,
	then, because we have shown the independence on~$n$,
	we may choose~$n$ to be sufficiently large that it works for both~$\Delta$ and~$\Delta'$.
	Every module or form for $\Delta \oplus \Delta'$ is the direct sum of those for~$\Delta$ and~$\Delta'$.
	Hence, $\bass^\downarrow_{2\iota}$ preserves the group operation.
	In particular, if $\Delta = \lambda^\mp$, then we may choose $n=0$ so that both~$L$ and~$L^*$ vanish.
	Hence, the images under~$\bass^\downarrow_{2\iota}$ of~$\Delta$ 
	and of $\Delta \oplus \lambda^\mp$ are the same.

	Finally, suppose~$\Delta$ is replaced with a congruent form~$E^\dagger \Delta E$ 
	for some~$E\in \gl(t;\Rz)$.
	Choose~$n$ so large that $E B \subseteq \bigoplus_{j=-n}^{n} z^j B$ 
	and $E^{-1} B \subseteq \bigoplus_{j=-n}^{n} z^j B$.
	Choose $n' \ge 2n$.
	Then,
	\begin{align}
		E B^+
		&=
		\bd_{n'}(B, EB) \oplus z^{n'} B^+
		\nonumber\\
		B^+ 
		&= 
		\bd_n(\Delta^{-1} B^*,B) \oplus z^n \Delta^{-1} (B^*)^+
		\\
		\Delta^{-1}(B^*)^+ 
		&=
		\Delta^{-1}\bd_{n'}(E^{-\dagger}B^*,B^*) \oplus z^{n'} \Delta^{-1} E^{-\dagger}(B^*)^+.
		\nonumber
	\end{align}
	On the other hand,
	$
		E B^+ 
		= 
		\bd_{n+2n'} (\Delta^{-1} E^{-\dagger} B^*, E B) \oplus z^{n+2n'} \Delta^{-1} E^{-\dagger} (B^*)^+
	$, so
	\begin{align}
		\bd_{n+2n'}(\Delta^{-1} E^{-\dagger} B^*, E B) 
		&= z^{n+n'} \Delta^{-1} \bd_{n'}(E^{-\dagger}B^*, B^*)\nonumber\\
		&\qquad\oplus z^{n'} \bd_{n}(\Delta^{-1} B^*, B) \\
		&\qquad \oplus \bd_{n'} (B,EB).\nonumber
	\end{align}
	It follows that
	\begin{align}
		L(E^\dagger \Delta E,B, n+2n') 
		&=
		\begin{pmatrix}
		\projz^{<n+2n'} \\
		\projz^{\ge 0} E^\dagger \Delta E
		\end{pmatrix}
		\bd_{n+2n'}(E^{-1} \Delta^{-1} E^{-\dagger} B^*, B)
		\nonumber\\
		&=
		\left\{
		\begin{pmatrix}
		\projz^{<n+2n'}E^{-1} \\
		\projz^{\ge 0} E^\dagger \Delta
		\end{pmatrix} (a + z^{n'} b + z^{n+n'} \Delta^{-1} c^*)
		\,\middle|\,
		\begin{array}{rl}
		a &\in \bd_{n'}(B,EB),\\
		b &\in \bd_n(\Delta^{-1}B^*,B),\\
		c^* &\in \bd_{n'}(E^{-\dagger}B^*,B^*)
		\end{array}
		\right\}.\label{eq:abcstar}
	\end{align}
	Considering the bounds on the $z$-exponents and that $n' \ge 2n$,
	some projections become redundant:
	\begin{align}
		L(E^\dagger \Delta E, B, n+2n') &=
		\left\{
			\begin{pmatrix}
		E^{-1} a \\
		\projz^{\ge 0} E^\dagger \Delta a
		\end{pmatrix} +
		z^{n'}
		\begin{pmatrix}
		E^{-1} b\\
		E^\dagger \Delta b
		\end{pmatrix} +
		z^{n+n'}
		\begin{pmatrix}
		\projz^{<n'}E^{-1} \Delta^{-1}c^*\\
		E^\dagger c^*
		\end{pmatrix}
		\right\}\, .\label{eq:LEdaggerDeltaEBNTwoNPrime}
	\end{align}
	Write $\BB = B \oplus B^*$.
	Define $\hat H$ and~$\hat L$, both within $\bigoplus_{j=0}^{n+2n'-1} z^j \BB$, by
	\begin{align}
		\hat H &= \left\{
			\begin{pmatrix}
		0 \\
		E^{\dagger} a^*
		\end{pmatrix} 
		\right\}
		\oplus
		z^{n'}\cx(E^{-1})L^*(\Delta,B,n)
		\oplus
		\left\{
		z^{n+n'}
		\begin{pmatrix}
		E^{-1} c\\
		0
		\end{pmatrix}
		\right\}
		,\label{eq:hatHL}\\
		\hat L &=
		\left\{
			\begin{pmatrix}
		E^{-1} a \\
		0
		\end{pmatrix} 
		\right\}
		\oplus
		z^{n'} \cx(E^{-1})L(\Delta,B,n)
		\oplus
		\left\{
		z^{n+n'}
		\begin{pmatrix}
		0\\
		E^\dagger c^*
		\end{pmatrix}
		\right\}\nonumber
	\end{align}
	where~$a \in \bd_{n'}(B,EB)$ and~$ c^* \in \bd_{n'}(E^{-\dagger}B^*,B^*)$ are as in~\eqref{eq:abcstar},
	but $a^* \in \bd_{n'}(B^*,E^{-\dagger} B^*)$ and $c \in \bd_{n'}(EB,B)$ are duals.
	The sums in~\eqref{eq:hatHL} are direct because, if we apply $\cx(E)$, 
	the three summands belong to different $z$-degree sectors.

	The parent $R$-module $\bigoplus_{j=0}^{2n'-1} z^j \BB$ has an obvious quadratic $\pm$-form $[\eta|_{\cdots}]_0$,
	and decomposes as
	\begin{equation}
		\bd_{n'}(\cx(E^{-1}) \BB,  \BB)
		\oplus
		z^{n'} \cx(E^{-1}) \bd_n(\BB, \BB)
		\oplus
		z^{n+n'} \bd_{n'}(\BB, \cx(E^{-1})\BB), \label{eq:parentDecomposition}
	\end{equation}
	which is an orthogonal sum as seen by the $z$-degrees.
	Hence, the first direct summand of~$\hat H$ is a lagrangian within~$\bd_{n'}(\cx(E^{-1})\BB, \BB)$,
	and
	the third direct summand of~$\hat H$ is a lagrangian within~$z^{n+n'}\bd_{n'}(B \oplus B^*,\cx(E^{-1})\BB)$.
	Since the second direct summand of~$\hat H$ is a lagrangian by~\ref{thm:etaOnLLstar},
	all three direct summands are lagrangians in disjoint parent modules.
	Therefore, $\hat H$ is a lagrangian of~$\bigoplus_{j=0}^{n+2n'-1} z^j \BB$.
	A parallel argument shows that $\hat L$ is also a lagrangian.

	We claim that $\hat H$ and $L(E^\dagger \Delta E, B, n+2n')$ in~\eqref{eq:LEdaggerDeltaEBNTwoNPrime}
	make a nonsingular pair 
	with respect to $[(\eta \pm \eta^\dag)|_{(\bigoplus_{j=0}^{n+2n'-1} z^j \BB)}]_0$.
	To this end, we have to show that 
	$S = \hat H + L(E^\dagger \Delta E, B, n+2n')$ contains $\bigoplus_{j=0}^{2n'-1} z^j \BB$.
	Once this is shown, the claim follows by~\ref{thm:ComplementaryLagrangiansArePaired}
	since we know that the two summands of~$S$ are lagrangians.

	By examining the range of~$a,a^*,c,c^*$ we see that the first
	and the third direct summands in~\eqref{eq:parentDecomposition} are contained in~$S$.
	Mapping the parent module by~$\cx(E)$, 
	the middle direct summand in~\eqref{eq:parentDecomposition} becomes $z^{n'}\bd_n(\BB,\BB)$,
	a submodule defined by a $z$-degree interval~$[n',n+n')$, 
	outside of which we know at least $z$-degree intervals $[n' - n,n')$ and $[n+n', 2n+n')$ 
	are fully covered by $\cx(E)S$.
	Hence, it suffices to show that 
	$\projz^{\ge n'}_\BB \projz^{< n+n'}_\BB \cx(E) S \supseteq \projz^{\ge n'}_\BB \projz^{< n+n'}_\BB$.
	But $\projz^{\ge n'}_\BB \projz^{< n+n'}_\BB \cx(E) L(E^\dagger \Delta E, B, n+2n') = z^{n'} L(\Delta,B,n)$
	by the definition of~$L(\Delta,B,n)$,
	while $\projz^{\ge n'}_\BB \projz^{< n+n'}_\BB \cx(E) \hat H = z^{n'} L^*(\Delta,B,n)$,
	so \ref{thm:Lcomplement} gives the result.

	In addition, it is evident that $\hat H$ and $\hat L$ make a nonsingular pair.

	Applying~\ref{thm:formation-unitary} twice, 
	we see that the unitary~$U$ we are investigating can be defined by~$\hat L$
	instead of~$L(E^\dagger \Delta E, B, n+2n')$.
	Furthermore, without altering the class~$[U]$, we can instead examine a lagrangian $\cx(E)\hat L$
	which reads
	\begin{align}
		\cx(E)\hat L &=
		\left\{
			\begin{pmatrix}
		a \\
		0
		\end{pmatrix} 
		\right\}
		\oplus
		z^{n'} L(\Delta,B,n)
		\oplus
		\left\{
		\begin{pmatrix}
		0\\
		c^*
		\end{pmatrix}
		\right\}
		\subseteq
		\bigoplus_{j=0}^{n+2n'-1}
		z^j
		\begin{pmatrix}
		E B\\
		E^{-\dagger} B^*
		\end{pmatrix} \, .
	\end{align}
	The parent $R$-module admits a decomposition
	\begin{align}
		\bigoplus_{j=0}^{n+2n'-1}
		z^j
		\begin{pmatrix}
		E B\\
		E^{-\dagger} B^*
		\end{pmatrix}
		=
		\begin{pmatrix}
		\bd_{n'}(B,EB)\\
		\bd_{n'}(B^*,E^{-\dagger}B^*)
		\end{pmatrix}
		\oplus z^{n'} 
		\begin{pmatrix}
		\bd_n(B,B)\\
		\bd_n(B^*,B^*)
		\end{pmatrix}
		\oplus 
		z^{n+n'} 
		\begin{pmatrix}
		\bd_{n'}(EB,B)\\
		\bd_{n'}(E^{-\dagger}B^*,B^*)
		\end{pmatrix}.
	\end{align}
	Hence, the unitary that maps a lagrangian~$\bigoplus_{j=0}^{n+2n'-1} z^j EB$ to the lagrangian~$\cx(E)\hat L$
	can be taken to be block-diagonal,
	where the first block is the identity, 
	the second is~$U$ which maps~$\bd_n(B,B)$ to~$L(\Delta,B,n)$,
	and the third is some elementary unitary
	which maps $\bd_{n'}(EB,B)$ to $\bd_{n'}(E^{-\dag}B^*,B^*)$.
\end{proof}

\subsection{Going up from unitaries to forms}

For any $U \in \qhp^{\mp}(Q;R)$, the product~$U^\dagger \eta U$ has matrix entries
\begin{align}
	U^\dagger \eta U = 
	\begin{Bmatrix} Q^* \\ Q \end{Bmatrix}
	\begin{pmatrix}
		\xi \pm \xi^\dag &&& \gamma \\
		\delta &&& \rho \pm \rho^\dag
	\end{pmatrix}
	\begin{Bmatrix} Q \\ Q^* \end{Bmatrix}
	\qquad (\gamma \mp \delta^\dagger = I). \label{eq:xigammadeltarho}
\end{align}
Using
\begin{align}
	U &= \begin{pmatrix} a & b \\ c & d \end{pmatrix},&
	U^{-1} &= \lambda^{-1} U^\dag \lambda = \begin{pmatrix}
	d^\dag & \mp b^\dag \\ \mp c^\dag & a^\dag
	\end{pmatrix}, \label{eq:UUinv}
\end{align}
\begin{align}
	\begin{pmatrix}
	a^\dagger c \mp c^\dagger a &&& a^\dagger d \mp c^\dagger b \\
	b^\dagger c \mp d^\dagger a &&& b^\dagger d \mp d^\dagger b
	\end{pmatrix}
	&=
	\begin{pmatrix}
	0 & 1 \\ \mp 1 & 0
	\end{pmatrix},
	&
	\begin{pmatrix}
	ba^\dagger \mp ab^\dagger &&& bc^\dagger \mp ad^\dagger \\ 
	da^\dagger \mp cb^\dagger &&& dc^\dagger \mp cd^\dag
	\end{pmatrix}
	&=
	\begin{pmatrix}
	0 & \mp 1 \\ 1 & 0
	\end{pmatrix},\label{eq:unitarity}
\end{align}
we may write
\begin{align}
	U^\dagger \eta U  = \begin{pmatrix}
	a^\dag c &&& a^\dag d \\ b^\dag c &&& b^\dag d
	\end{pmatrix}.
\end{align}

\begin{proposition}\label{thm:BassUpFromUnitary}
	For any finitely generated free $R$-module~$Q$,
	let $Q_z = \bigoplus_j z^j Q$ be an $\Rz$-module.
	For any $U \in \qhp^\mp(Q;R)$, we use~\eqref{eq:xigammadeltarho} to define
	\begin{align}
		\bass^\uparrow (U) = 
		\begin{Bmatrix} Q_z^* \\ Q_z \end{Bmatrix}
		\begin{pmatrix} 
			\xi\pm \xi^\dag & -z \gamma \pm \delta^\dag \\ 
			\delta\mp \tfrac 1 z\gamma^\dag & (2-z-\tfrac 1 z)(\rho \pm \rho^\dag )
		\end{pmatrix}
		\begin{Bmatrix}	Q_z \\ Q_z^*	\end{Bmatrix} 
		\label{eq:DefBassUpFromUnitary}
	\end{align}
	where $z$ is an independent variable.
	Then, $\bass^\uparrow(U)$ is a nonsingular even hermitian $\pm$-form over~$\Rz$.
	The induced map 
	\begin{align}
		\bass^\uparrow_{2\iota+1} : \qumod^\mp(R) \ni [U] \mapsto [\bass^\uparrow(U)] \in \switt^\pm(\Rz)
	\end{align}
	is well defined. Here, $\iota = 0$ is for $\qumod^+$ and $\iota =1$ is for $\qumod^-$.
\end{proposition}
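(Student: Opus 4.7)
My plan is to verify that $\bass^\uparrow(U)$ is (i) $\pm$-hermitian, (ii) even, and (iii) nonsingular over $\Rz$, and then (iv) that the induced assignment descends to a well-defined homomorphism $\qumod^\mp(R)\to\switt^\pm(\Rz)$. Parts (i) and (ii) are immediate matrix computations using $\overline{z}=1/z$ and the unitarity $\gamma\mp\delta^\dag=I$. The diagonal blocks $\xi\pm\xi^\dag$ and $(2-z-\tfrac{1}{z})(\rho\pm\rho^\dag)$ are manifestly $\pm$-symmetric, and the off-diagonal identity $(-z\gamma\pm\delta^\dag)^\dag=\pm(\delta\mp\tfrac{1}{z}\gamma^\dag)$ follows by direct expansion. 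Evenness is witnessed by the upper-triangular sesquilinear $\Theta$ with $\Theta_{11}=\xi$, $\Theta_{12}=-z\gamma\pm\delta^\dag$, $\Theta_{21}=0$, and $\Theta_{22}=(1-z)(\rho\pm\rho^\dag)$; using $(1-z)+\overline{(1-z)}=2-z-\tfrac{1}{z}$, one verifies $\Theta\pm\Theta^\dag=\bass^\uparrow(U)$ in any characteristic.

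The main obstacle is (iii), nonsingularity. My strategy is to exhibit an explicit $\Rz$-congruence reducing $\bass^\uparrow(U)$ to a transparently invertible form. Substituting $\delta^\dag=\pm\gamma\mp I$ from $\gamma\mp\delta^\dag=I$ exposes $\mp I$ summands in both off-diagonal blocks, which then serve as pivots permitting block elimination that should absorb the diagonal blocks $\xi\pm\xi^\dag$ and $(2-z-\tfrac{1}{z})(\rho\pm\rho^\dag)$ into hyperbolic $\lambda^\pm$ pieces; the residue has the shape $\bass^\uparrow(I)=\begin{pmatrix}0&-zI\\\mp I/z&0\end{pmatrix}$, itself congruent to $\lambda^\pm$ via the invertible diagonal $\mathrm{diag}(I,-I/z)\in\gl(2q;\Rz)$. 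As a sanity check, specializing $z=1$ collapses $\bass^\uparrow(U)$ to $\begin{pmatrix}\xi\pm\xi^\dag&-I\\\mp I&0\end{pmatrix}$, whose determinant is a unit in $R$; together with the palindromic symmetry $\overline{\det\bass^\uparrow(U)}=\det\bass^\uparrow(U)$ forced by $\pm$-hermiticity, this strongly suggests $\det\bass^\uparrow(U)\in\Rz^\times$, though the constructive congruence is what one truly wants in hand.

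For (iv), the map descends once one proves $\bass^\uparrow$ is a homomorphism up to Witt equivalence and that $\bass^\uparrow(E)\simeq 0\in\switt^\pm(\Rz)$ for each generator $E$ of $\qehp^\mp(R)$. The homomorphism property $\bass^\uparrow(UV)\simeq\bass^\uparrow(U)\oplus\bass^\uparrow(V)$ follows from \ref{thm:TRCisInvU} by the Eilenberg-swindle pattern used in \ref{thm:form-to-up-unitary}. For $E=\cx(\alpha)$, the data $\xi\pm\xi^\dag,\gamma,\delta,\rho\pm\rho^\dag$ coincide with those of $U=I$, so $\bass^\uparrow(\cx(\alpha))=\bass^\uparrow(I)\simeq\lambda^\pm$. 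For $E=\hada_\mp$ one computes $\bass^\uparrow(\hada_\mp)=\begin{pmatrix}0&-I\\\mp I&0\end{pmatrix}$, which is congruent to $\lambda^\pm$ via $\mathrm{diag}(I,-I)$. The only nontrivial case is $E=\qcz(\theta)$: the change in $U^\dag\eta U$ is an even form of shape $U^\dag\theta U\pm U^\dag\theta^\dag U$, and the induced alteration of $\bass^\uparrow(U)$ must be absorbed by a congruence assembled from $\cz$-type shears over $\Rz$, possibly after stabilization. Stabilization $U\mapsto U\hat\oplus I$ simply adds $\bass^\uparrow(I)\simeq\lambda^\pm$, which is Witt trivial.
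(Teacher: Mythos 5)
Your hermiticity and evenness checks are correct (your upper-triangular $\Theta$ with $\Theta\pm\Theta^\dag=\bass^\uparrow(U)$ is exactly the right witness), and the stabilization and $\cx(\alpha)$ observations are fine. But the two places where the actual work lies are left as hopes, and the sketch you give for nonsingularity cannot be repaired as stated. The $z=1$ specialization plus involution-symmetry of the determinant does not force $\det\bass^\uparrow(U)\in\Rz^\times$ (an involution-symmetric Laurent polynomial can be a nonzero scalar at $z=1$ without being a monomial, e.g.\ $3-z-\tfrac1z$), and the proposed block elimination has no pivot: after substituting $\pm\delta^\dag=\gamma-I$ the off-diagonal block becomes $(1-z)\gamma-I$, which is in general \emph{not} invertible over $\Rz$ (already for $q=1$, with $\gamma=1+g$, $g\in R$, $g\neq0,-1$, one gets $g-(1+g)z$, not a unit since units of $\Rz$ are monomials; such $\gamma$ occurs, e.g.\ for $U=\cz(\mu)^\dag\cz(\theta)$), so the ``exposed $\mp I$ summands'' are not usable for block elimination. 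Worse, a genuine congruence of $\bass^\uparrow(U)$ to hyperbolic pieces plus a residue of the shape $\bass^\uparrow(I)\cong\lambda^\pm$ would say $[\bass^\uparrow(U)]=0\in\switt^\pm(\Rz)$ for \emph{every} $U$, contradicting $\bass^\downarrow_{2\iota+2}\circ\bass^\uparrow_{2\iota+1}=\mathrm{id}$ (\ref{thm:up-and-down-from-unitary-is-identity}) whenever $[U]\neq0$ (such classes exist, e.g.\ \ref{rem:oneDimNontrivialQuadraticUnitary}). What the paper actually uses is the identity \eqref{eq:fivematrices}, $\bass^\uparrow(U)=\diag(\tfrac{z}{z-1}I,I)\,U^\dag\left(\begin{smallmatrix}0&I\\ \mp\tfrac1z I&0\end{smallmatrix}\right)U\,\diag(I,(1-z)I)$, read in the localization $\Rz_{z-1}$; the $(z-1)$-factors cancel and the determinant is a unit at once. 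Some identity of this kind (or a direct determinant computation) is the missing ingredient.

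For well-definedness your reduction is circular: to get $\bass^\uparrow(UV)\simeq\bass^\uparrow(U)\oplus\bass^\uparrow(V)$ by the swindle of \ref{thm:form-to-up-unitary} you would write $UV\hat\oplus I$ as $(U\hat\oplus V)$ times a stably elementary unitary via \ref{thm:TRCisInvU}, and then you must \emph{already} know that multiplication by elementary unitaries does not change the Witt class of $\bass^\uparrow$ --- which is precisely what is to be proved; in \ref{thm:form-to-up-unitary} the swindle lives in the target $\qumod$, where \ref{thm:TRCisInvU} applies, whereas here the target is a Witt group of forms. So the entire burden falls on the generator computations, and the one case you defer, $E=\qcz(\theta)$, is the only nontrivial one. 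The paper's proof consists of exhibiting explicit congruences $(I+T^\dag)\,\bass^\uparrow(U)\,(I+T)=\bass^\uparrow(\qcz(\mu)U)$ and $(I+S^\dag)\,\bass^\uparrow(U)\,(I+S)=\bass^\uparrow(\qcz(\mu)^\dag U)$, with $T,S$ written in terms of $\mu$ and the blocks $a,b,c,d$ of $U$ (found by conjugating $\diag(\mu,0)$ through the factorization \eqref{eq:fivematrices}), and then disposing of $\hada_\mp$ not by computing $\bass^\uparrow(\hada_\mp)$ alone but by writing $\hada_\mp\hat\oplus\hada_\mp$ as a product of $\qcz$-, $\qcz^\dag$- and $\cx$-type generators (\ref{lem:HHbyZX}), so that $[\bass^\uparrow(\hada_\mp U)]=[\bass^\uparrow(U)\oplus\bass^\uparrow(\hada_\mp^{-1})]=[\bass^\uparrow(U)]$. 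Until you supply the $\qcz$ congruence (or an equivalent argument), the well-definedness claim is unproven.
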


Ranicki~\cite[after Lemma~3.2]{Ranicki2} defines a map similar to~$\bass^\uparrow(U)$ by
\begin{align}
	\tilde \bass^\uparrow(U) = \begin{pmatrix} 
		\xi &&& -z \gamma \\ 
		\delta &&& (1-z)(\rho\pm \rho^\dag)
	\end{pmatrix}.\label{eq:tildeBass}
\end{align}
As it stands, this is ill-defined because $\xi$ is not determined by~$U$.
It turns out that 
even though $\tilde \bass^\uparrow(U)$ is ill-defined,
its Witt class~$[\tilde \bass^\uparrow(U)]$ is well-defined,
which we will show in~\ref{thm:tildeBass} at the end of this section.

\begin{proof}
	To show that $\bass^\uparrow(U)$ is nonsingular,
	we observe the following expression which can be checked by straightforward calculation.
	\begin{align}
		\bass^\uparrow(U)
		&=
		\begin{pmatrix}
		a^\dag c &&& -z a^\dag d \pm c^\dag b \\
		b^\dag c \mp \tfrac 1 z d^\dag a &&& (2-z-\tfrac 1 z)b^\dag d
		\end{pmatrix}
		\\&= 
		\begin{pmatrix}
		\frac{z}{z - 1}I & 0 \\ 0 & I
		\end{pmatrix}
		U^\dagger
		\begin{pmatrix}
		0 & I \\ \mp \tfrac 1 z I & 0
		\end{pmatrix}
		U
		\begin{pmatrix}
		I & 0 \\ 0 & (1-z)I
		\end{pmatrix}\nonumber \label{eq:fivematrices}
	\end{align}
	Although $\frac 1 {z-1} \notin \Rz$, by construction we know $z -1 \in \Rz$ is a non-zero-divisor,
	and the formula makes sense in the one-element localization~$\Rz_{z-1}$ and the final result is over~$\Rz$.
	Since $U$ has a unit determinant, 
	we see that $\bass^\uparrow(U)$ has a unit determinant as well.
	It is easy to check that $\bass^\uparrow(U)^\dag = \pm \bass^\uparrow(U)$.

	Next, we show that $\bass^\uparrow(E U)$ is Witt equivalent to $\bass^\uparrow(U)$ for any elementary~$E$.
	If $E$ is an embedding~$\qhp^\mp(q;R) \to \qhp^\mp(q+1;R)$,
	then the result is to direct-sum a trivial quadratic $\pm$-form~$\begin{pmatrix} 0 & -z \\ 0 & 0 \end{pmatrix}$.
	If~$E = \cx(\alpha)$, then $U^\dagger E^\dagger \eta E U = U^\dagger \eta U$, so $\bass^\uparrow(U) = \bass^\uparrow(EU)$.
	We will show that $[\bass^\uparrow(E U)] = [\bass^\uparrow(U)]$ when~$E = \cz(\mu)$ or~$E = \cz(\mu)^\dagger$.
	Then, $[\bass^\uparrow(\hada_\mp U)] = [\bass^\uparrow(\hada_\mp U \hat\oplus I)]$,
	which is equal by~\ref{lem:HHbyZX} to 
	$[\bass^\uparrow((\hada_\mp^{-1} \hat\oplus \hada_\mp^{-1})(\hada_\mp U \hat\oplus I))]
	= [\bass^\uparrow(U \hat\oplus \hada_\mp^{-1})] = 
	[\bass^\uparrow(U) \oplus \bass^\uparrow(\hada_\mp^{-1})]$,
	but $\bass^\uparrow(\hada_\mp) = -\lambda^\pm_1$.
	Therefore, we will conclude that
	$[\bass^\uparrow(EU)] = [\bass^\uparrow(U)]$ for all elementary~$E$.

	For the remaining case of~$E=\cz(\mu)$ and~$E=\cz(\mu)^\dag$,
	we have
	\begin{align}
		&\begin{matrix}
		(I + T^\dag)\bass^\uparrow(U) (I + T) = \bass^\uparrow(\qcz(\mu)U),\\
		(I + S^\dag)\bass^\uparrow(U) (I + S) = \bass^\uparrow(\qcz(\mu)^\dagger U),
		\end{matrix}
		&
		\qcz(\mu) = \begin{pmatrix} I && 0 \\ \mu \pm \mu^\dag && I \end{pmatrix}.
	\end{align}
	where 
	\begin{align}
		T &= \begin{pmatrix}
		\mp(1-\tfrac 1 z) b^\dag \\ 
		-\tfrac 1 z a^\dag
		\end{pmatrix}
		\mu
		\begin{pmatrix}
		a && (1-z) b
		\end{pmatrix} ,
		&
		S &= \begin{pmatrix}
		\pm(1-z) d^\dag \\
		-c^\dag
		\end{pmatrix}
		\mu
		\begin{pmatrix}
		c && (1-z) d
		\end{pmatrix}.
	\end{align}
	(This is easy to write as a proof, but a better calculation is below.)
\end{proof}

The congruence in the last part of the proof was found 
by the following observation. 
Write $\bass^\uparrow(U) = A U^\dag Z U C$ where $A,Z,C$ are written in~\eqref{eq:fivematrices}.
Then,
\begin{align}
	\bass^\uparrow(\qcz(\mu) U) - \bass^\uparrow(U) = (1 - \tfrac 1 z) A U^\dag \diag(\mu \pm \mu^\dag,0) U C.
\end{align}
Reading the ``first order terms'' in~$\mu$, we find 
\begin{align}
	T 
	= (1 - \tfrac 1 z) \bass^\uparrow(U)^{-1} A U^\dag \diag(\mu,0) U C 
	= (1 - \tfrac 1 z) C^{-1} U^{-1} Z^{-1} \diag(\mu,0) U C,
\end{align}
which turns out to be correct.

\begin{remark}\label{rem:alt-bass-up-from-unitary}
	It follows from~\ref{thm:BassUpFromUnitary} that 
	the following matrix is Witt equivalent to~$\bass^\uparrow(U)$.
	\begin{align}
		\begin{pmatrix}
		0 & I \\ \pm \tfrac 1 z I & 0
		\end{pmatrix}
		\bass^\uparrow(U \hada_\mp^{\hat\oplus q}) \begin{pmatrix}
		0 & \pm z I \\ I & 0
		\end{pmatrix}
		&=
		\begin{pmatrix}
		(2-z-\tfrac 1 z) (\xi \pm \xi^\dag) &&& -z \gamma \pm \delta^\dag\\
		\delta \mp \tfrac 1 z \gamma^\dag &&& \rho \pm \rho^\dag
		\end{pmatrix}\\
		&=
		\begin{pmatrix}
		(1-z) (\xi \pm \xi^\dag) & -z \gamma\\
		\delta & \rho
		\end{pmatrix}
		\pm
		\begin{pmatrix}
		(1-z) (\xi \pm \xi^\dag) & -z \gamma\\
		\delta & \rho
		\end{pmatrix}^\dag
		\nonumber
	\end{align}
\end{remark}

\begin{proposition}[Lem.\,3.3 of~\cite{Ranicki2}]\label{thm:up-and-down-from-unitary-is-identity}
	The composition
	\begin{align}
		\qumod^\mp(R) 
		\xrightarrow{\quad\bass^\uparrow_{2\iota + 1}\quad} \switt^\pm(\Rz)
		\xrightarrow{\quad\bass^\downarrow_{2\iota+2}\quad} \qumod^\mp(R)
	\end{align}
	is the identity. 
	In particular, $\bass^\downarrow_{2\iota+2}$ is surjective.
\end{proposition}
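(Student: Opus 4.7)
Let $U \in \qhp^\mp(q;R)$ represent $[U] \in \qumod^\mp(R)$ with $Q := R^q$, and let $B := Q \oplus Q^*$ be the standard modular $R$-base of $B_z := Q_z \oplus Q_z^*$. Set $\Delta := \bass^\uparrow(U)$, a nonsingular even hermitian $\pm$-form on $B_z$. The plan is to compute $\bass^\downarrow_{2\iota+2}([\Delta])$ directly and show the resulting $\eta^\mp$-unitary equals $U$ modulo $\qehp^\mp$.

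From the block expression for $\Delta$ derived in the proof of~\ref{thm:BassUpFromUnitary} one reads off $\Delta B \subseteq \bigoplus_{k=-1}^{1} z^k B^*$. Since $\Delta$ is nonsingular, $\Delta^{-1}$ is a bona fide $\Rz$-matrix whose entries are bounded in $z$-degree by some explicit function of $q$ (extractable either from $\Delta^{-1} = (\det \Delta)^{-1}\operatorname{adj}(\Delta)$ or by clearing the denominators $1-z$ and $z$ appearing in the formal factorization \eqref{eq:fivematrices}). Taking $n$ sufficiently large then permits the construction \eqref{eq:bdnABBA}--\eqref{eq:Ldef}, placing $L(\Delta,B,n)$ and $L^*(\Delta,B,n)$ inside $M \oplus M^*$ with $M = \bigoplus_{k=0}^{n-1} z^k B$.

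I would next describe the boundary modules $\bd_n(\Delta^{-1}B^*, B) = \{u \in B^+ : \projz^{\geq n}_{B^*} \Delta u = 0\}$ and $\bd_n(B, \Delta^{-1}B^*) = \{g \in B^{<n} : \projz^{<0}_{B^*} \Delta g = 0\}$ explicitly. Using the entries $\xi \pm \xi^\dag = a^\dag c$, $\gamma = a^\dag d$, $\delta = b^\dag c$, $\rho \pm \rho^\dag = b^\dag d$ together with the unitarity relations \eqref{eq:unitarity}, each boundary module is parametrized as a free $R$-module whose generators are built from the columns of $U$ and $U^{-1} = \lambda^{-1} U^\dag \lambda$, respectively. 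Substituting into \eqref{eq:Ldef} then exhibits $L(\Delta,B,n)$ and $L^*(\Delta,B,n)$ as graphs explicitly determined by $U$.

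Finally,~\ref{thm:formation-unitary} produces an $\eta^\mp$-unitary $V$ with $V(M \oplus 0) = L(\Delta,B,n)$, unique up to $\qehp^\mp$. The explicit graph description exhibits $V$ — after a reorganization of $M \oplus M^*$ into copies of $Q \oplus Q^*$ via a composition of $\cx(\alpha)$ and $\hada_\mp$, all elementary — as the stabilization $U \hat\oplus I$. Hence $[V] = [U] \in \qumod^\mp(R)$, proving the composition is the identity; surjectivity of $\bass^\downarrow_{2\iota+2}$ follows immediately. The main obstacle is the bookkeeping in the penultimate step: verifying that the defining constraints on the boundary modules indeed cut out the $R$-modules predicted by the structure of $U$, and that the final change of basis on $M \oplus M^*$ is genuinely elementary. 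The representative ambiguity in $\xi, \rho$ is absorbed by Witt equivalence on $\switt^\pm(\Rz)$ as in the proof of~\ref{thm:BassUpFromUnitary}, so one may fix convenient choices during the computation.
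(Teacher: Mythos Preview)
Your outline is the same approach the paper takes, but you have deferred precisely the content of the argument. The paper's proof is nothing \emph{but} the ``bookkeeping in the penultimate step'' that you flag as the main obstacle: it explicitly solves the constraint $\projz^{\ge n}\Delta u = 0$ for $u \in B^+$ using the unitarity relations~\eqref{eq:unitarity}, writes $\bd_n(\Delta^{-1}B^*,B)$ as a concrete direct sum, computes $L(\Delta,B,n)$ column by column, and then exhibits an explicit chain of elementary unitaries reducing it to $B\oplus 0$.

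One concrete point you would need to supply: the paper takes $n=1$, not ``sufficiently large.'' This works because $\Delta^{-1}B^* \subseteq \tfrac{1}{z}B \oplus B \oplus zB$ (obtainable by inverting the five-matrix factorization~\eqref{eq:fivematrices} and checking the product lands in~$\Rz$). With $n=1$ the general element of $\bd_1(\Delta^{-1}B^*,B)$ has at most two $z$-terms, so the constraint $\Delta(u_0+u_1z,v_0+v_1z)\in zB^-$ becomes a small finite system in $a,b,c,d$, which the unitarity identities collapse. Your phrasing ``$n$ sufficiently large'' would inflate the boundary module with extra copies of~$B$ that then have to be stripped off again; this is not wrong (Proposition~\ref{thm:FromFormsDownToUnitary} shows independence of~$n$), but it obscures the computation. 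The elementary-unitary chain at the end is also not automatic: the paper needs~\ref{thm:TRCisInvU} to pass from the matrix $\begin{pmatrix}a&-b\\-c&d\end{pmatrix}$ that naturally appears back to~$U$.
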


\begin{proof}
	Let $U \in \qhp^\mp(Q;R)$ where $Q$ is a free $R$-module of rank~$q$.
	Let $\Delta = \bass^\uparrow(U) : Q_z \oplus Q_z^* \to Q_z^* \oplus Q_z$.
	The $\Rz$-module $Q_z \oplus Q_z^*$ has a modular $R$-base $B = Q \oplus Q^*$.
	Since $\Delta^{-1} B^* \subseteq \tfrac 1 z B \oplus B \oplus z B$,
	we have~$z\Delta^{-1} (B^*)^+ \subseteq B^+$ and~$\bd_1(\Delta^{-1}B^*, B) = z \Delta^{-1} (B^*)^- \cap B^+$ is defined.
	If $\begin{pmatrix}u\\ v\end{pmatrix} \in \bd_1(\Delta^{-1}B^*, B) \subseteq B^+$, 
	then $u$ and $v$ do not have any negative $z$-exponents.
	The elements~$u$ and~$v$ cannot have any term with $z^2$ or higher degree because otherwise 
	$\Delta\begin{pmatrix}u\\v\end{pmatrix} \notin z B^-$.
	So, the most general element we consider is of form $\begin{pmatrix}u_0 + u_1 z \\ v_0 + v_1 z\end{pmatrix}$
	where $\begin{pmatrix} u_0 \\ v_0\end{pmatrix}, \begin{pmatrix}u_1 \\ v_1\end{pmatrix} \in B$.
	Using~\eqref{eq:fivematrices}, we have
	\begin{align}
		\Delta\begin{pmatrix}	u \\ v\end{pmatrix}	 
		&= 
		\begin{pmatrix}
		a^\dag c u - z a^\dag d v \pm c^\dag b v\\
		b^\dag c u \mp \tfrac 1 z d^\dag a u + (2-z-\tfrac 1 z)b^\dag d v
		\end{pmatrix}.
	\end{align}
	The requirement that $\Delta\begin{pmatrix}u_0 + u_1 z\\ v_0 + v_1 z\end{pmatrix} \in z B^-$
	is equivalent to
	\begin{align}
		z^2 &:
		\begin{pmatrix}
		-z a^\dag d v_1 z\\
		-z b^\dag d v_1 z 
		\end{pmatrix} = 0,&
		z^1 &:
		\begin{pmatrix}
		a^\dag c u_1 z - z a^\dag d v_0 \pm c^\dag b v_1 z\\
		b^\dag c u_1 z + 2b^\dag d v_1 z - z b^\dag d v_0
		\end{pmatrix} = 0
		\\
		\Longleftrightarrow \qquad&\begin{cases}
		a^\dag d v_1 = 0\\
		b^\dag d v_1 = b^\dag(c u_1 - dv_0) = 0\\
		\pm c^\dag b v_1 = a^\dag(d v_0 - c u_1)
		\end{cases} \, .\nonumber
	\end{align}
	Since $a^\dag d \mp c^\dag b = 1$ by~\eqref{eq:unitarity},
	we have $v_1 = - a^\dag(d v_0 - c u_1)$.
	Then, $0 = a^\dag d v_1 = -a^\dag d a^\dag (d v_0 - c u_1) = -a^\dag ( 1 \pm cb^\dag)(d v_0 - c u_1) = v_1$.
	This implies that $d a^\dag(d v_0 - c u_1) = 0$. 
	Adding $\mp c (b^\dag(d v_0 - c u_1)) = 0$,
	we see that $d v_0 - c u_1 = 0$.
	Noticing $u_0$ is unconstrained,
	we see that the most general element of $z \Delta^{-1} B^- \cap B^+$ can be written as
	\begin{align}
		\begin{pmatrix} u_0' \\ 0 \end{pmatrix} + 
		\begin{pmatrix}
		u_1' (1-z) \\ v_0
		\end{pmatrix} \qquad \text{ where }\quad c u_1'  + dv_0 = 0,
	\end{align}
	and this expression is unique.
	The constraint on $u_1',v_0$ is equivalent to $\begin{pmatrix}
	u_1'\\
	v_0
	\end{pmatrix} = U^{-1} \begin{pmatrix}
	y\\ 0
	\end{pmatrix}$ for some $y \in Q$.
	Therefore,
	\begin{align}
		\bd_{n=1}(\Delta^{-1} B^*, B ) &= 
		\begin{pmatrix} Q \\0 \end{pmatrix} \oplus 
		\begin{pmatrix} (1-z) I & 0 \\ 0 & I \end{pmatrix} U^{-1} \begin{pmatrix}Q \\0 \end{pmatrix} \, .
	\end{align}

	Next, we calculate $L(\Delta,B,n=1)$. 
	Since it is by~\ref{thm:Lcomplement} an embedding of~$\bd_1(\Delta^{-1}B,B)$ into~$B \oplus B^*$,
	we may consider the direct summands of~$\bd_1(\Delta^{-1}B,B)$ separately.
	Below, we use brackets~$[\cdots]$ to signify that the dimension (rank) of a bracketed column matrix
	is twice as large as that of a parenthesized one.
	\begin{align}
		\begin{bmatrix}
		\projz^{<1}\\
		\projz^{\ge 0} \Delta
		\end{bmatrix} \begin{pmatrix}Q \\ 0\end{pmatrix}
		&=
		\left\{
		\begin{bmatrix}
		u_0\\
		0\\
		a^\dag c u_0\\
		b^\dag c u_0
		\end{bmatrix} ~\middle|~
		u_0 \in Q
		\right\}
		\\
		\begin{bmatrix}
		\projz^{<1}\\
		\projz^{\ge 0} \Delta
		\end{bmatrix}
		\begin{pmatrix} (1-z)I_{q} && 0 \\ 0 && I_{q} \end{pmatrix} U^{-1}\begin{pmatrix}Q \\ 0\end{pmatrix}
		&=
		\begin{bmatrix}
		U^{-1}\\
		\pm U^\dagger \eta^\dagger
		\end{bmatrix} \begin{pmatrix}Q \\ 0\end{pmatrix} 
		\nonumber
	\end{align}
	where in the second line we used~\eqref{eq:fivematrices}.
	Now that we have identified~$L(\Delta,B,n)$,
	we examine what it takes to bring~$L(\Delta,B,n)$ to~$B \oplus 0 \subseteq B \oplus B^*$.
	\begin{align}
		\begin{Bmatrix}
		Q \\ Q^* \\ Q^* \\ Q
		\end{Bmatrix}
		&\begin{bmatrix}
		I_q & d^\dag \\
		0 & \mp c^\dag\\
		a^\dag c &  \pm c^\dag\\
		b^\dag c & \pm d^\dag
		\end{bmatrix}
		\xrightarrow{\cz\begin{pmatrix}-a^\dag c & \mp c^\dag b \\ -b^\dag c & -b^\dag d \end{pmatrix}}
		\begin{bmatrix}
		I_q & d^\dag \\
		0 & \mp c^\dag\\
		0 & 0\\
		0 & \pm d^\dag
		\end{bmatrix}
		\xrightarrow{\cz\begin{pmatrix} 0 & \mp I \\ -I & 0 \end{pmatrix}}
		\begin{bmatrix}
		I_q & 0 \\
		0 & \mp c^\dag\\
		0 & 0\\
		0 & \pm d^\dag
		\end{bmatrix}\\
		&\xrightarrow{I \hat\oplus \hada_\mp}
		\begin{bmatrix}
		I_q & 0 \\
		0 & \pm \hat 1 d^\dag\\
		0 & 0\\
		0 & \hat 1^{-\dag} c^\dag
		\end{bmatrix}
		\xrightarrow{\cx\begin{pmatrix}0 & \pm \hat 1 ^{-1} \\ \hat 1 & 0 \end{pmatrix}}
		\begin{bmatrix}
		0 & d^\dag \\
		I_q & 0\\
		0 & \pm c^\dag\\
		0 & 0
		\end{bmatrix}
		\xrightarrow{\begin{pmatrix} a & - b\\ -c & d \end{pmatrix} \hat\oplus I}
		\begin{bmatrix}
		0 & I_q \\
		I_q & 0\\
		0 & 0\\
		0 & 0
		\end{bmatrix}
		\nonumber
	\end{align}
	where in the first arrow we use~\eqref{eq:unitarity} 
	and in the third arrow we use $\hat 1 : Q \ni u \mapsto (v \mapsto u^\dag v) \in Q^*$.
	Hence, the unitary under which $B \oplus 0 \to L(\Delta,B,n)$ is stably equivalent to~$(U')^{-1}$,
	where $U'$ is the unitary that appears in the last arrow.
	But $(U')^{-1}$ is stably equivalent to~$U$ by~\ref{thm:TRCisInvU}.
	By~\ref{thm:formation-unitary}, the map~$\bass^\downarrow \circ \bass^\uparrow$ is determined
	and is the identity as claimed.
\end{proof}

\subsection{Kernel of the descent maps from Witt groups}

In this subsection, we use~$y$ to denote a variable independent from any other variables~$x_1,\ldots,x_\dd$, and~$z$.
So, $\Ryz$ denotes the Laurent polynomial ring with~$\dd + 2$ variables.
Recall from~\ref{thm:hermitianMatrices} the group homomorphism
\begin{align}
\sym:\qwitt^\mp \ni [\xi] \mapsto [\xi \mp \xi^\dag] \in \switt^\mp
\end{align}
for any base ring.

\begin{lemma}\label{thm:up-down-and-down-up}
	The following diagram anticommutes.\footnote{
		The unitary group at the top can be relaxed to the $\lambda$-unitary group, 
		but the lower one should be kept as the $\eta$-unitary group.
		This is to ensure the evenness on the rightmost Witt group.
	}
	\begin{align}
		\xymatrix{
			& \qumod^\mp(\Ryz)\ar@{->}[rd]^{\sym \circ \bass^\downarrow_{2\iota+1}(y)} & \\
			\qwitt^\mp(\Ry)\ar@{->}[rd]_{\bass_{2\iota}^\downarrow(y) \circ \sym}\ar@{->}[ur]^{\bass^\uparrow_{2\iota}(z)} & 
			& \switt^\mp(\Rz)\\
			& \qumod^\pm(R)\ar@{->}[ur]_{\bass^\uparrow_{2\iota -1}(z)} &
		} \label{eq:diagram}
	\end{align}
\end{lemma}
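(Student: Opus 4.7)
The plan is to compute both compositions on a general representative $\xi \in \qwitt^\mp(\Ry)$ and exhibit an explicit Witt equivalence, with an explicit minus sign, between the sum of the two resulting hermitian classes in $\switt^\mp(\Rz)$ and a trivial form. Fix a nonsingular quadratic $\mp$-form $\xi$ on $Q = \Ry^q$ and, by Proposition~\ref{thm:WittGroupOfForms}, an $\Ry$-isomorphism $T : Q \oplus Q^* \to Q \oplus Q$ with $T^\dagger \diag(\xi, -\xi) T \sim \eta$. Then $U_z = T^{-1} \diag(zI, I) T \in \qhp^\mp(2q; \Ryz)$ represents $\bass^\uparrow_{2\iota}(z)[\xi]$.

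For the top path, I would apply $\bass^\downarrow_{2\iota+1}(y)$ to $U_z$ with the modular $\Rz$-base $A = \Rz^{2q}$ of $\Ryz^{2q}$. Because $\diag(zI, I)$ does not involve $y$, the boundary module $\bd_n(U_z A, A)$ identifies (up to a stably trivial summand) with $T^{-1}(\Rz^q \oplus 0)$, and the induced even hermitian form $[\lambda_{2q}|_{\bd_n(U_z A, A)}]_0$ pulls back along $T^{-1}$ to a direct summand of $\diag(\xi, -\xi)$, namely $\xi$ itself over $\Rz$. Post-composing with $\sym$ thus yields the class $[\xi \mp \xi^\dag]$ over $\Rz$. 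For the bottom path, I set $\Delta = \xi \mp \xi^\dag$ over $\Ry$, build the lagrangian $L(\Delta, R^q, n) \subset M \oplus M^*$ as in~\eqref{eq:Ldef}, and, using Proposition~\ref{thm:formation-unitary}, pick a unitary $U' \in \qhp^\pm(qn; R)$ with $U'(M \oplus 0) = L(\Delta, R^q, n)$. I then apply $\bass^\uparrow_{2\iota-1}(z)$ to $U'$ via the explicit block formula~\eqref{eq:DefBassUpFromUnitary} (or its symmetrized variant in Remark~\ref{rem:alt-bass-up-from-unitary}) to obtain a nonsingular even hermitian $\mp$-form on $(R^{qn} \oplus R^{qn*})_z$ over $\Rz$.

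To compare the two resulting classes in $\switt^\mp(\Rz)$, I plan to introduce a two-variable interpolant $\widetilde U = T^{-1} \diag(z y I, I) T \in \qhp^\mp(2q; \Ryz)$ whose $y$-boundary recovers $U_z$ (after rescaling) and whose $z$-boundary, after conjugating by $T$ and tracing through the identification of lagrangians in $(M \oplus M^*)$ via $L(\Delta, R^q, n)$, recovers $U'$ up to elementary moves. Anticommutativity of the diagram then reduces to comparing the sign conventions embedded in the two descent constructions: the $y$-half-space choice $\projz^{\ge 0}$ in the definition of $\bd_n$ on the one hand, and the asymmetric factor $(1 - z)$ in the bottom row of~\eqref{eq:tildeBass} (or $(2 - z - 1/z)$ in~\eqref{eq:DefBassUpFromUnitary}) on the other. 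The hard part will be sign bookkeeping: Ranicki's original argument is incomplete precisely here, so I expect to need Lemma~\ref{thm:TRCisInvU} to swap $U'$ with $(U')^{-1}$ (which carries the minus sign on the off-diagonal blocks needed for the discrepancy), Lemma~\ref{thm:CZ} to absorb stabilization by $\qcz(\theta)$-terms arising when comparing the lagrangians, and an evaluation $z \to 1$ check via $\epsilon$ to pin down the sign. If a direct calculation becomes unwieldy, a fallback is to verify the anticommutativity on the generators of $\qwitt^+(\FF_p)$ enumerated in Proposition~\ref{thm:ZeroDcalculation}, which together with naturality in the base ring suffices for the general case.
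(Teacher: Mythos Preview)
Your computation of the top path is incorrect, and this is the crux of the whole lemma. You write that ``because $\diag(zI,I)$ does not involve $y$, the boundary module $\bd_n(U_z A, A)$ identifies (up to a stably trivial summand) with $T^{-1}(\Rz^q \oplus 0)$,'' and conclude that the top path lands on $[\xi \mp \xi^\dag]$ viewed as a form over $\Rz$. But $T$ is built from $\xi$ and $\Delta^{-1}$, both of which are over $\Ry$ and genuinely involve $y$; the $y$-boundary of $U_z = T^{-1}\diag(zI,I)T$ therefore mixes $y$-degrees in a nontrivial way, and the result is \emph{not} the constant form $\xi \mp \xi^\dag$ in $z$. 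If your claim were true the diagram would commute on the nose (both paths giving the class of $\Delta$), and there would be nothing to prove. The paper's explicit answer for the top path is the form $[\xi''|_W]_{y^0}$ with $\xi''$ carrying factors of $(1-z)$ and $z$ in its off-diagonal blocks; the $z$-dependence enters precisely because the $y$-descent of $U_z$ sees the shift $\diag(zI,I)$ through the $y$-spreading of $T$.

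What the paper actually does is use Lemma~\ref{thm:separatorOnly} (the $S^\perp/S$ characterization of the boundary form) with $S = \diag(zI,I)\,T(y^n \Delta^{-1}(Q_z^*)^+ \oplus 0)$, and then exhibits an explicit $\Rz$-isomorphism $\bar f$ from $\bd_n^{(y)}(\Delta^{-1}Q_z^*,Q_z) \oplus \bd_n^{(y)}(\Delta Q_z,Q_z^*)$ onto $S^\perp/S$. The pulled-back form $f^\dag\diag(\phi,-\phi)f$ is then simplified by a sequence of $\sim$-equivalences until it matches, up to a global sign, the expression for $\bass^\uparrow_{2\iota-1}(z)(V^\dag \eta V)$ computed from \eqref{eq:LEtaL}, \eqref{eq:LstarEtaLstar}, \eqref{eq:LstarEtaL} and Remark~\ref{rem:alt-bass-up-from-unitary}. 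Your interpolant $\widetilde U = T^{-1}\diag(zyI,I)T$ does not shortcut this: its $y$- and $z$-boundaries do not directly yield $U_z$ and $U'$ in any way that bypasses the $\bar f$ computation. Finally, your fallback of checking on generators of $\qwitt^+(\FF_p)$ cannot work: the diagram must anticommute on all of $\qwitt^\mp(\Ry)$, which is strictly larger than the image of $\qwitt^\mp(\FF_p)$, and naturality in $R$ gives you no leverage on classes that do not come from the ground field.
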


We note an intuitive reason for the skew-commutativity in~\ref{rem:skewcommutativity} below.
This lemma is stated by Novikov~\cite[Thm.\,6.4]{Novikov1}
and the proof is said to be easy~\cite{Novikov2}.
Shortly after, 
this lemma is cited by Ranicki~\cite[Lem.\,3.4]{Ranicki2} with a lot more detail,
but it appears to the present author that some amendment is still needed.

\begin{proof}
	Let $[\phi] \in \qwitt^\mp(\Ry)$ and $\Delta = \phi \mp \phi^\dag$ where the form~$\phi$ is 
	on~$Q_y = \bigoplus_{j\in \ZZ} y^j Q$;
	the $R$-submodule~$Q$ is a modular $R$-base of an $\Ry$-module~$Q_y$.
	The module~$Q$ is finitely generated free over~$R$.
	Let $\psi : Q_y^* \to Q_y$ be any quadratic $\mp$-form such that $\psi \sim \Delta^{-1} \phi \Delta^{-1}$
	so $\psi \mp \psi^\dag = \Delta^{-1}$.
	There is an integer~$n \ge 0$ such that
	\begin{align}
		(\phi \mp \phi^\dag) Q \subseteq \bigoplus_{j=-n}^n y^j Q^*,
		\qquad
		(\psi \mp \psi^\dag) Q^* \subseteq \bigoplus_{j=-n}^n y^j Q,
		\qquad 
		\psi Q^* \subseteq \bigoplus_{j=0}^n y^j Q,
	\end{align}
	where the last requirement is ensured by, if necessary,
	adding some even $\pm$-form.

	We first follow the upper two arrows in the diagram.
	Let $T = \begin{pmatrix} 1 & \psi \\ 1 & \pm \psi^\dag \end{pmatrix}$ be as in~\ref{thm:WittGroupOfForms}.
	Hence, the~$\eta$-unitary~$\bass^\uparrow_{2\iota}(z)([\phi])$ 
	is represented by~$U = T^{-1} \diag(zI,I)T : Q_{y,z} \oplus Q_{y,z}^* \to Q_{y,z} \oplus Q_{y,z}^*$ over~$\Ryz$
	where $Q_{y,z} = \bigoplus_{j,k \in \ZZ} y^j z^k Q$.

	Next, we have to determine the boundary module $\bd_1^{(y)}(U (Q_{z}\oplus Q_{z}^*), Q_{z}\oplus Q_{z}^*)$
	where $Q_z = \bigoplus_{k \in \ZZ} z^k Q$ is our choice of a modular $\Rz$-base of $Q_{y,z}$,
	so $Q_z^+ = \bigoplus_{j \ge 0} y^j Q_z$.
	It will be more convenient to work with~$T \bd_1^{(y)}(U (Q_{z}\oplus Q_{z}^*), Q_{z}\oplus Q_{z}^*) = 
	\bd_1^{(y)}(\diag(z,1) T (Q_{z}\oplus Q_{z}^*), T (Q_{z}\oplus Q_{z}^*))$.
	We will use the alternative characterization of the boundary module~$S^\perp/S$ of~\ref{thm:separatorOnly}
	where
	\begin{align}
		S  = \begin{pmatrix}zI & 0 \\ 0 & I\end{pmatrix} T \begin{pmatrix} y^n \Delta^{-1} (Q_z^*)^+ \\ 0 \end{pmatrix} 
		= \left\{ \begin{pmatrix} z\Delta^{-1} c \\ \Delta^{-1} c \end{pmatrix} \in Q_{y,z} \oplus Q_{y,z}
		~\middle|~ c \in \projy^{\ge n}_{Q_z^*} \right\}.
	\end{align}
	We are going to show that a map
	\begin{align}
		f = 
		\begin{Bmatrix} 
			Q_{y,z} \\ Q_{y,z}
		\end{Bmatrix}
		\begin{pmatrix}
		I &&& 0 \\
		\Delta^{-1} (1 - \nu + \tfrac 1 z \nu) \Delta &&& \Delta^{-1}
		\end{pmatrix}		
		\begin{Bmatrix} 
			\bd_n^{(y)}(\Delta^{-1} Q_z^*, Q_z) \\
			\bd^{(y)}_n(\Delta Q_z, Q_z^*)
		\end{Bmatrix}
		& &
		(\nu = \projy^{\ge 0})
	\end{align}
	induces an $\Rz$-module isomorphism
	\begin{align}
		\bar f : 
		\bd_n^{(y)}(\Delta^{-1} Q_z^*, Q_z) \oplus \bd^{(y)}_n(\Delta Q_z, Q_z^*)
		\xrightarrow{\quad f \quad}
		S^\perp
		\to
		S^\perp / S 
	\end{align}
	where the orthogonal complement is with respect to 
	the hermitian $\mp$-form~$[(\Delta \oplus -\Delta)|_{T(Q_z \oplus Q_z^*)^+}]_{y^0}$
	and the second map is the canonical projection.

	To show that $\bar f$ is an isomorphism, we check the following (i) --- (iv).
	We use~$a'' \in \bd_n^{(y)}(\Delta^{-1} Q_z^*, Q_z)$ and~$b'' \in \bd^{(y)}_n(\Delta Q_z, Q_z^*)$
	to denote general elements.

	(i)~\emph{$\diag(\phi,-\phi)$ restricted to~$S$ is identically zero. In particular, $S \subseteq S^\perp$.}

	\noindent
	This is easy:
	$\phi(z \Delta^{-1}c, z \Delta^{-1} c) - \phi(\Delta^{-1}c , \Delta^{-1} c) = 0$
	due to sesquilinearity.

	(ii)~\emph{$f$ maps into~$S^\perp$.}

	\noindent
	We have to check if the image of~$f$ is orthogonal to~$S$.
	We look at the images of~$a''$ and~$b''$.
	First,
	$\Delta(z \Delta^{-1} c, a'') - \Delta(\Delta^{-1} c, \Delta^{-1}(1-\nu+\tfrac 1 z \nu)\Delta a'')
	=
	\Delta(\Delta^{-1} c, \tfrac 1 z a'') - \Delta(\Delta^{-1} c, \Delta^{-1}(1-\nu+\tfrac 1 z \nu)\Delta a'')
	=
	\Delta(\Delta^{-1} c, (\tfrac 1 z -1) \Delta^{-1} (1-\nu)\Delta a'')
	=
	\mp (\tfrac 1 z -1) c(\Delta^{-1}(1-\nu)\Delta a'')
	$,
	but $c \in \projy^{\ge n}$ while $\Delta^{-1}(1-\nu) \circ \in \projy^{<n}$.
	Hence, this expression has no $y^0$-term.
	Second, $\Delta^{-1} b'' \in \projy^{< n}$.
	Hence, $-\Delta(\Delta^{-1} c, \Delta^{-1}b'') = \pm c(\Delta^{-1}b'')$ has no $y^0$-term.

	(iii)~\emph{$\bar f$ is injective.}

	\noindent
	Suppose $f(a'',b'') \in S$.
	The first component of $f(a'',b'')$ is $a'' \in \bd_n^{(y)}(\Delta^{-1} Q_z^*, Q_z) \subseteq \Delta^{-1} \projy^{< n}_{Q_z^*}$,
	but the supposition says $a'' \in \Delta^{-1} \projy^{\ge n}_{Q_z^*}$.
	The intersection is zero.
	Hence,~$a'' = 0$ and~$f(a'',b'') = f(0,b'')=(0, \Delta^{-1} b'')$
	which must be equal to $(z\Delta^{-1}c, \Delta^{-1}c)$ for some $c$,
	which is possible only if $\Delta^{-1}b'' = 0$ and $b''= 0$.

	(iv)~\emph{$\bar f$ is surjective, i.e., $S^\perp \subseteq S + \im f$.}

	\noindent
	Due to our choice of $\psi$ that maps the nonnegative sector $\projy^{\ge 0}_{Q_z^*}$ into $\projy^{\ge 0}_{Q_z}$,
	we may write a general element~$(s+\psi v, s \pm \psi^\dag v)$ of~$T(Q_z \oplus Q_z^*)^+$ as 
	$(a, a + \Delta^{-1} b)$ where $a = s + \psi v \in Q_z^+$ and $b = -v \in Q_z^+$ are arbitrary.
	The orthogonal complement~$S^\perp$ consists of~$(a,b)$ such that
	\begin{align}
		\Delta(z \Delta^{-1} c, a) - \Delta(\Delta^{-1} c, a+ \Delta^{-1} b) &= 0,\\
		c(\tfrac 1 z a - a - \Delta^{-1} b ) &= 0\nonumber
	\end{align}
	for all $c \in \projy^{\ge n}_{Q_z^*} Q_{y,z}^*$.
	Since $[c(\ell)]_0 = 0$ for all~$c \in \projy^{\ge n}_{Q_z^*}$ if and only if~$\projy^{\ge n}_{Q_z} \ell = 0$,
	we arrive at an equation
	\begin{align}
		\projy^{\ge n} \Delta^{-1} b = (\tfrac 1 z -1) \projy^{\ge n} a.
	\end{align}
	In words, the $\projy^{\ge n}$-sector of $\Delta^{-1}b$ is determined by a given~$a \in \projy^{\ge 0}$.
	Since $(\tfrac 1 z - 1)\Delta \projy^{\ge n} a \in \projy^{\ge 0}$ for any $a \in \projy^{\ge 0}$,
	an arbitrary $a \in \projy^{\ge 0}$ is admissible.
	On the other hand, the $\projy^{<n}$-sector of $\Delta^{-1}b$ is unconstrained
	and ranges over the intersection of~$y^n Q_z^-$ and~$\Delta^{-1} (Q_z^*)^+$.
	Therefore,
	\begin{align}
		S^\perp &=
		\left\{
			\begin{pmatrix}
			a \\
			a + (\tfrac 1 z - 1) \projy^{\ge n} a + \Delta^{-1} b'
			\end{pmatrix}
			~\middle|~
			a \in Q_z^+, ~~b' \in \bd^{(y)}_n(\Delta Q_z, Q_z^*)
		\right\}. \label{eq:orthocalc}
	\end{align}
	Here, the term of~$b'$ is already $f(0,b')$.
	Using $Q_z^+ = \bd_n^{(y)}(\Delta^{-1} Q_z^*, Q_z) \oplus y^n \Delta^{-1} (Q_z^*)^+$,
	we write 
	\begin{align}
		a = a'' + z\Delta^{-1}c
	\end{align}
	uniquely where $c \in \projy^{\ge n}$.
	We have a decomposition
	\begin{align}
		\begin{pmatrix}
		a \\
		a + (\tfrac 1 z - 1) \projy^{\ge n} a
		\end{pmatrix}
		=
		f\begin{pmatrix} a''\\ (\tfrac 1 z - 1)(\Delta \projy^{\ge n} - \nu \Delta)a''\end{pmatrix} 
		+ (z-1)f\begin{pmatrix}0 \\ \Delta \projy^{<n} \Delta^{-1} c\end{pmatrix}
		+ \begin{pmatrix}	z \Delta^{-1} c \\ \Delta^{-1} c	\end{pmatrix},
	\end{align}
	where the arguments of~$f$ are in the correct domain because
	\begin{align}
		(Q_z^*)^+	
		\ni \quad
		&\underline{(\Delta \projy^{\ge n} - \nu \Delta) a'' }
		=
		(\Delta \projy^{\ge n} - \nu \Delta) (\projy^{\ge n} + \projy^{< n}) a''
		\nonumber\\
		&=
		- \nu \Delta \projy^{< n} a'' 
		=
		\Delta (\Delta^{-1}(1-\nu))\Delta \projy^{<n}a'' - \Delta \projy^{<n} a''
		\quad \in y^n \Delta Q_z^- ,\\
		(Q_z^*)^+
		\ni \quad
		&c - \Delta \projy^{\ge n} \Delta^{-1}c
		=
		\underline{\Delta \projy^{<n} \Delta^{-1} c}
		\quad \in
		y^n \Delta Q_z^- .
		\nonumber
	\end{align}
	This proves that $\bar f$ is surjective,
	and completes the proof that $\bar f$ is an $\Rz$-isomorphism.%
	\footnote{
		Ranicki~\cite[Lem.\,3.4]{Ranicki2} uses a similar map,
		which is defined by nominally the same matrix as~$f$
		but acting on a different domain~$P \oplus P^*$
		where $P = \bigoplus_{0 \le j < n} y^j Q_z$.
		It is however not explained in~\cite{Ranicki2} why the different domain results in 
		a Witt equivalent final answer~$[\xi]_0$.
		The difference between~\cite{Ranicki2} and ours 
		originates from the choice of~$S$.
		Ranicki takes $\{(zc',c')|c' \in \projy^{\ge n}\}$
		instead of our $\{(z\Delta^{-1}c,\Delta^{-1}c) | c \in \projy^{\ge n}\}$.
		Despite the difference, the formulas are nominally identical.
	}

	By~\ref{thm:separatorOnly},
	we obtain a representative~$\left[\xi\middle|_{\bd_n^{(y)}(\Delta^{-1} Q_z^*, Q_z) \oplus \bd^{(y)}_n(\Delta Q_z, Q_z^*)}\right]_{y^0}$ 
	of the form $\bass^\downarrow_{2\iota+1}(y) \bass^\uparrow_{2\iota}(z)[\phi]$ 
	where
	\begin{align}
		\xi = f^\dag \diag(\phi , -\phi) f.
	\end{align}
	Let us simplify the formula for~$\xi$.
	Recall~\eqref{eq:projzDagger}.
	\begin{align}
		&
		\begin{Bmatrix}	
		\bd_n^{(y)}(Q_z^*, \Delta Q_z)^* \\
		\bd^{(y)}_n(\Delta Q_z, Q_z^*)^*
		\end{Bmatrix}	
		\begin{pmatrix}
			\Delta^{-\dag} & 0 \\ 0 & I
		\end{pmatrix}
		\xi
		\begin{pmatrix}
			\Delta^{-1} & 0 \\ 0 & I
		\end{pmatrix}
		\begin{Bmatrix}	
		\bd_n^{(y)}(Q_z^*, \Delta Q_z) \\
		\bd^{(y)}_n(\Delta Q_z, Q_z^*)
		\end{Bmatrix}
		\nonumber\\
		&\sim
		\mp
		\begin{pmatrix}
			I &&& I - \nu + z \nu \\
			0 &&& I
		\end{pmatrix}
		\begin{pmatrix}
			\psi & 0 \\ 0 & -\psi
		\end{pmatrix}
		\begin{pmatrix}
			I &&& 0 \\
			I-\nu + \tfrac 1 z \nu &&& I
		\end{pmatrix}\qquad(\Delta^{-\dag} \phi \Delta^{-1} \sim \mp \psi) \nonumber\\
		&\sim
		\mp
		\left(
		\begin{pmatrix}
			\psi - (1-\nu + z\nu)\psi(1-\nu+\tfrac 1 z \nu) &&& -(1-\nu+ z\nu)\psi \\
			-\psi(1-\nu+\tfrac 1 z \nu) &&& -\psi
		\end{pmatrix}\right.\label{}\\
		&\qquad\qquad
		\left. +
		\begin{pmatrix}
		-(1-\tfrac 1 z)(1 - \nu )\psi\nu  \mp (1-z)\nu\psi^\dag(1-\nu)	&&& (1-\nu)\psi \pm z \nu \psi^\dag \\
		\pm \psi^\dag(1-\nu) + \tfrac 1 z \psi \nu	&&& 0
		\end{pmatrix}\right)
		\nonumber\\
		&=
		\mp
		\begin{pmatrix}
			(1-z)(\nu \Delta^{-1} - \nu \Delta^{-1} \nu) &&& -z \nu \Delta^{-1}\\
			-\Delta^{-1} (1-\nu) &&& -\psi
		\end{pmatrix} = \xi' \nonumber
	\end{align}
	where 
	the second $\sim$ is because the indented matrix on the fourth line is an even hermitian $\pm$-form.
	Therefore, $\bass^\downarrow_{2\iota+1}(y)\circ \bass^\uparrow_{2\iota}(z)([\phi])$
	is represented by $\left[\xi'\middle|_{\bd_n^{(y)}(Q_z^*, \Delta Q_z) \oplus \bd^{(y)}_n(\Delta Q_z, Q_z^*)}\right]_{y^0}$.

	Since $\psi \sim \mp \Delta^{-\dagger} \phi \Delta^{-1}$,
	we must have $\bass^\downarrow_{2\iota+1}(y)\circ \bass^\uparrow_{2\iota}(z)([\psi]) 
	= \bass^\downarrow_{2\iota+1}(y)\circ \bass^\uparrow_{2\iota}(z)([\mp \phi])$.
	This means that we may replace every~$\psi$ by~$\mp \phi$,
	{\em i.e.},
	\begin{align}
		&\bass^\downarrow_{2\iota+1}(y)\circ \bass^\uparrow_{2\iota}(z)([\phi])
		=
		[
		[\xi''|_W]_{y^0}
		],\nonumber\\
		&W = {\bd_n^{(y)}(Q_z, \Delta^{-1} Q_z^*) \oplus \bd^{(y)}_n(\Delta^{-1} Q_z^*, Q_z)},\\
		&\xi'' = 
			(-1)
			\begin{pmatrix}
				(1-z)(\nu \Delta \nu - \nu \Delta) &&& z \nu \Delta\\
				\Delta (1-\nu) &&& \phi
			\end{pmatrix}.\nonumber
	\end{align}

	On the other hand, we have computed $V^\dag \eta V$ 
	in~\eqref{eq:LEtaL},\eqref{eq:LstarEtaLstar},\eqref{eq:LstarEtaL},
	where $[V] = \bass^\downarrow_{2\iota}(y)([\Delta])$.
	Arranging them with respect to~$W$,
	we have
	\begin{align}
		V^\dag \eta V =	
		\begin{Bmatrix}
		\bd_n^{(y)}(Q_z, \Delta^{-1} Q_z^*)^* \\
		\bd^{(y)}_n(\Delta^{-1} Q_z^*, Q_z)^*
		\end{Bmatrix}
		\begin{pmatrix}
			\nu \Delta \nu - \Delta &&& - \nu \Delta\\
			\Delta (1-\nu) &&& \Delta
		\end{pmatrix}
		\begin{Bmatrix}
		\bd_n^{(y)}(Q_z, \Delta^{-1} Q_z^*) \\
		\bd^{(y)}_n(\Delta^{-1} Q_z^*, Q_z)
		\end{Bmatrix}.
	\end{align}
	By~\ref{rem:alt-bass-up-from-unitary},
	it is evident that $\bass^\uparrow_{2\iota-1}(z) \circ \bass^\downarrow_{2\iota}(y)([\Delta]) 
	= [-[\xi'' \mp (\xi'')^\dag |_W]_{y^0}]$.
	By~\ref{thm:WittGroupOfForms}, the minus sign means the skew-commutativity.
\end{proof}

\begin{proposition}\label{thm:evenBassExact}
	The following short sequence is split exact.
	\begin{align}
		0 \xrightarrow{\quad} \qwitt^{\mp}(R) \xrightarrow{\quad \bar\epsilon \quad} \qwitt^{\mp}(\Rz)
		\xrightarrow{\quad \bass^\downarrow_{2\iota} \circ \sym \quad} \qumod^{\pm}(R) \xrightarrow{\quad} 0
	\end{align}
\end{proposition}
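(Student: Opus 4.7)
The plan is to check the four ingredients of split exactness: injectivity of~$\bar\epsilon$; vanishing of the composition $(\bass^\downarrow_{2\iota} \circ \sym) \circ \bar\epsilon$; construction of a splitting $s \colon \qumod^\pm(R) \to \qwitt^\mp(\Rz)$ (which also gives surjectivity); and the reverse kernel inclusion $\ker(\bass^\downarrow_{2\iota} \circ \sym) \subseteq \im(\bar\epsilon)$. For injectivity, the evaluation homomorphism $\epsilon : \Rz \to R$ at $z \mapsto 1$ is a left inverse of~$\bar\epsilon$. For the vanishing composition, if $[\phi] \in \qwitt^\mp(R)$, then $\Delta = \phi \mp \phi^\dag$ does not involve~$z$, so in the construction of~\ref{thm:FromFormsDownToUnitary} we may take $n=0$; then $\bd_0(\Delta^{-1}B^*, B)$ and $\bd_0(B, \Delta^{-1}B^*)$ both vanish, producing a trivial lagrangian pair and the zero class in~$\qumod^\pm(R)$.

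For the splitting, I would lift Ranicki's matrix~\eqref{eq:tildeBass} to a quadratic form. Given $U \in \qhp^\pm(R)$ with $U^\dag \eta U = \begin{pmatrix} \xi \mp \xi^\dag & \gamma \\ \delta & \rho \mp \rho^\dag \end{pmatrix}$ (signs swapped from~\ref{thm:BassUpFromUnitary} because the source sign is~$\pm$ here), set
\begin{align}
\tilde\bass^\uparrow(U) = \begin{pmatrix} \xi & -z\gamma \\ \delta & (1-z)(\rho \mp \rho^\dag) \end{pmatrix}.
\end{align}
A direct calculation yields $\tilde\bass^\uparrow(U) \mp \tilde\bass^\uparrow(U)^\dag = \bass^\uparrow(U)$, so $\tilde\bass^\uparrow(U)$ is a quadratic $\mp$-form whose associated nonsingular hermitian form is~$\bass^\uparrow(U)$. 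The freedom in~$\xi$ is addition of a $\pm$-even form, which alters $\tilde\bass^\uparrow(U)$ only in its upper-left $\pm$-even block; this is a $\sim$-equivalence of quadratic $\mp$-forms, so $[\tilde\bass^\uparrow(U)] \in \qwitt^\mp(\Rz)$ is well-defined. Invariance under the generators of~$\qehp^\pm(R)$ is verified by the same congruences used in the proof of~\ref{thm:BassUpFromUnitary}: $\cx(\alpha)$ is transparent, $\qcz(\mu)$ and~$\qcz(\mu)^\dag$ admit explicit congruences, and $\hada$ reduces to the previous cases via~\ref{lem:HHbyZX}. Since $\sym \circ \tilde\bass^\uparrow = \bass^\uparrow_{2\iota-1}$, \ref{thm:up-and-down-from-unitary-is-identity} yields $(\bass^\downarrow_{2\iota} \circ \sym) \circ \tilde\bass^\uparrow = \bass^\downarrow_{2\iota} \circ \bass^\uparrow_{2\iota-1} = \mathrm{identity}$, giving both surjectivity and a splitting.

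The main obstacle is the kernel inclusion, which does not follow automatically from the existence of a splitting. I plan to mimic the Eilenberg--Ranicki swindle used in the proof of~\ref{thm:oddBassExact}. Given $[\phi] \in \qwitt^\mp(\Rz)$ with $\bass^\downarrow_{2\iota}(\sym[\phi]) = 0$, the hypothesis furnishes a compatible trivialization of the lagrangian pair $L(\Delta,B,n), L^*(\Delta,B,n)$ associated to $\Delta = \phi \mp \phi^\dag$. I would feed this into a stabilization $P = \bigoplus_j z^j P_0$ absorbing a copy of the trivializing lagrangian and construct an explicit $E \in \gl(\Rz)$ realizing a congruence $E^\dag(\phi \oplus \phi_{\mathrm{triv}}) E = \phi''$ with $\phi''$ free of~$z$, so that $[\phi] = [\phi''] \in \im(\bar\epsilon)$. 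The real difficulty is lifting the trivialization from the hermitian level (where $\bass^\downarrow_{2\iota}$ operates) to the quadratic level while tracking the even-form freedom; this needs an auxiliary~$\psi$ with $\psi \mp \psi^\dag = \Delta^{-1}$ in the spirit of~\ref{thm:WittGroupOfForms}, threaded through the stabilization consistently. As a cleaner alternative, \ref{thm:sesquilinearWittGroup} identifies $\qwitt^\mp(\Rz) \cong \switt^\mp(\Rz)$ when $p$ is odd and splits off $\qwitt^\mp(\FF_2) \subseteq \im(\bar\epsilon)$ when $p=2$, reducing the proof in either case to the purely hermitian analogue of the swindle.
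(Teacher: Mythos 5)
Your handling of the routine parts (injectivity via the left inverse $\epsilon$, vanishing of the composition, surjectivity) is fine, but the proposal has two real problems. First, your right splitting is not well defined as stated. Given $U$, only the even combination $\xi\mp\xi^\dag$ is determined; the ambiguity in $\xi$ is addition of an arbitrary hermitian form of the appropriate symmetry, whereas a $\sim$-equivalence of quadratic forms only permits adding an \emph{even} one. When $2=0$ these are different (e.g.\ the identity matrix over $\FF_2$ is symmetric but not of the shape $\theta+\theta^\dag$), so changing $\xi$ is not a $\sim$-equivalence and could a priori shift the class by an element of $\qwitt(\FF_2)$. In the paper the well-definedness of $\tilde\bass^\uparrow$ is precisely \ref{thm:tildeBass}, whose proof invokes \ref{thm:evenBassExact}; using it as an ingredient here is therefore circular. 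This particular gap is repairable, since the sequence splits for free once exactness is known ($\epsilon$ is a left inverse of $\bar\epsilon$), and surjectivity already follows from \ref{thm:up-and-down-from-unitary-is-identity} together with surjectivity of $\sym$.

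Second, and decisively, the exactness at the middle --- the only hard point --- is not proven: you outline a plan to ``mimic'' the swindle of \ref{thm:oddBassExact} and yourself flag the quadratic-to-hermitian lifting as ``the real difficulty,'' and your cleaner alternative reduces the statement to the hermitian analogue, which is \ref{thm:evenBassExactNonquadratic}; but in the paper that proposition is deduced \emph{from} \ref{thm:evenBassExact} and is nowhere established independently, so the reduction leaves the essential difficulty untouched (and for odd $p$ it is no reduction at all, since quadratic and even hermitian Witt groups then coincide). The paper's proof of $\ker(\bass^\downarrow_{2\iota}\circ\sym)\subseteq\im\bar\epsilon$ runs along a different line: pass to the two-variable ring $\Ryz$, apply the already established split exactness for $\eta$-unitary groups (\ref{thm:oddBassExact}), push down by the surjection $\bass^\downarrow_{2\iota+1}(y)$ from \ref{thm:fromForm-up-and-down-is-identity}, and use the anticommutativity of the square in \ref{thm:up-down-and-down-up} together with a set-theoretic right inverse $\hat\sym$ of $\sym$ to obtain the generation statement $\qwitt^\mp(\Rz)=\bar\epsilon(z)\,\qwitt^\mp(R)+\hat\sym\,\bass^\uparrow_{2\iota-1}(z)\,\qumod^\pm(R)$, from which exactness follows by applying $\bass^\downarrow_{2\iota}\circ\sym$ and \ref{thm:up-and-down-from-unitary-is-identity}. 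Some argument of this kind (or a genuinely new explicit construction replacing it) is required; the proposal as written does not supply one.
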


\begin{proof}
	The composition of the two maps is zero,
	since, if $\Delta \in \im \sym \circ \bar\epsilon(z)$, 
	then $L(\Delta,B,n) = \cz(\Delta)\bigoplus_{j=0}^{n-1} z^j (B \oplus 0)$ is 
	the image of an elementary unitary~$\cz(\Delta)$ of a standard lagrangian.
	The map $\bar\epsilon(z)$ is an embedding because if an $\Rz$-isomorphism~$E$ shows an equivalence
	of $\bar\epsilon(z)(\Delta)$ to a trivial form for $[\Delta] \in \qwitt^\mp(R)$,
	then $\epsilon(E)$ shows an $R$-equivalence of $\Delta$ to a trivial form where $\epsilon : \Rz \ni z \mapsto 1 \in R$.
	In addition, $\bar\epsilon(z)$ has an left inverse induced by~$\epsilon$.
	So, if the sequence in the claim is exact, then it is split exact.
	By~\ref{thm:up-and-down-from-unitary-is-identity},
	the map~$\bass^\downarrow_{2\iota}(z)$ is surjective,
	and clearly, $\sym$ is surjective.
	Hence, $\bass^\downarrow_{2\iota}(z) \circ \sym$ is surjective.

	It remains to prove the exactness at the middle.
	We consider a right inverse~$\hat\sym$ of $\sym$,
	where $\hat \sym$ is a map from the set of all nonsingular even hermitian $\mp$-forms 
	to the set of all nonsingular quadratic $\mp$-forms;
	the right inverse $\hat\sym$ is otherwise arbitrary,
	and is \emph{not} meant to induce a group homomorphism between Witt groups.
	Then, $\sym([\xi] - [\hat\sym \sym \xi]) = [\sym \xi] - [\sym \xi] = 0 \in \switt^\mp$,
	and hence by~\ref{thm:hermitianMatrices} we see
	\begin{align}
	[\xi] - [\hat\sym \sym \xi] \in \qwitt^\mp(\FF) \subseteq \bar\epsilon(z) \qwitt^\mp(R).
	\label{eq:xihatsym}
	\end{align}
	This means that $\hat\sym$ induces a group homomorphism $\switt^\mp(R) \to \qwitt^\mp(R) / \qwitt^\mp(\FF)$.

	We claim that
	\begin{align}
		\qwitt^\mp(\Rz) &= \left\{
		[\bar\epsilon(z) \xi] + [ \hat\sym \circ \bass^\uparrow(z) U ] ~\middle|~
		[\xi] \in \qwitt^\mp(R), U \in \qhp^\pm(R) 
		\right\}. \label{eq:qwittExpress}
	\end{align}
	Then, the desired exactness will follow 
	because 
	\begin{align}
	\bass^\downarrow_{2\iota}(z)\sym( [\bar\epsilon(z)\xi] + [\hat\sym\bass^\uparrow(z)U])
	= \bass^\downarrow_{2\iota}(z) \bass^\uparrow_{2\iota-1}(z)[U] = [U]
	\end{align}
	where the last equality is by~\ref{thm:up-and-down-from-unitary-is-identity}.
	To show~\eqref{eq:qwittExpress},
	we look at the direct sum decomposition supplied by~\ref{thm:oddBassExact}:
	\begin{align}
		\qumod^\mp(\Ryz) = \bar\epsilon(z) \qumod^\mp(\Ry) \oplus \bass^\uparrow_{2\iota}(z) \qwitt^\mp(\Ry).
	\end{align}
	Applying $\bass^\downarrow_{2\iota+1}(y)$, we have
	\begin{align}
		&\bass^\downarrow_{2\iota+1}(y)\qumod^\mp(\Ryz) \nonumber \\
		&= \bass^\downarrow_{2\iota+1}(y)\circ \bar\epsilon(z) \qumod^\mp(\Ry) + \bass^\downarrow_{2\iota+1}(y)\circ \bass^\uparrow_{2\iota}(z) \qwitt^\mp(\Ry) \\
		&=
		\bar\epsilon(z) \circ \bass^\downarrow_{2\iota+1}(y) \qumod^\mp(\Ry) + \bass^\downarrow_{2\iota+1}(y)\circ \bass^\uparrow_{2\iota}(z) \qwitt^\mp(\Ry)\nonumber
	\end{align}
	because $z$ does not interfere with $\bass^\downarrow_{2\iota+1}(y)$.
	Since $\bass^\downarrow_{2\iota+1}$ is surjective by~\ref{thm:fromForm-up-and-down-is-identity}, we have
	\begin{align}
		\qwitt^\mp(\Rz) &=
		\bar\epsilon(z) \qwitt^\mp(R) + \bass^\downarrow_{2\iota+1}(y)\circ \bass^\uparrow_{2\iota}(z) \qwitt^\mp(\Ry)\nonumber\\
		&=
		\bar\epsilon(z) \qwitt^\mp(R) + \hat\sym \circ \sym \circ \bass^\downarrow_{2\iota+1}(y)\circ \bass^\uparrow_{2\iota}(z) \qwitt^\mp(\Ry) & \text{\eqref{eq:xihatsym}}\\
		&=
		\bar\epsilon(z) \qwitt^\mp(R) + \hat\sym \bass^\uparrow_{2\iota -1}(z) \circ \bass^\downarrow_{2\iota}(y) \circ \sym \qwitt^\mp(\Ry)& \text{(\ref{thm:up-down-and-down-up})}\nonumber\\
		&=
		\bar\epsilon(z) \qwitt^\mp(R) + \hat\sym \bass^\uparrow_{2\iota -1}(z) \qumod^\pm(R) &\text{(\ref{thm:up-and-down-from-unitary-is-identity})} \nonumber
	\end{align}
	where $\hat \sym$ should be understood as being applied to a representative of a Witt class,
	which is alright since the ambiguity of $\hat \sym$ is absorbed into $\bar\epsilon(z) \qwitt^\mp(R)$.
	The last expression is equivalent to~\eqref{eq:qwittExpress}.
	This completes the proof.
\end{proof}

\begin{proposition}\label{thm:tildeBass}
	Recall~\eqref{eq:tildeBass}.
	The map~$\tilde\bass : \qumod^\mp(R) \ni [U] \mapsto [\tilde\bass^\uparrow(U)] \in \qwitt^\pm(\Rz)$
	is well defined and is the right inverse of $\bass^\downarrow_{2\iota} \circ \sym$.
\end{proposition}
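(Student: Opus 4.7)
The plan is to exploit the split short exact sequence of Proposition \ref{thm:evenBassExact} in its $\pm$-variant, which furnishes a direct sum decomposition
\begin{align*}
\qwitt^\pm(\Rz)\;\cong\;\qwitt^\pm(R)\,\oplus\,\qumod^\mp(R)
\end{align*}
with projections $\epsilon^*$ (induced by the retract $\epsilon:\Rz\to R$, $z\mapsto 1$, of $\bar\epsilon$) and $\bass^\downarrow_{2\iota}\circ\sym$. Both claims of the proposition can be established simultaneously by computing each projection of $[\tilde\bass^\uparrow(U)]$ and verifying that each depends only on $[U]\in\qumod^\mp(R)$: well-definedness of $\tilde\bass$ then follows from the decomposition, and the right-inverse property is exactly the content of the $\bass^\downarrow\circ\sym$ projection computation.

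For the $\bass^\downarrow_{2\iota}\circ\sym$ projection, I would expand $\tilde\bass^\uparrow(U)\pm\tilde\bass^\uparrow(U)^\dag$ directly from \eqref{eq:tildeBass} and recognize the result as the matrix on the right-hand side of Remark \ref{rem:alt-bass-up-from-unitary}, which is congruent via the displayed invertible $\Rz$-matrices to $\bass^\uparrow(U\hada_\mp^{\hat\oplus q})$. Since $\hada_\mp\in\qehp^\mp$ by definition, it represents the identity in $\qumod^\mp(R)$, and the homomorphism property of $\bass^\uparrow_{2\iota+1}$ established in Proposition \ref{thm:BassUpFromUnitary} yields $[\bass^\uparrow(U\hada_\mp^{\hat\oplus q})]=[\bass^\uparrow(U)]$ in $\switt^\pm(\Rz)$. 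Proposition \ref{thm:up-and-down-from-unitary-is-identity} then closes the loop to produce $[U]$. Only the combination $\xi\pm\xi^\dag$ enters here, which is fixed by $U^\dag\eta U$, and independence on the class representative is inherited from well-definedness of $\bass^\uparrow_{2\iota+1}$.

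For the $\epsilon^*$ projection, setting $z=1$ collapses \eqref{eq:tildeBass} to
\begin{align*}
\epsilon^*\tilde\bass^\uparrow(U)=\begin{pmatrix}\xi & -\gamma\\ \delta & 0\end{pmatrix}
\end{align*}
over $R$. I would exhibit $\{0\}\oplus R^{q*}$ as a lagrangian of this $\pm$-quadratic form: the form vanishes identically on this submodule because its lower-right block is zero, making it a sublagrangian without any $\sim$-adjustment; and the relation $\gamma\mp\delta^\dag=I$ from \eqref{eq:xigammadeltarho} forces the off-diagonal blocks of the associated hermitian form to be $\mp I$, so the pairing of $(a,b)$ with $(0,w)$ reduces to $\mp a^\dag w$, vanishing for all $w$ iff $a=0$. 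Lemma \ref{thm:lagrangian-trivialForm} then gives $\epsilon^*[\tilde\bass^\uparrow(U)]=0\in\qwitt^\pm(R)$ independently of $\xi$, which lives only in the top-left block and enters neither the vanishing check nor the orthogonality check.

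Combining the two projection computations through the direct sum decomposition settles both assertions: any two representatives $\tilde\bass^\uparrow(U,\xi)$ and $\tilde\bass^\uparrow(U,\xi')$ agree on both projections, so they coincide in $\qwitt^\pm(\Rz)$, and the right-inverse statement is the content of the second paragraph. The main obstacle will be tracking the $\pm/\mp$ sign conventions carefully through the chain of congruences in the second paragraph --- particularly verifying that Remark \ref{rem:alt-bass-up-from-unitary}'s matrix identity aligns with $\tilde\bass^\uparrow(U)\pm\tilde\bass^\uparrow(U)^\dag$ in both sign regimes, and ensuring that $\hada_\mp$ is invoked specifically as a generator of $\qehp^\mp$ (not merely $\ehp^\mp$), so that its contribution vanishes in $\qumod^\mp(R)$ rather than only in $\umod^\mp(R)$; this is where the delicate characteristic-$2$ case must be handled with care.
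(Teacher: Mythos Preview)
Your approach is essentially the paper's: use the split exact sequence of Proposition~\ref{thm:evenBassExact} to reduce to computing the two projections $\epsilon$ and $\bass^\downarrow_{2\iota}\circ\sym$ of $[\tilde\bass^\uparrow(U)]$, showing the first vanishes and the second returns $[U]$.

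There is one small slip in your second paragraph. When you expand $\tilde\bass^\uparrow(U)\pm\tilde\bass^\uparrow(U)^\dag$ from \eqref{eq:tildeBass}, the result is $\bass^\uparrow(U)$ of \eqref{eq:DefBassUpFromUnitary} \emph{directly}, not the matrix on the right-hand side of Remark~\ref{rem:alt-bass-up-from-unitary} (that matrix has the $(2-z-\tfrac1z)$ factor on the $\xi$-block rather than the $\rho$-block). So the detour through $\hada_\mp$ and the homomorphism property of $\bass^\uparrow_{2\iota+1}$ is unnecessary: Proposition~\ref{thm:up-and-down-from-unitary-is-identity} applies immediately to $\sym[\tilde\bass^\uparrow(U)]=[\bass^\uparrow(U)]$. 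This actually dissolves the sign-tracking worry you flag at the end. For the $\epsilon$ projection, the paper exhibits triviality via the $\sim$-equivalence $\begin{pmatrix}\xi & -\gamma\\ \delta & 0\end{pmatrix}\sim\begin{pmatrix}\xi & -I\\0 & 0\end{pmatrix}$ rather than your lagrangian argument, but both reach the same conclusion.
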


\begin{proof}
	Observe that a quadratic $\pm$-form
	$\epsilon(z) \circ \tilde \bass^\uparrow(U) = \begin{pmatrix} \xi & -\gamma \\ \delta & 0 \end{pmatrix} 
	\sim 
	\begin{pmatrix} \xi & -\gamma \pm \delta^\dag \\ 0 & 0 \end{pmatrix}
	=
	\begin{pmatrix} \xi & -I \\ 0 & 0 \end{pmatrix}$
	is a trivial quadratic $\pm$-form over~$R$ for an arbitrary choice of $\xi$ given $\xi \pm \xi^\dag$.
	So, if $\psi = \begin{pmatrix} \xi' & -z \gamma \\ \delta & (1-z)(\rho \pm \rho^\dag) \end{pmatrix}$ 
	differs from $\tilde\bass^\uparrow(U)$
	only in $\xi'$ where $\xi \pm \xi^\dag = \xi' \pm (\xi')^\dag$,
	then $\epsilon[\psi] = \epsilon[\tilde\bass^\uparrow(U)] =0$ 
	and $\bass^\downarrow_{2\iota}\circ \sym [\psi] = \bass^\downarrow_{2\iota}\circ \sym[\tilde\bass^\uparrow(U)] = [U]$.
	By~\ref{thm:evenBassExact}, we must have $[\psi] = [\tilde\bass^\uparrow(U)] \in \qwitt^\pm(\Rz)$.
\end{proof}

\section{Coarse graining and blending equivalence}\label{sec:cg}

There are split exact sequences for $\lambda$-unitary groups and hermitian Witt groups.

\begin{proposition}\label{thm:oddBassExactNonquadratic}
	Over $R = \FF_2[x_1,x_1^{-1},\ldots,x_\dd,x_\dd^{-1}]$ of characteristic~$2$,
	the following short sequence is split exact.
	\begin{align}
		0 \xrightarrow{\quad} \umod(R) \xrightarrow{\quad \bar\epsilon \quad} \umod(\Rz) 
		\xrightarrow{\quad \bass^\downarrow_{2\iota +1} \quad} \switt(R) \xrightarrow{\quad} 0
	\end{align}
\end{proposition}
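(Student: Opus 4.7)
The plan is to deduce this hermitian-version short exact sequence from the quadratic counterpart~\ref{thm:oddBassExact} via the time-reversal statement~\ref{thm:timereversal} and the symmetrization sequence~\ref{thm:sesquilinearWittGroup}. The four routine assertions go exactly as in~\ref{thm:oddBassExact}: $\bar\epsilon$ is injective because it is split by $z\mapsto 1$; the composition $\bass^\downarrow_{2\iota+1}\circ\bar\epsilon$ is zero because a unitary over $R$ admits a zero boundary module in~\ref{thm:boundaryXi-from-QCA}; and $\bass^\downarrow_{2\iota+1}$ is surjective with a right inverse $\bass^\uparrow_{2\iota}\colon\switt(R)\to\umod(\Rz)$ by~\ref{thm:form-to-up-unitary} and~\ref{thm:fromForm-up-and-down-is-identity}, which both proves surjectivity and splits the sequence.

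The substance is to prove $\ker\bass^\downarrow_{2\iota+1}\subseteq\bar\epsilon(\umod(R))$. Given $[V]\in\umod(\Rz)$ with vanishing boundary, \ref{thm:timereversal} furnishes an $\eta$-unitary representative $\tilde V\in\qhp(\Rz)$ with $[\tilde V]=[V]$ in $\umod(\Rz)$, and its quadratic boundary form $[\phi]=\bass^\downarrow_{2\iota+1}^{\qumod}(\tilde V)\in\qwitt(R)$ satisfies $\sym[\phi]=\bass^\downarrow_{2\iota+1}[V]=0$. By~\ref{thm:sesquilinearWittGroup} together with~\ref{thm:ZeroDcalculation}, $\ker\sym\cong\qwitt(\FF_2)\cong\ZZ/2\ZZ$, generated by some fixed $\phi_0$ defined over $\FF_2$.

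The key point is then that $\bass^\uparrow_{2\iota}^{\qwitt}([\phi_0])\in\qumod(\Rz)$ becomes trivial after applying the forgetful map $\qumod(\Rz)\to\umod(\Rz)$. Indeed, both $\phi_0$ and the congruence $T^\dag\diag(\phi_0,-\phi_0)T\sim\eta$ of~\ref{thm:WittGroupOfForms} can be realized over $\FF_2$, so the resulting $W=T^{-1}\diag(zI,I)T$ lies in $\hp(\FF_2[z,\tfrac1z])\subseteq\hp(\Rz)$. Now~\ref{thm:MinusUnitaryIn1D} applies to the one-variable Laurent ring $\FF_2[z,\tfrac1z]$ and yields $W\in\ehp(\FF_2[z,\tfrac1z])\subseteq\ehp(\Rz)$, i.e.\ $[W]=0$ in $\umod(\Rz)$. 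Consequently, $[\tilde V]\cdot[\bass^\uparrow_{2\iota}^{\qwitt}\phi_0]^{-k}$ for the appropriate $k\in\{0,1\}$ has trivial quadratic boundary, lies in $\bar\epsilon(\qumod(R))$ by~\ref{thm:oddBassExact}, and maps to $[V]\in\bar\epsilon(\umod(R))$.

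The main obstacle is precisely this residual $\qwitt(\FF_2)$ discrepancy: as a quadratic form, the boundary can be the nontrivial Arf-type class even when its associated hermitian form is trivial. What saves the argument is that the corresponding unitary lives over a one-variable ring, where~\ref{thm:MinusUnitaryIn1D} kills it in the $\lambda$-unitary group (though not in the $\eta$-unitary group, consistent with the kernel in the quadratic version being $\qumod^\mp(R)$ rather than $\umod(R)$).
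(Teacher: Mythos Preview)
Your proof is correct and follows essentially the same route as the paper's: reduce to the quadratic version via~\ref{thm:timereversal}, identify the residual obstruction as lying in $\qwitt(\FF_2)$ via $\ker\sym$, and then kill the corresponding one-variable unitary $\bass^\uparrow([\xi_0])\in\qhp(\FF_2[z,\tfrac1z])$ using~\ref{thm:MinusUnitaryIn1D}. The only cosmetic difference is that the paper invokes the splitting of~\ref{thm:oddBassExact} to write $[U]=\bar\epsilon[U_0]+\bass^\uparrow([\xi])$ directly rather than first computing the boundary and then cancelling by a power of $\bass^\uparrow[\phi_0]$, but the content is identical.
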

\begin{proof}
	The injectivity of $\bar\epsilon$ is proved by the same argument as in~\ref{thm:oddBassExact}:
	if $\bar\epsilon(T) = S$ for some $S \in \ehp(\Rz)$,
	then $T = \epsilon \bar\epsilon T = \epsilon S \in \ehp(R)$.
	The surjectivity of $\bass^\downarrow_{2\iota+1}$ follows from~\ref{thm:fromForm-up-and-down-is-identity}.
	It is obvious that the composition of the two maps is zero, as before.
	The maps $\epsilon$ and $\bass^\uparrow_{2\iota}$ will give the splitting.

	To prove the exactness at the middle,
	let $[U] \in \umod(\Rz)$.
	By~\ref{thm:timereversal}, we may assume that $U \in \qhp(\Rz)$.
	By~\ref{thm:oddBassExact}, $\qumod(\Rz) \ni [U] = \bar\epsilon [U_0] + \bass^\uparrow([\xi])$
	for some $U_0 \in \qhp(R)$ and $\xi \in \qwitt(R)$.
	Suppose $\sym\circ \bass^\downarrow([U]) = 0 \in \switt(R)$.
	Then $\sym([\xi]) = 0$, and by~\ref{thm:hermitianMatrices} 
	we know $[\xi] = [\xi_0] \in \qwitt(R)$ for some nonsingular quadratic form $\xi_0$ over~$\FF_2$.
	Hence, $\bass^\uparrow([\xi]) = \bass^\uparrow([\xi_0]) \in \qumod(\FF_2[z,\tfrac 1 z])$.
	But, \ref{thm:MinusUnitaryIn1D} implies that $\qhp(\FF_2[z,\tfrac 1 z])$ becomes zero in $\umod(\FF_2[z,\tfrac 1 z])$.
	Hence, $[U] = \bar\epsilon [U_0]$ in the $\lambda$-unitary group.	
\end{proof}

\begin{proposition}\label{thm:evenBassExactNonquadratic}
	The following short sequence is split exact.
	\begin{align}
		0 \xrightarrow{\quad} \switt^\mp(R) \xrightarrow{\quad \bar\epsilon \quad} \switt^\mp(\Rz)
		\xrightarrow{\quad \bass^\downarrow_{2\iota} \quad} \qumod^\pm(R) \xrightarrow{\quad} 0
	\end{align}
\end{proposition}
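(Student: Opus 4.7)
The plan is to bootstrap from the quadratic version \ref{thm:evenBassExact} by using the surjectivity of $\sym : \qwitt^\mp(\Rz) \to \switt^\mp(\Rz)$ from \ref{thm:hermitianMatrices}, rather than redo any new calculation. The three routine parts are dispatched first. For injectivity of $\bar\epsilon$: the ring homomorphism $\epsilon : \Rz \to R$ with $z \mapsto 1$ gives $\epsilon \circ \bar\epsilon = \mathrm{id}$ on $\switt^\mp(R)$, so $\bar\epsilon$ has a left inverse, which simultaneously provides the claimed splitting. For surjectivity of $\bass^\downarrow_{2\iota}$: an explicit right inverse is $\bass^\uparrow_{2\iota-1}$ (indices mod $4$) by \ref{thm:up-and-down-from-unitary-is-identity}. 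For vanishing of $\bass^\downarrow_{2\iota}\circ\bar\epsilon$: given $[\Delta_0]\in \switt^\mp(R)$, the form $\bar\epsilon[\Delta_0]$ has a matrix independent of $z$, so $\Delta_0$ and $\Delta_0^{-1}$ both preserve $z$-degrees, and one may take $n=0$ in the construction of \ref{thm:FromFormsDownToUnitary}; then $\bd_0(\Delta_0^{-1} B^*, B)=0=\bd_0(B, \Delta_0^{-1} B^*)$, so $L(\Delta_0, B, 0)=0$ inside the zero module and the resulting unitary is trivial.

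For the middle exactness, suppose $[\Delta] \in \switt^\mp(\Rz)$ satisfies $\bass^\downarrow_{2\iota}([\Delta])=0$. Using surjectivity of $\sym$ over $\Rz$, lift to $[\phi] \in \qwitt^\mp(\Rz)$ with $\sym([\phi])=[\Delta]$. Then
\begin{align*}
(\bass^\downarrow_{2\iota} \circ \sym)([\phi]) \;=\; \bass^\downarrow_{2\iota}([\Delta]) \;=\; 0 \;\in\; \qumod^\pm(R),
\end{align*}
so \ref{thm:evenBassExact} supplies some $[\phi_0] \in \qwitt^\mp(R)$ with $[\phi] = \bar\epsilon([\phi_0])$ in $\qwitt^\mp(\Rz)$. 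Since $\sym$ acts entrywise on matrix representatives, it commutes with the inclusion $\bar\epsilon$, and so
\begin{align*}
[\Delta] \;=\; \sym\bigl(\bar\epsilon[\phi_0]\bigr) \;=\; \bar\epsilon\bigl(\sym[\phi_0]\bigr) \;\in\; \bar\epsilon\bigl(\switt^\mp(R)\bigr),
\end{align*}
which completes the exactness at the middle.

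I do not anticipate any serious obstacle; the statement is genuinely a corollary of \ref{thm:evenBassExact} and \ref{thm:hermitianMatrices}. The only point requiring care is the sign-shift bookkeeping so that $\bass^\uparrow_{2\iota-1}$ is indeed a right inverse of $\bass^\downarrow_{2\iota}$ on the correct pair of groups among $\vtheory_0,\ldots,\vtheory_3$ of \ref{defn:vtheory}, but this is a purely mechanical check of the cyclic indexing. The characteristic-two case, in which $\sym$ has nontrivial kernel $\qwitt^\mp(\FF)$, poses no trouble: only surjectivity of $\sym$ is used, and different quadratic lifts of $[\Delta]$ differ by elements of $\qwitt^\mp(\FF) \subseteq \bar\epsilon\,\qwitt^\mp(R)$, whose $\sym$-images already lie in $\bar\epsilon\,\switt^\mp(R)$.
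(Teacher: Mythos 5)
Your proof is correct and follows essentially the same route as the paper: the routine parts (injectivity via $\epsilon$, surjectivity via $\bass^\uparrow_{2\iota-1}$, splitting) are identical, and your middle-exactness argument—lifting $[\Delta]$ along the surjection $\sym$ to a quadratic class, applying \ref{thm:evenBassExact}, and using that $\sym$ commutes with $\bar\epsilon$—is the paper's own argument, merely phrased with an arbitrary Witt-class lift instead of the explicit section $\hat\sym$. The only cosmetic difference is that you verify $\bass^\downarrow_{2\iota}\circ\bar\epsilon=0$ directly with $n=0$ rather than deducing it from the quadratic case; both are fine.
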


In contrast to~\ref{thm:evenBassExact}, 
the sequence here has even hermitian Witt groups, rather than quadratic Witt groups.

\begin{proof}
	The surjectivity of $\bass^\downarrow_{2\iota}$ is proved in~\ref{thm:up-and-down-from-unitary-is-identity}.
	The injectivity of $\bar\epsilon$ is proved as in~\ref{thm:evenBassExact}:
	if an $\Rz$-isomorphism~$E$ shows an equivalence of~$\bar \epsilon(z)(\Delta)$
	to a trivial form for~$[\Delta] \in \switt^\mp(R)$,
	then $\epsilon(E)$ (replacing~$z$ by~$1$) is an $R$-isomorphism showing that~$\Delta$ is trivial.
	The composition being zero follows from that of~\ref{thm:evenBassExact}
	since $\sym : \qwitt^\mp \to \switt^\mp$ is surjective.
	To prove the exactness at the middle,
	let $[\Delta] \in \switt^\mp(\Rz)$ and suppose $\bass^\downarrow_{2\iota}[\Delta] = 0$.
	Borrowing the map~$\hat\sym$ in the proof of~\ref{thm:evenBassExact},
	we have $(\bass^\downarrow_{2\iota} \circ \sym)[\hat \sym \Delta] = 0$,
	implying that $[\hat\sym \Delta] \in \bar\epsilon\, \qwitt^\mp(R)$.
	Therefore, $[\Delta] = \sym [\hat\sym \Delta] \in \sym \bar\epsilon\,  \qwitt^\mp(R) 
	= \bar\epsilon\,  \sym \qwitt^\mp(R) = \bar\epsilon\, \switt^\mp(R)$.
	The sequence is split by $\epsilon$ and $\bass^\uparrow_{2\iota -1}$.
\end{proof}

Let $R(\dd,p) = \FF_p[x_1,x_1^{-1},\ldots,x_\dd,x_\dd^{-1}]$. 
Recall the notation in~\ref{defn:vtheory}, and introduce abbreviations
\begin{align}
	\vbass^\downarrow_{2\iota+1} &= \bass^\downarrow_{2\iota+1} : 
	\vtheory_{2\iota+1}(\dd+1,p) = \qumod^{(-1)^\iota}(R(\dd+1,p)) 
	\xrightarrow{\text{\ref{thm:boundaryXi-from-QCA}}} 
	\qwitt^{(-1)^\iota}(R(\dd,p)) = \vtheory_{2\iota}(\dd,p),
	\nonumber\\
	\vbass^\uparrow_{2\iota} &= \bass^\uparrow_{2\iota} :
	\vtheory_{2\iota}(\dd,p) = \qwitt^{(-1)^\iota}(R(\dd,p))
	\xrightarrow{\text{\ref{thm:form-to-up-unitary}}}
	\qumod^{(-1)^\iota}(R(\dd+1,p)) = \vtheory_{2\iota+1}(\dd+1,p),
	\\
	\vbass^\downarrow_{2\iota} &= \bass^\downarrow_{2\iota} \circ \sym :
	\vtheory_{2\iota}(\dd,p) = \qwitt^{(-1)^\iota}(R(\dd,p))
	\xrightarrow{\text{\ref{thm:FromFormsDownToUnitary}}}
	\qumod^{-(-1)^\iota}(R(\dd-1,p)) = \vtheory_{2\iota-1}(\dd-1,p),
	\nonumber\\
	\vbass^\uparrow_{2\iota-1} &= \tilde\bass^\uparrow :
	\vtheory_{2\iota-1}(\dd-1,p) = \qumod^{-(-1)^\iota}(R(\dd-1,p))
	\xrightarrow{\text{\ref{thm:BassUpFromUnitary}, \ref{thm:tildeBass}}}
	\qwitt^{(-1)^\iota}(R(\dd,p)) = \vtheory_{2\iota}(\dd,p),
	\nonumber
\end{align}
where the subscripts are defined mod~4.
The results of~\ref{thm:oddBassExact} and \ref{thm:evenBassExact} imply that, 
for any $\dd \ge 0$,
\begin{align}
	\vtheory_{n}(\dd,p) 
	&= 
	\bar\epsilon(x_\dd) \vtheory_n(\dd-1,p) \oplus \vbass^\uparrow_{n-1}\vtheory_{n-1}(\dd-1,p)\nonumber\\
	&=
	\left[\bigoplus_{k=0}^{\dd-1} \bar\epsilon(x_{\dd-k}) \left( \prod_{j=1}^{k} \vbass^\uparrow_{n-j} \right)\vtheory_{n-k}(\dd-1-k,p)\right]\label{eq:vtheoryRecursion}\\
	&\qquad \oplus
	\left( \prod_{j=1}^{\dd} \vbass^\uparrow_{n-j} \right) \vtheory_{n-\dd}(0,p), \nonumber
\end{align}
while \ref{thm:oddBassExactNonquadratic} and \ref{thm:evenBassExactNonquadratic} imply that, for $\dd \ge 2$,
\begin{align}
	\umod^{(-1)^\iota}(R(\dd,p)) 
	&= 
	\bar\epsilon(x_\dd) \umod^{(-1)^\iota}(R(\dd-1,p))
	\oplus
	\bass^\uparrow_{2\iota}\switt^{(-1)^\iota}(R(\dd-1,p))\label{eq:lambdaUnitaryToEtaUnitary}\\
	&=
	\bar\epsilon(x_\dd) \umod^{(-1)^\iota}(R(\dd-1,p))
	\oplus
	\bar\epsilon(x_{\dd-1})\bass^\uparrow_{2\iota}\switt^{(-1)^\iota}(R(\dd-2,p))\nonumber\\
	&\qquad \oplus
	\bass^\uparrow_{2\iota}\vbass^\uparrow_{2\iota-1}\vtheory_{2\iota-1}(\dd-2,p). \nonumber
\end{align}
Note here that we have given explicit embeddings of lower dimensional groups into the top dimensional group.
We liberally used the fact that $\bar\epsilon$ in a variable 
commutes with any going-up morphism in a different variable.

We are going to show that anything that is embedded into $\vtheory_n(\dd,p)$ by some~$\bar\epsilon$
can be ignored if we consider coarser translation invariance.

Recall that $\clifqca(\dd,p)$ is the group of all translation invariant Clifford QCA,
each of which defines a $\lambda^-$-unitary over $R = \FF_p[x_1,\tfrac 1 {x_1},\ldots,x_\dd,\tfrac 1 {x_\dd}]$,
modulo all Clifford circuits and shifts obeying coarser translation invariance.
The coarse graining is implemented~\cite{Haah2013}
by a functor~$\cg b$ from the category of $R$-modules to $S$-modules
induced by a base ring change~$\phi^{(b)} : S = \FF_p[x'_1,\tfrac 1 {x'_1},\ldots,x'_\dd,\tfrac 1 {x'_\dd}]
\ni x'_i \mapsto x_i^b \in R$ that is $\FF_p$-linear.
As abstract rings, $R$ and $S$ are the same, 
but we think of $S$ as a translation group algebra
on $\ZZ^\dd$ with a smaller translation group.
Any quadratic or hermitian form over $R$ 
is an $R$-module morphism from a module to its dual module,
so a matrix representation of a form after a coarse-graining can be obtained 
by evaluating the form on basis elements of a $S$-module~\cite[IV.\,3]{nta3}.
A matrix representation of a unitary after a coarse-graining can be obtained
by writing the image of an $S$-basis with respect to the $S$-basis.
In particular, elementary unitary remains elementary
and any trivial form remains trivial under any coarse-graining.
Hence, $\cg b$ maps each of $\hp,\qhp,\switt,\qwitt$ into itself,
and we can speak of the direct limit~$\cg{\infty} = \lim_{b \to \infty} \cg{b}$ 
in the directed system of these groups under composed coarse-grainings.

Now, the base ring change $\phi^{(b)}$ can be thought of as a composition of $\dd$ ring homomorphisms~$\phi_i$
for $i = 1,2,\ldots,\dd$,
where $\phi_i : x'_j \mapsto x_j$ for all $j \neq i$ but $x'_i \mapsto x_i^b$.
For any object in the image of $\bar\epsilon(x_i):R(\dd,p)|_{x_i=1} \cong R(\dd-1,p) \to R(\dd,p)$,
applying $(\phi_i)_\#$ means, at the matrix level, that we make the $b$-fold direct sum of the object.
Since $\qwitt^+(\FF_p)$ has an exponent~$4$ for all prime~$p$,
we conclude by~\ref{thm:UnitaryGroupExponent} and \ref{thm:WittGroupExponent} that
\begin{align}
	\cg{4} \bar\epsilon = 0. \label{eq:cg4}
\end{align}

\begin{lemma}\label{thm:cg-down}
	Let $\phi: R[z',\tfrac 1 {z'}] \ni z' \mapsto z^b \in \Rz$ be an $R$-linear ring homomorphism.%
	\footnote{Unlike $\cg{b}$, we only coarse-grain along one direction of $\ZZ^{\dd+1}$.}
	Let $x$ denote any variable of~$R$, independent from~$z'$ and~$z$.
	Then, for any $n \in \ZZ/4\ZZ$, we have
	\begin{align}
		\bass^\downarrow_n(z') \phi_\# &= \bass^\downarrow_n(z),\\
		\bass^\downarrow_n(x) \phi_\# &= \phi_\# \bass^\downarrow_n(x).\nonumber
	\end{align}
\end{lemma}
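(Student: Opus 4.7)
The plan is to verify that both sides of each identity produce, up to trivial summands, the same $R$-module equipped with the same form or lagrangian, and hence the same class.

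For the second identity I would start by noting that the construction of $\bass^\downarrow_n(x)$ depends only on the modular base along the $x$-direction, the projections onto $x$-degree sectors, and the $x^0$-coefficient extraction, none of which interacts with the $z/z'$-grading. Since $\phi_\#$ acts only on the $z$-variable and is the identity on $x$, the two operations touch disjoint parts of the data and commute on the nose; the identity then follows directly.

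For the first identity in the case $n = 2\iota + 1$ (unitary to Witt), I would take $U \in \qhp^\mp(q; \Rz)$ and let $A = R^q \oplus R^{q*}$ be the canonical modular $R$-base of $\Rz^q \oplus \Rz^{q*}$. Under $\phi_\#$ this module becomes $R[z', \tfrac 1 {z'}]$-free with modular $R$-base $A' = \bigoplus_{j=0}^{b-1} z^j A$. The key observation is that $(z')^k A' = z^{bk} A'$, so the nonnegative $(z')$-degree sector agrees with the nonnegative $z$-degree sector as a subset: $(A')^+ = A^+$, and $(UA')^\pm = (UA)^\pm$ because $U$ commutes with $z$. Consequently
\begin{align*}
\bd_{n'}(UA', A') = (z')^{n'}(UA')^- \cap (A')^+ = z^{bn'}(UA)^- \cap A^+ = \bd_{bn'}(UA, A),
\end{align*}
so for compatible truncation parameters the two boundary $R$-modules are literally equal, and other admissible choices differ by copies of $A$ via~\eqref{eq:bdABoneMore1}--\eqref{eq:bdABoneMore2}, which carry trivial boundary forms. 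Next I would identify the coarse-grained form $\phi_\# \lambda_q$ as $\pi \circ \lambda_q$, where $\pi \colon \Rz \to R[z', \tfrac 1 {z'}]$ is the canonical $R[z', \tfrac 1 {z'}]$-linear projection extracting the $1$-component in the basis $\{1, z, \dots, z^{b-1}\}$; explicitly $\pi(\sum_k z^k r_k) = \sum_m (z')^m r_{bm}$ for $r_k \in R$. The compatibility $[\pi(r)]_{(z')^0} = [r]_{z^0} = r_0$ then immediately yields
\begin{align*}
[(\phi_\# \lambda_q)(a, b)]_{(z')^0} = [\lambda_q(a, b)]_{z^0} \in R
\end{align*}
on the common boundary module, so both descents produce the same $R$-valued even hermitian (or quadratic) form, and hence the same class in $\qwitt^\mp(R)$.

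For $n = 2\iota$ (Witt to unitary) I expect the argument to run strictly in parallel via the lagrangian construction $L(\Delta, B, n)$, $L^*(\Delta, B, n)$ of~\eqref{eq:Ldef}: the $\bd_n$-modules involving $\Delta^{-1} B^*$ and $B$ transport exactly as above, and the $z$-projections appearing in~\eqref{eq:Ldef} intertwine with $\pi$ in the same way, so the resulting lagrangians will agree up to trivial summands and produce the same class in $\qumod^\pm(R)$. The hard part will be pinning down the definition of $\phi_\# \lambda_q$ (resp.\ $\phi_\# \Delta$) as $\pi \circ \lambda_q$ (resp.\ $\pi \circ \Delta$) consistent with the coarse-graining of~\cite{Haah2013}; once that identification is made, the compatibility $[\pi(r)]_{(z')^0} = [r]_{z^0}$ does all the remaining work.
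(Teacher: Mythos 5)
Your proposal is correct and takes essentially the same route as the paper: the paper's entire proof is the one-line observation that the descent maps are built from a boundary $R$-module and $z$-degree-zero extraction, which forget the translation structure along the $z$-direction, so coarse-graining along $z$ changes nothing (and commutation with the descent along $x$ is immediate). Your explicit identifications $(A')^+ = A^+$, $\bd_{n'}(UA',A') = \bd_{bn'}(UA,A)$, and $[\pi(r)]_{(z')^0} = [r]_{z^0}$ (with $\phi_\#$ of a form realized as $\pi$ composed with the form) are precisely the details behind that observation, for both the odd and even descent maps.
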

\begin{proof}
	The first is because the going-down morphisms are constructed first by taking a boundary $R$-module 
	(\ref{thm:boundaryXi-from-QCA} and~\ref{thm:FromFormsDownToUnitary}),
	which forgets the translation structure along the $z$-direction.
	The second is obvious.
\end{proof}

We finally find a complete invariant 
of~$\clifqca(\dd,p) = \cg{\infty} \umod^-(\FF_p[x_1,x_1^{-1},\ldots,x_\dd,x_\dd^{-1}])$.
\begin{theorem}\label{thm:CompleteInvariantOfQCA}
	$\umod^-(\FF_p[x_1,x_1^{-1},\ldots,x_\dd,x_\dd^{-1}]) = 0$ for $\dd = 0,1,2$ and all prime~$p$.
	For $\dd \ge 3$, it holds that
	\begin{align}
		\cg{\infty} \umod^-(\FF_p[x_1,x_1^{-1},\ldots,x_\dd,x_\dd^{-1}]) 
		&\cong
		\vtheory_{3-\dd}(\FF_p).
	\end{align}
\end{theorem}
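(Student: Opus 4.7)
The plan is to handle $\dd=0,1,2$ by direct reference to earlier results and to treat $\dd\ge 3$ by iterating the split exact sequences of~\S\ref{sec:updown} and~\S\ref{sec:cg}. For $\dd=0$ and $\dd=1$, $\umod^-(R(\dd,p))=0$ by~\ref{thm:n0elem} and~\ref{thm:MinusUnitaryIn1D} respectively. For $\dd=2$, applying~\ref{thm:oddBassExactNonquadratic} (when $p=2$) or~\ref{thm:oddBassExact} combined with~\ref{thm:qhphp} (when $p$ is odd) splits $\umod^-(R(2,p))$ as a direct sum of $\umod^-(R(1,p))=0$ and $\switt^-(R(1,p))$; a further reduction via~\ref{thm:evenBassExactNonquadratic} or~\ref{thm:evenBassExact} together with~\ref{thm:ZeroDcalculation} shows the remaining summand also vanishes.

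For $\dd\ge 3$, I iteratively expand the top-degree summand using the split exact sequences. For $p$ odd, $\umod^-=\qumod^-=\vtheory_3$ and I use~\eqref{eq:vtheoryRecursion} uniformly. For $p=2$ the first two descents are non-quadratic (using~\ref{thm:oddBassExactNonquadratic} and~\ref{thm:evenBassExactNonquadratic} as recorded in~\eqref{eq:lambdaUnitaryToEtaUnitary}), and the subsequent $\dd-2$ steps proceed via~\ref{thm:oddBassExact} and~\ref{thm:evenBassExact} in the quadratic regime. After $\dd$ iterations one obtains a direct sum of $\dd+1$ pieces
\begin{align*}
\umod^-(R(\dd,p))\;=\;\bigoplus_{k=0}^{\dd-1}\bar\epsilon(x_{\dd-k})\,M_k\;\oplus\;\bass^\uparrow\circ\cdots\circ\bass^\uparrow\bigl(\vtheory_{3-\dd}(\FF_p)\bigr),
\end{align*}
where each $M_k$ is a group in $\dd-1-k$ variables and the final ``top'' piece consists of $\dd$ successive going-up morphisms (with subscripts cycling through $\ZZ/4\ZZ$) applied to $\vtheory_{3-\dd}(\FF_p)$. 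By~\ref{thm:fromForm-up-and-down-is-identity} and~\ref{thm:up-and-down-from-unitary-is-identity}, each $\bass^\uparrow$ admits a left inverse $\bass^\downarrow$, making the top piece an injective copy of $\vtheory_{3-\dd}(\FF_p)$ sitting inside $\umod^-(R(\dd,p))$.

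Finally, I pass to $\cg{\infty}$. By~\eqref{eq:cg4}, every $\bar\epsilon$-summand is annihilated already by $\cg{4}$, so all $\dd$ such summands vanish in the direct limit. For the top piece, let $\pi=\bass^\downarrow\circ\cdots\circ\bass^\downarrow$ be the composed retraction onto $\vtheory_{3-\dd}(\FF_p)$. By~\ref{thm:cg-down}, each factor of $\pi$ absorbs coarse-graining in its own boundary variable and commutes with coarse-graining in the remaining variables; iterating, $\pi\circ\cg{b}=\pi$ for every $b$, since the target has no variables left to coarse-grain. Hence $\pi$ descends to a surjection $\cg{\infty}\umod^-(R(\dd,p))\to\vtheory_{3-\dd}(\FF_p)$ admitting a section given by the top piece, and combined with the vanishing of the $\bar\epsilon$-summands the desired isomorphism follows.

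The main obstacle is the case distinction for $p=2$, where $\umod^-\ne\qumod^-$ so the first two descents must use the non-quadratic sequences~\ref{thm:oddBassExactNonquadratic} and~\ref{thm:evenBassExactNonquadratic}; this requires bookkeeping the cyclic order of the $\bass^\uparrow$ and $\bass^\downarrow$ subscripts and verifying that the transition to the quadratic regime after two steps preserves the direct-summand structure. The crucial technical input is~\ref{thm:cg-down}, without which the $\dd$-fold composite retraction would not survive passage to $\cg{\infty}$ and the top piece could collapse together with the $\bar\epsilon$-summands.
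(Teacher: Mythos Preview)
Your proposal is correct and follows essentially the same approach as the paper: handle $\dd\le 2$ by the direct computations, then for $\dd\ge 3$ use the iterated split exact sequences of~\eqref{eq:vtheoryRecursion} and~\eqref{eq:lambdaUnitaryToEtaUnitary} together with~\ref{thm:cg-down} to show that the $\dd$-fold descent map $\pi$ is unchanged by coarse-graining, and use~\eqref{eq:cg4} to kill every $\bar\epsilon$-summand in the direct limit. The paper phrases this as constructing a complete invariant (the descent composite) on $\cg{\infty}\umod^-$ rather than exhibiting a section, but the two viewpoints are interchangeable since the sequences are split.
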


The result for $\dd = 0,1,2$ has been known~\cite{Haah2013,clifqca1}.

\begin{proof}
	For $\dd = 0$, the result is~\ref{thm:n0elem}.
	For $\dd = 1$, the result is~\ref{thm:MinusUnitaryIn1D},
	which can also be seen by the first line of~\eqref{eq:lambdaUnitaryToEtaUnitary},
	that shows that $\umod^-(R(1,p)) \cong \bar\epsilon(x_1)\umod^-(\FF_p) \oplus \switt^-(\FF_p)$
	which is zero by~\ref{thm:n0elem} and~\ref{thm:sesquilinearWittGroup}.
	For $\dd = 2$, \eqref{eq:lambdaUnitaryToEtaUnitary} says that
	$\umod^-(R(2,p)) \cong \umod^-(R(1,p)) \oplus \switt^-(\FF_p) \oplus \qumod^+(\FF_p)$
	which is zero due to~\ref{thm:n0elem} and~\ref{thm:sesquilinearWittGroup} again.

	For $\dd \ge 3$, we consider $\cg{b} = (\phi_1)_\# \cdots (\phi_\dd)_\#$ 
	where each $\phi_i$ coarse-grains along $i$-direction only as in~\ref{thm:cg-down}.
	Then, \ref{thm:cg-down} implies that
	\begin{align}
		&\left(\prod_{j=1}^{\dd-2}\vbass^\downarrow_{2-j}(x'_{\dd-1-j})\right) \bass^\downarrow_2(x'_{\dd-1}) \bass^\downarrow_3(x'_\dd)
		\cg{b} \umod^-(R(\dd,p)) \\
		&= 
		\left(\prod_{j=1}^{\dd-2} \vbass^\downarrow_{2-j}(x'_{\dd-1-j}) (\phi_{\dd-1-j})_\# \right) 
		\bass^\downarrow_2(x'_{\dd-1}) (\phi_{\dd-1})_\# \bass^\downarrow_3(x'_\dd) (\phi_\dd)_\# \umod^-(R(\dd,p))\nonumber\\
		&=
		\left(\prod_{j=1}^{\dd-2}\vbass^\downarrow_{2-j}(x_{\dd-1-j})\right) \bass^\downarrow_2(x_{\dd-1}) \bass^\downarrow_3(x_\dd)
		\umod^-(R(\dd,p)) \nonumber\\
		&\cong
		\vtheory_{3-\dd}(0,p).\nonumber
	\end{align}
	Since this is independent of~$b$, we have an invariant of 
	the direct limit by $\cg{\infty}$ of the $\lambda^-$-unitary group,
	valued in~$\vtheory_{3-\dd}(0,p)$.

	Suppose a $\lambda^-$-unitary~$U$ has the invariant zero.
	Then, from \eqref{eq:lambdaUnitaryToEtaUnitary} and \eqref{eq:vtheoryRecursion},
	we see that $[U]$ is an embedding of some $\bar\epsilon$.
	By~\eqref{eq:cg4}, we see~$\cg{\infty} [U] = 0$.
	This shows that the invariant is complete.
\end{proof}

\begin{proof}[Proof of~\ref{thm:main}]
	The group $\vtheory_{3-\dd}(0,p)$ is a unitary group if $3-\dd = 1 \bmod 2$,
	in which case it is zero by~\ref{thm:n0elem}.
	The group $\vtheory_{3-\dd}(0,p)$ is a quadratic Witt group if $3-\dd = 0 \bmod 2$,
	in which case we have the full answer in~\ref{thm:ZeroDcalculation}.
	Therefore, by~\ref{thm:CompleteInvariantOfQCA}
	for all~$\dd \ge 3$ and all prime~$p$,
	\begin{align}
		\clifqca(\dd,p) \cong
		\vtheory_{3-\dd}(0,p) \cong
		\begin{cases}
			\ZZ/2\ZZ & (p=2 \text{ and } \dd = 1 \bmod 2)\\
			\ZZ/4\ZZ & (p = 3 \bmod 4 \text{ and } \dd = 3 \bmod 4)\\
			\ZZ/2\ZZ \oplus \ZZ/2\ZZ & (p = 1 \bmod 4 \text{ and } \dd = 3 \bmod 4)\\
			0 & (\text{otherwise})
		\end{cases}.
	\end{align}
	The cases of~$\dd = 0,1,2$ are covered in~\ref{thm:CompleteInvariantOfQCA}.
\end{proof}

Let us interpret the results in terms of blending~\cite{FHH2019}.
Recall~\ref{thm:CompactDualIsDual} where we have identified a dual module~$M^*$ over~$\Rz$ 
with a module~$M^{*c}$ of $R$-linear functionals of compact support.

\begin{definition}
	We say that $U, V \in \hp^\mp(\bigoplus_j z^j Q_0; \Rz)$ (resp.~$U, V \in \qhp^\mp(\bigoplus_j z^j Q_0;\Rz)$) 
	{\bf blend} into each other
	if there exists a $R$-module isomorphism~$W$ on~$\bigoplus_j z^j (Q_0 \oplus Q_0^*)$ 
	and an integer~$n \ge 0$
	such that (i)~$\lambda^\mp(W u, W v) = \lambda^\mp(u, v)$ 
	(resp.~$\eta(W u, W v) \sim \eta(u, v)$) for all $u,v \in M$
	and (ii)~$W$~agrees with~$U$ on~$\projz^{\ge n}_{Q_0 \oplus Q_0^*}$
	and with~$V$ on $\projz^{<-n}_{Q_0 \oplus Q_0^*}$.
	Similarly, two nonsingular quadratic (resp. hermitian) $\mp$-forms~$\xi,\phi$
	on an $\Rz$-module~$Q = \bigoplus_j z^j Q_0$
	{\bf blend} into each other
	if there exist an integer $n\ge 0$ and an $R$-sesquilinear function $\mu : Q \times Q \to R$ 
	that defines an $R$-module isomorphism $Q \ni u \mapsto \left(v \mapsto \mu(u,v) \mp \overline{\mu(v,u)}\right) \in M^{*c}$ 
	(resp. $M \ni u \mapsto (v \mapsto \mu(u,v) = \mp \overline{\mu(v,u)} ) \in M^{*c}$)
	such that
	$\mu$ agrees with~$\xi$ on~$\projz^{\ge n}$ and with~$\phi$ on~$\projz^{<-n}$.
	We extend the definitions to say that two unitaries or two forms blend into each other
	if some stabilizations do.
\end{definition}

If we consider a unitary or a form over~$\Rz$ as an infinite matrix over~$R$ 
with respect to an ordered basis~$\{\ldots,z^{-2},z^{-1},1,z,z^2,\ldots\}$,
the matrix will be band-diagonal due to $\Rz$-linearity.
In case of $\lambda^-$-unitary, 
this property corresponds to the causality along $z$-direction of an associated Clifford QCA.

It is clear that two $\eta$- or $\lambda$-unitaries $U,V$ blends into each other
if $U^{-1}V$ blends into the identity.
An obvious $\eta$- or $\lambda$-unitary that blends into the identity
is anything in the image of the map~$\bar\epsilon(z)$;
this is precisely one that restricts to an $R$-automorphism~$W_0$ on each~$z^j R^{2q}$,
and an interpolating automorphism~$W$ is defined
by~$W u = W_0 u$ if~$u \in \projz^{\ge 0}$ and~$W u = u$ if~$u \in \projz^{< 0}$.
Of course, any Clifford circuit deformation of an identity-blending unitary
blends into the identity, too, because gates in the circuit can be dropped~\cite{FHH2019}.
Hence, the blending equivalence is a property of the class of~$\qumod(\Rz)$ or~$\umod(\Rz)$.
An obvious form over~$\Rz$ that blends into the trivial form
is anything in the image of~$\bar\epsilon(z)$.

\begin{remark}
	Purely in terms of $R$-modules
	the identity-blending property of an elementary unitaries
	originates from the fact that an elementary matrix over~$\Rz$ ---
	an identity matrix plus at most one nonzero entry in the off-diagonal ---
	blends into the identity matrix.
	Just keep the action of the off-diagonal entry on $\projz^{\ge 0}$ and 
	eliminate the action on~$\projz^{< 0}$.
	The invertibility of this interpolating infinite matrix is not affected.
	By Suslin's stability theorem~\cite{Suslin1977Stability},
	it follows that any invertible matrix of determinant~$1$
	blends into the identity.
	The unitarity can pose some restriction (and that is the precisely the point of our main result~\ref{thm:main})
	but still elementary unitaries blend into identity:
	$\hada_\mp$ does not involve $z$ 
	and the argument for general elementary matrices applies for~$\cz(\theta)$ and for $\cx(\alpha)$.
\end{remark}

By this remark, we see that and two congruent forms on $\bigoplus_j z^j Q_0$ always blend into each other.

All the split short exact sequences above say that
if a unitary or a form vanishes under the descent map
then it comes from~$\bar\epsilon(z)$ so it blends into the trivial unitary or form.
Even if the image under the descent map is nonzero,
if it vanishes under some coarse graining along axes different from~$z$,
then then it is in the image of~$\bar\epsilon(z)$ and blends into the identity.
Therefore, our complete invariant in~\ref{thm:CompleteInvariantOfQCA} precisely captures 
whether a $\lambda^-$-unitary blends into the identity.
A similar remark also holds for quadratic and hermitian forms.

\begin{remark}\label{rem:skewcommutativity}
	In~\ref{thm:up-down-and-down-up} 
	we have proved the skewcommutativity of going-up and going-down morphisms
	by direct calculation.
	In view of the final result that those maps are virtually isomorphisms 
	(by ignoring images of $\bar \epsilon$, 
	which becomes precise by coarse graining),
	we instead consider a diagram where the groups are arranged the same way as~\eqref{eq:diagram}
	but with all the arrows coming down from the top $\eta$-unitary group,
	and develop some intuition about the skewcommutativity as follows.
	On~$yz$-plane of~$\ZZ^\dd$, an $\eta^\mp$-unitary in the ``bulk'' defines two hermitian forms,
	one~$\Delta_v$ on the right vertical boundary orthogonal to the $y$-axis
	and the other~$\Delta_h$ on the top horizontal boundary orthogonal to the $z$-axis.
	The two hermitian forms~$\Delta_v$ and $\Delta_h$ must blend into each other
	because they simultaneously appear at the kinked boundary of the bulk
	restricted on the third quadrant~$\{(y,z) \in \ZZ^2 ~|~ y,z < 0\}$ in the bottom left.
	So, the obstruction at the right end of~$\Delta_h$ must cancel
	that at the top end of~$\Delta_v$,
	giving the skewcommutativity.
\end{remark}

\section*{Declarations}

{\bf Data availability.}
Data availability does not apply to this research since no data was used or generated.

{\bf Competing Interests.}
The author declares that there is no conflict of interest.

\nocite{apsrev42Control}
\bibliographystyle{apsrev4-2}
\bibliography{../refs}

\end{document}